\definecolor{cblRed}{RGB}{154,0,3}
\definecolor{cblBlue}{RGB}{41,59,107}
\definecolor{cblBlueLt}{RGB}{0,144,156}
\definecolor{cblYellow}{RGB}{255,176,0}
\definecolor{cblPink}{RGB}{216,0,106}
\definecolor{cblPurple}{RGB}{25,0,138}
\definecolor{cblPurpleLt}{RGB}{193,99,228}
\definecolor{cblOrange}{RGB}{189,74,0}
\definecolor{cblGreen}{RGB}{47,80,18}
\colorlet{cChanClr}{cblRed}
\colorlet{dChanClr}{cblBlue}
\colorlet{aChanClr}{cblGreen}
\colorlet{bChanClr}{cblPurpleLt}
\newcommand{\etal}{\emph{et al.}\xspace}
\newcommand{\lGV}{\ensuremath{\lambda^{\mkern-5mu\text{sess}}}\xspace}
\newcommand\ourGV{CGV\xspace}
\newcommand\ourGVlong{Concurrent GV\xspace}
\newcommand{\sff}[1]{\relax\ifmmode\mathsf{#1}\else\textsf{#1}\fi}
\newcommand{\mathsc}[1]{\text{\normalfont\scshape#1}}
\newcommand{\scc}[1]{\relax\ifmmode\mathsc{#1}\else\textsc{#1}\fi}
\newcommand{\mbb}[1]{\mathbb{#1}}
\newcommand{\<}{\langle}
\renewcommand{\>}{\rangle}
\newcommand{\sepr}{\;\mbox{\large{$\mid$}}\;}
\newcommand{\redd}{\mathbin{\longrightarrow}}
\newcommand{\nredd}{{\centernot\longrightarrow}}
\newcommand{\rred}[1]{\shortrightarrow_{#1}}
\newcommand{\fn}{\mathrm{fn}}
\newcommand{\an}{\mathrm{an}}
\newcommand{\live}{\mathrm{live}}
\newcommand{\ol}[1]{\overline{#1}}
\newcommand{\rott}[1]{\mathpalette\rot{#1}}
\newcommand{\rot}[2]{\rotatebox[origin=c]{180}{$#1{#2}$}}
\newcommand{\pr}{\mathrm{pr}}
\newcommand{\figref}[1]{Fig.~\labelcref{#1}}
\let\oprod\prod
\renewcommand{\prod}{\mathchoice{\textstyle}{}{}{}{\oprod}}
\let\omax\max
\renewcommand{\max}{\mathchoice{\textstyle}{}{}{}{\omax}}
\newcommand{\puts}{\mathbin{\triangleleft}}
\renewcommand{\gets}{\mathbin{\triangleright}}
\newcommand{\call}[1]{\<#1\>}
\let\onu\nu
\renewcommand{\nu}[1]{(\mkern-1mu\bm{\onu} #1)}
\let\oprl\|
\renewcommand{\|}{\mathbin{|}}
\newcommand{\prl}{\mathbin{\oprl}}
\newcommand{\0}{\bm{0}}
\newcommand{\fwd}{\mathbin{\leftrightarrow}}
\newcommand{\subst}[1]{\{#1\}}
\newcommand{\tensor}{\ensuremath{\mathbin{\otimes}}}
\newcommand{\parr}{\mathbin{\rott{\&}}}
\newcommand{\pri}{\mathsf{o}}
\newcommand{\opri}{\kappa}
\newcommand{\1}{\bm{1}}
\newtheorem{theorem}{Theorem}[section]
\newtheorem{lemma}[theorem]{Lemma}
\newtheorem{proposition}[theorem]{Proposition}
\newtheorem{corollary}[theorem]{Corollary}
\newtheorem{definition}[theorem]{Definition}
\newtheorem{example}[theorem]{Example}
\newcommand{\sdot}{\mathbin{.}}
\newcommand{\dom}{\mathrm{dom}}
\newcommand{\term}[1]{\textcolor{cblPurple}{#1}}
\newcommand{\type}[1]{\textcolor{cblOrange}{#1}}
\newcommand{\mcl}[1]{\mathcal{#1}}
\newcommand{\lolli}{\mathbin{\multimap}}
\newcommand{\enc}[2]{{\color{cblRed}\left\llbracket\normalcolor #2 \color{cblRed}\right\rrbracket\normalcolor}{#1}}
\newcommand{\encc}[2]{{\color{cblRed}\llbracket\normalcolor \term{#2} \color{cblRed}\rrbracket\normalcolor}{#1}}
\newcommand{\enct}[1]{{\color{cblRed}\llbracket\normalcolor \type{#1} \color{cblRed}\rrbracket\normalcolor}}
\newcommand{\benc}[3]{{\color{cblRed}\left\llbracket\normalcolor #3 \color{cblRed}\right\rrbracket\normalcolor}_{#1}^{#2}}
\newcommand{\bencb}[3]{{\color{cblRed}\llbracket\normalcolor \term{#3} \color{cblRed}\rrbracket\normalcolor}_{#1}^{#2}}
\newcommand{\cmdtt}[2]{#1_\mathtt{#2}}
\newcommand{\equivtt}[1]{\cmdtt{\equiv}{#1}}
\newcommand{\reddtt}[1]{\cmdtt{\redd}{#1}}
\newcommand{\vdashtt}[1]{\cmdtt{\vdash}{#1}}
\newcommand{\equivM}{\equivtt{M}}
\newcommand{\equivC}{\equivtt{C}}
\newcommand{\reddM}{\reddtt{M}}
\newcommand{\reddC}{\reddtt{C}}
\newcommand{\reddL}{\reddtt{L}}
\newcommand{\nreddL}{\cmdtt{\nredd}{L}}
\newcommand{\vdashM}{\vdashtt{M}}
\newcommand{\vdashB}{\vdashtt{B}}
\newcommand{\vdashC}[1]{\vdashtt{C}^{\term{#1}}}
\newcommand{\nuf}[1]{(\overset{\leftrightarrow}{\bm{\onu}} #1)}
\newcommand{\ih}[1]{IH\textsubscript{#1}\xspace}
\newcommand{\dashes}{\vspace{-.5em}\hbox to \textwidth{\leaders\hbox to 3pt{\hss . \hss}\hfil}\smallskip}
\newcommand*\dashline{\rotatebox[origin=c]{90}{$\dabar@\dabar@\dabar@$}}
\newcommand{\nunil}{\nu{\_\,\_}}
\newcommand{\refwpage}[1]{\Cref{#1} on \Cpageref{#1}\xspace}
\newenvironment{tarr}[2][t]{\begin{array}[#1]{@{}#2@{}}}{\end{array}}
\newcommand{\encpropref}[1]{\Cref{l:transProps} (\labelcref{#1})}
  \DeclareSymbolFont{stix@largesymbols}{LS2}{stixex}{m}{n}
  \DeclareMathDelimiter{\lBrace}{\mathopen} {stix@largesymbols}{"E8}%
                                            {stix@largesymbols}{"0E}
  \DeclareMathDelimiter{\rBrace}{\mathclose}{stix@largesymbols}{"E9}%
                                            {stix@largesymbols}{"0F}
\newcommand{\bcont}[1]{\mathrm{bcont}_{#1}}
\newcommand{\vdashAst}{\vdash^{\mkern-3mu\ast}}
\newcommand{\xsub}[1]{{\lBrace #1 \rBrace}}
\newcommand{\bfr}[1]{[#1\>}
\newcommand{\main}{{\mathchoice{\scriptstyle}{\scriptstyle}{\scriptscriptstyle}{}\blacklozenge}}
\newcommand{\child}{{\mathchoice{\scriptstyle}{\scriptstyle}{\scriptscriptstyle}{}\lozenge}}
\newcommand{\rLab}[1]{{#1}}
\newcommand{\sccsm}[1]{{\scriptstyle\scc{#1}}}
\newcommand{\fwded}[1]{{\scriptstyle\langle}#1{\scriptstyle\rangle}}
\renewcommand{\implies}{\mathbin{\Rightarrow}}
\title{Asynchronous Functional Sessions: Cyclic and Concurrent}
\author{
    Bas van den Heuvel \qquad\qquad Jorge A.\ P\'erez
    \institute{University of Groningen, The Netherlands}
    \email{\{b.van.den.heuvel, j.a.perez\} @ rug.nl}
}
\begin{document}
\maketitle

\begin{abstract}
We present \ourGVlong (\ourGV), a functional calculus with message-passing concurrency governed by session types.
With respect to prior calculi, \ourGV has increased support for concurrent evaluation and for cyclic network topologies.
The design of \ourGV draws on APCP, a session-typed asynchronous $\pi$-calculus developed in prior work.
Technical contributions are (i)~the syntax, semantics, and type system of \ourGV; (ii)~a correct translation of \ourGV into APCP; (iii) a technique for establishing deadlock-free \ourGV programs, by resorting to APCP's priority-based type system.
\end{abstract}

\section{Introduction}
\label{s:intro}

The goal of this paper is to introduce a new functional calculus with message-passing concurrency governed by linearity and session types.
Our work contributes to a research line initiated by Gay and Vasconcelos~\cite{journal/jfp/GayV10}, who proposed a functional calculus with sessions here referred to as \lGV; this line of work has received much recent attention thanks to Wadler's GV calculus~\cite{conf/icfp/Wadler12}, which is  a variation of \lGV.

Our new calculus is dubbed \ourGVlong (\ourGV);
with respect to previous work, it
presents three intertwined novelties: asynchronous (buffered) communication; a highly concurrent reduction strategy; and thread configurations with cyclic topologies.
The design of \ourGV rests upon a solid basis: an operationally correct translation into APCP (Asynchronous Priority-based Classical Processes), a session-typed $\pi$-calculus in which asynchronous processes communicate by forming cyclic  networks~\cite{report/vdHeuvelP21B}.


We discuss the salient features of \ourGV by example, using a simplified syntax.
As in \lGV, communication in \ourGV is asynchronous: send operations place their messages in buffers, and receive operations read the messages from these buffers.
Let us write $\term{\sff{send}~(u,x)}$ to denote the output of message $\term{u}$ along channel $\term{x}$, and $\term{\sff{recv}~y}$ to denote an input on $y$.
The following program expresses the parallel composition ($\term{\parallel}$) of two threads:
\[
    \term{
        \left.
            \begin{array}{@{}l@{}}
                \sff{let} \, x = \sff{send}~(u,x) \, \sff{in}
                \\
                \quad \sff{let} \, (v,y) = \sff{recv}~y \, \sff{in} \, ()
            \end{array} \qquad \middle|\!\middle| \qquad \sff{let} \, y = \underline{\sff{send}~(w,y)} \, \sff{in} \, ()
        \right.
    }
\]
In variants of \lGV with \emph{synchronous} communication,
such as GV and Kokke and Dardha's PGV~\cite{conf/forte/KokkeD21,report/KokkeD21},
this program is stuck: the send on $\term{y}$ (underlined, on the right) cannot synchronize with the receive on $\term{y}$ (on the left): it is blocked by the send on $\term{x}$ (on the left), and there is no receive on $\term{x}$.
In contrast, in \ourGV the send on $\term{x}$ can be buffered after which the communication on $\term{y}$ can take place.

In \ourGV, reduction  is  ``more concurrent'' than usual call-by-value or call-by-name strategies.
Consider the following  program:
\[
    \term{
        \left(
            \begin{array}{@{}l@{}l@{}}
                \lambda x \sdot {}
                & \sff{let} \, (u,y) = \sff{recv}~y \, \sff{in}
                \\
                & \quad \sff{let} \, x = \sff{send}~(u,x) \, \sff{in} ()
            \end{array}
        \right) \quad \big( \sff{send}~(v,z) \big)
    }
\]
In  \lGV, reduction is call-by-value and so the function on $\term{x}$ can only be applied on a value.
However, the function's parameter (send on $\term{z}$) is not a value, so it needs to be evaluated before the function on $\term{x}$ can be applied.
Hence, this program can only be evaluated in one order: first the send on $\term{z}$, then the receive on $\term{y}$.
In contrast, the semantics of \ourGV evaluates a function and its parameters \emph{concurrently}: the send on $\term{z}$ and the receive on $\term{y}$ can be evaluated in any order.
Note that asynchrony plays no role here: both buffering a message and synchronous communication entail a reduction in the function's parameter.

The third novelty is cyclic thread configurations: threads can be connected by channels to form cyclic networks.
Consider the following program:
\[
    \term{
        \left.
            \begin{array}{@{}l@{}}
                \sff{let} \, (u,x) = \sff{recv}~x \, \sff{in}
                \\
                \quad \sff{let} \, y = \sff{send}~(u,y) \, \sff{in} \, ()
            \end{array} \qquad \middle|\!\middle| \qquad \begin{array}{@{}l@{}}
                \sff{let} \, x = \sff{send}~(v,x) \, \sff{in}
                \\
                \quad \sff{let} \, (w,y) = \sff{recv}~y \, \sff{in} \, ()
            \end{array}
        \right.
    }
\]
Here we have two threads connected on channels $\term{x}$ and $\term{y}$, thus forming a cyclic thread configuration.
Clearly, this program is deadlock-free.
In \lGV, the program is well-typed, but there is no deadlock-freedom guarantee: the type system of \lGV admits deadlocked cyclic thread configurations.
In GV and Fowler~\etal's EGV~\mbox{\cite{conf/popl/FowlerLMD19}} (an extension of Fowler's AGV~\mbox{\cite{thesis/Fowler19}}) there is a deadlock-freedom guarantee for well-typed programs; however, their type systems only support \emph{tree-shaped} thread configurations---this limitation is studied in~\cite{conf/express/DardhaP15,journal/jlamp/DardhaP22}. Hence, the program above is not well-typed in GV and EGV.

These  novelties  are \emph{intertwined}, in the following sense.
Asynchronous communication reduces the synchronization points in programs (as output-like operations are non-blocking), therefore increasing concurrent  evaluation.
In turn, reduced synchronization points can  streamline verification techniques for deadlock-freedom based on \emph{priorities}~\cite{conf/concur/Kobayashi06,conf/csl/Padovani14,conf/forte/PadovaniN15,conf/fossacs/DardhaG18}, which unlock the analysis of process networks with cyclic topologies.
Indeed, in an asynchronous setting only input-like operations require priorities.

We endow \ourGV with a type system with functional types and session types; we opted for a design in which well-typed terms enjoy subject reduction / type preservation but not deadlock-freedom.
To validate our semantic design and attain the three  novelties motivated above, we resort to APCP.
In our developments, APCP operates as a ``low-level'' reference programming model.
We give a typed translation of \ourGV into APCP, which satisfies strong correctness properties, in the sense of  Gorla~\cite{journal/ic/Gorla10}. In particular, it enjoys operational correspondence, which provides a significant sanity check to justify our key design decisions in \ourGV's operational semantics.
Interestingly, using our correct translation and the deadlock-freedom guarantees for well-typed processes in APCP, we obtain a   technique for transferring the deadlock-freedom property to \ourGV programs.
That is,
given a \ourGV program $\term{C}$,
we prove that if the APCP translation of $\term{C}$  is typable (and hence, deadlock-free), then $\term{C}$ itself is deadlock-free.
This result thus delineates a class of deadlock-free \ourGV programs that includes cyclic thread configurations.

In summary, this paper presents the following technical contributions:
(1)~\ourGV, a new functional calculus with session-based asynchronous concurrency;
(2)~A typed translation of \ourGV into APCP, which is proven to satisfy well-studied encodability criteria;
(3)~A transference result for the deadlock-freedom property from APCP to \ourGV programs.
\ifappendix Appendices contain omitted technical details.
\else An extended version contains omitted technical details~\cite{report/vdHeuvelP22}.
\fi


\section{\ourGVlong}
\label{s:ourGV}


\subsection{Syntax and Semantics}
\label{ss:ourGVSyntaxSemantics}

\begin{figure}[t!]
    \begin{mdframed}
        Terms ($\term{L},\term{M},\term{N}$):
        \begin{align*}
            \term{x} &\quad \text{(variable)}
            &
            \term{\sff{new}} &\quad \text{(create new channel)}
            \\
            \term{()} &\quad \text{(unit value)}
            &
            \term{\sff{spawn}~M} &\quad \text{(execute pair $\term{M}$ in parallel)}
            \\
            \term{\lambda x \sdot M} &\quad \text{(abstraction)}
            &
            \term{\sff{send}~(M,N)} &\quad \text{(send $\term{M}$ along $\term{N}$)}
            \\
            \term{M~N} &\quad \text{(application)}
            &
            \term{\sff{recv}~M} &\quad \text{(receive along $\term{M}$)}
            \\
            \term{(M,N)} &\quad \text{(pair construction)}
            &
            \term{\sff{select}\,\ell\,M} &\quad \text{(select label $\ell$ along $\term{M}$)}
            \\
            \term{\sff{let}\,(x,y)=M\,\sff{in}\,N} &\quad \text{(pair deconstruction)}
            &
            \term{\sff{case}\,M\,\sff{of}\,\{i:M\}_{i \in I}} &\quad \text{(offer labels in $I$ along $\term{M}$)}
        \end{align*}

        \dashes

        Runtime terms ($\term{\mbb{L}},\term{\mbb{M}},\term{\mbb{N}}$) and reduction contexts ($\term{\mcl{R}}$):
        \begin{align*}
            \term{\mbb{L}},\term{\mbb{M}},\term{\mbb{N}} ::= {}
            &~ \ldots \sepr \term{\mbb{M}\xsub{ \mbb{N}/x }} \sepr \term{\sff{send}'(\mbb{M},\mbb{N})}
            \\
            \term{\mcl{R}} ::= {}
            &~ \term{[]} \sepr \term{\mcl{R}~\mbb{M}} \sepr \term{\sff{spawn}~\mcl{R} \sepr \term{\sff{send}~\mcl{R}}} \sepr \term{\sff{recv}~\mcl{R}} \sepr \term{\sff{let}\, (x,y) = \mcl{R}\, \sff{in}\, \mbb{M}}
            \\
            \sepr
            &~ \term{\sff{select}\, \ell\, \mcl{R}} \sepr \term{\sff{case}\, \mcl{R}\, \sff{of}\, \{i: \mbb{M}\}_{i \in I}} \sepr \term{\mcl{R}\xsub{ \mbb{M}/x }} \sepr \term{\mbb{M}\xsub{ \mcl{R}/x }} \sepr \term{\sff{send}'(\mbb{M},\mcl{R})}
        \end{align*}

        \dashes

        Structural congruence for terms ($\equivM$) and  term reduction ($\reddM$):
        \begin{align*}
            & \scc{SC-SubExt}
            &
            \term{x} \notin \fn(\term{\mcl{R}}) \implies
            \term{(\mcl{R}[\mbb{M}])\xsub{ \mbb{N}/x }}
            &\equivM \term{\mcl{R}[\mbb{M}\xsub{ \mbb{N}/x }]}
%
%
\\
            & \scc{E-Lam}
            &
            \term{(\lambda x \sdot \mbb{M})~\mbb{N}}
            &\reddM \term{\mbb{M}\xsub{ \mbb{N}/x }}
            \\
            & \scc{E-Pair}
            &
            \term{\sff{let}\, (x,y) = (\mbb{M}_1,\mbb{M}_2)\, \sff{in}\, \mbb{N}}
            &\reddM \term{\mbb{N}\xsub{ \mbb{M}_1/x,\mbb{M}_2/y }}
            \\
            & \scc{E-SubstName}
            &
            \term{\mbb{M}\xsub{ y/x }}
            &\reddM \term{\mbb{M}\{y/x\}}
            \\
            & \scc{E-NameSubst}
            &
            \term{x\xsub{ \mbb{M}/x }}
            &\reddM \term{\mbb{M}}
            \\
            & \scc{E-Send}
            &
            \term{\sff{send}~(\mbb{M},\mbb{N})}
            &\reddM \term{\sff{send}'(\mbb{M},\mbb{N})}
            \\
            & \scc{E-Lift}
            &
            \term{\mbb{M}} \reddM \term{\mbb{N}} \implies \term{\mcl{R}[\mbb{M}]}
            &\reddM \term{\mcl{R}[\mbb{N}]}
            \\
            & \scc{E-LiftSC}
            &
            \term{\mbb{M}} \equivM \term{\mbb{M}'} \wedge \term{\mbb{M}'} \reddM \term{\mbb{N}'} \wedge \term{\mbb{N}'} \equivM \term{\mbb{N}} \implies
            \term{\mbb{M}}
            &\reddM \term{\mbb{N}}
        \end{align*}
    \end{mdframed}
    \caption{The \ourGV term language.}\label{f:gvTerms}
\end{figure}

\begin{figure}[t!]
    \begin{mdframed}
        Markers ($\term{\phi}$), messages ($\term{m},\term{n}$), configurations ($\term{C},\term{D},\term{E}$), thread ($\term{\mcl{F}}$) and configuration ($\term{\mcl{G}}$) contexts:
        \begin{align*}
            \term{\phi} ::= {}
            &~ \term{\main} \sepr \term{\child}
            &
            \term{m},\term{n} ::= {}
            &~ \term{\mbb{M}} \sepr \term{\ell}
            \\
            \term{C},\term{D},\term{E} ::= {}
            &~ \term{\phi\, \mbb{M}} \sepr \term{C \prl D} \sepr \term{\nu{x\bfr{\vec{m}}y}C} \sepr \term{C\xsub{ \mbb{M}/x }}
            \\
            \term{\mcl{F}} ::= {}
            &~ \term{\phi\, \mcl{R}} \sepr \term{C\xsub{ \mcl{R}/x }}
            &
            \term{\mcl{G}} ::= {}
            &~ \term{[]} \sepr \term{\mcl{G} \prl C} \sepr \term{\nu{x\bfr{\vec{m}}y}\mcl{G}} \sepr \term{\mcl{G}\xsub{ \mbb{M}/x }}
        \end{align*}

        \dashes

        Structural congruence for configurations ($\equivC$):
        \begin{align*}
            & \scc{SC-TermSC}
            &
            \term{\mbb{M}} \equivM \term{\mbb{M}'} \implies
            \term{\phi\, \mbb{M}}
            &\equivC \term{\phi\, \mbb{M}'}
            \\
            & \scc{SC-ResSwap}
            &
            \term{\nu{x\bfr{\epsilon}y}C}
            &\equivC \term{\nu{y\bfr{\epsilon}x}C}
            \\
            & \scc{SC-ResComm}
            &
            \term{\nu{x\bfr{\vec{m}}y}\nu{z\bfr{\vec{n}}w}C}
            &\equivC \term{\nu{z\bfr{\vec{n}}w}\nu{x\bfr{\vec{m}}y}C}
            \\
            & \scc{SC-ResExt}
            &
            \term{x},\term{y} \notin \fn(\term{C}) \implies
            \term{\nu{x\bfr{\vec{m}}y}(C \prl D)}
            &\equivC \term{C \prl \nu{x\bfr{\vec{m}}y}D}
            \\
            & \scc{SC-ResNil}
            &
            \term{x},\term{y} \notin \fn(\term{C}) \implies
            \term{\nu{x\bfr{\epsilon}y}C}
            &\equivC \term{C}
            \\
            & \scc{SC-Send'}
            &
            \term{\nu{x\bfr{\vec{m}}y}(\hat{\mcl{F}}[\sff{send}'(\mbb{M},x)] \prl C)}
            &\equivC \term{\nu{x\bfr{\mbb{M},\vec{m}}y}(\hat{\mcl{F}}[x] \prl C)}
            \\
            & \scc{SC-Select}
            &
            \term{\nu{x\bfr{\vec{m}}y}(\mcl{F}[\sff{select}\, \ell\, x] \prl C)}
            &\equivC \term{\nu{x\bfr{\ell,\vec{m}}y}(\mcl{F}[x] \prl C)}
            \\
            & \scc{SC-ParNil}
            &
            \term{C \prl \child\, ()}
            &\equivC \term{C}
            \\
            & \scc{SC-ParComm}
            &
            \term{C \prl D}
            &\equivC \term{D \prl C}
            \\
            & \scc{SC-ParAssoc}
            &
            \term{C \prl (D \prl E)}
            &\equivC \term{(C \prl D) \prl E}
            \\
            & \scc{SC-ConfSubst}
            &
            \term{\phi\,(\mbb{M}\xsub{ \mbb{N}/x })}
            &\equivC \term{(\phi\,\mbb{M})\xsub{ \mbb{N}/x }}
            \\
            & \scc{SC-ConfSubstExt}
            &
            \term{x} \notin \fn(\term{\mcl{G}}) \implies
            \term{(\mcl{G}[C])\xsub{ \mbb{M}/x }}
            &\equivC \term{\mcl{G}[C\xsub{ \mbb{M}/x }]}
        \end{align*}

        \dashes

        Configuration reduction ($\reddC$):
        \begin{align*}
            & \scc{E-New}
            &
            \term{\mcl{F}[\sff{new}]}
            &\reddC \term{\nu{x\bfr{\epsilon}y}(\mcl{F}[(x,y)])}
            \\
            & \scc{E-Spawn}
            &
            \term{\hat{\mcl{F}}[\sff{spawn}~(\mbb{M},\mbb{N})]}
            &\reddC \term{\hat{\mcl{F}}[\mbb{N}] \prl \child\,\mbb{M}}
            \\
            & \scc{E-Recv}
            &
            \term{\nu{x\bfr{\vec{m},\mbb{M}}y}(\hat{\mcl{F}}[\sff{recv}~y] \prl C)}
            &\reddC \term{\nu{x\bfr{\vec{m}}y}(\hat{\mcl{F}}[(\mbb{M},y)] \prl C)}
            \\
            & \scc{E-Case}
            &
            j \in I \implies
            \term{\nu{x\bfr{\vec{m},j}y}(\mcl{F}[\sff{case}\, y\, \sff{of}\, \{i:\mbb{M}_i\}_{i \in I}] \prl C)}
            &\reddC \term{\nu{x\bfr{\vec{m}}y}(\mcl{F}[\mbb{M}_j~y] \prl C)}
            \\
            & \scc{E-LiftC}
            &
            \term{C} \reddC \term{C'} \implies
            \term{\mcl{G}[C]}
            &\reddC \term{\mcl{G}[C']}
            \\
            & \scc{E-LiftM}
            &
            \term{\mbb{M}} \reddM \term{\mbb{M}'} \implies
            \term{\mcl{F}[\mbb{M}]}
            &\reddC \term{\mcl{F}[\mbb{M'}]}
            \\
            & \scc{E-ConfLiftSC}
            &
            \term{C} \equivC \term{C'} \wedge \term{C'} \reddC \term{D'} \wedge \term{D'} \equivC \term{D} \implies
            \term{C}
            &\reddC \term{D}
        \end{align*}
    \end{mdframed}
    \caption{The \ourGV configuration language.}\label{f:gvConfs}
\end{figure}

The main syntactic entities in \ourGV are \emph{terms}, \emph{runtime terms}, and \emph{configurations}.
Intuitively, terms reduce to runtime terms; configurations correspond to the parallel composition of a main thread and several child threads, each executing a runtime term. Buffered messages are part of configurations. We define two reduction relations: one is on terms, which is then subsumed by reduction on configurations.

The syntax of terms ($\term{L},\term{M},\term{N}$) is given and described in \Cref{f:gvTerms}. 
We use $x,y,\ldots$ for \emph{variables}; we write  \emph{endpoint} to refer to a variable used for session operations (send, receive, select, offer).
Let $\fn(\term{M})$ denote the free variables of a term.
All variables are free unless bound: $\term{\lambda x \sdot M}$ binds $\term{x}$ in $\term{M}$, and $\term{\sff{let}\,(x,y)=M\,\sff{in}\,N}$ binds $\term{x}$ and $\term{y}$ in $\term{N}$.
We introduce syntactic sugar for applications of abstractions: $\term{\sff{let}\, x = M\, \sff{in}\, N}$ denotes $\term{(\lambda x \sdot N)~M}$.
For $\term{(\lambda x \sdot M)~N}$, we assume $\term{x} \notin \fn(\term{N})$, and for $\term{\sff{let}\, (x,y) = M\, \sff{in}\, N}$, we assume $x \neq y$ and $\term{x},\term{y} \notin \fn(\term{M})$.

\Cref{f:gvTerms} also gives the reduction semantics of \ourGV terms ($\reddM$), which relies on runtime terms ($\term{\mbb{L}},\term{\mbb{M}},\term{\mbb{N}}$), reduction contexts ($\term{\mcl{R}}$), and structural congruence ($\equivM$).
Note that this semantics comprises the functional fragment of \ourGV; we define the concurrent semantics of \ourGV hereafter.

Runtime terms, whose syntax extends that of terms, guide the evaluation strategy of \ourGV; we discuss an example evaluation of a term using runtime terms after introducing the reduction rules (\mbox{\Cref{x:runtime}}).
Explicit substitution $\term{\mbb{M}\xsub{ \mbb{N}/x }}$ enables the concurrent execution of a function and its parameters.
The intermediate primitive $\term{\sff{send}'(\mbb{M},\mbb{N})}$ enables $\term{\mbb{N}}$ to reduce to an endpoint; the $\term{\sff{send}}$ primitive takes a pair of terms as an argument, inside which reduction is not permitted (\mbox{cf.~\cite{journal/jfp/GayV10}}).
Reduction contexts define the \emph{non-blocking} parts of terms, where subterms may reduce.
We write $\term{\mcl{R}[\mbb{M}]}$ to denote the runtime term obtained by replacing the hole $\term{[]}$ in $\term{\mcl{R}}$ by $\term{\mbb{M}}$, and $\fn(\term{\mcl{R}})$ to denote $\fn(\term{\mcl{R}[()]})$; we will use similar notation for other kinds of contexts later.

We discuss the reduction rules.
The Structural congruence rule~\scc{SC-SubExt} allows the scope extrusion of explicit substitutions along reduction contexts.
Rule~\scc{E-Lam} enforces application, resulting in an explicit substitution.
Rule~\scc{E-Pair} unpacks the elements of a pair into two explicit substitutions (arbitrarily ordered, due to the syntactical assumptions introduced above).
Rules~\scc{E-SubstName} and \mbox{\scc{E-NameSubst}} convert explicit substitutions of or on variables into standard substitutions.
Rule~\scc{E-Send} reduces a $\term{\sff{send}}$ into a $\term{\sff{send}'}$ primitive.
Rules~\scc{E-Lift} and \scc{E-LiftSC} close term reduction under contexts and structural congruence, respectively.
We write $\term{\mbb{M}} \reddM^k \term{\mbb{N}}$ to denote that $\term{\mbb{M}}$ reduces to $\term{\mbb{N}}$ in $k$~steps.

\begin{example}
\label{x:runtime}
    We illustrate the evaluation of terms using runtime terms through the following example, which contains a $\term{\sff{send}}$ primitive and nested abstractions and applications.
    In each reduction step, we underline the subterm that reduces and give the applied rule:
    \begin{align*}
        & \underline{ \term{ \big( \lambda x \sdot \sff{send}~((), x) \big)~\big( (\lambda y \sdot y)~z \big) } }
        & \scc{(E-Lam)}
        \\
        & {} \reddM \term{ \big( {\color{black} \underline{ \term{ \sff{send}~((), x) } } } \big) \xsub{ \big( (\lambda y \sdot y)~z \big) / x } }
        & \scc{(E-Send)}
        \\
        & {} \reddM \term{ \big( \sff{send'}((), x) \big) \xsub{ {\color{black} \underline{ \term{ \big( (\lambda y \sdot y)~z \big) } } } / x } }
        & \scc{(E-Lam)}
        \\
        & {} \reddM \term{ \big( \sff{send'}((), x) \big) \xsub{ ( y \xsub{ z / y } ) / x } }
        & \scc{(SC-SubExt)}
        \\
        & {} \equivM \term{ \sff{send'}\big((), {\color{black} \underline{ \term{ x \xsub{ ( y \xsub{ z / y } ) / x } } } } \big) }
        & \scc{(E-NameSubst)}
        \\
        & {} \reddM \term{ \sff{send'}((), {\color{black} \underline{ \term{ y \xsub{ z / y } } } } ) }
        & \scc{(E-SubstName)}
        \\
        & {} \reddM \term{ \sff{send'}((), z) }
    \end{align*}
    Notice how the $\term{\sff{send}}$ primitive needs to reduce to a $\term{\sff{send'}}$ runtime primitive such that the explicit substitution of $\term{x}$ can be applied.
    Also, note that the concurrency of \ourGV allows many more paths of reduction.
\end{example}

Note that the concurrent evaluation strategy of \ourGV may also be defined without explicit substitutions.
In principle, this would require additional reduction contexts specific to applications on abstractions and pair deconstruction, as well as variants of Rules~\scc{E-SubstName} and \scc{E-NameSubst} specific to these contexts.
However, it is not clear how to define scope extrusion (Rule~\scc{SC-SubExt}) for such a semantics.
Hence, we find that using explicit substitutions drastically simplifies the semantics of \ourGV.

Concurrency in \ourGV allows the parallel execution of terms that communicate through buffers.
The syntax of configurations ($\term{C},\term{D},\term{E}$) is given in \Cref{f:gvConfs}.
The configuration $\term{\phi\,\mbb{M}}$ denotes a \emph{thread}: a concurrently executed term.
The thread marker helps to distinguish the \emph{main} thread ($\term{\phi} = \term{\main}$) from \emph{child} threads ($\term{\phi} = \term{\child}$).
The configuration $\term{C \prl D}$ denotes parallel composition.
The configuration $\term{\nu{x\bfr{\vec{m}}y}C}$ denotes a \emph{buffered restriction}: it connects the endpoints $\term{x}$ and $\term{y}$ through a buffer $\term{\bfr{\vec{m}}}$, binding $\term{x}$ and $\term{y}$ in $\term{C}$.
The buffer's content, $\term{\vec{m}}$, is a sequence of messages (terms and labels).
Buffers are directed: in $\term{x[\vec{m}\>y}$, messages can be added to the front of the buffer on $x$, and they can be taken from the back of the buffer on $y$.
We write $\term{\bfr{\epsilon}}$ for the empty buffer.
The configuration $\term{C\xsub{ \mbb{M}/x }}$ lifts explicit substitution to the level of configurations: this allows spawning and sending terms under explicit substitution, such that the substitution can be moved to the context of the spawned or sent term.

The reduction semantics for configurations ($\reddC$, also in \Cref{f:gvConfs}) relies on thread and configuration contexts ($\term{\mcl{F}}$ and $\term{\mcl{G}}$, respectively) and structural congruence ($\equivC$).
We write $\term{\hat{\mcl{F}}}$ to denote a thread context in which the hole does not occur under explicit substitution, i.e.\ the context is not constructed using the clause $\term{\mcl{R}\xsub{ \mbb{M}/x }}$; this is used in rules for $\term{\sff{send}'}$, $\term{\sff{spawn}}$, and $\term{\sff{recv}}$, effectively forcing the scope extrusion of explicit substitutions when terms are moved between contexts (\mbox{cf.\ \Cref{x:restrictedThread}}).

\begin{sloppypar}
We comment on some of the congruences and reduction rules.
\mbox{Rule~\scc{SC-ResSwap}} allows to swap the direction of an empty buffer; this way, the endpoint that could read from the buffer before the swap can now write to it.
\mbox{Rule~\scc{SC-ResComm}} allows to interchange buffers, and \mbox{Rule~\scc{SC-ResExt}} allows to extrude their scope.
Rule~\scc{SC-ResNil} garbage collects buffers of closed sessions.
\mbox{Rule~\scc{SC-ConfSubst}} lifts explicit substitution at the level of terms to the level of threads, and Rule~\scc{SC-ConfSubstExt} allows the scope extrusion of explicit substitution along configuration contexts.
Notably, putting messages in buffers is not a reduction: Rules~\scc{SC-Send'} and \scc{SC-Select} \emph{equate} sends and selects on an endpoint $\term{x}$ with terms and labels in the buffer for $\term{x}$, as asynchronous outputs are computationally equivalent to messages in buffers.
\end{sloppypar}

Reduction rule~\scc{E-New} creates a new buffer, leaving a reference to the newly created endpoints in the thread.
Rule~\scc{E-Spawn} spawns a child thread (the parameter pair's first element) and continues (as the pair's second element) inside the calling thread.
Rule~\scc{E-Recv} retrieves a term from a buffer, resulting in a pair containing the term and a reference to the receiving endpoint.
Rule~\scc{E-Case} retrieves a label from a buffer, resulting in a function application of the label's corresponding branch to a reference to the receiving endpoint.
There are no reduction rules for closing sessions, as they are closed silently.
We write $\term{C} \reddC^k \term{D}$ to denote that $\term{C}$ reduces to $\term{D}$ in $k$ steps.
Also, we write $\reddC^+$ to denote the transitive closure of $\reddC$ (i.e., reduction in at least one step).

We illustrate \ourGV's semantics by giving some examples.
The following discusses a cyclic thread configuration which does not deadlock due to asynchrony:

\begin{example}\label{x:gvRed}
    Consider configuration $\term{C_1}$ below, in which two threads are spawned and cyclically connected through two channels.
    One thread first sends on the first channel and then receives on the second, while the other thread first sends on the second channel and then receives on the first.
    Under synchronous communication, this would determine a configuration that deadlocks; however,  under asynchronous communication, this is not the case (cf.\ the third example in Sec.\ \labelcref{s:intro}).
    We detail some interesting reductions:
    \begin{align}
        \term{C_1} =~ & \term{\main\, (\sff{let}\, (f,g) = \sff{new}\, \sff{in}\,
        \sff{let}\, (h,k) = \sff{new}\, \sff{in}\,
        \sff{spawn}~\left(\begin{tarr}[c]{l}
                \sff{let}\, f' = (\sff{send}~(u,f))\, \sff{in} \\
                \quad \sff{let}\, (v',h') = (\sff{recv}~h)\, \sff{in}\, (),
                \\[4pt]
                \sff{let}\, k' = (\sff{send}~(v,k))\, \sff{in} \\
                \quad \sff{let}\, (u',g') = (\sff{recv}~g)\, \sff{in}\, ()
        \end{tarr}\right))}
        \nonumber
        \allowdisplaybreaks[1]
        \\
        {}\reddC^8{}
        & \term{\nu{x\bfr{\epsilon}y}\nu{w\bfr{\epsilon}z}(\main\, (\sff{spawn}~\left(\begin{tarr}[c]{l}
                \sff{let}\, f' = (\sff{send}~(u,x))\, \sff{in} \\
                \quad \sff{let}\, (v',h') = (\sff{recv}~w)\, \sff{in}\, (),
            \end{tarr}~\begin{tarr}[c]{l}
                \sff{let}\, k' = (\sff{send}~(v,z))\, \sff{in} \\
                \quad \sff{let}\, (u',g') = (\sff{recv}~y)\, \sff{in}\, ()
        \end{tarr}\right))}
        \label{eq:exGvRedNew}
        \allowdisplaybreaks[1]
        \\
        {}\reddC{}
        & \term{\nu{x\bfr{\epsilon}y}\nu{w\bfr{\epsilon}z}(\main\, \left(\begin{tarr}[c]{l}
                \sff{let}\, k' = (\sff{send}~(v,z))\, \sff{in} \\
                \quad \sff{let}\, (u',g') = (\sff{recv}~y)\, \sff{in}\, ()
        \end{tarr}\right) \prl \child\, \left(\begin{tarr}[c]{l}
                \sff{let}\, f' = (\sff{send}~(u,x))\, \sff{in} \\
                \quad \sff{let}\, (v',h') = (\sff{recv}~w)\, \sff{in}\, ()
        \end{tarr}\right))}
        \label{eq:exGvRedSpawn}
        \allowdisplaybreaks[1]
        \\
        {}\reddC^2{}
        & \term{\nu{x\bfr{\epsilon}y}\nu{w\bfr{\epsilon}z}(\main\, \left(
            \begin{array}{@{}l@{}}
                (\sff{let}\, (u',g') = (\sff{recv}~y)\, \sff{in}\, ())
                \\
                \quad \xsub{ \sff{send}~(v,z)/k' }
            \end{array}
        \right) \prl \child\, \left(\begin{tarr}[c]{l}
                (\sff{let}\, (v',h') = (\sff{recv}~w)\, \sff{in}\, ())
                \\
                \quad \xsub{ \sff{send}~(u,x)/f' }
        \end{tarr}\right))}
        \label{eq:exGvRedLet}
        \allowdisplaybreaks[1]
        \\
        {}\reddC^2{}
        & \term{\nu{x\bfr{\epsilon}y}\nu{w\bfr{\epsilon}z}(\main\, \left(
            \begin{array}{@{}l@{}}
                (\sff{let}\, (u',g') = (\sff{recv}~y)\, \sff{in}\, ())
                \\
                \quad \xsub{ \sff{send}'(v,z)/k' }
            \end{array}
        \right) \prl \child\, \left(\begin{tarr}[c]{l}
                (\sff{let}\, (v',h') = (\sff{recv}~w)\, \sff{in}\, ())
                \\
                \quad \xsub{ \sff{send'}(u,x)/f' }
        \end{tarr}\right))}
        \label{eq:exGvRedSend}
        \allowdisplaybreaks[1]
        \\
        {}\equiv{}
        & \term{\nu{x\bfr{u}y}\nu{z\bfr{v}w}(\main\, ((\sff{let}\, (u',g') = (\sff{recv}~y)\, \sff{in}\, ())\xsub{ z/k' }) \prl \child\, ((\sff{let}\, (v',h') = (\sff{recv}~w)\, \sff{in}\, ()) \xsub{ x/f' }))}
        \label{eq:exGvRedBuf}
        \allowdisplaybreaks[1]
        \\
        {}\reddC^2{}
        & \term{\nu{x\bfr{u}y}\nu{z\bfr{v}w}(\main\, (\sff{let}\, (u',g') = (\sff{recv}~y)\, \sff{in}\, ()) \prl \child\, (\sff{let}\, (v',h') = (\sff{recv}~w)\, \sff{in}\, ()))}
        \nonumber
        \allowdisplaybreaks[1]
        \\
        {}\reddC^2{}
        & \term{\nu{x\bfr{\epsilon}y}\nu{z\bfr{\epsilon}w}(\main\, (\sff{let}\, (u',g') = (v,y)\, \sff{in}\, ()) \prl \child\, (\sff{let}\, (v',h') = (v,w)\, \sff{in}\, ()))}
        \reddC^4 \term{\main\,()}
        \label{eq:exGvRedRecv}
    \end{align}

    \noindent
    Intuitively, reduction \labelcref{eq:exGvRedNew} instantiates two buffers and assigns the endpoints through explicit substitutions.
    Reduction \labelcref{eq:exGvRedSpawn} spawns the left term as a child thread.
    Reduction \labelcref{eq:exGvRedLet} turns $\term{\sff{let}}$s into explicit substitutions.
    Reduction \labelcref{eq:exGvRedSend} turns the $\term{\sff{send}}$s into $\term{\sff{send}'}$s.
    Structural congruence \labelcref{eq:exGvRedBuf} equates the $\term{\sff{send}'}$s with messages in the buffers.
    Reduction \labelcref{eq:exGvRedRecv} retrieves the messages from the buffers.
    Note that many of these steps represent several reductions that may happen in any order.
\end{example}

The following example illustrates \ourGV's flexibility for communicating functions over channels:
\begin{example}
    \label{x:nontrivial}
    In the following configuration, a buffer and two threads have already been set up (cf.\ \Cref{x:gvRed} for an illustration of such an initialization).
    The main thread sends an interesting term to the child thread: it contains the $\term{\sff{send}}$ primitive from which the main thread will subsequently receive from the child thread.
    We give the configuration's major reductions, with the reducing parts underlined:
    \begin{align*}
        & \term{
            \nu{x\bfr{\epsilon}y} ( \main\, \left(
                \begin{array}{@{}l@{}}
                    {\color{black} \underline{ \term{ \sff{let}\, x' = \sff{send}~\big(\lambda z \sdot \sff{send}~((),z),x\big)\, \sff{in} } } }
                    \\
                    \quad \sff{let}\, (v,x'') = \sff{recv}~x'\, \sff{in}\, v
                \end{array}
            \right) \prl \child\, \left(
                \begin{array}{@{}l@{}}
                    \sff{let}\, (w,y') = \sff{recv}~y\, \sff{in}
                    \\
                    \quad \sff{let}\, y'' = (w~y')\, \sff{in}\, ()
                \end{array}
            \right) )
        }
        \\
        {}\reddC^3{} & \term{
            \nu{x\bfr{\lambda z \sdot \sff{send}~((),z)}y} \Big( \main\, \big( \sff{let}\, (v,x'') = \sff{recv}~x\, \sff{in}\, v \big) \prl \child\, \left(
                \begin{array}{@{}l@{}}
                    \sff{let}\, (w,y') = {\color{black} \underline{ \term{ \sff{recv}~y } } }\, \sff{in}
                    \\
                    \quad \sff{let}\, y'' = (w~y')\, \sff{in}\, ()
                \end{array}
            \right) \Big)
        }
        \\
        {}\reddC{} & \term{
            \nu{y\bfr{\epsilon}x} \Big( \main\, \big( \sff{let}\, (v,x'') = \sff{recv}~x\, \sff{in}\, v \big) \prl \child\, \left(
                \begin{array}{@{}l@{}}
                    {\color{black} \underline{ \term{ \sff{let}\, (w,y') = (\lambda z \sdot \sff{send}~((),z),y)\, \sff{in} } } }
                    \\
                    \quad \sff{let}\, y'' = (w~y')\, \sff{in}\, ()
                \end{array}
            \right) \Big)
        }
        \\
        {}\reddC{} & \term{
            \nu{y\bfr{\epsilon}x} \Big( \main\, \big( \sff{let}\, (v,x'') = \sff{recv}~x\, \sff{in}\, v \big) \prl \child\, \Big( {\color{black} \underline{ \term{ \big( \sff{let}\, y'' = (w~y')\, \sff{in}\, () \big) \xsub{\big(\lambda z \sdot \sff{send}~((),z)\big) / w, y/y'} } } } \Big) \Big)
        }
        \\
        {}\reddC^2{} & \term{
            \nu{y\bfr{\epsilon}x} \Big( \main\, \big( \sff{let}\, (v,x'') = \sff{recv}~x\, \sff{in}\, v \big) \prl \child\, \Big( \sff{let}\, y'' = \Big( {\color{black} \underline{ \term{ \big(\lambda z \sdot \sff{send}~((),z)\big)~y } } } \Big)\, \sff{in}\, () \Big) \Big)
        }
        \\
        {}\reddC^2{} & \term{
            \nu{y\bfr{\epsilon}x} \Big( \main\, \big( \sff{let}\, (v,x'') = \sff{recv}~x\, \sff{in}\, v \big) \prl \child\, \big( {\color{black} \underline{ \term{ \sff{let}\, y'' = \sff{send}~((),y)\, \sff{in} } } }\, () \big) \Big)
        }
        \\
        {}\reddC^3{} & \term{
            \nu{y\bfr{()}x} \Big( \main\, \big( {\color{black} \underline{ \term{ \sff{let}\, (v,x'') = \sff{recv}~x\, \sff{in} } } }\, v \big) \Big)
        }
        \reddC^4 \term{ \main\, () }
    \end{align*}
\end{example}

The following example illustrates why the restricted thread context $\term{\hat{\mcl{F}}}$ is used:

\begin{example}
\label{x:restrictedThread}
    Consider the configuration $\term{C} = \term{ \nu{x \bfr{\epsilon} y} \bigg( \main\, \Big( \big( \sff{send'}(z,x) \big) \xsub{ v / z } \Big) \prl D \bigg) }$.
    Suppose Structural congruence rule~$\scc{SC-Send'}$ were defined on unrestricted thread contexts; then the rule applies under the explicit substitution of $\term{z}$:
    $
        \term{C} \equivC \term{ \nu{x \bfr{z} y} \Big( \main\, \big( x \xsub{ v / z } \big) \prl D \Big) }
    $.
    Here, $\term{C}$ and the right-hand-side are inconsistent with each other: in $\term{C}$, the variable $\term{z}$ is bound by the explicit substitution, whereas $\term{z}$ is free on the right-hand-side.
    With the restricted thread contexts we are forced to first extrude the scope of the explicit substitution before applying Rule~$\scc{SC-Send'}$, making sure that $\term{z}$ remains bound:
    \[
        \term{C} \equivC \term{ \bigg( \nu{x \bfr{\epsilon} y} \Big( \main\, \big( \sff{send'}(z,x) \big) \prl D \Big) \bigg) \xsub{v / z} }
        \equivC \term{ \big( \nu{x \bfr{z} y} ( \main\, x \prl D ) \big) \xsub{v / z} }.
    \]
\end{example}

\subsection{Type System}
\label{ss:ourGVTypeSys}

We define a type system for \ourGV, with functional types for functions and pairs and session types for communication.
The syntax and meaning of functional types ($\type{T},\type{U}$) and session types ($\type{S}$) are as follows:
\begin{align*}
    \type{T},\type{U} &::=
    \type{T \times U} & \text{(pair)}
    & ~\sepr
    \type{T \lolli U} & \text{(function)}
    & ~\sepr
    \type{\1} & \text{(unit)}
    & ~\sepr
    \type{S} & \text{(session)}
    \\
    \type{S} &::=
    \type{{!}T \sdot S} & \text{(output)}
    & ~\sepr
    \type{{?}T \sdot S} & \text{(input)}
    & ~\sepr
    \type{\oplus\{i:T\}_{i \in I}} & \text{(select)}
    & ~\sepr
    \type{\&\{i:T\}_{i \in I}} & \text{(case)}
    & ~\sepr
    \type{\sff{end}} 
\end{align*}

\noindent
Session type duality ($\type{\ol{S}}$) is defined as usual; note that only the continuations, and not the messages, of output and input types are dualized.
\begin{align*}
    \type{\ol{{!}T \sdot S}}
    &= \type{{?}T \sdot \ol{S}}
    &
    \type{\ol{{?}T \sdot S}}
    &= \type{{!}T \sdot \ol{S}}
    &
    \type{\ol{\oplus\{i:S_i\}_{i \in I}}}
    &= \type{{\&}\{i:\ol{S_i}\}_{i \in I}}
    &
    \type{\ol{{\&}\{i:S_i\}_{i \in I}}}
    &= \type{\oplus\{i:\ol{S_i}\}_{i \in I}}
    &
    \type{\ol{\sff{end}}}
    &= \type{\sff{end}}
\end{align*}

\begin{figure}[t!]
    \begin{mdframed}\small
        \begin{mathpar}
            \inferrule[T-Var]{ }{
                \term{x}: \type{T} \vdashM \term{x}: \type{T}
            }
            \and
            \inferrule[T-Abs]{
                \type{\Gamma}, \term{x}: \type{T} \vdashM \term{\mbb{M}}: \type{U}
            }{
                \type{\Gamma} \vdashM \term{\lambda x \sdot \mbb{M}}: \type{T \lolli U}
            }
            \and
            \inferrule[T-App]{
                \type{\Gamma} \vdashM \term{\mbb{M}}: \type{T \lolli U}
                \\
                \type{\Delta} \vdashM \term{\mbb{N}}: \type{T}
            }{
                \type{\Gamma}, \type{\Delta} \vdashM \term{\mbb{M}~\mbb{N}}: \type{U}
            }
            \and
            \inferrule[T-Unit]{ }{
                \type{\emptyset} \vdashM \term{()}: \type{\1}
            }
            \and
            \inferrule[T-Pair]{
                \type{\Gamma} \vdashM \term{\mbb{M}}: \type{T}
                \\
                \type{\Delta} \vdashM \term{\mbb{N}}: \type{U}
            }{
                \type{\Gamma}, \type{\Delta} \vdashM \term{(\mbb{M},\mbb{N})}: \type{T \times U}
            }
            \and
            \inferrule[T-Split]{
                \type{\Gamma} \vdashM \term{\mbb{M}}: \type{T \times T'}
                \\
                \type{\Delta}, \term{x}: \type{T}, \term{y}: \type{T'} \vdashM \term{\mbb{N}}: \type{U}
            }{
                \type{\Gamma}, \type{\Delta} \vdashM \term{\sff{let}\, (x,y) = \mbb{M}\, \sff{in}\, \mbb{N}}: \type{U}
            }
            \and
            \inferrule[T-New]{
            }{
                \type{\emptyset} \vdashM \term{\sff{new}}: \type{S \times \ol{S}}
            }
            \and
            \inferrule[T-Spawn]{
                \type{\Gamma} \vdashM \term{\mbb{M}}: \type{\1 \times T}
            }{
                \type{\Gamma} \vdashM \term{\sff{spawn}~\mbb{M}}: \type{T}
            }
            \and
            \inferrule[T-EndL]{
                \type{\Gamma} \vdashM \term{\mbb{M}}: \type{T}
            }{
                \type{\Gamma}, \term{x}: \type{\sff{end}} \vdashM \term{\mbb{M}}: \type{T}
            }
            \and
            \inferrule[T-EndR]{ }{
                \type{\emptyset} \vdashM \term{x}: \type{\sff{end}}
            }
            \and
            \inferrule[T-Send]{
                \type{\Gamma} \vdashM \term{\mbb{M}}: \type{T \times {!}T \sdot S}
            }{
                \type{\Gamma} \vdashM \term{\sff{send}~\mbb{M}}: \type{S}
            }
            \and
            \inferrule[T-Recv]{
                \type{\Gamma} \vdashM \term{\mbb{M}}: \type{{?}T \sdot S}
            }{
                \type{\Gamma} \vdashM \term{\sff{recv}~\mbb{M}}: \type{T \times S}
            }
            \and
            \inferrule[T-Select]{
                \type{\Gamma} \vdashM \term{\mbb{M}}: \type{\oplus\{i:T_i\}_{i \in I}}
                \\
                j \in I
            }{
                \type{\Gamma} \vdashM \term{\sff{select}\, j\, \mbb{M}}: \type{T_j}
            }
            \and
            \inferrule[T-Case]{
                \type{\Gamma} \vdashM \term{\mbb{M}}: \type{{\&}\{i:T_i\}_{i \in I}}
                \\
                \forall i \in I.~ \type{\Delta} \vdashM \term{\mbb{N}_i}: \type{T_i \lolli U}
            }{
                \type{\Gamma}, \type{\Delta} \vdashM \term{\sff{case}\, \mbb{M}\, \sff{of}\, \{i:\mbb{N}_i\}_{i \in I}}: \type{U}
            }
            \and
            \inferrule[T-Sub]{
                \type{\Gamma}, \term{x}: \type{T} \vdashM \term{\mbb{M}}: \type{U}
                \\
                \type{\Delta} \vdashM \term{\mbb{N}}: \type{T}
            }{
                \type{\Gamma}, \type{\Delta} \vdashM \term{\mbb{M}\xsub{ \mbb{N}/x }}: \type{U}
            }
            \and
            \inferrule[T-Send']{
                \type{\Gamma} \vdashM \term{\mbb{M}}: \type{T}
                \\
                \type{\Delta} \vdashM \term{\mbb{N}}: \type{{!}T \sdot S}
            }{
                \type{\Gamma}, \type{\Delta} \vdashM \term{\sff{send}'(\mbb{M},\mbb{N})}: \type{S}
            }
        \end{mathpar}

        \dashes

        \vspace{-3ex}
        \begin{mathpar}
            \inferrule[T-Buf]{ }{
                \type{\emptyset} \vdashB \term{\bfr{\epsilon}}: \type{S'} > \type{S'}
            }
            \and
            \inferrule[T-BufSend]{
                \type{\Gamma} \vdashM \term{M}: \type{T}
                \\
                \type{\Delta} \vdashB \term{\bfr{\vec{m}}}: \type{S'} > \type{S}
            }{
                \type{\Gamma}, \type{\Delta} \vdashB \term{\bfr{\vec{m},M}}: \type{S'} > \type{{!}T \sdot S}
            }
            \and
            \inferrule[T-BufSelect]{
                \type{\Gamma} \vdashB \term{\bfr{\vec{m}}}: \type{S'} > \type{S_j}
                \\
                j \in I
            }{
                \type{\Gamma} \vdashB \term{\bfr{\vec{m},j}}: \type{S'} > \type{\oplus\{i:S_i\}_{i \in I}}
            }
        \end{mathpar}

        \dashes

        \vspace{-3ex}
        \begin{mathpar}
            \inferrule[T-Main]{
                \type{\Gamma} \vdashM \term{\mbb{M}}: \type{T}
            }{
                \type{\Gamma} \vdashC{\main} \term{\main\, \mbb{M}}: \type{T}
            }
            \and
            \inferrule[T-Child]{
                \type{\Gamma} \vdashM \term{\mbb{M}}: \type{\1}
            }{
                \type{\Gamma} \vdashC{\child} \term{\child\, \mbb{M}}: \type{\1}
            }
            \and
            \inferrule[T-ParL]{
                \type{\Gamma} \vdashC{\child} \term{C}: \type{\1}
                \\
                \type{\Delta} \vdashC{\phi} \term{D}: \type{T}
            }{
                \type{\Gamma}, \type{\Delta} \vdashC{\child + \phi} \term{C \prl D}: \type{T}
            }
            \and
            \inferrule[T-ParR]{
                \type{\Gamma} \vdashC{\phi} \term{C}: \type{T}
                \\
                \type{\Delta} \vdashC{\child} \term{D}: \type{\1}
            }{
                \type{\Gamma}, \type{\Delta} \vdashC{\phi + \child} \term{C \prl D}: \type{T}
            }
            \and
            \inferrule[T-Res]{
                \type{\Gamma} \vdashB \term{\bfr{\vec{m}}}: \type{S'} > \type{S}
                \\
                \type{\Delta}, \term{x}: \type{S'}, \term{y}: \type{\ol{S}} \vdashC{\phi} \term{C}: \type{T}
            }{
                \type{\Gamma}, \type{\Delta} \vdashC{\phi} \term{\nu{x\bfr{\vec{m}}y}C}: \type{T}
            }
            \and
            \inferrule[T-ResBuf]{
                \type{\Gamma}, \term{y}: \type{\ol{S}} \vdashB \term{\bfr{\vec{m}}}: \type{S'} > \type{S}
                \\
                \type{\Delta}, \term{x}: \type{S'} \vdashC{\phi} \term{C}: \type{T}
            }{
                \type{\Gamma}, \type{\Delta} \vdashC{\phi} \term{\nu{x\bfr{\vec{m}}y}C}: \type{T}
            }
            \and
            \inferrule[T-ConfSub]{
                \type{\Gamma}, \term{x}: \type{T} \vdashC{\phi} \term{C}: \type{U}
                \\
                \type{\Delta} \vdashM \term{\mbb{M}}: \type{T}
            }{
                \type{\Gamma}, \type{\Delta} \vdashC{\phi} \term{C\xsub{ \mbb{M}/x }}: \type{U}
            }
        \end{mathpar}
    \end{mdframed}
    \caption{Typing rules for terms (top), buffers (center), and configurations (bottom).}\label{f:gvType}
\end{figure}

Typing judgments use typing environments ($\type{\Gamma},\type{\Delta},\type{\Lambda}$) consisting of types assigned to variables ($\term{x}: \type{T}$).
We write $\type{\emptyset}$ to denote the empty  environment; in writing `$\type{\Gamma},\type{\Delta}$', we assume that the variables in $\type{\Gamma}$ and $\type{\Delta}$ are pairwise distinct.
\Cref{f:gvType} (top) gives the type system for (runtime) terms.
Judgments are denoted $\type{\Gamma} \vdashM \term{\mbb{M}}: \type{T}$ and have a \emph{use-provide} reading: term $\term{\mbb{M}}$ \emph{uses} the variables in $\type{\Gamma}$ to \emph{provide} a behavior of type $\type{T}$ (cf.\ Caires and Pfenning~\cite{conf/concur/CairesP10}).
When a term provides type $\type{T}$, we often say that the term is of type $\type{T}$.

Typing rules~\scc{T-Var}, \scc{T-Abs}, \scc{T-App}, \scc{T-Unit}, \scc{T-Pair}, and \scc{T-Split} are standard.
Rule~\scc{T-New} types a pair of dual session types $\type{S \times \ol{S}}$.
Rule~\scc{T-Spawn} types spawning a $\type{\1}$-typed term as a child thread, continuing as a term of type $\type{T}$.
Rules~\scc{T-EndL} and \scc{T-EndR} type finished sessions.
Rule~\scc{T-Send} (resp.\ \scc{T-Recv}) uses a term of type $\type{{!}T \sdot S}$ (resp.\ $\type{{?}T \sdot S}$) to type a send (resp.\ receive) of a term of type $\type{T}$, continuing as type $\type{S}$.
Rule~\scc{T-Select} uses a term of type $\type{\oplus\{i:T_i\}_{i \in I}}$ to type selecting a label $j \in I$, continuing as type $\type{T_j}$.
Rule~\scc{T-Case} uses a term of type $\type{{\&}\{i:T_i\}_{i \in I}}$ to type branching on labels $i \in I$, continuing as type $\type{U}$---each branch is typed $\type{T_i \lolli U}$.
Rule~\scc{T-Sub} types an explicit substitution.
Rule~\mbox{\scc{T-Send'}} types sending directly, not requiring a pair but two separate terms.

\Cref{f:gvType} (bottom) gives the typing rules for configurations.
The typing judgments here are annotated with a thread marker: $\type{\Gamma} \vdashC{\phi} \term{C}: \type{T}$.
The thread marker serves to keep track of whether the typed configuration contains the main thread or not (i.e.\ $\term{\phi} = \term{\main}$ if so, and $\term{\phi} = \term{\child}$ otherwise).
When typing the parallel composition of two configurations, we thus have to combine the thread markers of their judgments.
This combination of thread markers ($\term{\phi + \phi'}$) is defined as follows:
\begin{mathpar}
    \term{\main + \child} = \term{\main}
    \and
    \term{\child + \main} = \term{\main}
    \and
    \term{\child + \child} = \term{\child}
    \and
    (\term{\main + \main} ~\text{is undefined})
\end{mathpar}

Typing rules~\scc{T-Main} and \scc{T-Child} turn a typed term into a thread, where child threads may only be of type $\type{\1}$.
Rules~\scc{T-ParL} and \scc{T-ParR} compose configurations: one configuration must be of type~$\type{\1}$ and have thread marker $\term{\child}$ (i.e., it does not contain a main thread), providing the other configuration's type.
Rule~\scc{T-Res} types buffered restriction, with output endpoint $\term{x}$ and input endpoint $\term{y}$ used in the configuration.
It is possible to send the endpoint $\term{y}$ on $\term{x}$, so there is also a Rule~\scc{T-ResBuf} where $\term{y}$ is used in the buffer.
Unlike with usual typing rules for restriction, the types $\type{S'}$ of $\term{x}$ and $\type{\ol{S}}$ of $\term{y}$ do not necessarily have to be duals.
This is because the restriction's buffer may already contain messages sent on $\term{x}$ but not yet received on $\term{y}$, such that the restricted configuration only needs to use $\term{x}$ according to a continuation of $\type{S}$.
To ensure that $\type{S'}$ is indeed a continuation of $\type{S}$ in accordance with the messages in the buffer, we have additional typing rules for buffers, which we explain hereafter.
Finally, Rule~\scc{T-ConfSub} types an explicit substitution on the level of configurations.

For typing buffers, in \Cref{f:gvType} (center), we have judgments of the form: $\type{\Gamma} \vdashB \term{\bfr{\vec{m}}}: \type{S'} > \type{S}$.
The judgment denotes that $\type{S'}$ is a continuation of $\type{S}$, in accordance with the messages $\term{\vec{m}}$, which use the variables in $\type{\Gamma}$.
The idea of the typing rules is that, starting with an empty buffer at the top of the typing derivation (Rule~\scc{T-Buf}) where $\type{S'} = \type{S}$, Rules~\scc{T-BufSend} and \scc{T-BufSelect} add messages to the end of the buffer.
Rule~\scc{T-BufSend} then prefixes $\type{S}$ with an output of the sent term's type, and Rule~\scc{T-BufSelect} prefixes $\type{S}$ with a selection such that the sent label's continuation is $\type{S}$.

\begin{figure}[t]
    \begin{mdframed}
        \vspace{-1.5em}
        \begin{mathpar}
            \mprset{sep=0.8em}
            \inferrule
            {
                \inferrule*
                {
                    \inferrule*
                    { }
                    { \type{\emptyset} \vdashM \term{z'}: \type{\sff{end}} }
                    \\
                    \inferrule*
                    { }
                    { \type{\emptyset} \vdashB \term{\bfr{\epsilon}}: \type{\sff{end}} > \type{\sff{end}} }
                }
                { \type{\emptyset} \vdashB \term{\bfr{z'}}: \type{\sff{end}} > \type{\ol{S}} }
                \\
                \inferrule*
                {
                    \inferrule*
                    {
                        \inferrule*
                        {
                            \inferrule*
                            {
                                \inferrule*
                                { }
                                { \term{y}:\type{S} \vdashM \term{y} : \type{S} }
                            }
                            { \term{y}:\type{S} \vdashM \term{\sff{revc}~y} : \type{\sff{end} \times \sff{end}} }
                            \\
                            \inferrule*
                            {
                                \inferrule*
                                {
                                    \inferrule*
                                    { }
                                    { \type{\emptyset} \vdashM \term{()}: \type{\1} }
                                }
                                { \term{y_0}: \type{\sff{end}} \vdashM \term{()}: \type{\1} }
                            }
                            { \term{z}: \type{\sff{end}}, \term{y_0}: \type{\sff{end}} \vdashM \term{()}: \type{\1} }
                        }
                        { \term{y}: \type{S} \vdashM \term{\sff{let}\, (z,y_0) = \sff{recv}~y\, \sff{in}\, ()}: \type{\1} }
                    }
                    { \term{y'}: \type{\sff{end}}, \term{y}: \type{S} \vdashM \term{\sff{let}\, (z,y_0) = \sff{recv}~y\, \sff{in}\, ()}: \type{\1} }
                }
                { \term{y'}: \type{\sff{end}}, \term{y}: \type{S} \vdashC{\main} \term{\main\, (\sff{let}\, (z,y_0) = \sff{recv}~y\, \sff{in}\, ())}: \type{\1} }
            }
            { \type{\emptyset} \vdashC{\main} \term{\nu{y'\bfr{z'}y}(\main\, (\sff{let}\, (z,y_0) = \sff{recv}~y\, \sff{in}\, ()))}: \type{\1} }
        \end{mathpar}

        \dashes

        \vspace{-2.5em}
        \begin{mathpar}
            \mprset{sep=1.4em}
            \inferrule
            {
                \inferrule*
                {
                    \inferrule*
                    { }
                    { \type{\emptyset} \vdashM \term{()}: \type{\1} }
                    \\
                    \inferrule*
                    {
                        \inferrule*
                        {
                            \type{\Gamma} \vdashM \term{M}: \type{T}
                            \\
                            \inferrule*
                            { }
                            { \type{\emptyset} \vdashB \term{\bfr{\epsilon}}: \type{S'} > \type{S'} }
                        }
                        { \type{\Gamma} \vdashB \term{\bfr{M}}: \type{S'} > \type{{!}T \sdot S'} }
                    }
                    { \type{\Gamma} \vdashB \term{\bfr{M,\ell}}: \type{S'} > \type{\oplus\{\ell: {!}T \sdot S', \ell': S''\}} }
                }
                { \type{\Gamma} \vdashB \term{\bfr{M,\ell,()}}: \type{S'} > \type{{!}\1 \sdot \oplus\{\ell: {!}T \sdot S', \ell': S''\}} }
                \\
                \type{\Delta}, \term{x}:\type{S'}, \term{y}:\type{{?}\1 \sdot {\&}\{\ell: {?}T \sdot \ol{S'}, \ell': \ol{S''}\}} \vdashC{\phi} \term{C}: \type{U}
            }
            { \type{\Gamma}, \type{\Delta} \vdashC{\phi} \term{\nu{x\bfr{M,\ell,()}y}C}: \type{U} }
        \end{mathpar}
    \end{mdframed}
    \caption{Derivation of configurations (cf.\ \Cref{x:typings}): (top) reduced from the initial one in \Cref{x:gvRed} ($\type{S} = \type{{?}\sff{end} \sdot \sff{end}}$); (bottom) a buffer containing several messages.}
    \label{f:gvExType}
\end{figure}

\begin{example}
    \label{x:typings}
    \Cref{f:gvExType} (top) shows the typing derivation of a configuration reduced from $\term{C_1}$ in \Cref{x:gvRed} (following an alternative path after Reduction~\labelcref{eq:exGvRedBuf}).
    \mbox{\Cref{f:gvExType}} (bottom) shows the typing of the configuration $\term{\nu{x\bfr{M,\ell,()}y}C}$, which has some messages in a buffer; notice how the type of $\term{x}$ in $\term{C}$ is a continuation of the dual of the type of $\term{y}$.

    In the configuration \mbox{$\term{ \nu{x \bfr{ \sff{let}\, (z,y) = \sff{recv}~y\, \sff{in}\, y } y} C }$}, the endoint $\term{y}$ is inside the buffer connecting it with $\term{x}$; to type it, we need \mbox{Rule~$\scc{T-ResBuf}$} (omitting the derivation of the buffer):
    \[
        \inferrule*{%
                \term{y}:\type{{?}\sff{end} \sdot \sff{end}} \vdashB \term{\bfr{ \sff{let}\, (z,y) = \sff{recv}~y\, \sff{in}\, y }}:\type{\sff{end}} > \type{{!}\sff{end} \sdot \sff{end}}
            \\
            \type{\Gamma}, \term{x}:\type{\sff{end}} \vdashC{\phi} \term{C}:\type{U}
        }{%
            \type{\Gamma} \vdashC{\phi} \term{\nu{x \bfr{\sff{let}\, (z,y) = \sff{recv}~y\, \sff{in}\, y} y} C}:\type{U}
        }
    \]
    Note that such buffers will always deadlock: the message in the buffer can never be received.
\end{example}

\paragraph*{Type Preservation}

Well-typed \ourGV terms and configurations satisfy protocol fidelity and communication safety.
These properties follow from type preservation: typing is consistent across structural congruence and reduction.
In both cases the proof is   by induction on the derivation of the congruence and reduction, respectively;
\ifappendix we include full proofs in \mbox{\Cref{a:ourGVtypePresProofs}}.
\else we include full proofs in the extended version of this paper~\cite{report/vdHeuvelP22}.
\fi

\begin{theorem}
    If $\type{\Gamma} \vdashC{\phi} \term{C}: \type{T}$ and $\term{C} \equivC \term{D}$ or $\term{C} \reddC \term{D}$, then $\type{\Gamma} \vdashC{\phi} \term{D}: \type{T}$.
\end{theorem}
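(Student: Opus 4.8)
The plan is to prove the two parts --- preservation under $\equivC$ and under $\reddC$ --- by induction on the derivation of $\term{C} \equivC \term{D}$ (resp.\ $\term{C} \reddC \term{D}$), but to first establish the corresponding facts at the level of (runtime) terms, since configuration reduction delegates to term reduction and term congruence through Rules~\scc{E-LiftM}, \scc{SC-TermSC}, and \scc{E-ConfLiftSC}. Before any of this I would prove a handful of auxiliary lemmas. The central ones are: a \emph{substitution lemma} (if $\type{\Gamma}, \term{x}:\type{T} \vdashM \term{\mbb{M}}:\type{U}$ and $\type{\Delta} \vdashM \term{\mbb{N}}:\type{T}$, then $\type{\Gamma},\type{\Delta} \vdashM \term{\mbb{M}}\{\term{\mbb{N}}/\term{x}\}:\type{U}$), needed for Rules~\scc{E-SubstName} and \scc{E-NameSubst}; \emph{context replacement lemmas} for each kind of context $\term{\mcl{R}}$, $\term{\mcl{F}}$, $\term{\mcl{G}}$, stating that a typing of $\term{\mcl{R}[\mbb{M}]}$ decomposes into a typing of $\term{\mbb{M}}$ at some type $\type{T}$ under a sub-environment together with a residual typing of the context, and that replacing $\term{\mbb{M}}$ by any $\term{\mbb{N}}$ of the same type $\type{T}$ reconstitutes a typing of the whole (needed for the lifting rules \scc{E-Lift}, \scc{E-LiftC}, \scc{E-LiftM}); and a \emph{relevance lemma} stating that the environment of a typed term contains exactly its free variables up to $\type{\sff{end}}$-typed ones, which may be freely added or dropped via Rule~\scc{T-EndL}.

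For the term level I would prove, by induction on the respective derivations, that $\equivM$ and $\reddM$ preserve typing, treating $\reddM$ after $\equivM$ because Rule~\scc{E-LiftSC} appeals to $\equivM$. The interesting congruence is \scc{SC-SubExt}, where the replacement lemma for $\term{\mcl{R}}$ combined with the relevance lemma (to justify the side condition $\term{x} \notin \fn(\term{\mcl{R}})$) lets the explicit substitution cross the context. For reduction, \scc{E-Lam} and \scc{E-Pair} are handled by inverting \scc{T-App}/\scc{T-Abs} and \scc{T-Pair}/\scc{T-Split} and re-deriving with \scc{T-Sub}; \scc{E-Send} by rewriting \scc{T-Send} into \scc{T-Send'}; and \scc{E-SubstName}/\scc{E-NameSubst} by the substitution and relevance lemmas.

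At the configuration level I would first dispatch $\equivC$. The structural rules on restriction and parallel composition (\scc{SC-ResSwap}, \scc{SC-ResComm}, \scc{SC-ResExt}, \scc{SC-ParComm}, \scc{SC-ParAssoc}, \scc{SC-ParNil}) follow by reshuffling the corresponding typing rules, using the combination $\term{\phi+\phi'}$ of thread markers and, for \scc{SC-ResSwap}, the symmetry of \scc{T-Res} on the empty buffer; \scc{SC-ResNil} uses the relevance lemma to force the garbage-collected session to be $\type{\sff{end}}$. The genuinely new cases are \scc{SC-Send'} and \scc{SC-Select}, which move a message to the \emph{front} of a buffer. For these I would prove a dedicated \emph{front lemma}: $\type{\Gamma},\type{\Delta} \vdashB \term{\bfr{\mbb{M},\vec{m}}} : \type{S''} > \type{S}$ holds exactly when $\type{\Delta} \vdashM \term{\mbb{M}}:\type{T}$ and $\type{\Gamma} \vdashB \term{\bfr{\vec{m}}} : \type{{!}T \sdot S''} > \type{S}$ (with an analogous statement for labels and $\type{\oplus}$), proved by induction on the buffer derivation; inverting \scc{T-Send'} (resp.\ \scc{T-Select}) on the thread exposes exactly the premises this lemma consumes, so that the typing of $\term{\hat{\mcl{F}}[x]}$ (resp.\ $\term{\mcl{F}[x]}$) together with the extended buffer reconstitutes the typing via \scc{T-Res}/\scc{T-ResBuf}. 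For $\reddC$, the lifting rules \scc{E-LiftC}/\scc{E-LiftM}/\scc{E-ConfLiftSC} are discharged by the context replacement lemmas and the already-proved term-level and $\equivC$ results; \scc{E-New} and \scc{E-Spawn} reassemble typings using \scc{T-New}/\scc{T-Res} and \scc{T-Spawn} with the parallel-composition rules; and \scc{E-Recv}/\scc{E-Case}, which consume a message from the \emph{back} of a buffer, use that a nonempty buffer must have been typed by \scc{T-BufSend}/\scc{T-BufSelect} on its last message (inversion), moving that message's typing from the buffer into the thread to match \scc{T-Recv}/\scc{T-Case}.

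I expect the main obstacle to be the interaction between the buffer judgment $\type{\Gamma} \vdashB \term{\bfr{\vec{m}}} : \type{S'} > \type{S}$ and the two restriction rules \scc{T-Res} and \scc{T-ResBuf}, particularly when an \emph{endpoint itself is communicated} through a buffer: there the message's type mentions a bound endpoint, so the front-insertion and back-extraction steps must track which environment (the buffer's or the thread's) owns that endpoint and switch between \scc{T-Res} and \scc{T-ResBuf} accordingly, all while keeping the global environment $\type{\Gamma}$ and the thread marker $\term{\phi}$ unchanged. Getting the bookkeeping of explicit substitutions right under the scope-extrusion congruences \scc{SC-SubExt} and \scc{SC-ConfSubstExt}, where the restricted contexts $\term{\hat{\mcl{F}}}$ are precisely what rule out ill-scoped intermediate states (cf.\ \Cref{x:restrictedThread}), is the other place where care is required.
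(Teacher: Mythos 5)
Your proposal is correct and follows essentially the same route as the paper's proof: term-level subject congruence and reduction first (with the relevance/weakening lemma for $\type{\sff{end}}$-typed variables, exactly the paper's auxiliary lemma), then configuration-level congruence and reduction by induction on the derivation, with inner structural inductions on the contexts $\term{\mcl{R}}$, $\term{\hat{\mcl{F}}}$, $\term{\mcl{G}}$ for the lifting and scope-extrusion rules, the front-insertion fact for buffers in \scc{SC-Send'}/\scc{SC-Select}, and back-extraction by inversion on \scc{T-BufSend}/\scc{T-BufSelect} for \scc{E-Recv}/\scc{E-Case}. The only difference is packaging: what you state as separate named lemmas (the front lemma, the context replacement lemmas) the paper proves inline as nested inductions within the corresponding cases, which does not change the substance of the argument.
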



\section{APCP (Asynchronous Priority-based Classical Processes)}
\label{s:apcp}

APCP~\cite{report/vdHeuvelP21B} is a linear type system for $\pi$-calculus processes that communicate asynchronously (i.e., the output of messages is non-blocking) on connected channel endpoints.
The  type system assigns to endpoints types that specify two-party protocols, in the style of binary session types~\cite{conf/concur/Honda93}.
In APCP, well-typed processes may be cyclically connected: types rely on \emph{priority} annotations, which enable cyclic connections while ruling out circular dependencies between sessions.
Properties of well-typed APCP processes are \emph{type preservation} (\Cref{t:APCPsubjRed}) and \emph{deadlock-freedom} (\Cref{t:APCPdlFree}).




\begin{figure}[t]
    \begin{mdframed}\small
        Process syntax:
        \begin{align*}
            P,Q ::=
            &~ x[y,z]
            & \text{(output)}
            & ~~~ \sepr ~~
              x(y,z) \sdot P
            & \text{(input)}
            \\[-3pt]
            \sepr\!
            &~ x[z] \triangleleft i
            & \text{(selection)}
            & ~~~ \sepr ~~
              x(z) \triangleright \{i: P\}_{i \in I}
            & \text{(branching)}
            & ~~~ \sepr ~~
              \nu{x y}P
            & \text{(restriction)}
            \\[-3pt]
            \sepr\!
            &~ P \| Q
            & \text{(parallel)}
            & ~~~ \sepr ~~
              \0
            & \text{(inaction)}
            & ~~~ \sepr ~~
              x \fwd y
            & \text{(forwarder)}
        \end{align*}

        \vspace{-1em}

        \smallskip
        Structural congruence:
        \begin{align*}
            P
            &\equiv P'
            \quad\text{(if $P \equiv_\alpha P'$)}
            &
            P \| Q
            &\equiv Q \| P
            &
            x \fwd y
            &\equiv y \fwd x
            \\
            P \| (Q \| R)
            &\equiv (P \| Q) \| R
            &
            P \| \0
            &\equiv P
            &
            \nu{x y} x \fwd y
            &\equiv \0
            \\
            P \| \nu{x y}Q
            &\equiv \nu{x y}(P \| Q)
            \quad\text{(if $x,y \notin \fn(P)$)}
            &
            \nu{x y}\0
            &\equiv \0
            \\
            \nu{x y}\nu{z w} P
            &\equiv \nu{z w}\nu{x y} P
            &
            \nu{x y}P
            &\equiv \nu{y x}P
        \end{align*}

        \vspace{-1em}

        \smallskip
        Reduction:
        \begin{mathpar}
            \inferrule*[right=\rLab{$\rred{\scc{Id}}$},vcenter]{%
                z,y \neq x
            }{
                \nu{yz}(x \fwd y \| P) \redd P \{x/z\}
            }
            \and
            \inferrule*[right=\rLab{$\rred{\tensor \parr}$},vcenter]{ }{%
                \nu{xy}(x[a,b] \| y(v,z) \sdot P) \redd P \{a/v,b/z\}
            }
            \and
            \inferrule*[right=\rLab{$\rred{\oplus \&}$},vcenter]{%
                j \in I
            }{
                \nu{xy}(x[b] \puts j \| y(z) \gets \{i:P_i\}_{i \in I}) \redd P_j \{b/z\}
            }
            \and
            \inferrule*[right=\rLab{$\rred{\equiv}$},vcenter]
            {
                P \equiv P'
                \\
                P' \redd Q'
                \\
                Q' \equiv Q
            }
            { P \redd Q }
            \and
            \inferrule*[right=\rLab{$\rred{\onu}$},vcenter]
            { P \redd Q }
            { \nu{x y} P \redd \nu{x y} Q }
            \and
            \inferrule*[right=\rLab{$\rred{\|}$},vcenter]
            { P \redd Q }
            { P \| R \redd Q \| R }
        \end{mathpar}
    \end{mdframed}

    \caption{Definition of APCP's process language.}
    \label{f:procdef}
\end{figure}

\paragraph*{Syntax and Semantics}
We write $x, y, z, \ldots$ to denote \emph{endpoints} (or \emph{names}), and write $\tilde{x}, \tilde{y}, \tilde{z}, \ldots$ to denote sequences of endpoints.
Also, we write $i, j, k, \ldots$ to denote \emph{labels} and $I, J, K, \ldots$ to denote sets of labels.

\Cref{f:procdef} (top) gives the syntax and meaning of processes.
In APCP, all endpoints are used strictly linearly: each endpoint can be used for exactly one communication only.
However, we want to assign session types to endpoints, so we have to be able to implement sequences of communications.
Therefore, each communication action carries an additional \emph{continuation endpoint} to continue the session on.

The output action $x[y,z]$ sends a message endpoint $y$ and a continuation endpoint $z$ along $x$.
The input prefix $x(y,z) \sdot  P$ blocks until a message and a continuation endpoint are received on $x$, binding $y$ and $z$ in $P$.
The selection action $x[z] \puts i$ sends a label $i$ and a continuation endpoint $z$ along $x$.
The branching prefix $x(z) \gets \{i: P_i\}_{i \in I}$ blocks until it receives a label $i \in I$ and a continuation endpoint $z$ on $x$, binding $z$ in each $P_i$.
Restriction $\nu{x y} P$ binds $x$ and $y$ in $P$ to form a channel for communication.
The process $P \| Q$ denotes parallel composition.
The process $\0$ denotes inaction.
The forwarder process $x \fwd y$ is a primitive copycat process that links together $x$ and $y$.

Endpoints are free unless they are bound somehow.
We write $\fn(P)$ for the set of free names of $P$.
Also, we write $P \subst{x/y}$ to denote the capture-avoiding substitution of the free occurrences of $y$ in $P$ for $x$.
We write sequences of substitutions $P \subst{x_1/y_1} \ldots \subst{x_n/y_n}$ as $P \subst{x_1/y_1, \ldots, x_n/y_n}$.


The reduction relation for processes ($P \redd Q$)  formalizes how complementary actions on connected endpoints may synchronize.
As usual for $\pi$-calculi, reduction relies on \emph{structural congruence} ($P \equiv Q$), which relates processes with minor syntactic differences; it is the smallest congruence on the syntax of processes (\figref{f:procdef} (top)) satisfying the axioms in \Cref{f:procdef} (center).


We define the reduction relation $P \redd Q$ by the axioms and closure rules in \Cref{f:procdef} (bottom).
Rule~$\rred{\scc{Id}}$ implements the forwarder as a substitution.
Rule~$\rred{\tensor \parr}$ synchronizes an output and an input on connected endpoints and substitutes the message and continuation endpoints.
Rule~$\rred{\oplus \&}$ synchronizes a selection and a branch:
the received label determines the continuation process, substituting the continuation endpoint appropriately.
Rules~$\rred{\equiv}$, $\rred{\onu}$, and $\rred{\|}$ close reduction under congruence, restriction, and parallel composition, respectively.
We write $\redd^\ast$ for the reflexive, transitive closure of~$\redd$.

\paragraph*{The Type System}

APCP types processes by assigning binary session types to channel endpoints.
Following Curry-Howard interpretations, we present session types as linear logic propositions (cf. Caires \etal~\cite{journal/mscs/CairesPT16} and  Wadler~\cite{conf/icfp/Wadler12})
extended with  \emph{priority} annotations.
Intuitively, actions typed with lower priority cannot be blocked by those with higher priority.

We write $\pri, \opri, \pi, \rho, \ldots$ to denote priorities, and $\omega$ to denote the ultimate priority that is greater than all other priorities  and cannot be increased further.
That is, $\forall \pri \in \mbb{N}.~\omega > \pri$ and $\forall \pri \in \mbb{N}.~\omega + \pri = \omega$.

\begin{definition}
\label{d:props}
    The following grammar defines the syntax of \emph{session types} $A,B$.
    Let $\pri \in \mbb{N}$.
    \begin{align*}
        A,B &::=
        A \tensor^\pri B & \text{(output)}
        & \sepr
        A \parr^\pri B & \text{(input)}
        & \sepr
        \oplus^\pri \{i: A\}_{i \in I} & \text{(select)}
        & \sepr
        \&^\pri \{i: A\}_{i \in I} & \text{(branch)}
        & \sepr
        \bullet & \text{(end)}
    \end{align*}
\end{definition}

\noindent
Note that type $\bullet$ does not require a priority.

\emph{Duality}, the cornerstone of session types and linear logic, ensures that the two endpoints of a channel have matching actions.
Furthermore, dual types must have matching priority annotations.

\begin{definition}
\label{d:duality}
    The \emph{dual} of session type $A$, denoted $\ol{A}$, is defined inductively as follows:
    \begin{align*}
        \ol{A \tensor^\pri B}
        &:= \ol{A} \parr^\pri \ol{B}
        &
        \ol{\oplus^\pri \{ i: A_i \}_{i \in I}}
        &:= \&^\pri \{ i: \ol{A_i} \}_{i \in I}
        &
        \ol{\bullet}
        &:= \bullet
        \\
        \ol{A \parr^\pri B}
        &:= \ol{A} \tensor^\pri \ol{B}
        &
        \ol{\&^\pri \{ i: A_i \}_{i \in I}}
        &:= \oplus^\pri \{ i: \ol{A_i} \}_{i \in I}
        &
        &
    \end{align*}
\end{definition}

The priority of a type is determined by the priority of the type's outermost connective:

\begin{definition}
\label{d:priority}
    For session type $A$, $\pr(A)$ denotes its \emph{priority}:
    \begin{align*}
        \pr(A \tensor^\pri B)
        := \pr(A \parr^\pri B)
        := \pr(\oplus^\pri\{i:A_i\}_{i \in I})
        := \pr(\&^\pri\{i:A_i\}_{i \in I})
        &:= \pri
        &
        \pr(\bullet)
        &:= \omega
    \end{align*}
\end{definition}

\noindent
The priority of $\bullet$ is $\omega$: it denotes a ``final'' action of protocols without blocking behavior.
Although associated with non-blocking behavior, $\tensor$ and $\oplus$ do have a non-constant priority: they are connected to $\parr$ and $\&$, respectively, which denote blocking actions.


The typing rules of APCP ensure that actions with lower priority are not blocked by those with higher priority (cf.\ Dardha and Gay~\cite{conf/fossacs/DardhaG18}).
To this end, typing rules enforce the following laws:
\begin{enumerate}
    \item\label{i:prioLawLower}
        An action with priority $\pri$ must be prefixed only by inputs and branches with priority strictly smaller than $\pri$---this law does not hold for output and selection, as they are not prefixes;

    \item
        dual actions leading to a synchronization must have equal priorities (cf.\ Def.\ \labelcref{d:duality}).
\end{enumerate}
Judgments are of the form $P \vdash \Gamma$, where $P$ is a process and $\Gamma$ is a context that assigns types to endpoints ($x: A$).
A judgment $P \vdash \Gamma$ then means that $P$ can be typed in accordance with the type assignments for names recorded in $\Gamma$.
The context $\Gamma$ obeys \emph{exchange}: assignments may be silently reordered.
$\Gamma$ is \emph{linear}, disallowing \emph{weakening} (i.e., all assignments must be used) and \emph{contraction} (i.e., assignments may not be duplicated).
The empty context is written $\emptyset$.
In writing $\Gamma, x: A$ we assume that $x \notin \dom(\Gamma)$.
We write $\pr(\Gamma)$ to denote the least priority of all types in $\Gamma$ (cf.\ Def.\ \labelcref{d:priority}).

\begin{figure}[t]
    \begin{mdframed}
        \vspace{-4mm}
        {\small%
            \begin{mathpar}
                \inferrule*[right=\rLab{\scc{Empty}}]
                { }
                { \0 \vdash \emptyset }
                \and
                \inferrule*[right=\rLab{$\bullet$}]
                { P \vdash \Gamma }
                { P \vdash \Gamma, x: \bullet }
                \and
                \inferrule*[right=\rLab{\scc{Id}}]
                { }
                { x \fwd y \vdash x: \ol{A}, y: A }
                \and
                \inferrule*[right=\rLab{\scc{Mix}}]
                {
                    P \vdash \Gamma
                    \\
                    Q \vdash \Delta
                }
                { P \| Q \vdash \Gamma, \Delta }
                \and
                \inferrule*[right=\rLab{\scc{Cycle}}]
                { P \vdash \Gamma, x: A, y: \ol{A} }
                { \nu{x y} P \vdash \Gamma }
                \and
                \inferrule*[right=\rLab{$\tensor$}]
                { }
                { x[y,z] \vdash x: A \tensor^\pri B, y: \ol{A}, z: \ol{B} }
                \and
                \inferrule*[right=\rLab{$\parr$}]
                {
                    P \vdash \Gamma, y: A, z: B
                    \\
                    \pri < \pr(\Gamma)
                }
                { x(y, z) \sdot P \vdash \Gamma, x: A \parr^\pri B }
                \and
                \inferrule*[right=\rLab{$\oplus$}]
                { j \in I }
                { x[z] \puts j \vdash x: \oplus^\pri\{i: A_i\}_{i \in I}, z: \ol{A_j} }
                \and
                \inferrule*[right=\rLab{$\&$}]
                {
                    \forall i \in I.~ P_i \vdash \Gamma, z: A_i
                    \\
                    \pri < \pr(\Gamma)
                }
                { x(z) \gets \{i: P_i\}_{i \in I} \vdash \Gamma, x: \&^\pri\{i: A_i\}_{i \in I} }
            \end{mathpar}
        }%
    \end{mdframed}

    \caption{The typing rules of APCP.}
    \label{f:apcpInf}
\end{figure}

\Cref{f:apcpInf} gives the typing rules.
Rule~\scc{Empty} types an inactive process with no endpoints.
Rule~$\bullet$ silently removes a closed endpoint from the typing context.
Rule~\scc{Id} types forwarding between endpoints of dual type.
Rule~\scc{Mix} types the parallel composition of two processes that do not share assignments on the same endpoints.
Rule~\scc{Cycle} types a restriction, where the two restricted endpoints must be of dual type.
Rule~$\tensor$ types an output action; this rule does not have premises to provide a continuation process, leaving the free endpoints to be bound to a continuation process using \scc{Mix} and \scc{Cycle}.
Similarly, Rule~$\oplus$ types an unbound selection action.
Priority checks are confined to Rules~$\parr$ and $\&$, which type input and branching prefixes, respectively.
In both cases, the used endpoint's priority must be lower than the priorities of the other types in the continuation's typing context, thus enforcing Law~\labelcref{i:prioLawLower} above.


Well-typed processes satisfy protocol fidelity, communication safety, and deadlock-freedom.
The first two properties follow from \emph{type preservation}.
Here we only state these results; see~\cite{report/vdHeuvelP21B} for details.

\begin{theorem}[Type Preservation]\label{t:APCPsubjRed}
    If $P \vdash \Gamma$ and $P \equiv Q$ or $P \redd Q$, then $Q \vdash \Gamma$.
\end{theorem}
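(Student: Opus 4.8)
The plan is to split the statement into its two halves and prove each by induction: preservation under structural congruence ($P \equiv Q \implies Q \vdash \Gamma$) and preservation under reduction ($P \redd Q \implies Q \vdash \Gamma$), with the latter invoking the former for the closure Rule~$\rred{\equiv}$. Two auxiliary results drive the argument. First, a \emph{renaming/substitution lemma}: if $P \vdash \Gamma, x: A$ and $y \notin \dom(\Gamma)$, then $P\subst{y/x} \vdash \Gamma, y: A$; since renaming leaves all connectives, and hence all priorities, unchanged, the side conditions of Rules~$\parr$ and $\&$ are automatically preserved. Second, \emph{inversion (generation) lemmas} for each prefix, which must be stated modulo interleaved applications of Rule~$\bullet$ (e.g., inversion on $x(y,z)\sdot P$ yields $P \vdash \Gamma, y: A, z: B$ with $\pri < \pr(\Gamma)$ and $x: A \parr^\pri B$, up to $\bullet$-weakening of closed endpoints).

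For preservation under $\redd$, the base cases are the three axioms. For Rule~$\rred{\tensor\parr}$ I would apply inversion to $x[a,b]$ and $y(v,z)\sdot P$; Rule~\scc{Cycle} forces $x: A \tensor^\pri B$ and $y: \ol A \parr^\pri \ol B$ by \defref{d:duality}, so in particular the two priorities coincide, and the continuation $P$ is already typed in the context $\Gamma, v: \ol A, z: \ol B$. The renaming lemma then gives $P\subst{a/v, b/z} \vdash \Gamma, a: \ol A, b: \ol B$, which is exactly the context obtained after eliminating the cycle. The key simplification afforded by APCP is that outputs and selections carry no typed continuation (Rules~$\tensor$ and $\oplus$ have no process premise), so reduction discharges a prefix without generating any fresh priority obligation---the continuation's priority side condition was already established when the input/branch was typed. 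Rule~$\rred{\oplus\&}$ is analogous, selecting the $j$-th branch; Rule~$\rred{\scc{Id}}$ uses Rule~\scc{Id} together with \scc{Cycle} and the renaming lemma to realize the forwarder as a substitution. The closure Rules~$\rred{\onu}$ and $\rred{\|}$ follow by the induction hypothesis applied under Rules~\scc{Cycle} and \scc{Mix}, and $\rred{\equiv}$ follows from the congruence half.

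For preservation under $\equiv$, I would check each axiom in both directions, reconstructing the derivation of one side from that of the other. The routine cases are commutativity/associativity of $\|$ with $\0$ as unit (Rules~\scc{Mix} and \scc{Empty}) and the commuting conversions $\nu{xy}\nu{zw}P \equiv \nu{zw}\nu{xy}P$ and $\nu{xy}P \equiv \nu{yx}P$ (reordering two applications of Rule~\scc{Cycle}, using that duality is involutive). The delicate cases are scope extrusion $P \| \nu{xy}Q \equiv \nu{xy}(P \| Q)$, which moves a \scc{Cycle} past a \scc{Mix} and relies on $x, y \notin \fn(P)$ to keep $x,y$ out of $P$'s context; the garbage rules $\nu{xy}\0 \equiv \0$ and $\nu{xy}\, x \fwd y \equiv \0$ (the latter typing both sides under the empty context via Rules~\scc{Id}, \scc{Cycle}, and \scc{Empty}); and $\alpha$-conversion. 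I expect the main obstacle to be precisely this structural-congruence layer: because APCP replaces the usual \scc{Cut} with the independent Rules~\scc{Mix} and \scc{Cycle} to permit cyclic connections, typability is not tied to a fixed tree of cuts, and I must show that every re-bracketing of parallel compositions and every relocation of restrictions admits a matching rearrangement of \scc{Mix}/\scc{Cycle}/$\bullet$ applications with identical leaves. Handling Rule~$\bullet$ uniformly throughout the inversion steps---so that closed endpoints can be introduced wherever a given derivation shape demands---is the book-keeping that makes this delicate rather than merely routine.
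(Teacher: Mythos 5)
The paper contains no proof of this theorem: it is imported wholesale from APCP's prior work, with the details deferred to the cited report~\cite{report/vdHeuvelP21B}, where the argument proceeds exactly along the lines you describe. Your proposal is sound and matches that standard route --- mutual treatment of $\equiv$ and $\redd$ by induction on the derivation, a priority-preserving renaming lemma for the substitutions generated by Rules~$\rred{\scc{Id}}$, $\rred{\tensor\parr}$, and $\rred{\oplus\&}$, inversion stated modulo Rule~$\bullet$, and the observation that APCP's continuation-free outputs/selections mean reduction discharges prefixes without creating new priority obligations --- so there is nothing to fault.
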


\begin{theorem}[Deadlock-freedom]\label{t:APCPdlFree}
    If $P \vdash \emptyset$, then either $P \equiv \0$ or $P \redd Q$ for some $Q$.
\end{theorem}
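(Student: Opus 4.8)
The plan is to reduce the theorem to a \emph{progress} argument on a canonical form of closed typed processes, letting the priority discipline do the essential work. First I would establish a canonical-form lemma: if $P \vdash \emptyset$ and $P \not\equiv \0$, then $P \equiv \nu{\tilde{x}\tilde{y}}(\pi_1 \| \cdots \| \pi_n)$ with $n \geq 1$, where each $\pi_k$ is an output $x[y,z]$, a selection $x[z]\puts i$, a forwarder $x \fwd y$, an input prefix $x(y,z)\sdot Q$, or a branch prefix $x(z)\gets\{i:Q_i\}_{i\in I}$. This is proved by induction on the typing derivation: Rules~\scc{Empty} and $\bullet$ give processes $\equiv \0$ (a $\bullet$-typed endpoint carries no action); Rule~\scc{Mix} concatenates the parallel lists; Rule~\scc{Cycle} adds an outer restriction; and the action/prefix rules each produce a single $\pi_k$. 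Since $P \vdash \emptyset$ is closed, every subject of the $\pi_k$ is bound by the restrictions $\nu{\tilde{x}\tilde{y}}$, which I hoist to the top by scope extrusion (the side conditions hold because there are no free names, up to $\alpha$-renaming), discarding any $\0$ summands along the way.

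Next I would dispatch the easy cases. If some $\pi_k$ is a forwarder $x \fwd y$, I first use the congruence $\nu{xy}\,x\fwd y \equiv \0$ to remove any forwarder whose two endpoints are bound by a single restriction; any remaining forwarder has an endpoint bound by a restriction to a distinct partner used elsewhere, so Rule~$\rred{\scc{Id}}$ yields a reduction. Otherwise every $\pi_k$ is a genuine action or prefix; because outputs and selections are themselves actions and $n \geq 1$, at least one top-level action is present.

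The heart of the proof is the priority argument. I would consider the priorities of the subjects of \emph{all} actions occurring in $P$, guarded or not; these are drawn from the $\tensor$, $\parr$, $\oplus$, $\&$ connectives, so the set is finite and nonempty. Let $\pri$ be the least such priority and $\beta$ an action whose subject has priority $\pri$. I claim $\beta$ is \emph{unguarded}, i.e.\ it is one of the $\pi_k$: if $\beta$ occurred under an input or branch prefix, the side condition of Rules~$\parr$ and $\&$ would force that guard to have priority strictly below $\pri$, contradicting minimality. The subject $x$ of $\beta$ is bound by some $\nu{xy}$, and by Rule~\scc{Cycle} the endpoint $y$ has the dual type, so $\pr(y)=\pri$ as well (duality preserves priorities, \defref{d:duality}). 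By linearity $y$ is consumed by exactly one action $\beta'$, which by duality is complementary to $\beta$ and also of subject priority $\pri$, hence likewise unguarded by the same minimality argument. Thus $\beta$ and $\beta'$ are complementary top-level actions on the two ends of $\nu{xy}$---an output/input pair firing by Rule~$\rred{\tensor\parr}$, or a selection/branch pair firing by Rule~$\rred{\oplus\&}$---and closing under Rules~$\rred{\equiv}$, $\rred{\onu}$, $\rred{\|}$ gives $P \redd Q$.

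I expect the main obstacle to be the precise justification that the minimal-priority action \emph{and its dual partner are both present and unguarded}, forming an actual redex under a single restriction. This step must simultaneously invoke linearity (the dual endpoint is consumed by exactly one action), priority-respecting duality (both ends carry the same priority), and the priority law of Rules~$\parr$/$\&$ (any guarding prefix strictly lowers the priority). Secondary care is needed for the canonical-form lemma---hoisting restrictions past parallel components relies on the scope-extrusion side conditions, which hold in the closed setting after $\alpha$-renaming---and for the forwarder case. \Cref{t:APCPsubjRed} underwrites reasoning up to $\equiv$, though the argument above already exhibits a reduction directly from the canonical form.
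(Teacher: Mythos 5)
Note first that this paper never proves Theorem~\labelcref{t:APCPdlFree}: it only states it and defers to the cited APCP report, so there is no in-paper proof to compare against and your argument must stand on its own. Its outer shape (canonical form, forwarder dispatch, least-priority argument) matches the general style of priority-based deadlock-freedom proofs, but its central step fails on two counts, both tied to APCP's asynchrony, and both visible in one small example. Consider
\[
P \;=\; \nu{xx'}\nu{aa'}\nu{bb'}\nu{cc'}\nu{dd'}\big( x(y,z) \sdot z(u,v) \sdot \0 \;\|\; x'[a,b] \;\|\; b'[c,d] \;\|\; \0 \big),
\]
which is closed and well-typed with $x: \bullet \parr^{5} (\bullet \parr^{3} \bullet)$, $x': \bullet \tensor^{5} (\bullet \tensor^{3} \bullet)$, $b: \bullet \parr^{3} \bullet$, $b': \bullet \tensor^{3} \bullet$, and all other endpoints typed $\bullet$. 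Your first flawed step is the claim that a minimal-priority action among \emph{all} actions must be unguarded. The side conditions of Rules~$\parr$ and $\&$ require $\pri < \pr(\Gamma)$ where the continuation is typed $\Gamma, y: A, z: B$: they constrain only the free names of the continuation, and say nothing about the types $A,B$ of the names \emph{bound by the prefix itself}. In $P$, the inner input $z(u,v) \sdot \0$ has priority $3$ yet sits under the priority-$5$ prefix on $x$; both applications of Rule~$\parr$ are checked against empty contexts, so this is well-typed. The global minimum is $3$, and choosing $\beta = z(u,v) \sdot \0$ makes your ``minimality forces unguardedness'' claim simply false.

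Your second flawed step is the dual-partner argument, which assumes the endpoint dual to $\beta$'s subject ``is consumed by exactly one action $\beta'$'' complementary to $\beta$. In APCP outputs and selections are non-blocking \emph{atoms that carry names}, so a restriction-bound endpoint may occur as an \emph{object} (payload or continuation position) of an output rather than as the subject of any action. In $P$, the other minimal-priority action is the unguarded output $b'[c,d]$, and its dual $b$ occurs only as an object inside $x'[a,b]$: there is no action on $b$ at all, complementary or otherwise, and the only redex of $P$ is the pair $(x,x')$ at priority $5$, which your procedure never reaches. This situation is pervasive rather than pathological---the translation in \Cref{f:transTermShort} produces it constantly, e.g.\ $\nu{kl}(f[c,k] \| l \fwd z)$ with $k$ in object position. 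A repaired argument must minimize over \emph{unguarded} actions only, and when the dual endpoint occurs as an object of an unguarded output, chase from that object to the subject of the carrying output, then to that subject's dual, and so on; one must separately prove this chase terminates (for instance, the carried name's type is a strict subterm of the carrier's type, so finiteness of types forbids cycles) and derive the priority contradiction only when the chase reaches a guarded action, where the blocked endpoint, being restriction-bound, \emph{is} free in the guard's continuation and the side condition finally applies. Your canonical-form lemma and forwarder case are fine, but without these two repairs the heart of the proof does not go through.
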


\section{Translating \ourGV into APCP}
\label{s:translation}

\subsection{The Translation}
\label{ss:ourGVintoAPCP}

In this section, we translate \ourGV into APCP.
We translate entire typing derivations, following, e.g., Wadler~\cite{conf/icfp/Wadler12}.
Given the structure of \ourGV and its type system, the translation is defined in parts: for (runtime) terms, for configurations, and for buffers.
The translation is defined on well-typed configurations which may be deadlocked, so our  translation does not consider priority requirements.
As we will see, typability in APCP will enable us to identify deadlock-free configurations in \ourGV (cf.\ Sec.\ \labelcref{ss:ourGVdf}).

The translation is informed by the semantics of \ourGV.
It is crucial that subterms may only reduce when they occur in reduction contexts.
For example, $\term{M_1}$ and $\term{M_2}$ may not reduce if they appear in a pair $\term{(M_1,M_2)}$.
The translation must thus ensure that subterms are blocked when they do not occur in reduction contexts.
Translations such as Wadler's hinge on blocking outputs and inputs;
for example, the pair $\term{(M_1,M_2)}$ is translated as an output that blocks the translations of $\term{M_1}$ and $\term{M_2}$.
However, outputs in APCP are non-blocking and so
we use additional inputs to disable the reduction of  subterms.
For example, the translation of $\term{(M_1,M_2)}$ adds extra inputs to block the translations of $\term{M_1}$ and $\term{M_2}$.

\begin{figure}[t!]
    \begin{mdframed}
    \vspace{-6mm}
        \begin{align*}
            \enct{T \times U} &= (\enct{T} \parr \bullet) \tensor (\enct{U} \parr \bullet)
            &&\dashline
            &
            \enct{{!}T \sdot S} &= (\ol{\enct{T}} \tensor \bullet) \parr \enct{S}
            &
            \enct{\oplus\{i:T_i\}_{i \in I}} &= {\&}\{i:\enct{T_i}\}_{i \in I}
            \\[-.9ex]
            \enct{T \lolli U} &= (\ol{\enct{T}} \tensor \bullet) \parr \enct{U}
            &&\dashline
            &
            \enct{{?}T \sdot S} &= (\enct{T} \parr \bullet) \tensor \enct{S}
            &
            \enct{{\&}\{i:T_i\}_{i \in I}} &= \oplus\{i:\enct{T_i}\}_{i \in I}
            \\[-.9ex]
            \enct{\1} &= \bullet
            &&\dashline
            &
            \enct{\sff{end}} &= \bullet
        \end{align*}
    \end{mdframed}
    \caption{Translation of \ourGV types into session types.}\label{f:transTypes}
\end{figure}

\Cref{f:transTypes} gives the translation of \ourGV types into APCP types ($\enct{T}$), which already captures the operation of the translation: our translation is similar to the one by Wadler, but includes the aforementioned additional inputs.
It may seem odd that this translation dualizes \mbox{\ourGV} session types (e.g., an output `$\type{!}$' becomes an input `$\parr$').
To understand this, consider that a variable $\term{x}$ typed $\type{{!}T \sdot S}$ represents access to a session which expects the user to send a term of type $\type{T}$ and continue as $\type{S}$, but not the output itself.
Hence, to translate an output on $\term{x}$ into APCP, we need to connect the translation of $\term{x}$ to an actual output.
Since this actual output would be typed with $\tensor$, this means that the translation of $\term{x}$ would need to be dually typed, i.e., typed with $\parr$.
A more technical explanation is that the translation moves from two-sided \mbox{\ourGV} judgments to one-sided APCP judgments, which requires dualization (see, e.g., \mbox{\cite{journal/apal/Girard93,conf/places/vdHeuvelP20}}).

Importantly, the translation preserves duality of session types (by induction on their structure):

\begin{proposition}\label{p:transDuality}
    Given a \ourGV session type $\type{S}$, $\ol{\enct{S}} = \enct{\ol{S}}$.
\end{proposition}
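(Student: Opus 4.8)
The plan is to establish $\ol{\enct{S}} = \enct{\ol{S}}$ by structural induction on the \ourGV session type $\type{S}$, comparing the two sides clause by clause. The only ingredients are the three relevant definitions already in place: the type translation of \Cref{f:transTypes}, \ourGV duality from \Cref{ss:ourGVTypeSys}, and APCP duality from \Cref{d:duality}. Because the translation in \Cref{f:transTypes} produces APCP types without priority annotations, priorities play no role in either duality operator here, so I may reason purely about the structural shape of the resulting types.

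For the base case $\type{S} = \type{\sff{end}}$ both sides collapse to $\bullet$: we have $\ol{\enct{\sff{end}}} = \ol{\bullet} = \bullet$, and since $\type{\ol{\sff{end}}} = \type{\sff{end}}$ also $\enct{\ol{\sff{end}}} = \enct{\sff{end}} = \bullet$. The select and branch cases go through directly from the induction hypothesis, since their branch types are themselves session types. For instance, with $\type{S} = \type{\oplus\{i:S_i\}_{i \in I}}$ the left side unfolds as $\ol{\enct{\oplus\{i:S_i\}_{i \in I}}} = \ol{\&\{i:\enct{S_i}\}_{i \in I}} = \oplus\{i:\ol{\enct{S_i}}\}_{i \in I}$, while the right side gives $\enct{\ol{\oplus\{i:S_i\}_{i \in I}}} = \enct{\&\{i:\ol{S_i}\}_{i \in I}} = \oplus\{i:\enct{\ol{S_i}}\}_{i \in I}$; applying the IH $\enct{\ol{S_i}} = \ol{\enct{S_i}}$ to each branch makes these coincide. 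The branch case $\type{S} = \type{\&\{i:S_i\}_{i \in I}}$ is symmetric.

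The delicate cases are output and input, and the point of care is how the carried type $\type{T}$ is handled. Take $\type{S} = \type{{!}T \sdot S'}$, with translation $\enct{{!}T \sdot S'} = (\ol{\enct{T}} \tensor \bullet) \parr \enct{S'}$. Dualizing the outermost $\parr$ gives $\ol{\enct{{!}T \sdot S'}} = \ol{\ol{\enct{T}} \tensor \bullet} \tensor \ol{\enct{S'}} = (\ol{\ol{\enct{T}}} \parr \ol{\bullet}) \tensor \ol{\enct{S'}} = (\enct{T} \parr \bullet) \tensor \ol{\enct{S'}}$. On the other side \ourGV duality leaves the message type untouched, $\type{\ol{{!}T \sdot S'}} = \type{{?}T \sdot \ol{S'}}$, so $\enct{\ol{{!}T \sdot S'}} = (\enct{T} \parr \bullet) \tensor \enct{\ol{S'}}$. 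Matching the continuations uses the IH, $\enct{\ol{S'}} = \ol{\enct{S'}}$; but matching the message component is \emph{not} an appeal to the IH (the carrier $\type{T}$ need not be a session type, so the IH does not apply to it), but rather to the fact that APCP duality is involutive, $\ol{\ol{\enct{T}}} = \enct{T}$, together with $\ol{\bullet} = \bullet$. I would therefore first record the standard lemma that $\ol{\ol{A}} = A$ holds for every APCP type $A$ (a trivial one-line induction on $A$) and invoke it precisely here; the input case $\type{S} = \type{{?}T \sdot S'}$ is entirely symmetric. This interplay---using the IH on the continuation while falling back on involution for the uninterpreted carrier $\type{T}$---is the only real subtlety; everything else is routine unfolding of the two duality operators against the translation clauses.
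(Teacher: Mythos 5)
Your proof is correct and takes essentially the same approach as the paper, which justifies \Cref{p:transDuality} only with the parenthetical remark that it holds ``by induction on their structure''. Your observation that the message-type component in the output/input cases needs involutivity of APCP duality ($\ol{\ol{A}} = A$) rather than the induction hypothesis---since the carried type $\type{T}$ need not be a session type---is a correct elaboration of a detail the paper leaves implicit.
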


We extend the translation of types to typing environments,  defined as expected.
Similarly, we extend duality to typing environments: $\ol{\Gamma}$ denotes $\Gamma$ with each type dualized.
In this section, we give simplified presentations of the translations, showing only the conclusions of the source and target derivations;
\ifappendix \mbox{\Cref{a:transFull}} presents the translations with full derivations.
\else we include the translations with full derivations in the extended version of this paper~\cite{report/vdHeuvelP22}.
\fi

A remark on notation.
Some translated terms include  annotated restrictions $\nuf{xy}$.
These so-called \emph{forwarder-enabled} restrictions can be ignored in this subsection, but will be useful later when proving soundness (one of the correctness properties of the translation; cf.\ \Cref{ss:opCorr}).

We define the translation of (the typing rules of) terms.
Since a term has a provided type, the translation takes as a parameter a name on which the translation provides this type.
\Cref{f:transTermShort} gives the translation of terms, denoted $\enc{z}{\type{\Gamma} \vdashM \term{\mbb{M}}: \type{T}}$,
where the type $\type{T}$ is provided on $z$.
By abuse of notation, we write $\encc{z}{\mbb{M}}$ to denote the process translation of the term $\term{\mbb{M}}$, and similary for configurations and buffers.
Notice the aforementioned additional inputs to block behavior of subterms in rules such as Rule~\scc{T-Pair}.
Before moving to buffers and configurations, we illustrate the translation of terms by an example:

\begin{example}
    \label{x:termTrans}
    Consider the following subterm from \Cref{x:nontrivial}: $\term{ \big( \lambda z \sdot \sff{send}~((),z) \big)~y }$.
    We gradually discuss how this term translates to APCP, and how the translation is set up to mimick the term's behavior.
    \[
        \encc{q}{\big( \lambda z \sdot \sff{send}~((),z) \big)~y} = \nu{ab}\big( \encc{a}{\lambda z \sdot \sff{send}~((),z)} \| \nu{cd}(b[c,q] \| d(e,\_) \sdot \encc{e}{y}) \big)
    \]
    The function application translates the function on $a$, which is connected to $b$.
    The output on $b$ serves to activate the function, which will subsequently activate the functions parameter ($\encc{e}{y} = y \fwd e$) by means of an output that will be received on $d$.
    \[
        \encc{a}{\lambda z \sdot \sff{send}~((),z)} = a(f,g) \sdot \nuf{hz}( \nunil f[h,\_] \| \encc{g}{\sff{send}~((),z)} )
    \]
    The translation of the function is indeed blocked until it receives on $a$.
    It then outputs on $f$ to activate the function's parameter (which receives on $d$), while the function's body appears in parallel.
    \[
        \encc{g}{\sff{send}~((),z)} = \nu{kl}\Big( \encc{k}{((),z)} \| l(m,n) \sdot \nu{op}\big( \nunil n[o,\_] \| \nu{rs}( p[m,r] \| s \fwd g ) \big) \Big)
    \]
    The translation of the $\term{\sff{send}}$ primitive connects the translation of the pair $\term{((),z)}$ on $k$ to an input on $l$, receiving endpoints for the output term ($m$) and the output endpoint ($n$).
    Once activated by the input on $l$, the term representing the output endpoint is activated by means of an output on~$n$.
    In parallel, the actual output (on $p$) sends the endpoint of the output term ($m$) and a fresh endpoint ($r$) representing the continuation channel after the message has been placed in a buffer (the forwarder~$s \fwd g$).
    \[
        \encc{k}{((),z)} = \nu{tu}\nu{vw}( k[t,v] \| u(a',\_) \sdot \encc{a'}{()} \| w(b',\_) \sdot \encc{b'}{z} )
    \]
    The translation of the pair outputs on $k$ two endpoints for the two terms it contains (to be received by whatever intends to use the pair in the context, e.g., the $\term{\sff{send}}$ primitive on $l$).
    The translations of the two terms inside the pair ($\encc{a'}{()} = \0$ and $\encc{b'}{z} = z \fwd b'$) are both guarded by an input, preventing the terms from reducing until the context explicitly activates them by means of outputs.

    Analogously to the reductions from \Cref{x:nontrivial}---$\term{ \big(\lambda z \sdot \sff{send}~((),z) \big)~y } \reddM^3 \term{ \sff{send'}((),y) }$---we have
    \[
        \encc{q}{\big(\lambda z \sdot \sff{send}~((),z) \big)~y} \redd^5 \encc{q}{\sff{send'}((),y)}.
    \]
\end{example}

\Cref{f:transConfBufShort} (top) gives the translation of configurations, denoted $\encc{z}{\type{\Gamma} \vdashC{\phi} \term{C}: \type{T}}$.
We omit the translation of Rule~\scc{T-ParR}.
Noteworthy are the translations of buffered restrictions: the translation of $\term{\nu{x\bfr{\vec{m}}y}C}$ relies on the translation of   $\term{\bfr{\vec{m}}}$, which is given the translation of $\term{C}$ as its continuation.

\begin{figure}[t]
    \begin{mdframed}[innerrightmargin=0ex,innerleftmargin=.1ex]
        \begin{align*}
            &\sccsm{T-Var}
            &
            \enc{z}{
                \term{x}: \type{T} \vdashM \term{x}: \type{T}
            }
            &= x \fwd z \vdash x: \ol{\enct{\type{T}}}, z: \enct{\type{T}}
            \qquad
            \sccsm{T-Unit}
            \quad
            \enc{z}{
                \type{\emptyset} \vdashM \term{()}: \type{\1}
            }
            = \0 \vdash z: \bullet
            \\
            &\sccsm{T-Abs}
            &
            \enc{z}{
                \type{\Gamma} \vdashM \term{\lambda x \sdot M}: \type{T \lolli U}
            }
            &= z(a,b) \sdot \nuf{cx}(\nu{ef}a[c,e] \| \encc{b}{M}) \vdash \ol{\enct{\type{\Gamma}}}, z: (\ol{\enct{\type{T}}} \tensor \bullet) \parr \enct{\type{U}}
            \\
            &\sccsm{T-App}
            &
            \enc{z}{
                \type{\Gamma}, \type{\Delta} \vdashM \term{M~N}: \type{U}
            }
            &= \nu{ab}(\encc{a}{M} \| \nu{cd}(b[c,z] \| d(e,f) \sdot \encc{e}{N})) \vdash \ol{\enct{\type{\Gamma}}}, \ol{\enct{\type{\Delta}}}, z: \enct{\type{U}}
            \\
            &\sccsm{T-Pair}
            &
            \enc{z}{
                \begin{array}{@{}r@{}}
                    \type{\Gamma}, \type{\Delta} \vdashM \term{(M,N)} \\
                    {}: \type{T \times U}
                \end{array}
            }
            &= \begin{array}{@{}l@{}}
                \nu{ab}\nu{cd}(z[a,c] \|
                b(e,f) \sdot \encc{e}{M} \| d(g,h) \sdot \encc{g}{N})
                \\
                \quad {} \vdash \ol{\enct{\type{\Gamma}}}, \ol{\enct{\type{\Delta}}}, z: (\enct{\type{T}} \parr \bullet) \tensor (\enct{\type{U}} \parr \bullet)
            \end{array}
            \\
            &\sccsm{T-Split}
            &
            \enc{z}{
                \begin{array}{@{}r@{}}
                    \type{\Gamma}, \type{\Delta} \vdashM \term{\sff{let}\, (x,y)} \\
                    \term{{}= M\, \sff{in}\, N}: \type{U}
                \end{array}
            }
            &= \begin{array}{@{}l@{}}
                \nu{ab}(\encc{a}{M} \| b(c,d) \sdot \nuf{ex}\nuf{fy}
                ( \\
                \qquad \nu{gh}c[e,g] \| \nu{kl}d[f,k] \| \encc{z}{N})) \vdash \ol{\enct{\type{\Gamma}}}, \ol{\enct{\type{\Delta}}}, z: \enct{\type{U}}
            \end{array}
            \\
            &\sccsm{T-New}
            &
            \enc{z}{
                \type{\emptyset} \vdashM \term{\sff{new}}: \type{S \times \ol{S}}
            }
            &= \begin{array}[t]{@{}l@{}}
                \nu{ab}(\nu{cd}a[c,d] \| b(e,f) \sdot
                \nu{xy}\encc{z}{(x,y)})
                \\
                \quad {} \vdash z: (\enct{\type{S}} \parr \bullet) \tensor (\enct{\type{\ol{S}}} \parr \bullet)
            \end{array}
            \\
            &\sccsm{T-Spawn}
            &
            \enc{z}{
                \type{\Gamma} \vdashM \term{\sff{spawn}~M}: \type{T}
            }
            &= \nu{ab}(\encc{a}{M} \| b(c,d) \sdot (\nu{ef}c[e,f] \| \nu{gh}d[z,g])) \vdash \ol{\enct{\type{\Gamma}}}, z: \enct{\type{T}}
            \\
            &\sccsm{T-EndL}
            &
            \enc{z}{
                \type{\Gamma}, \term{x}: \type{\sff{end}} \vdashM \term{M}: \type{T}
            }
            &= \encc{z}{M} \vdash \ol{\enct{\type{\Gamma}}}, x: \bullet, z: \enct{\type{T}}
            \qquad
            \sccsm{T-EndR}
            \quad
            \enc{z}{
                \type{\emptyset} \vdashM \term{x}: \type{\sff{end}}
            }
            = \0 \vdash z: \bullet
            \\
            &\sccsm{T-Send}
            &
            \enc{z}{
                \type{\Gamma} \vdashM \term{\sff{send}~M}: \type{S}
            }
            &= \begin{array}[t]{@{}l@{}}
                \nu{ab}(\encc{a}{M} \| b(c,d) \sdot \nu{ef}
                (\nu{gh}d[e,g] \\
                \qquad {}\| \nu{kl}(f[c,k] \| l \fwd z))) \vdash \ol{\enct{\type{\Gamma}}}, z: \enct{\type{S}}
            \end{array}
            \\
            &\sccsm{T-Recv}
            &
            \enc{z}{
                \type{\Gamma} \vdashM \term{\sff{recv}~M}: \type{T \times S}
            }
            &= \begin{array}[t]{@{}l@{}}
                \nu{ab}(\encc{a}{M} \| b(c,d) \sdot \nu{ef}(
                z[c,e] \| f(g,h) \sdot d \fwd g))
                \\
                \quad {} \vdash \ol{\enct{\type{\Gamma}}}, z: (\enct{\type{T}} \parr \bullet) \tensor (\enct{\type{S}} \parr \bullet)
            \end{array}
            \\
            &\sccsm{T-Select}
            &
            \enc{z}{
                \type{\Gamma} \vdashM \term{\sff{select}\, j\, M}: \type{T_j}
            }
            &= \nu{ab}(\encc{a}{M} \| \nu{cd}(b[c] \puts j \| d \fwd z)) \vdash \ol{\enct{\type{\Gamma}}}, z: \enct{\type{T_j}}
            \\
            &\sccsm{T-Case}
            &
            \enc{z}{
                {\begin{array}{@{}r@{}}
                    \type{\Gamma}, \type{\Delta} \vdashM \term{\sff{case}\, M} \\
                    \term{\sff{of}\, \{i:N_i\}_{i \in I}}: \type{U}
                \end{array}}
            }
            &= \nu{ab}(\encc{a}{M} \| b(c) \gets \{i:\encc{z}{N_i~c}\}_{i \in I}) \vdash \ol{\enct{\type{\Gamma}}}, \ol{\enct{\type{\Delta}}}, z:\enct{\type{U}}
            \\
            &\sccsm{T-Sub}
            &
            \enc{z}{
                \type{\Gamma}, \type{\Delta} \vdashM \term{\mbb{M}\xsub{ \mbb{N}/x }}: \type{U}
            }
            &= \nuf{xa}(\encc{z}{\mbb{M}} \| \encc{a}{\mbb{N}}) \vdash \ol{\enct{\type{\Gamma}}}, \ol{\enct{\type{\Delta}}}, z: \enct{\type{U}}
            \\
            &\sccsm{T-Send'}
            &
            \enc{z}{
                \begin{array}{@{}r@{}}
                    \type{\Gamma}, \type{\Delta} \vdashM \term{\sff{send}'} \\
                    \term{(M,\mbb{N})}: \type{S}
                \end{array}
            }
            &= \begin{array}{@{}l@{}}
                \nu{ab}(a(c,d) \sdot \encc{c}{M} \| \nu{ef}(\encc{e}{\mbb{N}} \| \nu{gh}(
                f[b,g] \| h \fwd z)))
                \\
                \quad {} \vdash \ol{\enct{\type{\Gamma}}}, \ol{\enct{\type{\Delta}}}, z: \enct{\type{S}}
            \end{array}
        \end{align*}
    \end{mdframed}\vspace{-.9em}
    \caption{%
        Translation of (runtime) term typing rules.
        \ifappendix See \Cref{a:transFull} for typing derivations.
        \else See~\cite{report/vdHeuvelP22} for typing derivations.
        \fi%
    }\label{f:transTermShort}
\end{figure}

The translation of buffers requires care: each message in the buffer is translated as an output in APCP, where the output of the following messages is on the former output's continuation endpoint.
Once there are no more messages in the buffer, the translation uses a typed APCP process---a parameter of the translation---to provide the behavior of the continuation of the lastmost output.
The translation has no requirements for the continuation process and its typing, except for the type of the buffer's endpoint.
With this in mind, \Cref{f:transConfBufShort} (bottom) gives the translation of the typing rules of buffers, denoted
$
\benc{x}{P \vdashAst \Lambda, x: \ol{\enct{\type{S'}}}}{\type{\Gamma} \vdashB \term{\bfr{\vec{m}}}: \type{S'} > \type{S}}
$,
where $x$ is the endpoint on which the buffer outputs, and $P$ is the continuation of the buffer's last message.
Note that we never use the typing rules for buffers by themselves: they always accompany the typing of endpoint restriction, of which the translation properly instantiates the continuation process.

\begin{figure}[t!]
    \begin{mdframed}
        \begin{align*}
            &\sccsm{T-Main/Child}
            &
            \enc{z}{
                \type{\Gamma} \vdashC{\phi} \term{\phi\, \mbb{M}}: \type{T}
            }
            &= \encc{z}{\mbb{M}} \vdash \ol{\enct{\type{\Gamma}}}, z: \enct{\type{T}}
            \\
            &\sccsm{T-ParL}
            &
            \enc{z}{
                \type{\Gamma}, \type{\Delta} \vdashC{\child + \phi} \term{C \prl D}: \type{T}
            }
            &= \nu{ab}\encc{a}{C} \| \encc{z}{D} \vdash \ol{\enct{\type{\Gamma}}}, \ol{\enct{\type{\Delta}}}, z: \enct{\type{T}}
            \\
            &\sccsm{T-Res/T-ResBuf}
            &
            \enc{z}{
                \type{\Gamma}, \type{\Delta} \vdashC{\phi} \term{\nu{x\bfr{\vec{m}}y}C}: \type{T}
            }
            &= \inferrule{}{
                \nu{xy}\bencb{x}{\encc{z}{C}}{\bfr{\vec{m}}} \vdash \ol{\enct{\type{\Gamma}}}, \ol{\enct{\type{\Delta}}}, z: \enct{T}
            }
            \\
            &\sccsm{T-ConfSub}
            &
            \enc{z}{
                \type{\Gamma}, \type{\Delta} \vdashC{\phi} \term{C\xsub{ \mbb{M}/x }}: \type{U}
            }
            &= \nuf{xa}(\encc{z}{C} \| \encc{z}{\mbb{M}}) \vdash \ol{\enct{\type{\Gamma}}}, \ol{\enct{\type{\Delta}}}, z: \enct{\type{U}}
            \\[8pt]
            &\sccsm{T-Buf}
            &
            \benc{x}{P \vdash \Lambda, x: \ol{\enct{\type{S'}}}}{
                \type{\emptyset} \vdashB \term{\bfr{\epsilon}}: \type{S'} > \type{S'}
            }
            &= P \vdash \Lambda, x: \ol{\enct{\type{S'}}}
            \\
            &\sccsm{T-BufSend}
            &
            \benc{x}{P \vdash \Lambda, x: \ol{\enct{\type{S'}}}}{
                \begin{array}{@{}r@{}}
                    \type{\Gamma}, \type{\Delta} \vdashB \term{\bfr{\vec{m},M}} \\
                    {}: \type{S'}
                    {}> \type{{!}T \sdot S}
                \end{array}
            }
            &= \begin{array}{@{}l@{}}
                \nu{ab}\nu{cd}
                (\nu{gh}(x \fwd g \| h[a,c]) \| b(e,f) \sdot \encc{e}{M} \\
                \quad {}\| \bencb{d}{P\{d/x\}}{\bfr{\vec{m}}}) \vdash
                    \ol{\enct{\type{\Gamma}}}, \ol{\enct{\type{\Delta}}}, \Lambda,
                    x: (\enct{\type{T}} \parr \bullet) \tensor \ol{\enct{\type{S}}}
            \end{array}
            \\
            &\sccsm{T-BufSelect}
            &
            \benc{x}{P \vdash \Lambda, x: \ol{\enct{\type{S'}}}}{
                \begin{array}{@{}r@{}}
                    \type{\Gamma} \vdashB \term{\bfr{\vec{m},j}} \\
                    {}: \type{S'}
                    {}> \type{\oplus\{i:S_i\}_{i \in I}}
                \end{array}
            }
            &= \begin{array}{@{}l@{}}
                \nu{ab}
                (\nu{cd}(x \fwd c \| d[a] \puts j) \\
                \quad {}\| \bencb{b}{P\{b/x\}}{\bfr{\vec{m}}}) \vdash \ol{\enct{\type{\Gamma}}}, \Lambda, x: \oplus\{i:\ol{\enct{\type{S_i}}}\}_{i \in I}
            \end{array}
        \end{align*}
    \end{mdframed}
    \caption{%
        Translation of configuration and buffer typing rules.
        \ifappendix See \Cref{a:transFull} for typing derivations.
        \else See~\cite{report/vdHeuvelP22} for typing derivations.
        \fi%
    }\label{f:transConfBufShort}
\end{figure}

Because \ourGV configurations may deadlock, the type preservation result of our translation holds up to priority requirements.
To formalize this, we have the following definition:

\begin{definition}
\label{d:vdashAst}
    Let $P$ be a process.
    We write $P \vdashAst \Gamma$ to denote that $P$ is well-typed according to the typing rules in \Cref{f:apcpInf} where Rules~$\parr$ and $\&$ are modified by erasing priority checks.
\end{definition}

Hence, if $P \vdash \Gamma$ then $P \vdashAst \Gamma$ but the converse does not hold.
Our translation  correctly preserves the typing of terms, configurations, and buffers:

\begin{theorem}[Type Preservation for the Translation]\label{t:transTypePres}
    ~
    \begin{itemize}
        \item[$\bullet$]
            $\enc{z}{\type{\Gamma} \vdashM \term{\mbb{M}}: \type{T}} = \encc{z}{\mbb{M}} \vdashAst \ol{\enct{\Gamma}}, z: \enct{T}$
         \qquad \qquad \qquad \qquad     $\bullet ~~\encc{z}{\type{\Gamma} \vdashC{\phi} \term{C}: \type{T}} = \encc{z}{C} \vdashAst \ol{\enct{\Gamma}}, z: \enct{T}$
        \item[$\bullet$]
            $\benc{x}{P\, \vdashAst \Lambda, x: \ol{\enct{S'}}}{\type{\Gamma} \vdashB \term{\bfr{\vec{m}}}: \type{S'} > \type{S}} = \bencb{x}{P}{\bfr{\vec{m}}} \vdashAst \ol{\enct{\Gamma}}, \Lambda, x: \enct{S}$
    \end{itemize}
\end{theorem}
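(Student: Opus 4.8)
The statement is a type-preservation result for the translation, asserting that each of the three translations (terms, configurations, buffers) sends a well-typed source derivation to a well-typed APCP process in the priority-erased system $\vdashAst$. The natural approach is a simultaneous induction on the structure of the source typing derivation, with one case per typing rule in \Cref{f:gvType}. Because the translation in \Cref{f:transTermShort} and \Cref{f:transConfBufShort} is defined by recursion on exactly these rules, each case reduces to checking that the displayed target process typechecks in $\vdashAst$ against the displayed target context, using the induction hypotheses supplied for the immediate subderivations.

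The plan is as follows. First I would fix the induction to be simultaneous over the three judgment forms $\type{\Gamma}\vdashM\term{\mbb{M}}:\type{T}$, $\type{\Gamma}\vdashC{\phi}\term{C}:\type{T}$, and $\type{\Gamma}\vdashB\term{\bfr{\vec{m}}}:\type{S'}>\type{S}$, since the term translation recurses into terms, the configuration translation recurses into both terms and buffers, and the buffer translation recurses into terms and (parametrically) into a continuation process $P$. The buffer case is the only one carrying an extra hypothesis on a parameter: here the induction hypothesis is the assumption $P\vdashAst\Lambda, x:\ol{\enct{\type{S'}}}$ stated in the theorem, and I would carry this as part of the statement being proved. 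For each term rule I would verify the target typing by building the corresponding APCP derivation: e.g.\ for \scc{T-Pair} the target is $\nu{ab}\nu{cd}(z[a,c]\| b(e,f)\sdot\encc{e}{M}\| d(g,h)\sdot\encc{g}{N})$, which I type by applying \scc{Id}-free output Rule~$\tensor$ to $z[a,c]$, the blocking inputs via Rule~$\parr$ (priority-erased, so the side condition $\pri<\pr(\Gamma)$ is dropped), the induction hypotheses $\encc{e}{M}\vdashAst\ol{\enct{\type{\Gamma}}},e:\enct{\type{T}}$ and $\encc{g}{N}\vdashAst\ol{\enct{\type{\Delta}}},g:\enct{\type{U}}$, and then composing with \scc{Mix} and closing the auxiliary channels with \scc{Cycle}. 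I would treat \scc{T-Abs}, \scc{T-App}, \scc{T-Split}, \scc{T-Send}, \scc{T-Send'}, \scc{T-Recv}, \scc{T-Select}, \scc{T-Case}, \scc{T-New}, \scc{T-Spawn}, and \scc{T-Sub} in the same mechanical way, each time reading off from \Cref{f:transTermShort} that the asserted context is exactly $\ol{\enct{\type{\Gamma}}}, z:\enct{\type{T}}$. For the configuration rules I would dispatch \scc{T-Main}/\scc{T-Child}, \scc{T-ParL} (and the omitted \scc{T-ParR} symmetrically), and \scc{T-ConfSub} analogously, and handle \scc{T-Res}/\scc{T-ResBuf} by feeding $\encc{z}{C}$ as the continuation process into the buffer translation and then applying \scc{Cycle} to close $\nu{xy}$.

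Two points need genuine care rather than routine bookkeeping. The first is that, because the translation dualizes the source context (conclusions carry $\ol{\enct{\type{\Gamma}}}$ on the APCP side), I must be careful that every place where the translation threads a subterm's provided type through a $\tensor$/$\parr$ connective matches the dualization in \Cref{f:transTypes}; here \Cref{p:transDuality} ($\ol{\enct{\type{S}}}=\enct{\ol{S}}$) is the key lemma, used precisely in the restriction cases where endpoints $\term{x}:\type{S'}$ and $\term{y}:\type{\ol{S}}$ must be connected by \scc{Cycle} and hence must carry dual APCP types. The second, which I expect to be the main obstacle, is the buffer case: the buffer translation is parametric in a continuation process $P$ and substitutes $P\{d/x\}$ (respectively $P\{b/x\}$) into the recursive call, so the induction hypothesis for the shorter buffer $\term{\bfr{\vec{m}}}$ must be invoked at the renamed continuation $P\{d/x\}\vdashAst\Lambda, d:\ol{\enct{\type{S}}}$ rather than at $P$ itself. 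Establishing that this renamed continuation is well-typed requires a substitution/renaming lemma for $\vdashAst$ (typing is preserved under injective renaming of free names), and threading the type $\type{S'}>\type{S}$ correctly through the forwarder-and-output prefix $\nu{gh}(x\fwd g\| h[a,c])$ so that the outermost connective built matches $\enct{\type{S}}=(\enct{\type{T}}\parr\bullet)\tensor\ol{\enct{\type{S}}}$ as demanded by the asserted conclusion. Once the renaming lemma is in hand and the forwarder-enabled restrictions $\nuf{xy}$ are treated simply as ordinary $\nu$ for the purposes of typing (their annotation is irrelevant to $\vdashAst$), the buffer case closes and the induction is complete.
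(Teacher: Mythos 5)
Your plan coincides with the paper's own proof: the theorem is established by construction, with the appendix (\Cref{a:transFull}) exhibiting, for each source typing rule, the complete $\vdashAst$ derivation of the translated process---exactly your rule-by-rule induction over the three judgment forms, including the parametric continuation hypothesis (with its implicit renaming) in the buffer cases and \Cref{p:transDuality} at \scc{T-Res}/\scc{T-ResBuf}. One detail to fix in your buffer case (and arguably in the statement as printed): the translated buffer provides $x : \ol{\enct{S}}$ rather than $\enct{S}$---e.g.\ \scc{T-BufSend} yields $(\enct{T} \parr \bullet) \tensor \ol{\enct{S}} = \ol{\enct{{!}T \sdot S}}$ at $x$, and \scc{T-Buf} yields $\ol{\enct{S'}}$---which is precisely the dual needed against $y : \ol{\enct{\ol{S}}} = \enct{S}$ when Rule~\scc{Cycle} closes $\nu{xy}$ in the restriction cases.
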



\begin{example}\label{x:trans}
    Consider again the configuration
    $
    \term{\nu{x\bfr{M,\ell,()}y}C}
    $.
    We illustrate the translation of buffers into APCP by giving the translation of this configuration (writing $\fwded{x}[a,b]$ to denote the forwarded output $\nu{cd}(x \fwd c \| d[a,b])$):
    \begin{align*}
        & \encc{z}{\nu{x\bfr{M,\ell,()}y}C}
        = \nu{xy}\bencb{x}{\encc{z}{C}}{\bfr{M,\ell,()}}
        = \nu{xy} ~ \nu{ab}\nu{cx'} ( \fwded{x}[a,c] \| b(d,\_) \sdot \0
        \\
        &\qquad {} \| \nu{ex''} ( \fwded{x'}[e] \puts \ell \| \nu{fg}\nu{hx'''} ( \fwded{x''}[f,h] \| g(k,\_) \sdot \encc{k}{M} \| \encc{z}{C} \{x'''/x\} ) ) )
    \end{align*}
    Notice how the (forwarded) outputs are sequenced by continuation endpoints, and how the translation of~$\term{C}$ uses the last continuation endpoint $x'''$ to interact with the buffer.
\end{example}

\subsection{Operational Correctness}
\label{ss:opCorr}

Following Gorla~\cite{journal/ic/Gorla10}, we focus   on \emph{operational correspondence}: a translated configuration can reproduce all of the source configuration's reductions (completeness; \Cref{t:completeness}), and any of the translated configuration's reductions can be traced back to reductions of the source configuration (soundness; \Cref{t:soundness}).
With the soundness result, our translation is stronger than related prior translations~\cite{book/Milner89,book/SangiorgiW03,conf/esop/LindleyM15}.


Our completeness result states that the reductions of a well-typed configuration   can be mimicked by its translation in zero or more steps.

\begin{restatable}[Completeness] 
{theorem}{thmTransConfRed}\label{t:completeness}
    Given $\type{\Gamma} \vdashC{\phi} \term{C}: \type{T}$, if $\term{C} \reddC \term{D}$, then $\encc{z}{C} \redd^\ast \encc{z}{D}$.
\end{restatable}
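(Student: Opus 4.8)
The plan is to proceed by induction on the derivation of the configuration reduction $\term{C} \reddC \term{D}$, following the structure of the reduction rules in \Cref{f:gvConfs}. For each reduction rule, I would inspect the translation clauses in \Cref{f:transConfBufShort} (and, where terms reduce, \Cref{f:transTermShort}) to exhibit an explicit sequence of APCP reductions taking $\encc{z}{C}$ to $\encc{z}{D}$. Since completeness only requires reduction in \emph{zero or more} steps, the congruence-induced cases (Rule~\scc{E-ConfLiftSC}) and the purely structural rearrangements should be handled by showing that structural congruence on configurations is reflected by structural congruence (hence zero reductions) on translated processes.

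The key steps, in order, would be: First, I would establish an auxiliary lemma that translation respects structural congruence, i.e., if $\term{C} \equivC \term{D}$ then $\encc{z}{C} \equiv \encc{z}{D}$, handling each axiom of $\equivC$ against the corresponding APCP congruence axioms in \Cref{f:procdef}; the most delicate cases here are \scc{SC-Send'} and \scc{SC-Select}, where placing a message into a buffer on the \ourGV side must correspond to a mere rearrangement (not a reduction) on the APCP side, reflecting the design decision that buffering is computationally equivalent to congruence. Second, I would treat the base reduction rules: \scc{E-New}, \scc{E-Spawn}, \scc{E-Recv}, and \scc{E-Case}. For each, I would align the redex's translation with an APCP redex and count the synchronizations (Rules~$\rred{\tensor\parr}$, $\rred{\oplus\&}$, and $\rred{\scc{Id}}$ for forwarders) needed to reach the translation of the contractum; \Cref{x:termTrans} already exhibits such a bookkeeping (five $\redd$ steps for a three-step term reduction), so the pattern is clear. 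Third, I would handle the lifting rules \scc{E-LiftC} and \scc{E-LiftM}: these require a compositionality argument showing that the translation of a context $\term{\mcl{G}}$ or $\term{\mcl{F}}$ applied to $\term{C}$ decomposes so that reductions inside the hole lift to the outer process via the closure rules $\rred{\onu}$, $\rred{\|}$, and $\rred{\equiv}$. For \scc{E-LiftM} I would additionally need a term-level analogue of completeness---that $\term{\mbb{M}} \reddM \term{\mbb{M}'}$ implies $\encc{z}{\mbb{M}} \redd^\ast \encc{z}{\mbb{M}'}$---proved by a parallel induction on the term reduction derivation using the clauses in \Cref{f:transTermShort}.

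I expect the main obstacle to be the interaction between explicit substitutions and the forwarder-enabled restrictions $\nuf{xy}$ appearing in the translations of Rules~\scc{T-Sub}, \scc{T-Abs}, \scc{T-Split}, and \scc{T-ConfSub}. Specifically, the \ourGV rules \scc{E-SubstName} and \scc{E-NameSubst} convert explicit substitutions into real substitutions, which on the APCP side must be realized by forwarder reductions (Rule~$\rred{\scc{Id}}$); verifying that the annotated restriction $\nuf{xa}(\encc{z}{\mbb{M}} \| \encc{a}{\mbb{N}})$ reduces correctly to $\encc{z}{\mbb{M}\{y/x\}}$ or $\encc{z}{\mbb{N}}$ requires carefully tracking which endpoint the forwarder links and confirming that the substitution it induces matches the \ourGV substitution. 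A related subtlety is the scope-extrusion congruence \scc{SC-SubExt} and its configuration-level counterparts, which must be shown to correspond exactly to APCP's congruence for moving restrictions past parallel composition; getting the side conditions on free names to line up is where the bookkeeping is most error-prone. I would isolate these forwarder-substitution facts as a small reusable lemma before running the main induction.
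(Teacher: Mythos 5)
Your proposal matches the paper's proof essentially step for step: the paper runs the same induction on the derivation of $\term{C} \reddC \term{D}$, backed by exactly the auxiliary results you identify---preservation of structural congruence at term and configuration level (with \scc{SC-Send'}/\scc{SC-Select} handled as congruences, not reductions), a separate term-level completeness theorem for Rule~\scc{E-Lift\kern0ptM}, inner inductions on the structure of contexts for the lifting and base cases, and a lemma collecting translation properties (free names, substitution, buffer compositionality) to discharge the scope-extrusion and forwarder bookkeeping you flag as the main obstacle. No substantive difference in approach.
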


\begin{proof}[Proof (Sketch)]
    By induction on the derivation of the configuration's reduction.
    In each case, we infer the shape of the configuration from the reduction and well-typedness.
    We then consider the translation of the configuration, and show that the resulting process reduces in zero or more steps to the translation of the reduced configuration.
    \ifappendix See \mbox{\Cref{as:completeness}} for a full proof.
    \else See the extended version of this paper~\cite{report/vdHeuvelP22} for a full proof.
    \fi
\end{proof}


Soundness states that any sequence of reductions from the translation of a well-typed configuration eventually leads to the translation of another configuration, which the initial configuration also reduces to.
Asynchrony in APCP requires us to be careful, specifically concerning the semantics of variables in \ourGV.
Variables can only cause reductions under specific circumstances.
On the other hand, variables translate to forwarders in APCP, which reduce as soon as they are bound by restriction.
This semantics for forwarders turns out to be too eager for soundness.
As a result, soundness only holds for an alternative, so-called \emph{lazy semantics} for APCP, denoted $\reddL$, in which forwarders may only cause reductions under specific circumstances.
It is here that the forwarder-enabled restrictions $\nuf{xy}$ anticipated in \Cref{ss:ourGVintoAPCP} come into play.
As we will see in \Cref{ss:ourGVdf}, this alternative semantics does not prevent us from identifying a class of deadlock-free \ourGV configurations through the translation into APCP.
\ifappendix Due to space limitations,  the definition of the lazy semantics appears in \mbox{\Cref{as:soundness}}.
\else Due to space limitations, the definitions of the lazy semantics only appears in the extended version of this paper~\cite{report/vdHeuvelP22}.
\fi

\begin{restatable}[Soundness] 
{theorem}{thmTransSndConf}
    \label{t:soundness}
    Given $\type{\Gamma} \vdashC{\phi} \term{C}: \type{T}$, if $\encc{z}{C} \reddL^\ast Q$, then  $\term{C} \reddC^\ast \term{D}$ and $Q \reddL^\ast \encc{z}{D}$ for some~$\term{D}$.
\end{restatable}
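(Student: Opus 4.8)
The plan is to argue by induction on the number $k$ of lazy reduction steps in $\encc{z}{C} \reddL^\ast Q$, relying on the translation's type preservation (\Cref{t:transTypePres}) together with a careful analysis of which APCP reductions the lazy semantics permits. The base case $k=0$ is immediate, taking $\term{D} = \term{C}$. For the inductive step, I would first establish a \emph{decomposition} (or commuting-diamond) lemma: for any single lazy reduction $\encc{z}{C} \reddL Q_1$, the redex in $\encc{z}{C}$ must originate from the translation of some specific syntactic feature of $\term{C}$ (an output/input pairing from a $\term{\sff{send}'}$ and buffer, a selection/branching from $\term{\sff{select}}$/$\term{\sff{case}}$, or a forwarder reduction enabled by a $\nuf{xy}$ restriction). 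The core claim is that each such $Q_1$ is either again of the form $\encc{z}{C'}$ for a configuration $\term{C'}$ with $\term{C} \reddC^\ast \term{C'}$, or else is an ``intermediate'' process from which the translation can always be completed to some $\encc{z}{C'}$ by further lazy reductions that the IH can absorb.

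The key steps, in order, are as follows. First I would catalogue, by inspecting \Cref{f:transTermShort} and \Cref{f:transConfBufShort}, exactly which APCP redexes can appear in a translated configuration: this is where the \emph{additional inputs} guarding subterms (the ones that block reduction outside reduction contexts) do their work, ensuring that a lazy reduction can only fire where the source semantics would also permit progress. Second, for each source reduction rule (\scc{E-New}, \scc{E-Spawn}, \scc{E-Recv}, \scc{E-Case}, and the congruence-driven \scc{SC-Send'}, \scc{SC-Select}) I would exhibit the matching APCP reduction sequence and verify that firing it in the translation lands on the translation of the post-reduction configuration, using type preservation to guarantee the intermediate processes remain well-typed (under $\vdashAst$) so that the structure of the translation is preserved. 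Third, I would handle the forwarder reductions specially: since these correspond to \ourGV variable substitutions (\scc{E-SubstName}, \scc{E-NameSubst}, \scc{SC-SubExt}) that only fire in restricted contexts, the forwarder-enabled restrictions $\nuf{xy}$ are precisely what prevents the lazy semantics from reducing a forwarder before its source-side substitution is licensed; I would show each such lazy forwarder reduction mirrors a term or configuration reduction, or can be grouped with neighbouring reductions into one source step.

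The main obstacle will be the forwarder/substitution bookkeeping in step three: in eager APCP a forwarder reduces as soon as it is bound, which would produce processes not of the form $\encc{z}{D}$ for any reachable $\term{D}$, and this is exactly the mismatch that motivated the lazy semantics $\reddL$. I expect the hard part to be proving that every maximal lazy run can be \emph{completed} back onto the image of the translation—i.e.\ that no lazy run gets ``stuck'' partway through simulating a single source step in a way that cannot be extended to a full $\encc{z}{D}$. The argument should use a measure on the outstanding guarded/forwarder redexes introduced by the translation of a given source redex, showing this measure strictly decreases as we push $Q$ forward along $\reddL^\ast$ until it reaches $\encc{z}{D}$, at which point the IH applies to the remaining reductions. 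A secondary subtlety is reconciling APCP structural congruence and \ourGV's two congruences $\equivM$ and $\equivC$ (via \scc{SC-TermSC}, \scc{SC-ConfSubst}, etc.) so that ``equal up to $\equivC$'' on the source side corresponds to ``equal up to $\equiv$'' on the target side; I would fold this into a preliminary lemma stating $\term{C} \equivC \term{D}$ implies $\encc{z}{C} \equiv \encc{z}{D}$.
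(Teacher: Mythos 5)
Your overall direction matches the paper's in several respects: both arguments proceed by isolating, inside a lazy run from $\encc{z}{C}$, the group of reductions that simulates a single source step, landing on $\encc{z}{D'}$ with $\term{C} \reddC \term{D'}$; both rest on a case analysis of where a lazy redex can originate (an output/input or selection/branching pair exposed by the translation, or a forwarder enabled by a $\nuf{xy}$ restriction); and your preliminary lemma that $\term{C} \equivC \term{D}$ implies $\encc{z}{C} \equiv \encc{z}{D}$ is exactly the paper's preservation-of-congruence theorem. However, your inductive setup has a genuine gap. You induct on the number $k$ of steps alone, and you concede that after one step $\encc{z}{C} \reddL Q_1$ the process $Q_1$ is in general an \emph{intermediate} process, not a translation; but your induction hypothesis is stated only for runs that start from translations, so it cannot be applied to the remaining run $Q_1 \reddL^{k-1} Q$. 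Your proposed repair---a decreasing measure on outstanding redexes that ``pushes $Q$ forward'' to some $\encc{z}{D}$---only addresses \emph{termination} of the completion; it does not address the actual obstacle, which is a commutation/interleaving problem: the given run may interleave partial simulations of several different source steps, so to regroup them you must argue that the reductions completing one source step neither conflict with, nor are consumed by, the reductions belonging to the other partial simulations already present in the run.

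The paper closes precisely this hole by an explicit appeal to the independence of reductions in well-typed APCP processes: simultaneously enabled reductions originate from disjoint parts of the process and hence commute, so the run can be reordered to place a complete source-step simulation of $k'$ steps first, yielding $\encc{z}{C} \reddL^{k'} \encc{z}{D'}$ and $\encc{z}{D'} \reddL^{k-k'} Q$, at which point the induction hypothesis on the step count (with the configuration allowed to vary, and nested inside a primary \emph{structural} induction on $\term{C}$ that handles parallel composition, buffered restriction, and configuration-level substitution) applies. To make your plan work you would either need to prove this commutation property and reorder the run as the paper does, or strengthen your induction hypothesis so that it covers intermediate processes---that is, characterise every process reachable from a translation as ``a translation plus a set of pending, pairwise independent partial simulations''---which is essentially the same lemma in disguise. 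Note also that the paper's outer induction is on the structure of $\term{C}$ rather than on $k$; this is what lets it dispatch reductions internal to a parallel component or under a buffer by the structural induction hypothesis rather than by re-analysing them from scratch.
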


\begin{proof}[Proof (Sketch)]
    By induction on the structure of $\term{C}$.
    In each case, we additionally apply induction on the number $k$ of steps $\encc{z}{C} \reddL^k Q$.
    We then consider which reductions might occur from $\encc{z}{C}$ to $Q$.
    Considering the structure of $\term{C}$, we then isolate a sequence of $k'$ possible steps, such that $\encc{z}{C} \reddL^{k'} \encc{z}{D'}$ for some $\term{D'}$ where $\term{C} \reddC \term{D'}$.
    Since $\encc{z}{D'} \reddL^{k-k'} Q$, it then follows from the induction hypothesis that there exists $\term{D}$ such that $\term{D'} \reddC^\ast \term{D}$ and $\encc{z}{D'} \reddL^\ast \encc{z}{D}$.

    Key here is the independence of reductions in APCP:
    if two or more reductions are enabled from a (well-typed) process, they must originate from independent parts of the process, and so they do not interfere with each other.
    This essentially means that the order in which independent reductions occur does not affect the resulting process.
    Hence, we can pick ``desirable'' sequences of reductions, postponing other possible reductions.
    \ifappendix See \mbox{\Cref{as:soundness}} for a full proof of soundness.
    \else See the extended version of this paper~\cite{report/vdHeuvelP22} for a full proof of soundness.
    \fi
\end{proof}

\noindent
From the proof above we can deduce that if the translation takes at least one step, then so does the source:

\begin{corollary}
    \label{cor:soundnessPlus}
    Given $\type{\Gamma} \vdashC{\phi} \term{C}: \type{T}$, if $\encc{z}{C} \reddL^+ Q$, then $\term{C} \reddC^+ \term{D}$ and $Q \reddL^\ast \encc{z}{D}$ for some~$\term{D}$.
\end{corollary}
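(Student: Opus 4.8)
The plan is to obtain the corollary by inspecting the inductive structure already set up for \Cref{t:soundness}, refining only the counting of reduction steps. From the hypothesis $\encc{z}{C} \reddL^+ Q$ we have $\encc{z}{C} \reddL^k Q$ for some $k \geq 1$, so we are squarely in the inductive case of the soundness argument rather than its base case $k = 0$ (where no source reduction is produced). The whole point is therefore to extract, for $k \geq 1$, a genuinely non-empty sequence of source reductions.

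First I would recall precisely what the inductive case of the soundness proof delivers. Given $\encc{z}{C} \reddL^k Q$ with $k \geq 1$, that proof isolates a single source reduction $\term{C} \reddC \term{D'}$ together with a count $k'$ such that $\encc{z}{C} \reddL^{k'} \encc{z}{D'}$ and $\encc{z}{D'} \reddL^{k-k'} Q$. I would make explicit from that construction that $k' \geq 1$: realizing any \ourGV reduction inside APCP consumes at least one lazy synchronization, so the isolated prefix is non-empty and the extracted source step $\term{C} \reddC \term{D'}$ is a real step, not a stutter.

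With $\term{C} \reddC \term{D'}$ in hand, I would then apply \Cref{t:soundness} directly to the residual computation $\encc{z}{D'} \reddL^{k-k'} Q$, obtaining some $\term{D}$ with $\term{D'} \reddC^\ast \term{D}$ and $Q \reddL^\ast \encc{z}{D}$. Composing the first step with this tail yields $\term{C} \reddC \term{D'} \reddC^\ast \term{D}$, that is $\term{C} \reddC^+ \term{D}$, which together with $Q \reddL^\ast \encc{z}{D}$ is exactly the claim.

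The main obstacle is justifying that the inductive case of soundness genuinely applies whenever $k \geq 1$, i.e.\ that a translated configuration can never perform a lazy reduction while its source is stuck. This is precisely what the lazy semantics $\reddL$ is engineered to guarantee: by suppressing eager forwarder reductions through the forwarder-enabled restrictions $\nuf{xy}$, it rules out the spurious administrative steps that would otherwise let $\encc{z}{C}$ move without any corresponding progress in $\term{C}$. Hence every non-trivial lazy reduction from $\encc{z}{C}$ lies within a sequence that realizes some source reduction, which is exactly what licenses extracting the first step $\term{C} \reddC \term{D'}$ and thereby upgrading the $\reddC^\ast$ of soundness to $\reddC^+$.
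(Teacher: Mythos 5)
Your proposal is correct and matches the paper's own justification, which is precisely the one-line observation that the corollary "can be deduced from the proof" of \Cref{t:soundness}: when $k \geq 1$ the inductive case of that proof extracts a genuine first source step $\term{C} \reddC \term{D'}$ with $\encc{z}{C} \reddL^{k'} \encc{z}{D'}$, and the remaining steps are handled exactly as you describe, yielding $\term{C} \reddC^+ \term{D}$ and $Q \reddL^\ast \encc{z}{D}$. Your extra emphasis on $k' \geq 1$ is true but not actually needed (the extracted step $\term{C} \reddC \term{D'}$ is genuine by construction regardless of $k'$), and the only detail left implicit is that type preservation is what licenses applying \Cref{t:soundness} to the residual configuration $\term{D'}$.
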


\subsection{Transferring Deadlock-freedom from APCP to \ourGV}
\label{ss:ourGVdf}

In APCP, well-typed processes typable under empty contexts ($P \vdash \emptyset$) are deadlock-free.
By appealing to the operational correctness of our translation, we transfer this result to \ourGV configurations.
Each   deadlock-free configuration in \ourGV obtained via transference satisfies two requirements:
\begin{itemize}
    \item The configuration is typable $\type{\emptyset} \vdashC{\main} \term{C} : \type{\1}$: it needs no external resources and has no external behavior.
    \item The typed translation of the configuration satisfies APCP's priority requirements: it is well-typed under `$\vdash$', not only under `$\vdashAst$' (cf.\ Def.\ \labelcref{d:vdashAst}).
\end{itemize}

We rely on soundness (\Cref{t:soundness}) to transfer deadlock-freedom to configurations.
However, APCP's deadlock-freedom (\Cref{t:APCPdlFree}) considers standard semantics ($\redd$), whereas soundness considers the lazy semantics ($\reddL$).
Therefore, we first must show that if the translation of a configuration satisfying the requirements above reduces under $\redd$, it also reduces under $\reddL$; this is \Cref{t:confTransReddL} below.
The deadlock-freedom of these configurations (\Cref{t:ourGVdfFree}) then follows from \Cref{t:APCPdlFree,t:confTransReddL}.
\ifappendix See \mbox{\Cref{a:ourGVdlFree}} for detailed proofs of these results.
\else See the extended version of this paper~\cite{report/vdHeuvelP22} for detailed proofs of these results.
\fi

\begin{restatable}{theorem}{thmConfTransReddL}
\label{t:confTransReddL}
    Given $\type{\emptyset} \vdashC{\main} \term{C}: \type{\1}$, if $\encc{z}{C} \vdash \Gamma$ for some $\Gamma$ and $\encc{z}{C} \redd Q$, then  $\encc{z}{C} \reddL Q'$, for some $Q'$.
\end{restatable}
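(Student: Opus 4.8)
The plan is to compare $\redd$ and $\reddL$ step-by-step, noting that they share every rule except the forwarder rule $\rred{\scc{Id}}$: the communication rules $\rred{\tensor \parr}$ and $\rred{\oplus \&}$ and the congruence and closure rules are identical, and $\reddL$ only restricts \emph{when} a forwarder may fire (this is where the forwarder-enabled restrictions $\nuf{xy}$ come into play). Thus $\reddL$ is contained in $\redd$, and a $\redd$-step fails to be a $\reddL$-step only when it is a forwarder reduction that $\reddL$ withholds. First I would dispatch the routine cases: if the witnessing step $\encc{z}{C} \redd Q$ is an instance of $\rred{\tensor \parr}$ or $\rred{\oplus \&}$, or a forwarder reduction already admitted by $\reddL$, then the identical step is available lazily and we take $Q' = Q$.

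The crux is the case where every $\redd$-redex of $\encc{z}{C}$ is a forwarder that $\reddL$ withholds. Here I would first exploit the typing hypotheses: by \Cref{t:transTypePres} together with $\type{\emptyset} \vdashC{\main} \term{C}: \type{\1}$ we have $\encc{z}{C} \vdashAst z : \bullet$, and since priorities are mere annotations the stronger assumption $\encc{z}{C} \vdash \Gamma$ forces $\Gamma$ to be exactly $z : \bullet$; since $z : \bullet$ is inert, deadlock-freedom (\Cref{t:APCPdlFree}) applies to $\encc{z}{C}$. The key is then a structural lemma, proved by induction over the translation clauses of \Cref{f:transTermShort,f:transConfBufShort}: in the image of the translation, every forwarder whose reduction $\reddL$ delays occurs in a context that simultaneously exposes a \emph{permitted} redex---either a matched communication on connected endpoints, or a forwarder under a forwarder-enabled restriction. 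Intuitively, the withheld forwarders are precisely the premature variable resolutions that soundness (\Cref{t:soundness}) is designed to avoid, and---by the same independence-of-reductions observation that drives the soundness proof---they never interfere with the genuine redex that accompanies them. Feeding this lemma the assumed $\redd$-step yields a permitted redex of $\encc{z}{C}$, hence the required $\reddL$-step to some $Q'$.

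I expect the structural lemma to be the main obstacle. It demands a complete catalogue of the forwarders introduced by the translation, a classification of which of them $\reddL$ delays, and, for each, a verification that an enabled permitted redex co-occurs. Priority-correctness is essential precisely here: without it the accompanying communication could be trapped in a cyclic dependency, and the translation could model a configuration that is genuinely stuck on an unresolved variable---exactly the situation that \Cref{t:APCPdlFree} rules out for $\encc{z}{C} \vdash z : \bullet$. With deadlock-freedom in hand, the co-occurring redex is guaranteed to be live, so the lazy semantics can always reproduce the progress that the standard semantics makes.
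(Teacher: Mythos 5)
Your handling of the easy half coincides with the paper's: steps derived from $\rred{\tensor \parr}$, $\rred{\oplus \&}$, or an already-permitted forwarder reduction replicate verbatim under $\reddL$, so everything hinges on the case where every available $\redd$-redex is a forwarder that $\reddL$ withholds. It is exactly there that your proposal has a genuine gap, in two respects. First, \Cref{t:APCPdlFree} cannot do the work you assign to it: deadlock-freedom guarantees only that a well-typed process is $\0$ or has \emph{some} $\redd$-redex, and you already have a $\redd$-redex by hypothesis. It says nothing about $\reddL$-redexes --- a process whose only redexes are withheld forwarders satisfies the conclusion of \Cref{t:APCPdlFree} perfectly well, so that theorem cannot exclude the bad situation. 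Accordingly, the paper's proof never invokes \Cref{t:APCPdlFree}; it uses the priority requirements of `$\vdash$' (as opposed to `$\vdashAst$') directly.

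Second, your structural lemma (every withheld live forwarder co-occurs with a permitted redex) is indeed the statement one needs, but it is not provable ``by induction over the translation clauses'': whether a permitted redex exists is a \emph{global} property, determined by which dual endpoints are joined by restrictions and by the priority order, neither of which is visible clause by clause. The paper instead classifies the origin of the live forwarder in $\term{C}$: if it translates a variable occurring under a reduction context inside an explicit substitution, or a variable occurring as the substituting term, one argues that $\bcont{x,y}$ holds (note that a forwarder-enabled restriction alone is \emph{not} enough for a lazy step --- the side condition of Rule~$(x,y)$ in \Cref{f:apcpLazy} must be checked, which your dichotomy of permitted redexes overlooks); otherwise the forwarder translates a communication primitive ($\term{\sff{send}'}$, $\term{\sff{select}}$, $\term{\sff{recv}}$, $\term{\sff{case}}$) on a variable, and one picks the primitive whose translated type has \emph{least priority} and shows, by following the chain of blocking inputs and using the priority annotations, that its dual is unblocked. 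Even then the resulting redex is a communication \emph{through} two forwarders, which is not an instance of $\rred{\tensor \parr}$ and whose constituent forwarders cannot be reduced lazily; it only fires via the short-circuit rules $\rred{\overset{\leftrightarrow}{\tensor \parr}}$ and $\rred{\overset{\leftrightarrow}{\oplus \&}}$ of the lazy semantics, a mechanism your proposal never identifies. Without the minimal-priority chain argument and the short-circuit rules, the crux case remains unproved.
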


\begin{proof}[Proof (Sketch)]
    By inspecting the derivation of $\encc{z}{C} \redd Q$.
    If the reduction is not derived from $\rred{\scc{Id}}$, it can be directly replicated under $\reddL$.
    Otherwise, we analyze the possible shapes of $\term{C}$ and show that a different reduction under $\reddL$ is possible.
\end{proof}

\begin{restatable}[Deadlock-freedom for \ourGV]
{theorem}{thmOurGVdfFree}
\label{t:ourGVdfFree}
    \!Given $\type{\emptyset} \vdashC{\main} \term{C}: \type{\1}$, if $\encc{z}{C} \vdash \Gamma$ for some $\Gamma$, then $\term{C} \equiv \term{\main\,()}$ or $\term{C} \reddC \term{D}$ for some~$\term{D}$.
\end{restatable}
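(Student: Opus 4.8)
The plan is to transfer APCP's deadlock-freedom (\Cref{t:APCPdlFree}) across the translation, bridging the standard and lazy semantics with \Cref{t:confTransReddL} and pulling reductions back to \ourGV with the soundness corollary (\Cref{cor:soundnessPlus}).

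First I would pin down the typing of the translation. From $\type{\emptyset} \vdashC{\main} \term{C}: \type{\1}$ and Type Preservation for the Translation (\Cref{t:transTypePres}) we get $\encc{z}{C} \vdashAst z: \enct{\1}$, and since $\enct{\1} = \bullet$ this reads $\encc{z}{C} \vdashAst z: \bullet$. The typing context of a process is determined independently of the priority side-conditions, so the hypothesis $\encc{z}{C} \vdash \Gamma$ forces $\Gamma = z:\bullet$; hence $\encc{z}{C} \vdash z:\bullet$ with priorities. Because $\bullet$ is the type of a terminated endpoint (priority $\omega$, no blocking action), this single assignment is inert and APCP's deadlock-freedom applies as if the context were empty. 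Concretely, I would reconcile $z:\bullet$ with the $\emptyset$ required by \Cref{t:APCPdlFree} either by inversion on the rule that silently handles $\bullet$-endpoints or by the routine observation that $\bullet$-only contexts enjoy the same progress property. This yields the dichotomy: $\encc{z}{C} \equiv \0$ or $\encc{z}{C} \redd Q$ for some $Q$.

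In the reducing case, I would argue that $\term{C}$ reduces. From $\encc{z}{C} \redd Q$ together with $\encc{z}{C} \vdash \Gamma$, \Cref{t:confTransReddL} gives $\encc{z}{C} \reddL Q'$ for some $Q'$, hence $\encc{z}{C} \reddL^+ Q'$. Then \Cref{cor:soundnessPlus}, applied with $\type{\emptyset} \vdashC{\main} \term{C}: \type{\1}$, produces $\term{D}$ with $\term{C} \reddC^+ \term{D}$; taking the first step gives $\term{C} \reddC \term{D'}$, the right disjunct. In the inactive case $\encc{z}{C} \equiv \0$, I would prove the characterization that, among configurations with $\type{\emptyset} \vdashC{\main} \term{C}: \type{\1}$, only those with $\term{C} \equivC \term{\main\,()}$ translate to $\equiv \0$. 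This goes by induction on (equivalently, inspection of the translation clauses for) the structure of $\term{C}$: any nonempty buffer, any pending $\term{\sff{send}'}$ or $\term{\sff{select}}$ message, and any non-unit subterm sitting in a reduction context (spawn, new, recv, case, application, pair, split) is translated to a prefix or to a non-vanishing forwarder, none of which is structurally congruent to $\0$; so $\encc{z}{C} \equiv \0$ forces every buffer empty and every thread to be unit-like, whence $\term{C} \equivC \term{\main\,()}$ after discarding idle children with \scc{SC-ParNil} and empty buffers with \scc{SC-ResNil}.

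I expect the inactive case to be the main obstacle. Matching APCP's inaction ($\encc{z}{C} \equiv \0$) back to \ourGV's $\term{\main\,()}$ requires carefully accounting for the congruences that absorb structure on both sides --- $\nu{xy} x \fwd y \equiv \0$ and $P \| \0 \equiv \0$ in APCP versus \scc{SC-ParNil} and \scc{SC-ResNil} in \ourGV --- and for the fact that variables of type $\type{\sff{end}}$ and unit subterms translate to $\0$ or to forwarders that vanish only once restricted. A secondary, more routine difficulty is the bookkeeping in the first step: justifying that the benign $z:\bullet$ assignment does not obstruct the use of \Cref{t:APCPdlFree}, which as stated assumes the empty context.
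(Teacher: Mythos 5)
Your proposal is correct and matches the paper's proof essentially step for step: pin down $\encc{z}{C} \vdash z:\bullet$ via \Cref{t:transTypePres}, apply \Cref{t:APCPdlFree}, read the inactive case $\term{C} \equivC \term{\main\,()}$ off the translation, and in the reducing case chain \Cref{t:confTransReddL} with \Cref{cor:soundnessPlus} to obtain $\term{C} \reddC \term{D}$. The one divergence is the step you flag as routine bookkeeping: rather than arguing directly that a $z:\bullet$ context enjoys progress, the paper forms $\nu{z\_}\encc{z}{C} \vdash \emptyset$, applies deadlock-freedom there, and then shows that any reduction of $\nu{z\_}\encc{z}{C}$ can be replayed inside $\encc{z}{C}$ itself --- including the case where it consumes a forwarder $z \fwd x$, which can instead reduce against the restriction already binding $x$ in $\encc{z}{C}$ --- which is a short but genuine argument (your alternative of ``inversion'' on the $\bullet$ rule would not suffice as stated, since $\bullet$-typed endpoints can also appear in forwarders).
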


\begin{proof}[Proof (Sketch)]
    By assumption and \Cref{t:transTypePres}, $\encc{z}{C} \vdash z:\bullet$.
    Then $\nu{z\_}\encc{z}{C} \vdash \emptyset$.
    By \Cref{t:APCPdlFree}, (i)~$\nu{z\_}\encc{z}{C} \equiv \0$ or (ii)~$\nu{z\_}\encc{z}{C} \redd Q$ for some $Q$.
    In case~(i) it follows from the well-typedness and translation of $\term{C}$ that $\term{C} \equivC \term{\main\, ()}$.
    In case~(ii) we deduce that the reduction of $\nu{z\_}\encc{z}{C}$ cannot involve the endpoint $z$.
    Hence, $\encc{z}{C} \redd Q_0$ for some $Q_0$.
    By \Cref{t:confTransReddL}, then $\encc{z}{C} \reddL Q'$ for some $Q'$.
    Then, by \mbox{\Cref{cor:soundnessPlus}}, there exists $\term{D'}$ such that $\term{C} \reddC^+ \term{D'}$.
    Hence, $\term{C} \reddC \term{D}$ for some $\term{D}$, proving the thesis.
\end{proof}

\smallskip \noindent
As an example, using \Cref{t:ourGVdfFree} we can show  that $\term{C_1}$ from \Cref{x:gvRed} is deadlock-free;
\ifappendix see \mbox{App.~\labelcref{a:example}}.
\else see~\cite{report/vdHeuvelP22}.
\fi

\section{Conclusion}
\label{s:conclusion}

We have presented \ourGV, a new functional language with asynchronous session-typed communication.
As illustrated in \Cref{s:intro}, \ourGV is strictly more expressive than its predecessors, thanks to a highly asynchronous semantics (compared to GV and PGV), its support for cyclic thread configurations (compared to EGV), and the ability to send whole terms and not just values (compared to all the mentioned calculi).
\Cref{tbl:comparison} summarizes the features of \ourGV compared to its predecessors.

An operationally correct translation into APCP solidifies the design of \ourGV, and enables identifying a class of deadlock-free \ourGV programs.
Interestingly, the asynchronous semantics of \ourGV is reminiscent of \emph{future}/\emph{promise} programming paradigms (see, e.g.,~\cite{journal/pls/Halstead85,report/OstheimerD93,journal/cl/TremblayM00}), which have been little studied in the context of session-typed communication.

The alternative to establishing deadlock-freedom in \ourGV via translation into APCP would be to enhance \ourGV's type system with priorities (in the spirit of, e.g., work by Padovani and Novara~\cite{conf/forte/PadovaniN15}).
Another useful addition concerns recursion / recursive types.
We leave these extensions to future work.

\begin{table}[!t]
    \begin{center}
        \begin{tabular}{|c|c|c|c|c|c|}
            \hline
            & $\lGV$~\cite{journal/jfp/GayV10}
            & GV~\cite{conf/icfp/Wadler12}
            & EGV~\cite{conf/popl/FowlerLMD19}
            & PGV~\cite{conf/forte/KokkeD21,report/KokkeD21}
            & \textbf{\ourGV (this paper)}
            \\ \hline
            Communication
            & Asynch.
            & Synch.
            & Asynch.
            & Synch.
            & Asynch.
            \\ \hline
            Cyclic Topologies
            & Yes
            & No
            & No
            & Yes
            & Yes
            \\ \hline
            Deadlock-Freedom
            & No
            & Yes (typing)
            & Yes (typing)
            & Yes (typing)
            & Yes (via APCP)
            \\ \hline
        \end{tabular}
    \end{center}
    \caption{The features of \ourGV compared to its predecessors.}
    \label{tbl:comparison}
\end{table}

\paragraph{Acknowledgments}
Thanks to Simon Fowler and the anonymous reviewers for their helpful feedback.
We gratefully acknowledge the support of the Dutch Research Council (NWO) under project No.\,016.Vidi.189.046 (Unifying Correctness for Communicating Software).

\bibliographystyle{eptcs}
\bibliography{refs}

\ifappendix

\appendix
\newpage

\tableofcontents

\newpage

\section{Type Preservation for \ourGV: Full Proofs}
\label{a:ourGVtypePresProofs}

\begin{lemma}\label{l:termWeaken}
    If $\type{\Gamma}, \term{x}: \type{U} \vdashM \term{\mbb{M}}: \type{T}$ and $\term{x} \notin \fn(\term{\mbb{M}})$, then $\type{U} = \type{\sff{end}}$ and $\type{\Gamma} \vdashM \term{\mbb{M}}: \type{T}$.
\end{lemma}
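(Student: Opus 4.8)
The statement is essentially the inverse of rule \scc{T-EndL}, so the plan is to prove it by induction on the derivation of $\type{\Gamma}, \term{x}: \type{U} \vdashM \term{\mbb{M}}: \type{T}$, tracking where the assignment $\term{x}:\type{U}$ is introduced. The guiding intuition is that \ourGV's type system is linear: an assignment carrying a non-$\type{\sff{end}}$ type must be consumed by exactly one operation, so the only way $\term{x}:\type{U}$ can appear in the context while $\term{x} \notin \fn(\term{\mbb{M}})$ is to have entered through the weakening-like rule \scc{T-EndL}, which forces $\type{U} = \type{\sff{end}}$. The induction makes this precise.

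First I would dispatch the axiom cases \scc{T-Var}, \scc{T-Unit}, \scc{T-New}, and \scc{T-EndR}: in each the typing context is either empty or the singleton containing exactly the variable being typed (which is then free in $\term{\mbb{M}}$), so it cannot contain a further assignment to some $\term{x} \notin \fn(\term{\mbb{M}})$, and these cases are vacuous. The crucial case is \scc{T-EndL}, whose conclusion $\type{\Gamma'}, \term{y}:\type{\sff{end}} \vdashM \term{\mbb{M}}:\type{T}$ follows from a premise $\type{\Gamma'} \vdashM \term{\mbb{M}}:\type{T}$. If $\term{x} = \term{y}$, then $\type{U} = \type{\sff{end}}$ and the premise is exactly the desired judgment $\type{\Gamma} \vdashM \term{\mbb{M}}:\type{T}$. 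Otherwise $\term{x}:\type{U}$ lies in $\type{\Gamma'}$, and I would apply the induction hypothesis to the premise to obtain $\type{U} = \type{\sff{end}}$ and a derivation over $\type{\Gamma'}$ with $\term{x}$ removed, after which reattaching $\term{y}:\type{\sff{end}}$ via \scc{T-EndL} yields the result.

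For every remaining (composite) rule, the disjointness convention on contexts means $\term{x}:\type{U}$ sits in exactly one premise context, and since $\fn(\term{\mbb{M}})$ is assembled from the free variables of the immediate subterms, $\term{x}$ is absent from the free variables of the corresponding subterm. For the rules that extend a premise context with bound variables---\scc{T-Abs}, \scc{T-Split}, and the explicit-substitution rule \scc{T-Sub}---I would invoke the usual variable convention to keep the bound variables distinct from $\term{x}$, so that $\term{x} \notin \fn(\text{subterm}) \setminus \{\text{bound vars}\}$ strengthens to $\term{x} \notin \fn(\text{subterm})$. Applying the induction hypothesis to that premise then delivers $\type{U} = \type{\sff{end}}$ and a thinner derivation of the subterm, and re-applying the same rule with the other premises unchanged rebuilds a derivation of $\term{\mbb{M}}$ over $\type{\Gamma}$.

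The work here is bookkeeping rather than insight, and the one point requiring mild care is rule \scc{T-Case}, where the shared context $\type{\Delta}$---and hence possibly the assignment $\term{x}:\type{U}$---occurs in every branch premise. There I would apply the induction hypothesis to each branch, read off $\type{U} = \type{\sff{end}}$ from any one of them (assuming, as is standard, a non-empty index set $I$), confirm that the removals agree across branches, and re-apply \scc{T-Case} to the thinned contexts. Apart from this and the disciplined freshness handling in the binder cases, every step is a mechanical peel-off-the-last-rule-and-reassemble argument.
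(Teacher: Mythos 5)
Your proof is correct and follows essentially the same idea as the paper's: the only rule that can place an assignment $\term{x}:\type{U}$ in the context with $\term{x}\notin\fn(\term{\mbb{M}})$ is \scc{T-EndL}, forcing $\type{U}=\type{\sff{end}}$, and deleting that rule application yields the thinner derivation. The paper states this observation directly and informally, whereas you unfold it into an explicit induction on the derivation (including the careful handling of the shared context in \scc{T-Case} and the non-empty index set, which the paper glosses over); this is the same argument, just made rigorous.
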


\begin{proof}
    The only possibility of having a name $\term{x}$ in the typing context that is not free in $\term{\mbb{M}}$ is by application of Rule~\scc{T-EndL}.
    Hence, $\type{U} = \type{\sff{end}}$.
    Moreover, since the rule does not modify terms, we can simply remove the application of the Rule~\scc{T-EndL} from the typing derivation of $\term{\mbb{M}}$, such that $\type{\Gamma} \vdashM \term{\mbb{M}}: \type{T}$.
\end{proof}

\begin{restatable}[Subject Congruence for Terms]{theorem}{thmTermSubjCong}
\label{t:termSubjCong}
    If $\type{\Gamma} \vdashM \term{\mbb{M}}: \type{T}$ and $\term{\mbb{M}} \equivM \term{\mbb{N}}$, then $\type{\Gamma} \vdashM \term{\mbb{N}}: \type{T}$.
\end{restatable}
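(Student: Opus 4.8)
The plan is to prove the statement by induction on the derivation of $\term{\mbb{M}} \equivM \term{\mbb{N}}$, viewing $\equivM$ as the least congruence generated by the single axiom \scc{SC-SubExt}. The cases for reflexivity, symmetry, and transitivity are immediate: transitivity composes the induction hypotheses, and for symmetry I would establish the axiom case below as a biconditional, so that both orientations are available. The case of closure under term constructors is routine: one inverts the typing derivation at the top constructor, applies the induction hypothesis to the premise typing the congruent subterm (whose environment and type the hypothesis preserves), reassembles the derivation, and reapplies any instances of \scc{T-EndL} unchanged.

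The crux is the axiom case \scc{SC-SubExt}, namely that $\type{\Gamma} \vdashM \term{(\mcl{R}[\mbb{M}])\xsub{\mbb{N}/x}}: \type{T}$ if and only if $\type{\Gamma} \vdashM \term{\mcl{R}[\mbb{M}\xsub{\mbb{N}/x}]}: \type{T}$, whenever $\term{x} \notin \fn(\term{\mcl{R}})$. I would isolate this as a standalone lemma and prove it by induction on the structure of the reduction context $\term{\mcl{R}}$, covering every clause, including the runtime forms $\term{\mcl{R}\xsub{\mbb{K}/y}}$, $\term{\mbb{K}\xsub{\mcl{R}/y}}$, and $\term{\sff{send}'(\mbb{K},\mcl{R})}$. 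In the base case $\term{\mcl{R}} = \term{[]}$ the two sides are syntactically identical. In each inductive case the typing of the left-hand side must end in \scc{T-Sub} wrapping the rule dictated by the top constructor of $\term{\mcl{R}}$; I would invert both, re-associate the environment so that \scc{T-Sub} is pushed inward onto the hole branch, invoke the induction hypothesis on the strictly smaller context $\term{\mcl{R}'}$, and rebuild the outer rule. The converse direction is symmetric, pulling \scc{T-Sub} back outward.

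The main obstacle, and the step deserving the most care, is tracking where the extruded assignment $\term{x}: \type{V}$ (with $\type{V}$ the type of $\term{\mbb{N}}$) lands when the environment of $\term{\mcl{R}[\mbb{M}]}$ is split across the top constructor. The key observation is that since $\term{x} \notin \fn(\term{\mcl{R}})$, the variable $\term{x}$ does not occur free in any non-hole subterm of $\term{\mcl{R}}$, so $\term{x}: \type{V}$ must flow into the environment typing the hole, i.e.\ typing $\term{\mbb{M}}$, which is exactly what lets me push \scc{T-Sub} inward. The delicate point is the possibility that $\term{x} \notin \fn(\term{\mbb{M}})$ as well; by \Cref{l:termWeaken} this forces $\type{V} = \type{\sff{end}}$, and the assignment is then freely relocatable via \scc{T-EndL}, so it can still be placed in the hole branch before applying \scc{T-Sub}. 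Handling these $\type{\sff{end}}$-weakening situations uniformly across every clause of $\term{\mcl{R}}$, and symmetrically in the converse direction, is the bulk of the routine-but-careful work; once it is in place, the reassociation of \scc{T-Sub} and the appeal to the induction hypothesis go through directly.
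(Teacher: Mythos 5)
Your proposal matches the paper's proof in all essentials: an outer induction on the derivation of $\equivM$ whose only interesting case is \scc{SC-SubExt}, handled by an inner induction on the structure of $\term{\mcl{R}}$ that inverts the typing derivation, pushes \scc{T-Sub} onto the hole branch (using $\term{x} \notin \fn(\term{\mcl{R}})$ to place $\term{x}:\type{V}$ there), applies the inner induction hypothesis to the smaller context, and rebuilds the outer rule. Your additional care about symmetry (stating the axiom case as a biconditional) and about the corner case $\term{x} \notin \fn(\term{\mbb{M}})$, where \Cref{l:termWeaken} forces $\type{V} = \type{\sff{end}}$ so the assignment can be relocated via \scc{T-EndL}, is correct and in fact more explicit than the paper's treatment, which glosses over both points.
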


\begin{proof}
    By induction on the derivation of $\term{\mbb{M}} \equivM \term{\mbb{N}}$.
    The inductive cases follow from the IH directly.
    We consider the only rule \scc{SC-SubExt}: $\term{x} \notin \fn(\term{\mcl{R}}) \implies \term{(\mcl{R}[\mbb{M}])\xsub{ \mbb{N}/x }} \equivM \term{\mcl{R}[\mbb{M}\xsub{ \mbb{N}/x }]}$.

    We apply induction on the structure of the reduction context $\term{\mcl{R}}$.
    As an interesting, representative case, consider $\term{\mcl{R}} = \term{\mbb{L}\xsub{ \mcl{R}'/y }}$.
    Assuming $\term{x} \notin \fn(\term{\mcl{R}})$, we have $\term{x} \notin \fn(\term{\mbb{L}}) \cup \fn(\term{\mcl{R}'})$.
    We apply inversion of typing:
    \begin{align*}
        \inferrule*{
            \inferrule*{
                \type{\Gamma}, \term{y}: \type{U} \vdashM \term{\mbb{L}}: \type{T}
                \\
                \type{\Delta}, \term{x}: \type{U'} \vdashM \term{\mcl{R}'[\mbb{M}]}: \type{U}
            }{
                \type{\Gamma}, \type{\Delta}, \term{x}: \type{U'} \vdashM \term{\mbb{L}\xsub{ (\mcl{R}'[\mbb{M}])/y }}: \type{T}
            }
            \\
            \type{\Delta'} \vdashM \term{\mbb{N}}: \type{U'}
        }{
            \type{\Gamma}, \type{\Delta}, \type{\Delta'} \vdashM \term{(\mbb{L}\xsub{ (\mcl{R}'[\mbb{M}])/y })\xsub{ \mbb{N}/x }}: \type{T}
        }
    \end{align*}
    We can derive $\inferrule{
        \type{\Delta}, \term{x}: \type{U'} \vdashM \term{\mcl{R}'[\mbb{M}]}: \type{U}
        \\
        \type{\Delta'} \vdashM \term{\mbb{N}}: \type{U'}
    }{
        \type{\Delta}, \type{\Delta'} \vdashM \term{(\mcl{R}'[\mbb{M}])\xsub{ \mbb{N}/x }}: \type{U}
    }$.
    Since $\term{x} \notin \fn(\term{\mcl{R}'})$, by Rule~\scc{SC-SubExt}, $\term{(\mcl{R}'[\mbb{M}])\xsub{ \mbb{N}/x }} \equivM \term{\mcl{R}'[\mbb{M}\xsub{ \mbb{N}/x }]}$.
    Then, by the IH, $\type{\Delta}, \type{\Delta'} \vdashM \term{\mcl{R}'[\mbb{M}\xsub{ \mbb{N}/x }]}: \type{U}$.
    Hence, we can conclude the following:
    \begin{align*}
        \inferrule*{
            \type{\Gamma}, \term{y}: \type{U} \vdashM \term{\mbb{L}}: \type{T}
            \\
            \type{\Delta}, \type{\Delta'} \vdashM \term{\mcl{R}'[\mbb{M}\xsub{ \mbb{N}/x }]}: \type{U}
        }{
            \type{\Gamma}, \type{\Delta}, \type{\Delta'} \vdashM \term{\mbb{L}\xsub{ (\mcl{R}'[\mbb{M}\xsub{ \mbb{N}/x }])/y }}: \type{T}
        }
        \tag*{\qedhere}
    \end{align*}
\end{proof}

\begin{restatable}[Subject Reduction for Terms]{theorem}{thmTermSubjRed}
\label{t:termSubjRed}
    If $\type{\Gamma} \vdashM \term{\mbb{M}}: \type{T}$ and $\term{\mbb{M}} \reddM \term{\mbb{N}}$, then $\type{\Gamma} \vdashM \term{\mbb{N}}: \type{T}$.
\end{restatable}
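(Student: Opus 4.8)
The plan is to proceed by induction on the derivation of $\term{\mbb{M}} \reddM \term{\mbb{N}}$, treating the axioms \scc{E-Lam}, \scc{E-Pair}, \scc{E-SubstName}, \scc{E-NameSubst}, \scc{E-Send} as base cases and the two closure rules \scc{E-Lift}, \scc{E-LiftSC} as inductive cases. For each base case I would apply inversion to the typing derivation of the redex and then reassemble a typing of the contractum. The pleasant feature of working with \emph{explicit} substitutions is that the two $\beta$-like rules require no substitution lemma: for \scc{E-Lam}, the typing of $\term{(\lambda x \sdot \mbb{M})~\mbb{N}}$ must end in \scc{T-App} over \scc{T-Abs}, whose premises $\type{\Gamma}, \term{x}:\type{T} \vdashM \term{\mbb{M}}:\type{U}$ and $\type{\Delta} \vdashM \term{\mbb{N}}:\type{T}$ are precisely the premises of \scc{T-Sub}, so $\type{\Gamma},\type{\Delta} \vdashM \term{\mbb{M}\xsub{\mbb{N}/x}}:\type{U}$ follows at once. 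The case \scc{E-Pair} is analogous, inverting \scc{T-Split} over \scc{T-Pair} and rebuilding two nested applications of \scc{T-Sub}. For \scc{E-Send}, inverting \scc{T-Send} over \scc{T-Pair} hands over exactly the split context and the two separate premises demanded by \scc{T-Send'}. For \scc{E-NameSubst}, inversion of \scc{T-Sub} over \scc{T-Var} shows that the residual context is $\type{\sff{end}}$-only, and re-introducing those assignments through \scc{T-EndL} (justified by \Cref{l:termWeaken}) recovers $\type{\Gamma},\type{\Delta} \vdashM \term{\mbb{M}}:\type{T}$.

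The one base case that genuinely needs auxiliary machinery is \scc{E-SubstName}, $\term{\mbb{M}\xsub{y/x}} \reddM \term{\mbb{M}\{y/x\}}$. Here I would first state and prove a \emph{renaming lemma}: if $\type{\Gamma}, \term{x}:\type{T} \vdashM \term{\mbb{M}}:\type{U}$ then $\type{\Gamma}, \term{y}:\type{T} \vdashM \term{\mbb{M}\{y/x\}}:\type{U}$, by a routine induction on the typing derivation. Inversion of \scc{T-Sub} over \scc{T-Var} on the redex supplies the hypothesis $\type{\Gamma}, \term{x}:\type{T} \vdashM \term{\mbb{M}}:\type{U}$ (the second premise being $\type{\sff{end}}$-contexts plus $\term{y}:\type{T}$), and the lemma then yields the desired typing of $\term{\mbb{M}\{y/x\}}$. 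For the inductive rule \scc{E-Lift} I would prove a context-decomposition (replacement) lemma: every typing of $\term{\mcl{R}[\mbb{M}]}$ factors through a sub-derivation $\type{\Gamma'} \vdashM \term{\mbb{M}}:\type{T'}$, and substituting into the hole any term with $\type{\Gamma'} \vdashM \cdot : \type{T'}$ produces a valid typing of $\term{\mcl{R}[\cdot]}$; this is established by induction on the structure of $\term{\mcl{R}}$. Applying the induction hypothesis to the hole then closes the case. Finally, \scc{E-LiftSC} is handled by invoking Subject Congruence for Terms (\Cref{t:termSubjCong}) twice, sandwiching the induction hypothesis: from $\term{\mbb{M}} \equivM \term{\mbb{M}'}$, $\term{\mbb{M}'} \reddM \term{\mbb{N}'}$, $\term{\mbb{N}'} \equivM \term{\mbb{N}}$ we transport the typing along $\equivM$, apply the hypothesis to the reduction, and transport back.

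The main obstacle I anticipate is not any single reduction rule in isolation but the interaction of inversion with the implicit weakening carried by \scc{T-EndL}. Because a term need not mention every variable of its context, inversion does not pin down the context split uniquely, so throughout the base cases one must repeatedly peel off $\type{\sff{end}}$-typed assignments via \Cref{l:termWeaken} and re-insert them afterwards to match $\type{\Gamma}$ exactly. This is compounded in the \scc{E-Lift} case, where the reduction contexts $\term{\mcl{R}\xsub{\mbb{M}/x}}$ and $\term{\mbb{M}\xsub{\mcl{R}/x}}$ place the hole underneath an explicit substitution, so the decomposition lemma must be proved in tandem with, and remain compatible with, this weakening behaviour. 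Once the decomposition lemma and the renaming lemma are in place, the remaining calculations are mechanical.
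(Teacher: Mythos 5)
Your proposal is correct and follows essentially the same route as the paper's proof: induction on the derivation of $\term{\mbb{M}} \reddM \term{\mbb{N}}$, with inversion-plus-reassembly for the axioms (\scc{T-App}/\scc{T-Abs} premises feeding \scc{T-Sub} for \scc{E-Lam}, \scc{T-Split}/\scc{T-Pair} feeding nested \scc{T-Sub}s for \scc{E-Pair}, \scc{T-Send}/\scc{T-Pair} feeding \scc{T-Send'} for \scc{E-Send}), an inner induction on the structure of $\term{\mcl{R}}$ for \scc{E-Lift}, and \Cref{t:termSubjCong} for \scc{E-LiftSC}. The only differences are matters of packaging: the renaming property behind \scc{E-SubstName} and the context-replacement property behind \scc{E-Lift}, which you isolate as standalone lemmas, are applied inline (and largely implicitly) in the paper, and your explicit care about \scc{T-EndL} weakening during inversion addresses a subtlety the paper's displayed derivations gloss over.
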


\begin{proof}
    By induction on the derivation of $\term{\mbb{M}} \reddM \term{\mbb{N}}$ (\ih{1}).
    The case of Rule~\scc{E-Lift} follows by induction on the structure of the reduction context $\term{\mcl{R}}$, where the base case ($\term{\mcl{R}} = \term{[]}$) follows from \ih{1}.
    The case of Rule~\scc{E-LiftSC} follows from \ih{1} and \Cref{t:termSubjCong} (Subject Congruence for Terms).
    We consider the other cases, applying inversion of typing and deriving the typing of the term after reduction:
    \begin{itemize}
        \item
            Rule~\scc{E-Lam}: $\term{(\lambda x \sdot M)\, \mbb{N}} \reddM \term{M\xsub{ \mbb{N}/x }}$.
            \begin{align*}
                \inferrule*{
                    \inferrule*{
                        \type{\Gamma}, \term{x}: \type{T} \vdashM \term{M}: \type{U}
                    }{
                        \type{\Gamma} \vdashM \term{\lambda x \sdot M}: \type{T \lolli U}
                    }
                    \\
                    \type{\Delta} \vdashM \term{\mbb{N}}: \type{T}
                }{
                    \type{\Gamma}, \type{\Delta} \vdashM \term{(\lambda x \sdot M)~\mbb{N}}: \type{U}
                }
                \reddM
                \inferrule*{
                    \type{\Gamma}, \term{x}: \type{T} \vdashM \term{M}: \type{U}
                    \\
                    \type{\Delta} \vdashM \term{\mbb{N}}: \type{T}
                }{
                    \type{\Gamma}, \type{\Delta} \vdashM \term{M\xsub{ \mbb{N}/x }}: \type{U}
                }
            \end{align*}

        \item
            Rule~\scc{E-Pair}: $\term{\sff{let}\, (x,y) = (\mbb{M}_1,\mbb{M}_2)\, \sff{in}\, N} \reddM \term{N\xsub{ \mbb{M}_1/x,\mbb{M}_2/y }}$.
            \begin{mathpar}
                \inferrule*{
                    \inferrule*{
                        \type{\Gamma} \vdashM \term{\mbb{M}_1}: \type{T}
                        \\
                        \type{\Gamma'} \vdashM \term{\mbb{M}_2}: \type{T'}
                    }{
                        \type{\Gamma}, \type{\Gamma'} \vdashM \term{(\mbb{M}_1,\mbb{M}_2)}: \type{T \times T'}
                    }
                    \\
                    \type{\Delta}, \term{x}: \type{T}, \term{y}: \type{T'} \vdashM \term{N}: \type{U}
                }{
                    \type{\Gamma}, \type{\Gamma'}, \type{\Delta} \vdashM \term{\sff{let}\, (x,y) = (\mbb{M}_1,\mbb{M}_2)\, \sff{in}\, N}: \type{U}
                }
                \reddM
                \inferrule*{
                    \inferrule*{
                        \type{\Delta}, \term{x}: \type{T}, \term{y}: \type{T'} \vdashM \term{N}: \type{U}
                        \\
                        \type{\Gamma} \vdashM \term{\mbb{M}_1}: \type{T}
                    }{
                        \type{\Gamma}, \type{\Delta}, \term{y}: \type{T'} \vdashM \term{N\xsub{ \mbb{M}_1/x }}: \type{U}
                    }
                    \\
                    \type{\Gamma'} \vdashM \term{\mbb{M}_2}: \type{T'}
                }{
                    \type{\Gamma}, \type{\Gamma'}, \type{\Delta} \vdashM \term{N\xsub{ \mbb{M}_1/x,\mbb{M}_2/y }}: \type{U}
                }
            \end{mathpar}

        \item
            Rule~\scc{E-SubstName}: $\term{\mbb{M}\xsub{ y/x }} \reddM \term{\mbb{M}\{y/x\}}$.
            \begin{mathpar}
                \inferrule*{
                    \type{\Gamma}, \term{x}: \type{T} \vdashM \term{\mbb{M}}: \type{U}
                    \\
                    \term{y}: \type{T} \vdashM \term{y}: \type{T}
                }{
                    \type{\Gamma}, \term{y}: \type{T} \vdashM \term{\mbb{M}\xsub{ y/x }}: \type{U}
                }
                \reddM
                \type{\Gamma}, \term{y}: \type{T} \vdashM \term{\mbb{M}\{y/x\}}: \type{U}
            \end{mathpar}

        \item
            Rule~\scc{E-NameSubst}: $\term{x\xsub{ \mbb{M}/x }} \reddM \term{\mbb{M}}$.
            \begin{mathpar}
                \inferrule*{
                    \term{x}: \type{U} \vdashM \term{x}: \type{U}
                    \\
                    \type{\Gamma} \vdashM \term{\mbb{M}}: \type{U}
                }{
                    \type{\Gamma} \vdashM \term{x\xsub{ \mbb{M}/x }}: \type{U}
                }
                \reddM
                \type{\Gamma} \vdashM \term{\mbb{M}}: \type{U}
            \end{mathpar}

        \item
            Rule~\scc{E-Send}: $\term{\sff{send}~(\mbb{M},\mbb{N})} \reddM \term{\sff{send}'(\mbb{M},\mbb{N})}$.
            \begin{align*}
                \inferrule*{
                    \inferrule*{
                        \type{\Gamma} \vdashM \term{\mbb{M}}: \type{T}
                        \\
                        \type{\Delta} \vdashM \term{\mbb{N}}: \type{{!}T \sdot S}
                    }{
                        \type{\Gamma}, \type{\Delta} \vdashM \term{(\mbb{M},\mbb{N})}: \type{T \times {!}T \sdot S}
                    }
                }{
                    \type{\Gamma}, \type{\Delta} \vdashM \term{\sff{send}~(\mbb{M},\mbb{N})}: \type{S}
                }
                \reddM
                \inferrule*{
                    \type{\Gamma} \vdashM \term{\mbb{M}}: \type{T}
                    \\
                    \type{\Delta} \vdashM \term{\mbb{N}}: \type{{!}T \sdot S}
                }{
                    \type{\Gamma}, \type{\Delta} \vdashM \term{\sff{send}'(\mbb{M},\mbb{N})}: \type{S}
                }
                \tag*{\qedhere}
            \end{align*}
    \end{itemize}
\end{proof}

\begin{theorem}
    \label{t:confSubjCong}
    If $\type{\Gamma} \vdashC{\phi} \term{C}:\type{T}$ and $\term{C} \equivC \term{D}$, then $\type{\Gamma} \vdashC{\phi} \term{D} : \type{T}$.
\end{theorem}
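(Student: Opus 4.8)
The plan is to proceed by induction on the derivation of $\term{C} \equivC \term{D}$. Since $\equivC$ is the least congruence satisfying the axioms in \Cref{f:gvConfs}, a derivation is built from these axioms together with reflexivity, symmetry, transitivity, and closure under the configuration constructors. The closure cases are routine: reflexivity is immediate; transitivity composes two applications of the induction hypothesis; and each congruence-closure case proceeds by inverting the relevant typing rule (\scc{T-ParL}/\scc{T-ParR}, \scc{T-Res}/\scc{T-ResBuf}, or \scc{T-ConfSub}), applying the induction hypothesis to the rewritten subconfiguration, and re-applying the same typing rule. For symmetry it suffices to establish each axiom in both directions, which is what I do in the base cases.

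For each axiom I first invert the typing judgment on the source side to expose the typings of the components, and then reassemble a derivation of the same judgment $\type{\Gamma} \vdashC{\phi} \term{D}:\type{T}$ for the target. Several cases are direct bookkeeping: \scc{SC-ParComm} and \scc{SC-ParAssoc} follow from commutativity and associativity of the marker combination $\term{\phi + \phi'}$, using that $\term{\main + \main}$ is undefined (so at most one operand carries $\term{\main}$, which is preserved); \scc{SC-ParNil} uses that $\term{\child\,()}$ is typed by \scc{T-Child} under $\type{\emptyset}$ together with $\term{\phi + \child} = \term{\phi}$; \scc{SC-ResComm} re-applies \scc{T-Res}/\scc{T-ResBuf} after reordering the two buffer typings; and \scc{SC-ConfSubst}, \scc{SC-ConfSubstExt} relate \scc{T-Main}/\scc{T-Child} and the configuration-context rules with \scc{T-ConfSub}, the latter via an inner induction on the structure of $\term{\mcl{G}}$. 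The case \scc{SC-TermSC} is discharged immediately by \Cref{t:termSubjCong} inside \scc{T-Main}/\scc{T-Child}.

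The cases involving buffers require more care. For \scc{SC-ResSwap} I use that an empty buffer can only be typed by \scc{T-Buf}, forcing $\type{S'} = \type{S}$; swapping $\term{x}$ and $\term{y}$ then re-types under the involution $\type{\ol{\ol{S'}}} = \type{S'}$. For \scc{SC-ResNil} I need the configuration-level analogue of \Cref{l:termWeaken}: when $\term{x},\term{y} \notin \fn(\term{C})$, the assignments $\term{x}:\type{S'},\term{y}:\type{\ol{S}}$ can only have been introduced by \scc{T-EndL}, forcing $\type{S'} = \type{S} = \type{\sff{end}}$; I establish this weakening statement as an auxiliary lemma and use it to add or remove the trivial empty-buffer restriction. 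The principal cases are \scc{SC-Send'} and \scc{SC-Select}, which equate an output (resp.\ selection) prefix on $\term{x}$ with a message prepended to the buffer on $\term{x}$. Here I invert \scc{T-Send'} (resp.\ \scc{T-Select}) to obtain $\term{x}:\type{{!}T \sdot S}$ inside the thread, and simultaneously invert \scc{T-Res} to obtain the buffer typing $\type{S'} > \type{S}$; I then show that discharging the prefix and instead extending the buffer derivation with \scc{T-BufSend} (resp.\ \scc{T-BufSelect}) yields exactly the judgment in which $\term{x}$ is re-typed at the continuation and the buffer type is prefixed by the corresponding connective. Finally, \scc{SC-ResExt} is the usual scope-extrusion argument, relying on linearity to split the context and on the side condition $\term{x},\term{y} \notin \fn(\term{C})$.

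I expect the main obstacle to be the bookkeeping in \scc{SC-Send'} and \scc{SC-Select}: because \scc{T-Res} threads the buffer type through the judgment $\type{\Gamma} \vdashB \term{\bfr{\vec{m}}}:\type{S'} > \type{S}$ (so the endpoint $\term{x}$ is used by the continuation at the post-buffer type $\type{S'}$ while the whole restriction behaves as $\type{S}$), one must match the exact position at which the new message is inserted against the way \scc{T-BufSend}/\scc{T-BufSelect} prefix the buffer type, and check that the thread's use of $\term{x}$ drops precisely the outermost $\type{{!}T \sdot {-}}$ (resp.\ $\type{\oplus}$) connective. Getting the front/back orientation of the buffer and the accompanying duality on $\term{y}$ right is the delicate part; everything else is mechanical inversion and reassembly.
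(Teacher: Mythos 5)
Your overall strategy coincides with the paper's: induction on the derivation of $\term{C} \equivC \term{D}$, discharging \scc{SC-TermSC} via subject congruence for terms, handling \scc{SC-ResNil} through a weakening argument that bottoms out in the term-level weakening lemma (\Cref{l:termWeaken}), and treating the remaining axioms by inversion and reassembly. Those parts are sound and match the paper's proof.

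The gap is in the principal cases \scc{SC-Send'} and \scc{SC-Select}, which you correctly single out as delicate but do not actually resolve. Your sketch says that after inverting \scc{T-Send'} and \scc{T-Res} one concludes by ``extending the buffer derivation with \scc{T-BufSend}''. Taken as a rule application, this step fails: \scc{T-BufSend} appends the new message at the opposite end of the buffer ($\term{\bfr{\vec{m},M}}$) from where \scc{SC-Send'} inserts it ($\term{\bfr{M,\vec{m}}}$), and it modifies the wrong side of the buffer judgment---it prefixes the right-hand type, turning $\type{S'} > \type{S}$ into $\type{S'} > \type{{!}T \sdot S}$, whereas what the congruence requires is stripping the outermost output connective from the left-hand type, turning $\type{{!}T \sdot S''} > \type{S}$ into $\type{S''} > \type{S}$ with $\type{S}$ unchanged. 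Because buffer derivations grow by adding messages at the root, the new message must instead be grafted at the leaf of the derivation (directly on top of \scc{T-Buf}), which forces a rebuilding of the whole derivation. The paper does exactly this with a nested induction on the size of $\term{\vec{m}}$, proving in effect: if $\type{\Gamma} \vdashB \term{\bfr{\vec{m}}}: \type{{!}T \sdot S''} > \type{S}$ and $\type{\Delta} \vdashM \term{M}: \type{T}$, then $\type{\Gamma}, \type{\Delta} \vdashB \term{\bfr{M,\vec{m}}}: \type{S''} > \type{S}$ (and analogously for labels via \scc{T-BufSelect}). You would need to state and prove this auxiliary fact for your argument to go through; flagging the orientation as ``the delicate part'' without supplying this induction leaves the crucial step open. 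A more minor omission: \scc{SC-Send'} places the send under a thread context $\term{\hat{\mcl{F}}}$, so the case also needs an (easy) outer induction on $\term{\hat{\mcl{F}}}$, using that its hole does not occur under an explicit substitution, before the inversion you describe becomes available.
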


\begin{proof}
    By induction on the derivation of $\term{C} \equivC \term{D}$.
    The inductive cases follow from the IH directly.
    The case for Rule~\scc{SC-TermSC} follows from \Cref{t:termSubjCong} (Subject Congruence for Terms).
    The cases for Rules~\scc{SC-ResSwap}, \scc{SC-ResComm}, \scc{SC-ParNil}, \scc{SC-ParComm}, and \scc{SC-ParAssoc} are straightforward.
    We consider the other cases:
    \begin{itemize}
        \item
            Rule~\scc{SC-ResExt} $\term{x},\term{y} \notin \fn(\term{C}) \implies \term{\nu{x\bfr{\vec{m}}y}(C \prl D)} \equivC \term{C \prl \nu{x\bfr{\vec{m}}y}D}$.

            The analysis depends on whether $\term{C}$ or $\term{D}$ are child threads.
            W.l.o.g., we assume $\term{C}$ is a child thread.
            Assuming $\term{x},\term{y} \notin \fn(\term{C})$, we apply inversion of typing:
            \begin{mathpar}
                \inferrule*{
                    \type{\Gamma} \vdashB \term{\bfr{\vec{m}}}: \type{S'} > \type{S}
                    \\
                    \inferrule*{
                        \type{\Delta} \vdashC{\circ} \term{C}: \type{\1}
                        \\
                        \type{\Lambda}, \term{x}: \type{S'}, \term{y}: \type{\ol{S}} \vdashC{\phi} \term{D}: \type{T}
                    }{
                        \type{\Delta}, \type{\Lambda}, \term{x}: \type{S'}, \term{y}: \type{\ol{S}} \vdashC{\circ+\phi} \term{C \prl D}: \type{T}
                    }
                }{
                    \type{\Gamma}, \type{\Delta}, \type{\Lambda} \vdashC{\circ+\phi} \term{\nu{x\bfr{\vec{m}}y}(C \prl D)}: \type{T}
                }
            \end{mathpar}
            Then, we derive the typing of the structurally congruent configuration:
            \begin{mathpar}
                \inferrule*{
                    \type{\Delta} \vdashC{\circ} \term{C}: \type{\1}
                    \\
                    \inferrule*{
                        \type{\Gamma} \vdashB \term{\bfr{\vec{m}}}: \type{S'} > \type{S}
                        \\
                        \type{\Lambda}, \term{x}: \type{S'}, \term{y}: \type{\ol{S}} \vdashC{\phi} \term{D}: \type{T}
                    }{
                        \type{\Gamma}, \type{\Lambda} \vdashC{\phi} \term{\nu{x\bfr{\vec{m}}y}D}: \type{T}
                    }
                }{
                    \type{\Gamma}, \type{\Delta}, \type{\Lambda} \vdashC{\circ+\phi} \term{C \prl \nu{x\bfr{\vec{m}}y}D}: \type{T}
                }
            \end{mathpar}

        \item
            Rule~\scc{SC-ResNil}: $\term{x},\term{y} \notin \fn(\term{C}) \implies \term{\nu{x\bfr{\epsilon}y}C} \equivC \term{C}$.

            Assuming $\term{x},\term{y} \notin \fn(\term{C})$, we apply inversion of typing:
            \begin{mathpar}
                \inferrule*{
                    \inferrule*{ }{
                        \type{\emptyset} \vdashB \term{\bfr{\epsilon}}: \type{S} > \type{S}
                    }
                    \\
                    \type{\Gamma}, \term{x}: \type{S}, \term{y}: \type{\ol{S}} \vdashC{\phi} \term{C}: \type{T}
                }{
                    \type{\Gamma} \vdashC{\phi} \term{\nu{x\bfr{\epsilon}y}C}: \type{T}
                }
            \end{mathpar}
            By induction on the structure of $\term{C}$, we show that $\type{\Gamma} \vdashC{\phi} \term{C}: \type{T}$, proving the thesis.
            The inductive cases follow from the IH directly.
            The base cases (Rules~\scc{T-Main} and \scc{T-Child}) follow from \Cref{l:termWeaken}.

        \item
            Rule~\scc{SC-Send'}: $\term{\nu{x\bfr{\vec{m}}y}(\hat{\mcl{F}}[\sff{send}'(M,x)] \prl C)} \equivC \term{\nu{x\bfr{M,\vec{m}}y}(\hat{\mcl{F}}[x] \prl C)}$.

            This case follows by induction on the structure of $\term{\hat{\mcl{F}}}$.
            The inductive cases follow from the IH straightforwardly.
            The fact that the hole in $\term{\hat{\mcl{F}}}$ does not occur under an explicit substitution, guarantees that we can move $\term{M}$ out of the context of $\term{\hat{\mcl{F}}}$ and into the buffer.
            We consider the base case ($\term{\hat{\mcl{F}}} = \term{\phi\,[]}$).
            We apply inversion of typing, w.l.o.g.\ assuming that $\term{\phi} = \term{\bullet}$ and $\term{y} \in \fn(\term{C})$:
            \begin{mathpar}
                \inferrule*{
                    \type{\Gamma} \vdashB \term{\bfr{\vec{m}}}: \type{{!}T \sdot S'} > \type{S}
                    \\
                    \inferrule*{
                        \inferrule*{
                            \inferrule*{
                                \type{\Delta} \vdashM \term{M}: \type{T}
                                \\
                                \inferrule*{ }{
                                    \term{x}: \type{{!}T \sdot S'} \vdashM \term{x}: \type{{!}T \sdot S'}
                                }
                            }{
                                \type{\Delta}, \term{x}: \type{{!}T \sdot S'} \vdashM \term{\sff{send}'(M,x)}: \type{S'}
                            }
                        }{
                            \type{\Delta}, \term{x}: \type{{!}T \sdot S'} \vdashC{\bullet} \term{\bullet\,\sff{send}'(M,x)}: \type{S'}
                        }
                        \\
                        \type{\Lambda}, \term{y}: \type{\ol{S}} \vdashC{\circ} \term{C}: \type{\1}
                    }{
                        \type{\Delta}, \type{\Lambda}, \term{x}: \type{{!}T \sdot S'}, \term{y}: \type{\ol{S}} \vdashC{\bullet} \term{\bullet\,\sff{send}'(M,x) \prl C}: \type{S'}
                    }
                }{
                    \type{\Gamma}, \type{\Delta}, \type{\Lambda} \vdashC{\bullet} \term{\nu{x\bfr{\vec{m}}y}(\bullet\,\sff{send}'(M,x) \prl C)}: \type{S'}
                }
            \end{mathpar}

            Note that the derivation of $\type{\Gamma} \vdashB \term{\bfr{\vec{m}}}: \type{{!}T \sdot S'} > \type{S}$ depends on the size of $\term{\vec{m}}$.
            By induction on the size of $\term{\vec{m}}$ (\ih{2}), we derive $\type{\Gamma}, \type{\Delta} \vdashB \term{\bfr{M,\vec{m}}}: \type{S'} > \type{S}$:
            \begin{itemize}
                \item
                    If $\term{\vec{m}}$ is empty, it follows by inversion of typing that $\type{\Gamma} = \type{\emptyset}$ and $\type{S} = \type{{!}T \sdot S'}$:
                    \begin{mathpar}
                        \inferrule*{ }{
                            \type{\emptyset} \vdashB \term{\bfr{\epsilon}}: \type{{!}T \sdot S'} > \type{{!}T \sdot S'}
                        }
                    \end{mathpar}
                    Then, we derive the following:
                    \begin{mathpar}
                        \inferrule*{
                            \type{\Delta} \vdashM \term{M}: \type{T}
                            \\
                            \inferrule*{ }{
                                \type{\emptyset} \vdashB \term{\bfr{\epsilon}}: \type{S'} > \type{S'}
                            }
                        }{
                            \type{\Delta} \vdashB \term{\bfr{M}}: \type{S'} > \type{{!}T \sdot S'}
                        }
                    \end{mathpar}

                \item
                    If $\term{\vec{m}} = \term{\vec{m}',N}$, it follows by inversion of typing that $\type{\Gamma} = \type{\Gamma'},\type{\Gamma''}$ and $\type{S} = \type{{!}T' \sdot S''}$:
                    \begin{mathpar}
                        \inferrule*{
                            \type{\Gamma'} \vdashM \term{N}: \type{T'}
                            \\
                            \type{\Gamma''} \vdashB \term{\bfr{\vec{m}'}}: \type{{!}T \sdot S'} > \type{S''}
                        }{
                            \type{\Gamma'}, \type{\Gamma''} \vdashB \term{\bfr{\vec{m}',N}}: \type{{!}T \sdot S'} > \type{{!}T' \sdot S''}
                        }
                    \end{mathpar}
                    By \ih{2}, $\type{\Gamma''}, \type{\Delta} \vdashB \term{\bfr{M,\vec{m}'}}: \type{S'} > \type{S''}$, allowing us to derive the following:
                    \begin{mathpar}
                        \inferrule*{
                            \type{\Gamma'} \vdashM \term{N}: \type{T'}
                            \\
                            \type{\Gamma''}, \type{\Delta} \vdashB \term{\bfr{M,\vec{m}'}}: \type{S'} > \type{S''}
                        }{
                            \type{\Gamma'}, \type{\Gamma''}, \type{\Delta} \vdashB \term{\bfr{M,\vec{m}',N}}: \type{S'} > \type{{!}T' \sdot S''}
                        }
                    \end{mathpar}

                \item
                    If $\term{\vec{m}} = \term{\vec{m}', j}$, it follows by inversion of typing that there exist types $\type{S_i}$ for each $i$ in a set of labels $I$, where $j \in I$, such that $\type{S} = \type{\oplus\{i:S_i\}_{i \in I}}$:
                    \begin{mathpar}
                        \inferrule*{
                            \type{\Gamma} \vdashB \term{\bfr{\vec{m}'}}: \type{{!}T \sdot S'} > \type{S_j}
                        }{
                            \type{\Gamma} \vdashB \term{\bfr{\vec{m},j}}: \type{{!}T \sdot S'} > \type{\oplus\{i:S_i\}_{i \in I}}
                        }
                    \end{mathpar}
                    By \ih{2}, $\type{\Gamma}, \type{\Delta} \vdashB \term{\bfr{M,\vec{m}'}}: \type{S'} > \type{S_j}$, so we derive the following:
                    \begin{mathpar}
                        \inferrule*{
                            \type{\Gamma}, \type{\Delta} \vdashB \term{\bfr{M,\vec{m}'}}: \type{S'} > \type{S_j}
                        }{
                            \type{\Gamma}, \type{\Delta} \vdashB \term{\bfr{M,\vec{m}',j}}: \type{S'} > \type{\oplus\{i:S_i\}_{i \in I}}
                        }
                    \end{mathpar}
            \end{itemize}

            Now, we can derive the typing of the structurally congruent configuration:
            \begin{mathpar}
                \inferrule*{
                    \type{\Gamma}, \type{\Delta} \vdashB \term{\bfr{M,\vec{m}}}: \type{S'} > \type{S}
                    \\
                    \inferrule*{
                        \inferrule*{
                            \inferrule*{ }{
                                \term{x}: \type{S'} \vdashM \term{x}: \type{S'}
                            }
                        }{
                            \term{x}: \type{S'} \vdashC{\bullet} \term{\bullet\,x}: \type{S'}
                        }
                        \\
                        \type{\Lambda}, \term{y}: \type{\ol{S}} \vdashC{\circ} \term{C}: \type{\1}
                    }{
                        \type{\Lambda}, \term{x}: \type{S'}, \term{y}: \type{\ol{S}} \vdashC{\bullet} \term{\bullet\,x \prl C}: \type{S'}
                    }
                }{
                    \type{\Gamma}, \type{\Delta}, \type{\Lambda} \vdashC{\bullet} \term{\nu{x\bfr{M,\vec{m}}y}(\bullet\,x \prl C)}: \type{S'}
                }
            \end{mathpar}

        \item
            The case for Rule~\scc{SC-Select} ($\term{\nu{x\bfr{\vec{m}}y}(\mcl{F}[\sff{select}\, \ell\, x] \prl C)} \equivC \term{\nu{x\bfr{\ell,\vec{m}}y}(\mcl{F}[x] \prl C)}$) is similar to the case above.
            In this case, there is no restriction on the context $\term{\mcl{F}}$: it is fine for the selection to occur under an explicit substitution, because the rule is not moving anything out of the scope of the context.

        \item
            Rule~\scc{SC-ConfSubst}: $\term{\phi\,(\mbb{M}\xsub{ \mbb{N}/x })} \equivC \term{(\phi\,\mbb{M})\xsub{ \mbb{N}/x }}$.

            This case follows by a straightforward inversion of typing on both terms:
            \begin{mathpar}
                \inferrule*{
                    \inferrule*{
                        \type{\Gamma}, \term{x}: \type{T} \vdashM \term{\mbb{M}}: \type{U}
                        \\
                        \type{\Delta} \vdashM \term{N}: \type{T}
                    }{
                        \type{\Gamma}, \type{\Delta} \vdashM \term{\mbb{M}\xsub{ \mbb{N}/x }}: \type{U}
                    }
                }{
                    \type{\Gamma}, \type{\Delta} \vdashC{\phi} \term{\phi\,(\mbb{M}\xsub{ \mbb{N}/x })}: \type{U}
                }
                \equiv
                \inferrule*{
                    \inferrule*{
                        \type{\Gamma}, \term{x}: \type{T} \vdashM \term{\mbb{M}}: \type{U}
                    }{
                        \type{\Gamma}, \term{x}: \type{T} \vdashC{\phi} \term{\phi\,\mbb{M}}: \type{U}
                    }
                    \\
                    \type{\Delta} \vdashM \term{N}: \type{T}
                }{
                    \type{\Gamma}, \type{\Delta} \vdashC{\phi} \term{(\phi\,\mbb{M})\xsub{ \mbb{N}/x }}: \type{U}
                }
            \end{mathpar}

        \item
            Rule~\scc{SC-ConfSubstExt}: $\term{x} \notin \fn(\term{\mcl{G}}) \implies \term{(\mcl{G}[C])\xsub{ \mbb{M}/x }} \equivC \term{\mcl{G}[C\xsub{ \mbb{M}/x }]}$.

            This case follows by induction on the structure of $\term{\mcl{G}}$.
            The inductive cases follow from the IH straightforwardly.
            For the base case ($\term{\mcl{G}} = \term{[]}$), the structural congruence is simply an equality.
            \qedhere
    \end{itemize}
\end{proof}

\begin{restatable}
{theorem}{thmConfSubjRed}
\label{t:confSubjRed}
    If $\type{\Gamma} \vdashC{\phi} \term{C}: \type{T}$ and $\term{C} \reddC \term{D}$, then $\type{\Gamma} \vdashC{\phi} \term{D}: \type{T}$.
\end{restatable}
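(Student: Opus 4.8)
The plan is to proceed by induction on the derivation of $\term{C} \reddC \term{D}$, in the same style as the subject reduction proof for terms (\Cref{t:termSubjRed}). The three closure rules are handled first. Rule~\scc{E-ConfLiftSC} follows immediately from \Cref{t:confSubjCong} (Subject Congruence for Configurations) together with the induction hypothesis. Rule~\scc{E-LiftC} follows by a routine induction on the configuration context $\term{\mcl{G}}$, the base case being the induction hypothesis applied to the premise $\term{C} \reddC \term{C'}$; each context former (\scc{T-ParL}/\scc{T-ParR}, \scc{T-Res}/\scc{T-ResBuf}, \scc{T-ConfSub}) preserves typing once the hole does. Rule~\scc{E-LiftM} follows from \Cref{t:termSubjRed} (Subject Reduction for Terms), provided I first establish a \emph{replacement lemma} for thread contexts: if $\type{\Gamma} \vdashC{\phi} \term{\mcl{F}[\mbb{M}]}: \type{T}$, then $\term{\mbb{M}}$ provides some type $\type{T'}$ over a subcontext $\type{\Lambda}$, and substituting into the hole any $\term{\mbb{M}'}$ with $\type{\Lambda'} \vdashM \term{\mbb{M}'}: \type{T'}$ yields a configuration of type $\type{T}$ over the correspondingly adjusted context. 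This lemma, proved by induction on $\term{\mcl{F}}$, is reused in the remaining cases.

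The genuine reduction rules require inverting the typing derivation to expose the redex and then rebuilding a derivation for the contractum. For Rule~\scc{E-New}, inversion locates $\term{\sff{new}}$ typed by \scc{T-New} as $\type{S \times \ol{S}}$ over the empty context; I substitute $\term{(x,y)}$, typed $\type{S \times \ol{S}}$ over $\term{x}: \type{S}, \term{y}: \type{\ol{S}}$ by \scc{T-Pair}, via the replacement lemma, and then wrap the result in an empty buffered restriction using \scc{T-Buf} (with $\type{S'} = \type{S}$) and \scc{T-Res}, relying on freshness of $\term{x}$ and $\term{y}$. For Rule~\scc{E-Spawn}, inversion through the restricted context $\term{\hat{\mcl{F}}}$ exposes $\term{\sff{spawn}~(\mbb{M},\mbb{N})}$ typed by \scc{T-Spawn}, whence $\term{(\mbb{M},\mbb{N})}: \type{\1 \times T'}$ splits by \scc{T-Pair} into $\term{\mbb{M}}: \type{\1}$ and $\term{\mbb{N}}: \type{T'}$ over disjoint contexts; the replacement lemma retypes $\term{\hat{\mcl{F}}[\mbb{N}]}$, and \scc{T-Child} followed by \scc{T-ParL}/\scc{T-ParR} composes it with the child thread $\term{\child\,\mbb{M}}$, the thread markers combining to reproduce $\term{\phi}$. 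Here the restriction to $\term{\hat{\mcl{F}}}$ is what guarantees that the spawned subterm $\term{\mbb{M}}$ can be extracted into its own thread without escaping the binder of an explicit substitution.

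Rules~\scc{E-Recv} and~\scc{E-Case} are the heart of the proof, since the buffered restriction must be retyped as a message leaves the buffer. The buffer orientation works in our favour: the consumed message sits at the back of $\term{\bfr{\vec{m},\mbb{M}}}$ (resp.\ the back label of $\term{\bfr{\vec{m},j}}$), which corresponds to the outermost connective of the buffer type. Thus a single inversion of \scc{T-BufSend} (resp.\ \scc{T-BufSelect})---rather than the induction on buffer length needed for $\term{\sff{send}'}$ in the \scc{SC-Send'} case---exposes a buffer type $\type{{!}T \sdot S}$ (resp.\ $\type{\oplus\{i:S_i\}_{i \in I}}$) together with the shortened buffer $\term{\bfr{\vec{m}}}$ of residual type $\type{S}$ (resp.\ $\type{S_j}$). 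Inverting \scc{T-Res} then shows that the reading endpoint $\term{y}$ carries the dual type $\type{{?}T \sdot \ol{S}}$ (resp.\ $\type{{\&}\{i:\ol{S_i}\}_{i \in I}}$), so that after inverting the \scc{T-Recv} (resp.\ \scc{T-Case}) typing of the redex inside $\term{\hat{\mcl{F}}}$ (resp.\ $\term{\mcl{F}}$), the contractum $\term{(\mbb{M},y)}$ (resp.\ $\term{\mbb{M}_j~y}$) is retyped by \scc{T-Pair} (resp.\ \scc{T-App}, using the branch $\term{\mbb{M}_j}: \type{\ol{S_j} \lolli U}$) at exactly the original hole type, with $\term{y}$ now of the continuation type $\type{\ol{S}}$ (resp.\ $\type{\ol{S_j}}$). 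Re-wrapping with the shortened buffer under \scc{T-Res} closes both cases; linearity ensures $\term{y}$ cannot simultaneously occur in the buffer, so \scc{T-ResBuf} need not be considered.

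I expect the main obstacle to be the bookkeeping in Rules~\scc{E-Recv} and~\scc{E-Case}: one must thread the context change for the hole (the buffer's message context migrates into the ambient typing context while $\term{y}$'s type shrinks to its continuation) through the replacement lemma, and verify that the residual buffer, the restricted endpoints, and the rebuilt thread all re-assemble into a well-formed \scc{T-Res} instance with matching duality. The \scc{E-Spawn} case is a secondary difficulty, where correct splitting of the typing context and respecting the restriction to $\term{\hat{\mcl{F}}}$ contexts are both essential.
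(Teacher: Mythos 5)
Your proposal is correct and takes essentially the same route as the paper's proof: induction on the reduction derivation, with \scc{E-LiftC}/\scc{E-LiftM}/\scc{E-ConfLiftSC} discharged via \Cref{t:termSubjRed} and \Cref{t:confSubjCong}, and the rules \scc{E-New}, \scc{E-Spawn}, \scc{E-Recv}, \scc{E-Case} handled by inversion and rebuilding---including your key observation that \scc{E-Recv}/\scc{E-Case} need only a single inversion of \scc{T-BufSend}/\scc{T-BufSelect} (since the consumed message is the outermost connective of the buffer type) and that linearity rules out \scc{T-ResBuf}. The only difference is presentational: your explicit replacement lemma for thread contexts is precisely the induction on the structure of $\term{\mcl{F}}$ (resp.\ $\term{\hat{\mcl{F}}}$) that the paper carries out inline within each case.
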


\begin{proof}
    By induction on the derivation of $\term{C} \reddC \term{D}$ (\ih{1}).
    The case of Rule~\scc{E-LiftC} ($\term{C} \reddC \term{C'} \implies \term{\mcl{G}[C]} \reddC \term{\mcl{G}[C']}$) follows by induction on the structure of $\term{\mcl{G}}$, directly from \ih{1}.
    The case of Rule~\scc{E-LiftM} ($\term{\mbb{M}} \reddM \term{\mbb{M}'} \implies \term{\mcl{F}[\mbb{M}]} \reddC \term{\mcl{F}[\mbb{M'}]}$) follows by induction on the structure of $\term{\mcl{F}}$, where the base case ($\term{\mcl{F}} = \term{\phi\,[]}$) follows from \Cref{t:termSubjRed} (Subject Reduction for Terms).
    The case for Rule~\mbox{\scc{E-ConfLiftSC}} ($\term{C} \equivC \term{C'} \wedge \term{C'} \reddC \term{D'} \wedge \term{D'} \equivC \term{D} \implies \term{C} \reddC \term{D}$) follows from \ih{1} and \Cref{t:confSubjCong} (Subject Congruence for Configurations).
    We consider the other cases:
    \begin{itemize}
        \item
            Rule~\scc{E-New}: $\term{\mcl{F}[\sff{new}]} \reddC \term{\nu{x\bfr{\epsilon}y}(\mcl{F}[(x,y)])}$.

            This case follows by induction on the structure of $\term{\mcl{F}}$.
            The inductive cases follow from the IH directly.
            For the base case ($\term{\mcl{F}} = \term{\phi\,[]}$), we apply inversion of typing and derive the typing of the reduced configuration:
            \begin{mathpar}
                \inferrule*{
                    \inferrule*{
                    }{
                        \type{\emptyset} \vdashM \term{\sff{new}}: \type{S \times \ol{S}}
                    }
                }{
                    \type{\emptyset} \vdashC{\phi} \term{\phi\,\sff{new}}: \type{S \times \ol{S}}
                }
                \reddC
                \inferrule*{
                    \inferrule*{ }{
                        \type{\emptyset} \vdashB \term{\bfr{\epsilon}}: \type{S} > \type{S}
                    }
                    \\
                    \inferrule*{
                        \inferrule*{
                            \inferrule*{ }{
                                \term{x}: \type{S} \vdashM \term{x}: \type{S}
                            }
                            \\
                            \inferrule*{ }{
                                \term{y}: \type{\ol{S}} \vdashM \term{y}: \type{\ol{S}}
                            }
                        }{
                            \term{x}: \type{S}, \term{y}: \type{\ol{S}} \vdashM \term{(x,y)}: \type{S \times \ol{S}}
                        }
                    }{
                        \term{x}: \type{S}, \term{y}: \type{\ol{S}} \vdashC{\phi} \term{\phi\,(x,y)}: \type{S \times \ol{S}}
                    }
                }{
                    \type{\emptyset} \vdashC{\phi} \term{\nu{x\bfr{\epsilon}y}(\phi\,(x,y))}: \type{S \times \ol{S}}
                }
            \end{mathpar}

        \item
            Rule~\scc{E-Spawn}: $\term{\hat{\mcl{F}}[\sff{spawn}~(M,N)]} \reddC \term{\hat{\mcl{F}}[N] \prl \circ\, M}$.

            Similar to the case above, this case follows by induction on the structure of $\term{\hat{\mcl{F}}}$, which excludes holes under explicit substitution.
            For the base case ($\term{\hat{\mcl{F}}} = \term{\phi\,[]}$), we apply inversion of typing and derive the typing of the reduced configuration:
            \begin{mathpar}
                \inferrule*{
                    \inferrule*{
                        \inferrule*{
                            \type{\Gamma} \vdashM \term{M}: \type{\1}
                            \\
                            \type{\Delta} \vdashM \term{N}: \type{T}
                        }{
                            \type{\Gamma}, \type{\Delta} \vdashM \term{(M,N)}: \type{\1 \times T}
                        }
                    }{
                        \type{\Gamma}, \type{\Delta} \vdashM \term{\sff{spawn}~(M,N)}: \type{T}
                    }
                }{
                    \type{\Gamma}, \type{\Delta} \vdashC{\phi} \term{\phi\,(\sff{spawn}~(M,N))}: \type{T}
                }
                \reddC
                \inferrule*{
                    \inferrule*{
                        \type{\Delta} \vdashM \term{N}: \type{T}
                    }{
                        \type{\Delta} \vdashC{\phi} \term{\phi\,N}: \type{T}
                    }
                    \\
                    \inferrule*{
                        \type{\Gamma} \vdashM \term{M}: \type{\1}
                    }{
                        \type{\Gamma} \vdashC{\circ} \term{\circ\,M}: \type{\1}
                    }
                }{
                    \type{\Gamma}, \type{\Delta} \vdashC{\phi} \term{\phi\,N \prl \circ\,M}: \type{T}
                }
            \end{mathpar}

        \item
            Rule~\scc{E-Recv}: $\term{\nu{x\bfr{\vec{m},M}y}(\mcl{F}[\sff{recv}~y] \prl C)} \reddC \term{\nu{x\bfr{\vec{m}}y}(\mcl{F}[(M,y)] \prl C)}$.

            For this case, we apply induction on the structure of $\term{\mcl{F}}$.
            The inductive cases follow from the IH directly.
            We consider the base case ($\term{\mcl{F}} = \term{\phi\,[]}$).
            We apply inversion of typing, w.l.o.g.\ assuming that $\term{\phi} = \term{\bullet}$, and then derive the typing of the reduced configuration:
            \begin{mathpar}
                \inferrule*{
                    \inferrule*{
                        \type{\Gamma} \vdashM \term{M}: \type{T}
                        \\
                        \type{\Delta} \vdashB \term{\bfr{\vec{m}}}: \type{S'} > \type{S}
                    }{
                        \type{\Gamma}, \type{\Delta} \vdashB \term{\bfr{\vec{m},M}}: \type{S'} > \type{{!}T \sdot S}
                    }
                    \\
                    \inferrule*{
                        \inferrule*{
                            \inferrule*{
                                \inferrule*{ }{
                                    \term{y}: \type{{?}T \sdot \ol{S}} \vdashM \term{y}: \type{{?}T \sdot \ol{S}}
                                }
                            }{
                                \term{y}: \type{{?}T \sdot \ol{S}} \vdashM \term{\sff{recv}~y}: \type{T \times \ol{S}}
                            }
                        }{
                            \term{y}: \type{{?}T \sdot \ol{S}} \vdashC{\bullet} \term{\bullet\,(\sff{recv}~y)}: \type{T \times \ol{S}}
                        }
                        \\
                        \type{\Lambda}, \term{x}: \type{S'} \vdashC{\circ} \term{C}: \type{\1}
                    }{
                        \type{\Lambda}, \term{x}: \type{S'}, \term{y}: \type{{?}T \sdot \ol{S}} \vdashC{\bullet} \term{\bullet\,(\sff{recv}~y) \prl C}: \type{T \times \ol{S}}
                    }
                }{
                    \type{\Gamma}, \type{\Delta}, \type{\Lambda} \vdashC{\bullet} \term{\nu{x\bfr{\vec{m},M}y}(\bullet\,(\sff{recv}~y) \prl C)}: \type{T \times \ol{S}}
                }
                \reddC
                \inferrule*{
                    \type{\Delta} \vdashB \term{\bfr{\vec{m}}}: \type{S'} > \type{S}
                    \\
                    \inferrule*{
                        \inferrule*{
                            \inferrule*{
                                \type{\Gamma} \vdashM \term{M}: \type{T}
                                \\
                                \inferrule*{ }{
                                    \term{y}: \type{\ol{S}} \vdashM \term{y}: \type{\ol{S}}
                                }
                            }{
                                \type{\Gamma}, \term{y}: \type{\ol{S}} \vdashM \term{(M,y)}: \type{T \times \ol{S}}
                            }
                        }{
                            \type{\Gamma}, \term{y}: \type{\ol{S}} \vdashC{\bullet} \term{\bullet\,(M,y)}: \type{T \times \ol{S}}
                        }
                        \\
                        \type{\Lambda}, \term{x}: \type{S'} \vdashC{\circ} \term{C}: \type{\1}
                    }{
                        \type{\Gamma}, \type{\Lambda}, \term{x}: \type{S'}, \term{y}: \type{\ol{S}} \vdashC{\bullet} \term{\bullet\,(M,y) \prl C}: \type{T \times \ol{S}}
                    }
                }{
                    \type{\Gamma}, \type{\Delta}, \type{\Lambda} \vdashC{\bullet} \term{\nu{x\bfr{\vec{m}}y}(\bullet\,(M,y) \prl C)}: \type{T \times \ol{S}}
                }
            \end{mathpar}

        \item
            The case of Rule~\scc{E-Case} ($j \in I \implies \term{\nu{x\bfr{\vec{m},j}y}(\mcl{F}[\sff{case}\, y\, \sff{of}\, \{i:M_i\}_{i \in I}] \prl C)} \reddC \term{\nu{x\bfr{\vec{m}}y}(\mcl{F}[M_j~y] \prl C)}$) is similar to the above case.
            \qedhere
    \end{itemize}
\end{proof}

\section{Translations of \ourGV Typing Rules with Full Derivations}\label{a:transFull}

Here we give the translation of \ourGV into APCP, presented in \Cref{ss:ourGVintoAPCP}, with full derivation trees.
\begin{itemize}
    \item
        \refwpage{f:transTerm1} gives the translations of the Rules~\scc{T-Var}, \scc{T-Unit}, \scc{T-Abs}, \scc{T-App}, and \scc{T-New}.

    \item
        \refwpage{f:transTerm2} gives the translations of the Rules~\scc{T-Pair}, \scc{T-Spawn}, and \scc{T-Sub}.

    \item
        \refwpage{f:transTerm3} gives the translations of the Rules~\scc{T-Split}, \scc{T-EndL}, and \scc{T-EndR}.

    \item
        \refwpage{f:transTerm4} gives the translation of the Rules~\scc{T-Send}, and \scc{T-Recv}.

    \item
        \refwpage{f:transTerm5} gives the translations of the Rules~\scc{T-Select}, \scc{T-Case}, and \scc{T-Send'}.

    \item
        \refwpage{f:transConf} gives the translations of the typing rules for configurations.

    \item
        \refwpage{f:transBuf} gives the translations of the typing rules for buffers.
\end{itemize}

\begin{figure}[t!]
    \begin{mdframed}\small
        \begin{mathpar}
            \enc{z}{
                \inferrule[T-Var]{ }{
                    \term{x}: \type{T} \vdashM \term{x}: \type{T}
                }
            }
            = \inferrule{ }{
                x \fwd z \vdashAst x: \ol{\enct{\type{T}}}, z: \enct{\type{T}}
            }
            \and
            \enc{z}{
                \inferrule[T-Unit]{ }{
                    \type{\emptyset} \vdashM \term{()}: \type{\1}
                }
            }
            = \inferrule{
                \inferrule{ }{
                    \0 \vdashAst \emptyset
                }
            }{
                \0 \vdashAst z: \bullet
            }
        \end{mathpar}

        \begin{mathpar}
            \enc{z}{
                \inferrule[T-Abs]{
                    \type{\Gamma}, \term{x}: \type{T} \vdashM \term{M}: \type{U}
                }{
                    \type{\Gamma} \vdashM \term{\lambda x \sdot M}: \type{T \lolli U}
                }
            }
            = \inferrule{
                \inferrule{
                    \inferrule{
                        \inferrule*{
                            \inferrule{
                                \inferrule{ }{
                                    a[c,e] \vdashAst a: \ol{\enct{\type{T}}} \tensor \bullet, c: \enct{\type{T}}, e: \bullet
                                }
                            }{
                                a[c,e] \vdashAst a: \ol{\enct{\type{T}}} \tensor \bullet, c: \enct{\type{T}}, e: \bullet, f: \bullet
                            }
                        }{
                            \nu{ef}a[c,e] \vdashAst a: \ol{\enct{\type{T}}} \tensor \bullet, c: \enct{\type{T}}
                        }
                        \\
                        \encc{b}{M} \vdashAst \ol{\enct{\type{\Gamma}}}, x: \ol{\enct{\type{T}}}, b: \enct{\type{U}}
                    }{
                        \nu{ef}a[c,e] \| \encc{b}{M} \vdashAst \ol{\enct{\type{\Gamma}}}, a: \ol{\enct{\type{T}}} \tensor \bullet, b: \enct{\type{U}}, c: \enct{\type{T}}, x: \ol{\enct{\type{T}}}
                    }
                }{
                    \nuf{cx}(\nu{ef}a[c,e] \| \encc{b}{M}) \vdashAst \ol{\enct{\type{\Gamma}}}, a: \ol{\enct{\type{T}}} \tensor \bullet, b: \enct{\type{U}}
                }
            }{
                z(a,b) \sdot \nuf{cx}(\nu{ef}a[c,e] \| \encc{b}{M}) \vdashAst \ol{\enct{\type{\Gamma}}}, z: (\ol{\enct{\type{T}}} \tensor \bullet) \parr \enct{\type{U}}
            }
        \end{mathpar}

        \begin{mathpar}
            \enc{z}{
                \inferrule[T-App]{
                    \type{\Gamma} \vdashM \term{M}: \type{T \lolli U}
                    \\
                    \type{\Delta} \vdashM \term{N}: \type{T}
                }{
                    \type{\Gamma}, \type{\Delta} \vdashM \term{M~N}: \type{U}
                }
            }
            = \inferrule{
                \inferrule*{
                    \inferrule*{}{
                        {\begin{array}[t]{@{}l@{}}
                                \encc{a}{M} \vdashAst \ol{\enct{\type{\Gamma}}},
                                \\
                                a: {\begin{array}[t]{@{}l@{}}
                                        (\ol{\enct{\type{T}}} \tensor \bullet)
                                        \\
                                        {} \parr \enct{\type{U}}
                                \end{array}}
                        \end{array}}
                    }
                    \\
                    \inferrule*{
                        \inferrule*{
                            \inferrule*{ }{
                                b[c,z] \vdashAst {\begin{array}[t]{@{}l@{}}
                                        b: (\enct{\type{T}} \parr \bullet) \tensor \ol{\enct{\type{U}}},
                                        \\
                                        c: \ol{\enct{\type{T}}} \tensor \bullet, z: \enct{\type{U}}
                                \end{array}}
                            }
                            \\
                            \inferrule*{
                                \inferrule*{
                                    \encc{e}{N} \vdashAst \ol{\enct{\type{\Delta}}}, e: \enct{\type{T}}
                                }{
                                    \encc{e}{N} \vdashAst \ol{\enct{\type{\Delta}}}, e: \enct{\type{T}}, f: \bullet
                                }
                            }{
                                d(e,f) \sdot \encc{e}{N} \vdashAst \ol{\enct{\type{\Delta}}}, d: \enct{\type{T}} \parr \bullet
                            }
                        }{
                            b[c,z] \| d(e,f) \sdot \encc{e}{N} \vdashAst \ol{\enct{\type{\Delta}}}, z: \enct{\type{U}}, {\begin{array}[t]{@{}l@{}}
                                    b: (\enct{\type{T}} \parr \bullet) \tensor \ol{\enct{\type{U}}},
                                    \\
                                    c: \ol{\enct{\type{T}}} \tensor \bullet, d: \enct{\type{T}} \parr \bullet
                            \end{array}}
                        }
                    }{
                        \nu{cd}(b[c,z] \| d(e,f) \sdot \encc{e}{N}) \vdashAst \ol{\enct{\type{\Delta}}}, z: \enct{\type{U}}, b: (\enct{\type{T}} \parr \bullet) \tensor \ol{\enct{\type{U}}}
                    }
                }{
                    \encc{a}{M} \| \nu{cd}(b[c,z] \| d(e,f) \sdot \encc{e}{N}) \vdashAst \ol{\enct{\type{\Gamma}}}, \ol{\enct{\type{\Delta}}}, z: \enct{\type{U}}, a: (\ol{\enct{\type{T}}} \tensor \bullet) \parr \enct{\type{U}}, b: (\enct{\type{T}} \parr \bullet) \tensor \ol{\enct{\type{U}}}
                }
            }{
                \nu{ab}(\encc{a}{M} \| \nu{cd}(b[c,z] \| d(e,f) \sdot \encc{e}{N})) \vdashAst \ol{\enct{\type{\Gamma}}}, \ol{\enct{\type{\Delta}}}, z: \enct{\type{U}}
            }
        \end{mathpar}

        \begin{mathpar}
            \enc{z}{
                \inferrule[T-New]{
                }{
                    \type{\emptyset} \vdashM \term{\sff{new}}: \type{S \times \ol{S}}
                }
            }
            = \inferrule{
                \inferrule{
                    \inferrule*{
                        \inferrule{ }{
                            a[c,d] \vdashAst {\begin{array}[t]{@{}l@{}}
                                    a: \bullet \tensor \bullet,
                                    \\
                                    c: \bullet, d: \bullet
                            \end{array}}
                        }
                    }{
                        \nu{cd}a[c,d] \vdashAst a: \bullet \tensor \bullet
                    }
                    \\
                    \inferrule*{
                        \inferrule{
                            \inferrule{
                                \inferrule*{
                                    \inferrule{ }{
                                        \encc{z}{(x,y)} \vdashAst z: (\enct{\type{S}} \parr \bullet) \tensor (\enct{\type{\ol{S}}} \parr \bullet), x: \ol{\enct{\type{S}}}, y: \enct{\type{S}}
                                    }
                                }{
                                    \nu{xy}\encc{z}{(x,y)} \vdashAst z: (\enct{\type{S}} \parr \bullet) \tensor (\enct{\type{\ol{S}}} \parr \bullet)
                                }
                            }{
                                \nu{xy}\encc{z}{(x,y)} \vdashAst z: (\enct{\type{S}} \parr \bullet) \tensor (\enct{\type{\ol{S}}} \parr \bullet), e:\bullet
                            }
                        }{
                            \nu{xy}\encc{z}{(x,y)} \vdashAst z: (\enct{\type{S}} \parr \bullet) \tensor (\enct{\type{\ol{S}}} \parr \bullet), e: \bullet, f: \bullet
                        }
                    }{
                        b(e,f) \sdot \nu{xy}\encc{z}{(x,y)} \vdashAst z: (\enct{\type{S}} \parr \bullet) \tensor (\enct{\type{\ol{S}}} \parr \bullet), b: \bullet \parr \bullet
                    }
                }{
                    \nu{cd}a[c,d] \| b(e,f) \sdot \nu{xy}\encc{z}{(x,y)} \vdashAst z: (\enct{\type{S}} \parr \bullet) \tensor (\enct{\type{\ol{S}}} \parr \bullet), a: \bullet \tensor \bullet, b: \bullet \parr \bullet
                }
            }{
                \nu{ab}(\nu{cd}a[c,d] \| b(e,f) \sdot \nu{xy}\encc{z}{(x,y)}) \vdashAst z: (\enct{\type{S}} \parr \bullet) \tensor (\enct{\type{\ol{S}}} \parr \bullet)
            }
        \end{mathpar}
    \end{mdframed}
    \caption{Translation of (runtime) term typing rules with full derivations (part 1 of 5).}\label{f:transTerm1}
\end{figure}

\begin{figure}[t!]
    \begin{mdframed}\small
        \begin{mathpar}
            \enc{z}{
                \inferrule[T-Pair]{
                    \type{\Gamma} \vdashM \term{M}: \type{T}
                    \\
                    \type{\Delta} \vdashM \term{N}: \type{U}
                }{
                    \type{\Gamma}, \type{\Delta} \vdashM \term{(M,N)}: \type{T \times U}
                }
            }
            = \inferrule{
                \inferrule{
                    \inferrule*{
                        \inferrule*{ }{
                            z[a,c] \vdashAst {\begin{array}[t]{@{}l@{}}
                                    z: {\begin{array}[t]{@{}l@{}}
                                            (\enct{\type{T}} \parr \bullet)
                                            \\
                                            {} \tensor (\enct{\type{U}} \parr \bullet),
                                    \end{array}}
                                    \\
                                    a: \ol{\enct{\type{T}}} \tensor \bullet,
                                    \\
                                    c: \ol{\enct{\type{U}}} \tensor \bullet
                            \end{array}}
                        }
                        \\
                        \inferrule*{
                            \inferrule*{
                                \encc{e}{M} \vdashAst \ol{\enct{\type{\Gamma}}}, e: \enct{\type{T}}
                            }{
                                \encc{e}{M} \vdashAst \ol{\enct{\type{\Gamma}}}, e: \enct{\type{T}}, f: \bullet
                            }
                        }{
                            b(e,f) \sdot \encc{e}{M} \vdashAst \ol{\enct{\type{\Gamma}}}, b: \enct{\type{T}} \parr \bullet
                        }
                        \\
                        \inferrule*{
                            \inferrule*{
                                \encc{g}{N} \vdashAst \ol{\enct{\type{\Delta}}}, g: \enct{\type{U}}
                            }{
                                \encc{g}{N} \vdashAst \ol{\enct{\type{\Delta}}}, g: \enct{\type{U}}, h: \bullet
                            }
                        }{
                            d(g,h) \sdot \encc{g}{N} \vdashAst \ol{\enct{\type{\Delta}}}, d: \enct{\type{U}} \parr \bullet
                        }
                    }{
                        z[a,c] \| b(e,f) \sdot \encc{e}{M} \| d(g,h) \sdot \encc{g}{N} \vdashAst {\begin{array}[t]{@{}l@{}}
                                \ol{\enct{\type{\Gamma}}}, \ol{\enct{\type{\Delta}}}, z: (\enct{\type{T}} \parr \bullet) \tensor (\enct{\type{U}} \parr \bullet),
                                \\
                                a: \ol{\enct{\type{T}}} \tensor \bullet, b: \enct{\type{T}} \parr \bullet, c: \ol{\enct{\type{U}}} \tensor \bullet, d: \enct{\type{U}} \parr \bullet
                        \end{array}}
                    }
                }{
                    \nu{cd}(z[a,c] \| b(e,f) \sdot \encc{e}{M} \| d(g,h) \sdot \encc{g}{N}) \vdashAst \ol{\enct{\type{\Gamma}}}, \ol{\enct{\type{\Delta}}}, z: (\enct{\type{T}} \parr \bullet) \tensor (\enct{\type{U}} \parr \bullet), a: \ol{\enct{\type{T}}} \tensor \bullet, b: \enct{\type{T}} \parr \bullet
                }
            }{
                \nu{ab}\nu{cd}(z[a,c] \| b(e,f) \sdot \encc{e}{M} \| d(g,h) \sdot \encc{g}{N}) \vdashAst \ol{\enct{\type{\Gamma}}}, \ol{\enct{\type{\Delta}}}, z: (\enct{\type{T}} \parr \bullet) \tensor (\enct{\type{U}} \parr \bullet)
            }
        \end{mathpar}

        \begin{mathpar}
            \enc{z}{
                \inferrule[T-Spawn]{
                    \type{\Gamma} \vdashM \term{M}: \type{\1 \times T}
                }{
                    \type{\Gamma} \vdashM \term{\sff{spawn}~M}: \type{T}
                }
            }
            \mprset{sep=1.6em}
            = \inferrule{
                \inferrule*{
                    \inferrule*{}{
                        \encc{a}{M} \vdashAst {\begin{array}[t]{@{}l@{}}
                                \ol{\enct{\type{\Gamma}}},
                                \\
                                a: {\begin{array}[t]{@{}l@{}}
                                        (\bullet \parr \bullet)
                                        \\
                                        {} \tensor (\enct{\type{T}} \parr \bullet)
                                \end{array}}
                        \end{array}}
                    }
                    \\
                    \inferrule*{
                        \inferrule*{
                            \inferrule*{
                                \inferrule*{ }{
                                    c[e,f] \vdashAst c: \bullet \tensor \bullet, e: \bullet, f: \bullet
                                }
                            }{
                                \nu{ef}c[e,f] \vdashAst c: \bullet \tensor \bullet
                            }
                            \\
                            \inferrule*{
                                \inferrule*{
                                    \inferrule*{ }{
                                        d[z,g] \vdashAst z: \enct{\type{T}}, d: \ol{\enct{\type{T}}} \tensor \bullet, g: \bullet
                                    }
                                }{
                                    d[z,g] \vdashAst z: \enct{\type{T}}, d: \ol{\enct{\type{T}}} \tensor \bullet, g: \bullet, h: \bullet
                                }
                            }{
                                \nu{gh}d[z,g] \vdashAst z: \enct{\type{T}}, d: \ol{\enct{\type{T}}} \tensor \bullet
                            }
                        }{
                            \nu{ef}c[e,f] \| \nu{gh}d[z,g] \vdashAst z: \enct{\type{T}}, c: \bullet \tensor \bullet, d: \ol{\enct{\type{T}}} \tensor \bullet
                        }
                    }{
                        b(c,d) \sdot (\nu{ef}c[e,f] \| \nu{gh}d[z,g]) \vdashAst z: \enct{\type{T}}, b: (\bullet \tensor \bullet) \parr (\ol{\enct{\type{T}}} \tensor \bullet)
                    }
                }{
                    \encc{a}{M} \| b(c,d) \sdot (\nu{ef}c[e,f] \| \nu{gh}d[z,g]) \vdashAst \ol{\enct{\type{\Gamma}}}, z: \enct{\type{T}}, a: (\bullet \parr \bullet) \tensor (\enct{\type{T}} \parr \bullet), b: (\bullet \tensor \bullet) \parr (\ol{\enct{\type{T}}} \tensor \bullet)
                }
            }{
                \nu{ab}(\encc{a}{M} \| b(c,d) \sdot (\nu{ef}c[e,f] \| \nu{gh}d[z,g])) \vdashAst \ol{\enct{\type{\Gamma}}}, z: \enct{\type{T}}
            }
        \end{mathpar}

        \begin{mathpar}
            \enc{z}{
                \inferrule[T-Sub]{
                    \type{\Gamma}, \term{x}: \type{T} \vdashM \term{\mbb{M}}: \type{U}
                    \\
                    \type{\Delta} \vdashM \term{\mbb{N}}: \type{T}
                }{
                    \type{\Gamma}, \type{\Delta} \vdashM \term{\mbb{M}\xsub{ \mbb{N}/x }}: \type{U}
                }
            }
            = \inferrule{
                \inferrule*{
                    \encc{z}{\mbb{M}} \vdashAst \ol{\enct{\type{\Gamma}}}, z: \enct{\type{U}}, x: \ol{\enct{\type{T}}}
                    \\
                    \encc{a}{\mbb{N}} \vdashAst \ol{\enct{\type{\Delta}}}, a: \enct{\type{T}}
                }{
                    \encc{z}{\mbb{M}} \| \encc{a}{\mbb{N}} \vdashAst \ol{\enct{\type{\Gamma}}}, \ol{\enct{\type{\Delta}}}, z: \enct{\type{U}}, x: \ol{\enct{\type{T}}}, a: \enct{\type{T}}
                }
            }{
                \nuf{xa}(\encc{z}{\mbb{M}} \| \encc{a}{\mbb{N}}) \vdashAst \ol{\enct{\type{\Gamma}}}, \ol{\enct{\type{\Delta}}}, z: \enct{\type{U}}
            }
        \end{mathpar}
    \end{mdframed}
    \caption{Translation of (runtime) term typing rules with full derivations (part 2 of 5).}\label{f:transTerm2}
\end{figure}

\begin{figure}[t!]
    \begin{mdframed}\small
        \begin{mathpar}
            \enc{z}{
                \inferrule[T-Split]{
                    \type{\Gamma} \vdashM \term{M}: \type{T \times T'}
                    \\
                    \type{\Delta}, \term{x}: \type{T}, \term{y}: \type{T'} \vdashM \term{N}: \type{U}
                }{
                    \type{\Gamma}, \type{\Delta} \vdashM \term{\sff{let}\, (x,y) = M\, \sff{in}\, N}: \type{U}
                }
            }
            \mprset{sep=0.4em}
            = \inferrule{
                \inferrule*{
                    \inferrule*{}{
                        {\begin{array}[t]{@{}l@{}}
                                \encc{a}{M} \vdashAst {}
                                \\
                                \ol{\enct{\type{\Gamma}}},
                                \\
                                a: {\begin{array}[t]{@{}l@{}}
                                        (\enct{\type{T}} \parr \bullet)
                                        \\
                                        {} \tensor (\enct{\type{T'}} \parr \bullet)
                                \end{array}}
                        \end{array}}
                    }
                    \\
                    \inferrule*{
                        \inferrule*{
                            \inferrule*{
                                \inferrule*{
                                    \inferrule*{
                                        \inferrule*{
                                            \inferrule*{ }{
                                                c[e,g] \vdashAst {\begin{array}[t]{@{}l@{}}
                                                        c: (\ol{\enct{\type{T}}} \tensor \bullet),
                                                        \\
                                                        e: \enct{\type{T}}, g: \bullet
                                                \end{array}}
                                            }
                                        }{
                                            c[e,g] \vdashAst {\begin{array}[t]{@{}l@{}}
                                                    c: (\ol{\enct{\type{T}}} \tensor \bullet),
                                                    \\
                                                    e: \enct{\type{T}},
                                                    \\
                                                    g: \bullet, h: \bullet
                                            \end{array}}
                                        }
                                    }{
                                        \nu{gh}c[e,g] \vdashAst {\begin{array}[t]{@{}l@{}}
                                                c: (\ol{\enct{\type{T}}} \tensor \bullet),
                                                \\
                                                e: \enct{\type{T}}
                                        \end{array}}
                                    }
                                    \\
                                    \inferrule*{
                                        \inferrule*{
                                            \inferrule*{ }{
                                                d[f,k] \vdashAst {\begin{array}[t]{@{}l@{}}
                                                        d: (\ol{\enct{\type{T'}}} \tensor \bullet),
                                                        \\
                                                        f: \enct{\type{T'}}, k: \bullet
                                                \end{array}}
                                            }
                                        }{
                                            d[f,k] \vdashAst {\begin{array}[t]{@{}l@{}}
                                                    d: (\ol{\enct{\type{T'}}} \tensor \bullet),
                                                    \\
                                                    f: \enct{\type{T'}},
                                                    \\
                                                    k: \bullet, l: \bullet
                                            \end{array}}
                                        }
                                    }{
                                        \nu{kl}d[f,k] \vdashAst {\begin{array}[t]{@{}l@{}}
                                                d: (\ol{\enct{\type{T'}}} \tensor \bullet),
                                                \\
                                                f: \enct{\type{T'}}
                                        \end{array}}
                                    }
                                    \\
                                    \inferrule*{}{
                                        \encc{z}{N} \vdashAst {\begin{array}[t]{@{}l@{}}
                                                \ol{\enct{\type{\Delta}}},
                                                \\
                                                x: \ol{\enct{\type{T}}},
                                                \\
                                                y: \ol{\enct{\type{T'}}},
                                                \\
                                                z: \enct{\type{U}}
                                        \end{array}}
                                    }
                                }{
                                    \nu{gh}c[e,g] \| \nu{kl}d[f,k] \| \encc{z}{N} \vdashAst {\begin{array}[t]{@{}l@{}}
                                            \ol{\enct{\type{\Delta}}}, z: \enct{\type{U}},
                                            \\
                                            c: (\ol{\enct{\type{T}}} \tensor \bullet), d: (\ol{\enct{\type{T'}}} \tensor \bullet)
                                            \\
                                            e: \enct{\type{T}}, x: \ol{\enct{\type{T}}}, f: \enct{\type{T'}}, y: \ol{\enct{\type{T'}}}
                                    \end{array}}
                                }
                            }{
                                \nuf{fy}\left({\begin{array}{@{}l@{}}
                                            \nu{gh}c[e,g]
                                            \\
                                            {} \| \nu{kl}d[f,k]
                                            \\
                                            {} \| \encc{z}{N}
                                \end{array}}\right) \vdashAst {\begin{array}[t]{@{}l@{}}
                                        \ol{\enct{\type{\Delta}}}, z: \enct{\type{U}},
                                        \\
                                        c: (\ol{\enct{\type{T}}} \tensor \bullet), d: (\ol{\enct{\type{T'}}} \tensor \bullet)
                                        \\
                                        e: \enct{\type{T}}, x: \ol{\enct{\type{T}}}
                                \end{array}}
                            }
                        }{
                            \nuf{ex}\nuf{fy}\left({\begin{array}{@{}l@{}}
                                        \nu{gh}c[e,g]
                                        \\
                                        {} \| \nu{kl}d[f,k]
                                        \\
                                        {} \| \encc{z}{N}
                            \end{array}}\right) \vdashAst {\begin{array}[t]{@{}l@{}}
                                    \ol{\enct{\type{\Delta}}}, z: \enct{\type{U}},
                                    \\
                                    c: (\ol{\enct{\type{T}}} \tensor \bullet), d: (\ol{\enct{\type{T'}}} \tensor \bullet)
                            \end{array}}
                        }
                    }{
                        b(c,d) \sdot \nuf{ex}\nuf{fy}\left({\begin{array}{@{}l@{}}
                                    \nu{gh}c[e,g]
                                    \\
                                    {} \| \nu{kl}d[f,k]
                                    \\
                                    {} \| \encc{z}{N}
                        \end{array}}\right) \vdashAst {\begin{array}[t]{@{}l@{}}
                                \ol{\enct{\type{\Delta}}}, z: \enct{\type{U}},
                                \\
                                b: (\ol{\enct{\type{T}}} \tensor \bullet) \parr (\ol{\enct{\type{T'}}} \tensor \bullet)
                        \end{array}}
                    }
                }{
                    \encc{a}{M} \| b(c,d) \sdot \nuf{ex}\nuf{fy}(\nu{gh}c[e,g] \| \nu{kl}d[f,k] \| \encc{z}{N}) \vdashAst \ol{\enct{\type{\Gamma}}}, \ol{\enct{\type{\Delta}}}, z: \enct{\type{U}}, {\begin{array}[t]{@{}l@{}}
                            a: (\enct{\type{T}} \parr \bullet) \tensor (\enct{\type{T'}} \parr \bullet),
                            \\
                            b: (\ol{\enct{\type{T}}} \tensor \bullet) \parr (\ol{\enct{\type{T'}}} \tensor \bullet)
                    \end{array}}
                }
            }{
                \nu{ab}(\encc{a}{M} \| b(c,d) \sdot \nuf{ex}\nuf{fy}(\nu{gh}c[e,g] \| \nu{kl}d[f,k] \| \encc{z}{N})) \vdashAst \ol{\enct{\type{\Gamma}}}, \ol{\enct{\type{\Delta}}}, z: \enct{\type{U}}
            }
        \end{mathpar}

        \begin{mathpar}
            \enc{z}{
                \inferrule[T-EndL]{
                    \type{\Gamma} \vdashM \term{M}: \type{T}
                }{
                    \type{\Gamma}, \term{x}: \type{\sff{end}} \vdashM \term{M}: \type{T}
                }
            }
            = \inferrule{
                \encc{z}{M} \vdashAst \ol{\enct{\type{\Gamma}}}, z: \enct{\type{T}}
            }{
                \encc{z}{M} \vdashAst \ol{\enct{\type{\Gamma}}}, x: \bullet, z: \enct{\type{T}}
            }
            \and
            \enc{z}{
                \inferrule[T-EndR]{ }{
                    \type{\emptyset} \vdashM \term{x}: \type{\sff{end}}
                }
            }
            = \inferrule{
                \inferrule*{ }{
                    \0 \vdashAst \emptyset
                }
            }{
                \0 \vdashAst z: \bullet
            }
        \end{mathpar}
    \end{mdframed}
    \caption{Translation of (runtime) term typing rules with full derivations (part 3 of 5).}\label{f:transTerm3}
\end{figure}

\begin{figure}[t!]
    \begin{mdframed}\small
        \begin{mathpar}
            \enc{z}{
                \inferrule[T-Send]{
                    \type{\Gamma} \vdashM \term{M}: \type{T \times {!}T \sdot S}
                }{
                    \type{\Gamma} \vdashM \term{\sff{send}~M}: \type{S}
                }
            }
            \mprset{sep=0.7em}
            = \inferrule{
                \inferrule*{
                    \inferrule*{}{
                        {\begin{array}[t]{@{}l@{}}
                                \encc{a}{M} \vdashAst \ol{\enct{\type{\Gamma}}},
                                \\
                                a: (\enct{\type{T}} \parr \bullet)
                                \\
                                {} \tensor (((\ol{\enct{\type{T}}} \tensor \bullet)
                                \\
                                {} \parr \enct{\type{S}}) \parr \bullet)
                        \end{array}}
                    }
                    \\
                    \inferrule*{
                        \inferrule*{
                            \inferrule*{
                                \inferrule*{
                                    \inferrule*{
                                        \inferrule*{ }{
                                            d[e,g] \vdashAst {\begin{array}[t]{@{}l@{}}
                                                    d: ((\enct{\type{T}} \parr \bullet) \tensor \ol{\enct{\type{S}}}) \tensor \bullet,
                                                    \\
                                                    e: (\ol{\enct{\type{T}}} \tensor \bullet) \parr \enct{\type{S}}, g: \bullet
                                            \end{array}}
                                        }
                                    }{
                                        d[e,g] \vdashAst {\begin{array}[t]{@{}l@{}}
                                                d: ((\enct{\type{T}} \parr \bullet) \tensor \ol{\enct{\type{S}}}) \tensor \bullet,
                                                \\
                                                e: (\ol{\enct{\type{T}}} \tensor \bullet) \parr \enct{\type{S}}, g: \bullet, h: \bullet
                                        \end{array}}
                                    }
                                }{
                                    \nu{gh}d[e,g] \vdashAst {\begin{array}[t]{@{}l@{}}
                                            d: ({\begin{array}[t]{@{}l@{}}
                                                    (\enct{\type{T}} \parr \bullet)
                                                    \\
                                                    {} \tensor \ol{\enct{\type{S}}}) \tensor \bullet,
                                            \end{array}}
                                            \\
                                            e: (\ol{\enct{\type{T}}} \tensor \bullet) \parr \enct{\type{S}}
                                    \end{array}}
                                }
                                \\
                                \inferrule*{
                                    \inferrule*{
                                        \inferrule*{ }{
                                            {\begin{array}[t]{@{}l@{}}
                                                    f[c,k] \vdash
                                                    \\
                                                    f: (\enct{\type{T}} \parr \bullet) \tensor \ol{\enct{\type{S}}},
                                                    \\
                                                    c: \ol{\enct{\type{T}}} \tensor \bullet,
                                                    k: \enct{\type{S}}
                                            \end{array}}
                                        }
                                        \\
                                        \inferrule*{ }{
                                            l \fwd z \vdashAst {\begin{array}[t]{@{}l@{}}
                                                    z: \enct{\type{S}},
                                                    \\
                                                    l: \ol{\enct{\type{S}}}
                                            \end{array}}
                                        }
                                    }{
                                        {\begin{array}[t]{@{}c@{}}
                                                f[c,k] \| l \fwd z \vdash
                                                z: \enct{\type{S}}, c: \ol{\enct{\type{T}}} \tensor \bullet,
                                                \\
                                                f: (\enct{\type{T}} \parr \bullet) \tensor \ol{\enct{\type{S}}},
                                                k: \enct{\type{S}}, l: \ol{\enct{\type{S}}}
                                        \end{array}}
                                    }
                                }{
                                    {\begin{array}[t]{@{}c@{}}
                                            \nu{kl}(f[c,k] \| l \fwd z) \vdashAst {}
                                            \\
                                            z: \enct{\type{S}}, c: \ol{\enct{\type{T}}} \tensor \bullet, f: (\enct{\type{T}} \parr \bullet) \tensor \ol{\enct{\type{S}}}
                                    \end{array}}
                                }
                            }{
                                \nu{gh}d[e,g] \| \nu{kl}(f[c,k] \| l \fwd z) \vdashAst z: \enct{\type{S}}, {\begin{array}[t]{@{}l@{}}
                                        c: \ol{\enct{\type{T}}} \tensor \bullet, d: ((\enct{\type{T}} \parr \bullet) \tensor \ol{\enct{\type{S}}}) \tensor \bullet,
                                        \\
                                        e: (\ol{\enct{\type{T}}} \tensor \bullet) \parr \enct{\type{S}}, f: (\enct{\type{T}} \parr \bullet) \tensor \ol{\enct{\type{S}}}
                                \end{array}}
                            }
                        }{
                            \nu{ef}(\nu{gh}d[e,g] \| \nu{kl}(f[c,k] \| l \fwd z)) \vdashAst z: \enct{\type{S}}, c: \ol{\enct{\type{T}}} \tensor \bullet, d: ({\begin{array}[t]{@{}l@{}}
                                    (\enct{\type{T}} \parr \bullet)
                                    \\
                                    {} \tensor \ol{\enct{\type{S}}}) \tensor \bullet
                            \end{array}}
                        }
                    }{
                        {\begin{array}[t]{@{}c@{}}
                                b(c,d) \sdot \nu{ef}(\nu{gh}d[e,g] \| \nu{kl}(f[c,k] \| l \fwd z)) \vdashAst {}
                                \\
                                z: \enct{\type{S}}, b: (\ol{\enct{\type{T}}} \tensor \bullet) \parr (((\enct{\type{T}} \parr \bullet) \tensor \ol{\enct{\type{S}}}) \tensor \bullet)
                        \end{array}}
                    }
                }{
                    {\begin{array}[t]{@{}c@{}}
                            \encc{a}{M} \| b(c,d) \sdot \nu{ef}(\nu{gh}d[e,g] \| \nu{kl}(f[c,k] \| l \fwd z)) \vdashAst {}
                            \\
                            \ol{\enct{\type{\Gamma}}}, z: \enct{\type{S}}, a: (\enct{\type{T}} \parr \bullet) \tensor (((\ol{\enct{\type{T}}} \tensor \bullet) \parr \enct{\type{S}}) \parr \bullet),
                            b: (\ol{\enct{\type{T}}} \tensor \bullet) \parr (((\enct{\type{T}} \parr \bullet) \tensor \ol{\enct{\type{S}}}) \tensor \bullet)
                    \end{array}}
                }
            }{
                \nu{ab}(\encc{a}{M} \| b(c,d) \sdot \nu{ef}(\nu{gh}d[e,g] \| \nu{kl}(f[c,k] \| l \fwd z))) \vdashAst \ol{\enct{\type{\Gamma}}}, z: \enct{\type{S}}
            }
        \end{mathpar}

        \begin{mathpar}
            \enc{z}{
                \inferrule[T-Recv]{
                    \type{\Gamma} \vdashM \term{M}: \type{{?}T \sdot S}
                }{
                    \type{\Gamma} \vdashM \term{\sff{recv}~M}: \type{T \times S}
                }
            }
            \mprset{sep=1.4em}
            = \inferrule{
                \inferrule*{
                    \inferrule*{}{
                        {\begin{array}[t]{@{}l@{}}
                                \encc{a}{M} \vdashAst {}
                                \\
                                \ol{\enct{\type{\Gamma}}},
                                \\
                                a: (\enct{\type{T}} \parr \bullet) \tensor \enct{\type{S}}
                        \end{array}}
                    }
                    \\
                    \inferrule*{
                        \inferrule*{
                            \inferrule*{
                                \inferrule*{ }{
                                    z[c,e] \vdashAst {\begin{array}[t]{@{}l@{}}
                                            z: (\enct{\type{T}} \parr \bullet) \tensor (\enct{\type{S}} \parr \bullet),
                                            \\
                                            c: \ol{\enct{\type{T}}} \tensor \bullet, e: \ol{\enct{\type{S}}} \tensor \bullet
                                    \end{array}}
                                }
                                \\
                                \inferrule*{
                                    \inferrule*{
                                        \inferrule*{ }{
                                            d \fwd g \vdashAst d: \ol{\enct{\type{S}}}, g: \enct{\type{S}}
                                        }
                                    }{
                                        d \fwd g \vdashAst d: \ol{\enct{\type{S}}}, g: \enct{\type{S}}, h: \bullet
                                    }
                                }{
                                    f(g,h) \sdot d \fwd g \vdashAst {\begin{array}[t]{@{}l@{}}
                                            d: \ol{\enct{\type{S}}},
                                            \\
                                            f: \enct{\type{S}} \parr \bullet
                                    \end{array}}
                                }
                            }{
                                z[c,e] \| f(g,h) \sdot d \fwd g \vdashAst {\begin{array}[t]{@{}l@{}}
                                        z: (\enct{\type{T}} \parr \bullet) \tensor (\enct{\type{S}} \parr \bullet),
                                        \\
                                        c: \ol{\enct{\type{T}}} \tensor \bullet, d: \ol{\enct{\type{S}}}, e: \ol{\enct{\type{S}}} \tensor \bullet, f: \enct{\type{S}} \parr \bullet
                                \end{array}}
                            }
                        }{
                            \nu{ef}(z[c,e] \| f(g,h) \sdot d \fwd g) \vdashAst {\begin{array}[t]{@{}l@{}}
                                    z: (\enct{\type{T}} \parr \bullet) \tensor (\enct{\type{S}} \parr \bullet),
                                    \\
                                    c: \ol{\enct{\type{T}}} \tensor \bullet, d: \ol{\enct{\type{S}}}
                            \end{array}}
                        }
                    }{
                        b(c,d) \sdot \nu{ef}(z[c,e] \| f(g,h) \sdot d \fwd g) \vdashAst {\begin{array}[t]{@{}l@{}}
                                z: (\enct{\type{T}} \parr \bullet) \tensor (\enct{\type{S}} \parr \bullet),
                                \\
                                b: (\ol{\enct{\type{T}}} \tensor \bullet) \parr \ol{\enct{\type{S}}}
                        \end{array}}
                    }
                }{
                    \encc{a}{M} \| b(c,d) \sdot \nu{ef}(z[c,e] \| f(g,h) \sdot d \fwd g) \vdashAst \ol{\enct{\type{\Gamma}}}, z: (\enct{\type{T}} \parr \bullet) \tensor (\enct{\type{S}} \parr \bullet), {\begin{array}[t]{@{}l@{}}
                            a: (\enct{\type{T}} \parr \bullet) \tensor \enct{\type{S}},
                            \\
                            b: (\ol{\enct{\type{T}}} \tensor \bullet) \parr \ol{\enct{\type{S}}}
                    \end{array}}
                }
            }{
                \nu{ab}(\encc{a}{M} \| b(c,d) \sdot \nu{ef}(z[c,e] \| f(g,h) \sdot d \fwd g)) \vdashAst \ol{\enct{\type{\Gamma}}}, z: (\enct{\type{T}} \parr \bullet) \tensor (\enct{\type{S}} \parr \bullet)
            }
        \end{mathpar}
    \end{mdframed}
    \caption{Translation of (runtime) term typing rules with full derivations (part 4 of 5).}\label{f:transTerm4}
\end{figure}

\begin{figure}[t!]
    \begin{mdframed}
        \begin{mathpar}\small
            \enc{z}{
                \inferrule[T-Select]{
                    \type{\Gamma} \vdashM \term{M}: \type{\oplus\{i:T_i\}_{i \in I}}
                    \\
                    j \in I
                }{
                    \type{\Gamma} \vdashM \term{\sff{select}\, j\, M}: \type{T_j}
                }
            }
            = \inferrule{
                \inferrule*{
                    \encc{a}{M} \vdashAst \ol{\enct{\type{\Gamma}}}, a: {\&}\{i:\enct{\type{T_i}}\}_{i \in I}
                    \\
                    \inferrule*{
                        \inferrule*{
                            \inferrule*{ }{
                                b[c] \puts j \vdashAst b: \oplus\{i:\ol{\enct{\type{T_i}}}\}_{i \in I}, c: \enct{\type{T_j}}
                            }
                            \\
                            \inferrule*{ }{
                                d \fwd z \vdashAst z: \enct{\type{T_j}}, d: \ol{\enct{\type{T_j}}}
                            }
                        }{
                            b[c] \puts j \| d \fwd z \vdashAst z: \enct{\type{T_j}}, b: \oplus\{i:\ol{\enct{\type{T_i}}}\}_{i \in I}, c: \enct{\type{T_j}}, d: \ol{\enct{\type{T_j}}}
                        }
                    }{
                        \nu{cd}(b[c] \puts j \| d \fwd z) \vdashAst z: \enct{\type{T_j}}, b: \oplus\{i:\ol{\enct{\type{T_i}}}\}_{i \in I}
                    }
                }{
                    \encc{a}{M} \| \nu{cd}(b[c] \puts j \| d \fwd z) \vdashAst \ol{\enct{\type{\Gamma}}}, z: \enct{\type{T_j}}, a: {\&}\{i:\enct{\type{T_i}}\}_{i \in I}, b: \oplus\{i:\ol{\enct{\type{T_i}}}\}_{i \in I}
                }
            }{
                \nu{ab}(\encc{a}{M} \| \nu{cd}(b[c] \puts j \| d \fwd z)) \vdashAst \ol{\enct{\type{\Gamma}}}, z: \enct{\type{T_j}}
            }
        \end{mathpar}

        \begin{mathpar}
            \enc{z}{
                \inferrule[T-Case]{
                    \type{\Gamma} \vdashM \term{M}: \type{{\&}\{i:T_i\}_{i \in I}}
                    \\
                    \forall i \in I.~ \type{\Delta} \vdashM \term{N_i}: \type{T_i \lolli U}
                }{
                    \type{\Gamma}, \type{\Delta} \vdashM \term{\sff{case}\, M\, \sff{of}\, \{i:N_i\}_{i \in I}}: \type{U}
                }
            }
            = \inferrule{
                \inferrule*{
                    \encc{a}{M} \vdashAst \ol{\enct{\type{\Gamma}}}, a: \oplus\{i:\enct{\type{T_i}}\}_{i \in I}
                    \\
                    \inferrule*{
                        \forall i \in I.~ \encc{z}{N_i~c} \vdashAst \ol{\enct{\type{\Delta}}}, z:\enct{\type{U}}, c: \ol{\enct{\type{T_i}}}
                    }{
                        b(c) \gets \{i:\encc{z}{N_i~c}\}_{i \in I} \vdashAst \ol{\enct{\type{\Delta}}}, z:\enct{\type{U}}, b: {\&}\{i:\ol{\enct{\type{T_i}}}\}_{i \in I}
                    }
                }{
                    \encc{a}{M} \| b(c) \gets \{i:\encc{z}{N_i~c}\}_{i \in I} \vdashAst \ol{\enct{\type{\Gamma}}}, \ol{\enct{\type{\Delta}}}, z:\enct{\type{U}}, a: \oplus\{i:\enct{\type{T_i}}\}_{i \in I}, b: {\&}\{i:\ol{\enct{\type{T_i}}}\}_{i \in I}
                }
            }{
                \nu{ab}(\encc{a}{M} \| b(c) \gets \{i:\encc{z}{N_i~c}\}_{i \in I}) \vdashAst \ol{\enct{\type{\Gamma}}}, \ol{\enct{\type{\Delta}}}, z:\enct{\type{U}}
            }
        \end{mathpar}

        \begin{mathpar}
            \enc{z}{
                \inferrule[T-Send']{
                    \type{\Gamma} \vdashM \term{M}: \type{T}
                    \\
                    \type{\Delta} \vdashM \term{\mbb{N}}: \type{{!}T \sdot S}
                }{
                    \type{\Gamma}, \type{\Delta} \vdashM \term{\sff{send}'(M,\mbb{N})}: \type{S}
                }
            }
            \mprset{sep=1.4em}
            = \inferrule{
                \inferrule*{
                    \inferrule*{
                        \inferrule*{
                            \encc{c}{M} \vdashAst \ol{\enct{\type{\Gamma}}}, c: \enct{\type{T}}
                        }{
                            \encc{c}{M} \vdashAst \ol{\enct{\type{\Gamma}}}, c: \enct{\type{T}}, d: \bullet
                        }
                    }{
                        a(c,d) \sdot \encc{c}{M} \vdashAst \ol{\enct{\type{\Gamma}}}, a: \enct{\type{T}} \parr \bullet
                    }
                    \\
                    \inferrule*{
                        \inferrule*{
                            \inferrule*{}{
                                {\begin{array}[t]{@{}l@{}}
                                        \encc{e}{\mbb{N}} \vdashAst \ol{\enct{\type{\Delta}}},
                                        \\
                                        e: (\ol{\enct{\type{T}}} \tensor \bullet) \parr \enct{\type{S}}
                                \end{array}}
                            }
                            \\
                            \inferrule*{
                                \inferrule*{
                                    \inferrule*{ }{
                                        {\begin{array}[t]{@{}l@{}}
                                                f[b,g] \vdash
                                                \\
                                                f: (\enct{\type{T}} \parr \bullet) \tensor \ol{\enct{\type{S}}},
                                                \\
                                                b: \ol{\enct{\type{T}}} \tensor \bullet, g: \enct{\type{S}}
                                        \end{array}}
                                    }
                                    \\
                                    \inferrule*{ }{
                                        {\begin{array}[t]{@{}l@{}}
                                                h \fwd z \vdash
                                                \\
                                                z: \enct{\type{S}},
                                                \\
                                                h: \ol{\enct{\type{S}}}
                                        \end{array}}
                                    }
                                }{
                                    {\begin{array}[t]{@{}l@{}}
                                            f[b,g] \| h \fwd z \vdash
                                            \\
                                            z: \enct{\type{S}}, b: \ol{\enct{\type{T}}} \tensor \bullet,
                                            \\
                                            f: (\enct{\type{T}} \parr \bullet) \tensor \ol{\enct{\type{S}}},
                                            \\
                                            g: \enct{\type{S}}, h: \ol{\enct{\type{S}}}
                                    \end{array}}
                                }
                            }{
                                {\begin{array}[t]{@{}l@{}}
                                        \nu{gh}(f[b,g] \| h \fwd z) \vdash
                                        \\
                                        z: \enct{\type{S}}, b: \ol{\enct{\type{T}}} \tensor \bullet,
                                        \\
                                        f: (\enct{\type{T}} \parr \bullet) \tensor \ol{\enct{\type{S}}}
                                \end{array}}
                            }
                        }{
                            {\begin{array}[t]{@{}c@{}}
                                    \encc{e}{\mbb{N}} \| \nu{gh}(f[b,g] \| h \fwd z) \vdashAst \ol{\enct{\type{\Delta}}}, z: \enct{\type{S}}, b: \ol{\enct{\type{T}}} \tensor \bullet,
                                    \\
                                    e: (\ol{\enct{\type{T}}} \tensor \bullet) \parr \enct{\type{S}}, f: (\enct{\type{T}} \parr \bullet) \tensor \ol{\enct{\type{S}}}
                            \end{array}}
                        }
                    }{
                        \nu{ef}(\encc{e}{\mbb{N}} \| f[b,z]) \vdashAst \ol{\enct{\type{\Delta}}}, z: \enct{\type{S}}, b: \ol{\enct{\type{T}}} \tensor \bullet
                    }
                }{
                    a(c,d) \sdot \encc{c}{M} \| \nu{ef}(\encc{e}{\mbb{N}} \| \nu{gh}(f[b,g] \| h \fwd z)) \vdashAst \ol{\enct{\type{\Gamma}}}, \ol{\enct{\type{\Delta}}}, z: \enct{\type{S}}, a: \enct{\type{T}} \parr \bullet, b: \ol{\enct{\type{T}}} \tensor \bullet
                }
            }{
                \nu{ab}(a(c,d) \sdot \encc{c}{M} \| \nu{ef}(\encc{e}{\mbb{N}} \| \nu{gh}(f[b,g] \| h \fwd z))) \vdashAst \ol{\enct{\type{\Gamma}}}, \ol{\enct{\type{\Delta}}}, z: \enct{\type{S}}
            }
        \end{mathpar}
    \end{mdframed}
    \caption{Translation of (runtime) term typing rules with full derivations (part 5 of 5).}\label{f:transTerm5}
\end{figure}

\begin{figure}[t!]
    \begin{mdframed}\small
        \begin{mathpar}
            \enc{z}{
                \inferrule[T-Main]{
                    \type{\Gamma} \vdashM \term{\mbb{M}}: \type{T}
                }{
                    \type{\Gamma} \vdashC{\main} \term{\main\, \mbb{M}}: \type{T}
                }
            }
            = \encc{z}{\mbb{M}} \vdashAst \ol{\enct{\type{\Gamma}}}, z: \enct{\type{T}}
            \and
            \enc{z}{
                \inferrule[T-Child]{
                    \type{\Gamma} \vdashM \term{\mbb{M}}: \type{\1}
                }{
                    \type{\Gamma} \vdashC{\child} \term{\child\, \mbb{M}}: \type{\1}
                }
            }
            = \encc{z}{\mbb{M}} \vdashAst \ol{\enct{\type{\Gamma}}}, z: \enct{\type{T}}
        \end{mathpar}

        \begin{mathpar}
            \enc{z}{
                \inferrule[T-ParL]{
                    \type{\Gamma} \vdashC{\child} \term{C}: \type{\1}
                    \\
                    \type{\Delta} \vdashC{\phi} \term{D}: \type{T}
                }{
                    \type{\Gamma}, \type{\Delta} \vdashC{\child + \phi} \term{C \prl D}: \type{T}
                }
            }
            = \inferrule{
                \inferrule*{
                    \inferrule*{
                        \encc{a}{C} \vdashAst \ol{\enct{\type{\Gamma}}}, a: \bullet
                    }{
                        \encc{a}{C} \vdashAst \ol{\enct{\type{\Gamma}}}, a: \bullet, b: \bullet
                    }
                }{
                    \nu{ab}\encc{a}{C} \vdashAst \ol{\enct{\type{\Gamma}}}
                }
                \\
                \encc{z}{D} \vdashAst \ol{\enct{\type{\Delta}}}, z: \enct{\type{T}}
            }{
                \nu{ab}\encc{a}{C} \| \encc{z}{D} \vdashAst \ol{\enct{\type{\Gamma}}}, \ol{\enct{\type{\Delta}}}, z: \enct{\type{T}}
            }
        \end{mathpar}

        \begin{mathpar}
            \enc{z}{
                \inferrule[T-ParR]{
                    \type{\Gamma} \vdashC{\phi} \term{C}: \type{T}
                    \\
                    \type{\Delta} \vdashC{\child} \term{D}: \type{\1}
                }{
                    \type{\Gamma}, \type{\Delta} \vdashC{\phi + \child} \term{C \prl D}: \type{T}
                }
            }
            = \inferrule{
                \encc{z}{C} \vdashAst \ol{\enct{\type{\Gamma}}}, z: \enct{\type{T}}
                \\
                \inferrule*{
                    \inferrule*{
                        \encc{a}{D} \vdashAst \ol{\enct{\type{\Delta}}}, a: \bullet
                    }{
                        \encc{a}{D} \vdashAst \ol{\enct{\type{\Delta}}}, a: \bullet, b: \bullet
                    }
                }{
                    \nu{ab}\encc{a}{D} \vdashAst \ol{\enct{\type{\Delta}}}
                }
            }{
                \encc{z}{C} \| \nu{ab}\encc{a}{D} \vdashAst \ol{\enct{\type{\Gamma}}}, \ol{\enct{\type{\Delta}}}, z: \enct{\type{T}}
            }
        \end{mathpar}

        \begin{mathpar}
            \enc{z}{
                \inferrule[T-ResBuf]{
                    \type{\Gamma}, \term{y}: \type{\ol{S}} \vdashB \term{\bfr{\vec{m}}}: \type{S'} > \type{S}
                    \\
                    \type{\Delta}, \term{x}: \type{S'} \vdashC{\phi} \term{C}: \type{T}
                }{
                    \type{\Gamma}, \type{\Delta} \vdashC{\phi} \term{\nu{x\bfr{\vec{m}}y}C}: \type{T}
                }
            }
            = \inferrule{
                \bencb{x}{\encc{z}{C}}{\bfr{\vec{m}}} \vdashAst \ol{\enct{\type{\Gamma}}}, \ol{\enct{\type{\Delta}}}, x: \ol{\enct{\type{S}}}, y: \ol{\enct{\type{\ol{S}}}}, z: \enct{T}
            }{
                \nu{xy}\bencb{x}{\encc{z}{C}}{\bfr{\vec{m}}} \vdashAst \ol{\enct{\type{\Gamma}}}, \ol{\enct{\type{\Delta}}}, z: \enct{T}
            }
        \end{mathpar}

        \begin{mathpar}
            \enc{z}{
                \inferrule[T-Res]{
                    \type{\Gamma} \vdashB \term{\bfr{\vec{m}}}: \type{S'} > \type{S}
                    \\
                    \type{\Delta}, \term{x}: \type{S'}, \term{y}: \type{\ol{S}} \vdashC{\phi} \term{C}: \type{T}
                }{
                    \type{\Gamma}, \type{\Delta} \vdashC{\phi} \term{\nu{x\bfr{\vec{m}}y}C}: \type{T}
                }
            }
            = \inferrule{
                \bencb{x}{\encc{z}{C}}{\bfr{\vec{m}}} \vdashAst \ol{\enct{\type{\Gamma}}}, \ol{\enct{\type{\Delta}}}, x: \ol{\enct{\type{S}}}, y: \ol{\enct{\type{\ol{S}}}}, z: \enct{T}
            }{
                \nu{xy}\bencb{x}{\encc{z}{C}}{\bfr{\vec{m}}} \vdashAst \ol{\enct{\type{\Gamma}}}, \ol{\enct{\type{\Delta}}}, z: \enct{T}
            }
        \end{mathpar}

        \begin{mathpar}
            \enc{z}{
                \inferrule[T-ConfSub]{
                    \type{\Gamma}, \term{x}: \type{T} \vdashC{\phi} \term{C}: \type{U}
                    \\
                    \type{\Delta} \vdashM \term{\mbb{M}}: \type{T}
                }{
                    \type{\Gamma}, \type{\Delta} \vdashC{\phi} \term{C\xsub{ \mbb{M}/x }}: \type{U}
                }
            }
            = \inferrule{
                \inferrule*{
                    \encc{z}{C} \vdashAst \ol{\enct{\type{\Gamma}}}, x: \ol{\enct{\type{T}}}, z: \enct{\type{U}}
                    \\
                    \encc{z}{\mbb{M}} \vdashAst \ol{\enct{\type{\Delta}}}, a: \enct{\type{T}}
                }{
                    \encc{z}{C} \| \encc{z}{\mbb{M}} \vdashAst \ol{\enct{\type{\Gamma}}}, \ol{\enct{\type{\Delta}}}, z: \enct{\type{U}}, x: \ol{\enct{\type{T}}}, a: \enct{\type{T}}
                }
            }{
                \nuf{xa}(\encc{z}{C} \| \encc{z}{\mbb{M}}) \vdashAst \ol{\enct{\type{\Gamma}}}, \ol{\enct{\type{\Delta}}}, z: \enct{\type{U}}
            }
        \end{mathpar}
    \end{mdframed}
    \caption{Translation of configuration typing rules with full derivations.}\label{f:transConf}
\end{figure}

\begin{figure}[t!]
    \begin{mdframed}\small
        \begin{mathpar}
            \benc{x}{P \vdashAst \Lambda, x: \ol{\enct{\type{S'}}}}{
                \inferrule[T-Buf]{ }{
                    \type{\emptyset} \vdashB \term{\bfr{\epsilon}}: \type{S'} > \type{S'}
                }
            }
            = P \vdashAst \Lambda, x: \ol{\enct{\type{S'}}}
        \end{mathpar}

        \begin{mathpar}
            \benc{x}{P \vdashAst \Lambda, x: \ol{\enct{\type{S'}}}}{
                \inferrule[T-BufSend]{
                    \type{\Gamma} \vdashM \term{M}: \type{T}
                    \\
                    \type{\Delta} \vdashB \term{\bfr{\vec{m}}}: \type{S'} > \type{S}
                }{
                    \type{\Gamma}, \type{\Delta} \vdashB \term{\bfr{\vec{m},M}}: \type{S'} > \type{{!}T \sdot S}
                }
            }
            \mprset{sep=1.4em}
            = \inferrule{
                \inferrule*{
                    \inferrule*{
                        \inferrule*{
                            \inferrule*{
                                \inferrule*{ }{
                                    {\begin{array}[t]{@{}l@{}}
                                            x \fwd g \vdashAst {}
                                            \\
                                            x: (\enct{\type{T}} \parr \bullet) \tensor \ol{\enct{\type{S}}},
                                            \\
                                            g: (\ol{\enct{\type{T}}} \tensor \bullet) \parr \enct{\type{S}},
                                    \end{array}}
                                }
                                \\
                                \inferrule*{ }{
                                    {\begin{array}[t]{@{}l@{}}
                                            h[a,c] \vdashAst {}
                                            \\
                                            h: (\enct{\type{T}} \parr \bullet) \tensor \ol{\enct{\type{S}}},
                                            \\
                                            a: \ol{\enct{\type{T}}} \tensor \bullet, c: \enct{\type{S}}
                                    \end{array}}
                                }
                            }{
                                x \fwd g \| h[a,c] \vdashAst {\begin{array}[t]{@{}l@{}}
                                        x: (\enct{\type{T}} \parr \bullet) \tensor \ol{\enct{\type{S}}},
                                        \\
                                        g: (\ol{\enct{\type{T}}} \tensor \bullet) \parr \enct{\type{S}},
                                        \\
                                        h: (\enct{\type{T}} \parr \bullet) \tensor \ol{\enct{\type{S}}},
                                        \\
                                        a: \ol{\enct{\type{T}}} \tensor \bullet, c: \enct{\type{S}}
                                \end{array}}
                            }
                        }{
                            \nu{gh}(x \fwd g \| h[a,c]) \vdashAst {\begin{array}[t]{@{}l@{}}
                                    x: (\enct{\type{T}} \parr \bullet) \tensor \ol{\enct{\type{S}}},
                                    \\
                                    a: \ol{\enct{\type{T}}} \tensor \bullet, c: \enct{\type{S}}
                            \end{array}}
                        }
                        \\
                        \inferrule*{
                            \inferrule*{
                                \encc{e}{M} \vdashAst \ol{\enct{\type{\Gamma}}}, e: \enct{\type{T}}
                            }{
                                \encc{e}{M} \vdashAst \ol{\enct{\type{\Gamma}}}, e: \enct{\type{T}}, f: \bullet
                            }
                        }{
                            b(e,f) \sdot \encc{e}{M} \vdashAst \ol{\enct{\type{\Gamma}}}, b: \enct{\type{T}} \parr \bullet
                        }
                        \\
                        \inferrule*{}{
                            {\begin{array}[t]{@{}l@{}}
                                    \bencb{d}{P\{d/x\}}{\bfr{\vec{m}}} \vdashAst {}
                                    \\
                                    \ol{\enct{\type{\Delta}}}, \Lambda, d: \ol{\enct{\type{S}}}
                            \end{array}}
                        }
                    }{
                        \nu{gh}(x \fwd g \| h[a,c]) \| b(e,f) \sdot \encc{e}{M} \| \bencb{d}{P\{d/x\}}{\bfr{\vec{m}}} \vdashAst {\begin{array}[t]{@{}l@{}}
                                \ol{\enct{\type{\Gamma}}}, \ol{\enct{\type{\Delta}}}, \Lambda, x: (\enct{\type{T}} \parr \bullet) \tensor \ol{\enct{\type{S}}}, a: \ol{\enct{\type{T}}} \tensor \bullet,
                                \\
                                b: \enct{\type{T}} \parr \bullet, c: \enct{\type{S}}, d: \ol{\enct{\type{S}}}
                        \end{array}}
                    }
                }{
                    \nu{cd}(\nu{gh}(x \fwd g \| h[a,c]) \| b(e,f) \sdot \encc{e}{M} \| \bencb{d}{P\{d/x\}}{\bfr{\vec{m}}}) \vdashAst {\begin{array}[t]{@{}l@{}}
                            \ol{\enct{\type{\Gamma}}}, \ol{\enct{\type{\Delta}}}, \Lambda, x: (\enct{\type{T}} \parr \bullet) \tensor \ol{\enct{\type{S}}},
                            \\
                            a: \ol{\enct{\type{T}}} \tensor \bullet, b: \enct{\type{T}} \parr \bullet
                    \end{array}}
                }
            }{
                \nu{ab}\nu{cd}(\nu{gh}(x \fwd g \| h[a,c]) \| b(e,f) \sdot \encc{e}{M} \| \bencb{d}{P\{d/x\}}{\bfr{\vec{m}}}) \vdashAst {\begin{array}[t]{@{}l@{}}
                        \ol{\enct{\type{\Gamma}}}, \ol{\enct{\type{\Delta}}}, \Lambda,
                        \\
                        x: (\enct{\type{T}} \parr \bullet) \tensor \ol{\enct{\type{S}}}
                \end{array}}
            }
        \end{mathpar}

        \begin{mathpar}
            \benc{x}{P \vdashAst \Lambda, x: \ol{\enct{\type{S'}}}}{
                \inferrule[T-BufSelect]{
                    \type{\Gamma} \vdashB \term{\bfr{\vec{m}}}: \type{S'} > \type{S_j}
                    \\
                    j \in I
                }{
                    \type{\Gamma} \vdashB \term{\bfr{\vec{m},j}}: \type{S'} > \type{\oplus\{i:S_i\}_{i \in I}}
                }
            }
            = \inferrule{
                \inferrule*{
                    \inferrule*{
                        \inferrule*{
                            \inferrule*{ }{
                                x \fwd c \vdashAst {\begin{array}[t]{@{}l@{}}
                                        x: \oplus\{i:\ol{\enct{\type{S_i}}}\}_{i \in I},
                                        \\
                                        c: {\&}\{i:\enct{\type{S_i}}\}_{i \in I},
                                \end{array}}
                            }
                            \\
                            \inferrule*{ }{
                                d[a] \puts j \vdashAst d: \oplus\{i:\ol{\enct{\type{S_i}}}\}_{i \in I}, a: \enct{\type{S_j}}
                            }
                        }{
                            x \fwd c \| d[a] \puts j \vdashAst {\begin{array}[t]{@{}l@{}}
                                    x: \oplus\{i:\ol{\enct{\type{S_i}}}\}_{i \in I}, c: {\&}\{i:\enct{\type{S_i}}\}_{i \in I},
                                    \\
                                    d: \oplus\{i:\ol{\enct{\type{S_i}}}\}_{i \in I}, a: \enct{\type{S_j}}
                            \end{array}}
                        }
                    }{
                        \nu{cd}(x \fwd c \| d[a] \puts j) \vdashAst x: \oplus\{i:\ol{\enct{\type{S_i}}}\}_{i \in I}, a: \enct{\type{S_j}}
                    }
                    \\
                    \bencb{b}{P\{b/x\}}{\bfr{\vec{m}}} \vdashAst \ol{\enct{\type{\Gamma}}}, \Lambda, b: \ol{\enct{\type{S_j}}}
                }{
                    \nu{cd}(x \fwd c \| d[a] \puts j) \| \bencb{b}{P\{b/x\}}{\bfr{\vec{m}}} \vdashAst \ol{\enct{\type{\Gamma}}}, \Lambda, x: \oplus\{i:\ol{\enct{\type{S_i}}}\}_{i \in I}, a: \enct{\type{S_j}}, b: \ol{\enct{\type{S_j}}}
                }
            }{
                \nu{ab}(\nu{cd}(x \fwd c \| d[a] \puts j) \| \bencb{b}{P\{b/x\}}{\bfr{\vec{m}}}) \vdashAst \ol{\enct{\type{\Gamma}}}, \Lambda, x: \oplus\{i:\ol{\enct{\type{S_i}}}\}_{i \in I}
            }
        \end{mathpar}
    \end{mdframed}
    \caption{Translation of buffer typing rules with full derivations.}\label{f:transBuf}
\end{figure}

\section{Correctness of the Translation of \ourGV into APCP: Full Proofs}
\label{a:transFullProofs}

\subsection{Completeness}
\label{as:completeness}

Our proofs rely on the following auxiliary properties of the translation.

\begin{lemma}\label{l:transProps}\leavevmode
    \begin{enumerate}
        \item\label{i:transFn}
            Given $\type{\Gamma} \vdashM \term{\mbb{M}}: \type{T}$, if $\term{x} \notin \fn(\term{\mbb{M}})$, then $x \notin \fn(\encc{z}{\mbb{M}})$.

        \item\label{i:transConfFn}
            Given $\type{\Gamma} \vdashC{\phi} \term{C}: \type{T}$, if $\term{x} \notin \fn(\term{C})$, then $x \notin \fn(\encc{z}{C})$.

        \item\label{i:transSubst}
            Given $\type{\Gamma}, \term{x}:\type{U} \vdashM \term{\mbb{M}}: \type{T}$, then $\encc{z}{\mbb{M}\{y/x\}} = \encc{z}{\mbb{M}}\{y/x\}$.

        \item\label{i:transBufCtx}
            Given $\type{\Gamma} \vdashB \term{\bfr{\vec{m}}}: \type{S'} > \type{S}$, if $x \notin \fn(\mcl{R})$, then $\bencb{x}{\mcl{R}[P]}{\bfr{\vec{m}}} \equiv \mcl{R}\big[\bencb{x}{P}{\bfr{\vec{m}}}\big]$.

        \item\label{i:transBufSwap}
            Given $\type{\Gamma_1} \vdashB \term{\bfr{\vec{m}}}: \type{S'_1} > \type{S_1}$ and $\type{\Gamma_2} \vdashB \term{\bfr{\vec{n}}}: \type{S'_2} > \type{S_2}$, then $\bencb{x}{\benc{v}{P}{\bfr{\vec{n}}}}{\term{\bfr{\vec{m}}}} \equiv \bencb{v}{\benc{x}{P}{\bfr{\vec{m}}}}{\term{\bfr{\vec{n}}}}$.

        \item\label{i:transBufSC}
            Given $\type{\Gamma} \vdashB \term{\bfr{\vec{m}}}: \type{S'} > \type{S}$, if $P \equiv Q$, then $\bencb{x}{P}{\bfr{\vec{m}}} \equiv \bencb{x}{Q}{\bfr{\vec{m}}}$.

        \item\label{i:transBufRed}
            Given $\type{\Gamma} \vdashB \term{\bfr{\vec{m}}}: \type{S'} > \type{S}$, if $P \redd Q$, then $\bencb{x}{P}{\bfr{\vec{m}}} \redd \bencb{x}{Q}{\bfr{\vec{m}}}$.

        \item\label{i:transBufCombine}
            Given $\type{\Gamma_1} \vdashB \term{\bfr{\vec{m}_1}}: \type{S'_1} > \type{S_1}$ and $\type{\Gamma_2} \vdashB \term{\bfr{\vec{m}_2}}: \type{S'_2} > \type{S_2}$, then $\bencb{x}{\bencb{x}{P}{\bfr{\vec{m}_2}}}{\bfr{\vec{m}_1}} = \bencb{x}{P}{\bfr{\vec{m}_2,\vec{m}_1}}$.
    \end{enumerate}
\end{lemma}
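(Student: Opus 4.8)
The plan is to prove all eight properties by induction, splitting them into two families according to what the translation is indexed by. Properties \labelcref{i:transFn}, \labelcref{i:transConfFn}, and \labelcref{i:transSubst} concern the translation of (runtime) terms and configurations and are proven by induction on the relevant typing derivation; properties \labelcref{i:transBufCtx}--\labelcref{i:transBufCombine} concern the translation of buffers and are proven by induction on the buffer typing derivation (Rules~\scc{T-Buf}, \scc{T-BufSend}, \scc{T-BufSelect}). In each case the inductive cases are dispatched by directly reading off the corresponding clause of the translation in \Cref{f:transTermShort,f:transConfBufShort}.

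For the free-name properties \labelcref{i:transFn} and \labelcref{i:transConfFn}, I would observe that every endpoint occurring free in a translated term is either the provided name $z$, a continuation endpoint freshly introduced by the clause (hence bound), or a free variable of the source term. The only clause that injects a source variable into the target is \scc{T-Var}, whose translation is $x \fwd z$; all other clauses are compositional, merely recombining the translations of subterms under fresh restrictions and parallel compositions. Hence $\term{x} \notin \fn(\term{\mbb{M}})$ forces $x \notin \fn(\encc{z}{\mbb{M}})$, and likewise for configurations, where the buffered-restriction clauses additionally use that $x,y$ are bound by $\nu{xy}$. Property~\labelcref{i:transSubst} follows from the same case analysis: the translation is compositional, so $\{y/x\}$ commutes with every clause, the only nontrivial base case being $\encc{z}{x\{y/x\}} = y \fwd z = (x \fwd z)\{y/x\} = \encc{z}{x}\{y/x\}$.

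For the buffer properties I would exploit that, by definition, $\bencb{x}{P}{\bfr{\vec{m}}}$ places $P$ (suitably renamed) as an innermost parallel component, wrapped by one layer of (forwarded) outputs and restrictions per message in $\term{\vec{m}}$. Properties \labelcref{i:transBufSC} and \labelcref{i:transBufRed} then follow by induction on $\term{\vec{m}}$: since $\equiv$ is a congruence and $\redd$ is closed under restriction and parallel composition (Rules~$\rred{\onu}$ and $\rred{\|}$), a step $P \equiv Q$ or $P \redd Q$ lifts through each wrapping layer, using that the renaming applied to $P$ is injective and hence preserves $\equiv$ and $\redd$. Property~\labelcref{i:transBufCtx} is analogous: when $x \notin \fn(\mcl{R})$, the wrapping layers---which act only on $x$ and on fresh names---may be pushed through the context $\mcl{R}$ via scope extrusion and the commutativity/associativity axioms for parallel composition. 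Property~\labelcref{i:transBufCombine} is an equality proven by induction on $\term{\vec{m}_1}$, unfolding the two buffer clauses and noting that the continuation parameter of the outer buffer is precisely the inner buffer translation, so the two wrappings concatenate into a single wrapping for $\term{\vec{m}_2,\vec{m}_1}$.

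The main obstacle will be the bookkeeping of continuation endpoints and the renamings $P\{d/x\}$ and $P\{b/x\}$ introduced by Rules~\scc{T-BufSend} and \scc{T-BufSelect}: in \labelcref{i:transBufSwap} (swapping two buffers on distinct endpoints $x$ and $v$) and \labelcref{i:transBufCombine} the chained continuation names must be shown to align under structural congruence, and one must check that the restrictions introduced by one buffer do not capture the free endpoint of the other. This is where I expect to invoke $\alpha$-renaming together with the scope-extrusion axiom $P \| \nu{xy}Q \equiv \nu{xy}(P \| Q)$ (valid since $x,y \notin \fn(P)$) repeatedly; once the renamings are tracked explicitly, each property reduces to a routine rearrangement of restrictions and parallel components.
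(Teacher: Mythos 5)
Your proposal is correct and matches the paper's proof, which consists of the single line ``All properties follow by induction on typing'': you carry out exactly that induction, splitting into the term/configuration typing derivations for items~(1)--(3) and the buffer typing derivations for items~(4)--(8), and your identified key points (compositionality of the translation, \scc{T-Var} as the only clause injecting source variables, scope extrusion through the buffer wrappings, and tracking the continuation renamings $P\{d/x\}$, $P\{b/x\}$) are precisely the details the paper leaves implicit. No gaps.
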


\begin{proof}
    All properties follow by induction on typing.
\end{proof}

\begin{restatable}[Preservation of Structural Congruence for Terms]
{theorem}{thmTransTermSC}
\label{t:transTermSC}
    Given $\type{\Gamma} \vdashM \term{\mbb{M}}: \type{T}$, if $\term{\mbb{M}} \equivM \term{\mbb{N}}$, then $\encc{z}{\mbb{M}} \equiv \encc{z}{\mbb{N}}$.
\end{restatable}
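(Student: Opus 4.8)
The plan is to proceed by induction on the derivation of $\term{\mbb{M}} \equivM \term{\mbb{N}}$, recalling that $\equivM$ is the least congruence closed under the single axiom \scc{SC-SubExt}. The reflexivity, symmetry and transitivity cases are immediate from the corresponding properties of APCP's structural congruence $\equiv$ (\figref{f:procdef}), and the congruence (context-closure) cases follow from the compositionality of the translation: for every term constructor, $\encc{z}{\cdot}$ places the translations of the immediate subterms inside a fixed process context, so if a subterm $\term{\mbb{M}_0}$ satisfies $\encc{w}{\mbb{M}_0} \equiv \encc{w}{\mbb{M}_0'}$ by the induction hypothesis, plugging it into that process context preserves $\equiv$. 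Hence the only genuine work is the base axiom \scc{SC-SubExt}.

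For that base case we have $\term{\mbb{M}} = \term{(\mcl{R}[\mbb{M}_0])\xsub{\mbb{N}_0/x}}$ and $\term{\mbb{N}} = \term{\mcl{R}[\mbb{M}_0\xsub{\mbb{N}_0/x}]}$ with $\term{x} \notin \fn(\term{\mcl{R}})$. Unfolding the translation of \scc{T-Sub} gives
\[
    \encc{z}{(\mcl{R}[\mbb{M}_0])\xsub{\mbb{N}_0/x}} = \nuf{xa}\big(\encc{z}{\mcl{R}[\mbb{M}_0]} \| \encc{a}{\mbb{N}_0}\big).
\]
I would then argue by induction on the structure of the reduction context $\term{\mcl{R}}$ that, in each case, $\encc{z}{\mcl{R}[\mbb{M}_0]}$ is of the form $\mcl{C}[\encc{w}{\mbb{M}_0}]$ for a process context $\mcl{C}$ and suitable name $w$ such that neither $x$ nor $a$ is bound by $\mcl{C}$; here $\term{x}\notin\fn(\term{\mcl{R}})$ combined with \encpropref{i:transFn} ensures $x\notin\fn(\mcl{C})$, and freshness of $a$ is guaranteed by $\alpha$-conversion. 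Using the APCP congruence axioms for scope extrusion of restriction through parallel composition and for commuting restrictions, I can extrude the context $\nuf{xa}(\,\cdot \| \encc{a}{\mbb{N}_0})$ through $\mcl{C}$, obtaining $\mcl{C}[\nuf{xa}(\encc{w}{\mbb{M}_0} \| \encc{a}{\mbb{N}_0})] = \mcl{C}[\encc{w}{\mbb{M}_0\xsub{\mbb{N}_0/x}}]$, which is exactly $\encc{z}{\mcl{R}[\mbb{M}_0\xsub{\mbb{N}_0/x}]}$, as required.

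The hard part will be the two clauses of $\term{\mcl{R}}$ in which the hole sits under or over an existing explicit substitution, namely $\term{\mcl{R}'\xsub{\mbb{M}_1/y}}$ and $\term{\mbb{M}_1\xsub{\mcl{R}'/y}}$. There the target process already contains a forwarder-enabled restriction $\nuf{yb}$ arising from that substitution, so extruding $\nuf{xa}$ forces commuting two nested restrictions and rearranging several parallel components. This is the delicate bookkeeping: one must invoke the APCP axioms $\nu{xy}\nu{zw}P \equiv \nu{zw}\nu{xy}P$ and $P \| \nu{xy}Q \equiv \nu{xy}(P \| Q)$ under their freshness side-conditions, checking throughout that $x \neq y$ and that the names introduced by the two substitutions stay distinct, both of which are guaranteed by the variable conventions of \ourGV and by $\alpha$-renaming of the fresh endpoints the translation introduces. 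Once these cases are dispatched, the remaining context clauses (application, spawn, send, receive, pair deconstruction, select, case, and $\term{\sff{send}'}$) follow the same extrusion pattern with only a single restriction to move, and the statement follows.
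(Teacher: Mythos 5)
Your proposal is correct and takes essentially the same route as the paper's proof: an outer induction on the derivation of $\equivM$ (the congruence-closure cases discharged by compositionality of the translation and the fact that APCP's $\equiv$ is a congruence), with the only real work in the axiom \scc{SC-SubExt}, handled by an inner structural induction on $\term{\mcl{R}}$ using the free-names property \encpropref{i:transFn} and APCP's scope-extrusion axioms. The only cosmetic difference is that the paper extrudes $\nuf{xa}(\cdot \| \encc{a}{\mbb{N}})$ through one context layer at a time, re-folding it as the translation of an explicit substitution so the inner induction hypothesis applies, whereas you characterize the whole translated context $\mcl{C}$ (in effect, the paper's \Cref{l:transCtxts}) and extrude through it in one pass; both work, and the explicit-substitution clauses of $\term{\mcl{R}}$ you single out as delicate are in fact no harder than the others, since they too translate to just a (forwarder-enabled) restriction over a parallel composition.
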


\begin{proof}
    By induction on the derivation of the structural congruence (\ih{1}).
    The inductive cases follow from \ih{1} straightforwardly.
    The only base case rule is \scc{SC-SubExt}: $\term{x} \notin \fn(\term{\mcl{R}}) \implies \term{(\mcl{R}[\mbb{M}])\xsub{ \mbb{N}/x }} \equivM \term{\mcl{R}[\mbb{M}\xsub{ \mbb{N}/x }]}$.
    We apply induction on the structure of $\term{\mcl{R}}$ (\ih{2}), assuming $\term{x} \notin \fn(\term{\mcl{R}})$.
    We only show the base case and two representative inductive cases:
    \begin{itemize}
        \item
            Case $\term{\mcl{R}} = \term{[]}$: $\term{\mbb{M}\xsub{ \mbb{N}/x }} \equivM \term{\mbb{M}\xsub{ \mbb{N}/x }}$.
            The thesis follows immediately, since the terms are equal.

        \item
            Case $\term{\mcl{R}} = \term{\mcl{R}'~\mbb{M}'}$: $\term{(\mcl{R}'[\mbb{M}]~\mbb{M}')\xsub{ \mbb{N}/x }} \equivM \term{(\mcl{R}'[\mbb{M}\xsub{ \mbb{N}/x }])~\mbb{M}'}$.
            \begin{align*}
                & \encc{z}{(\mcl{R}'[\mbb{M}]~\mbb{M}')\xsub{ \mbb{N}/x }}
                \\
                {}={} & \nuf{xa}(\nu{bc}(\encc{b}{\mcl{R}'[\mbb{M}]} \| \nu{de}(c[d,z] \| e(f,\_) \sdot \encc{f}{\mbb{M}'})) \| \encc{a}{\mbb{N}})
                \\
                {}\equiv{} & \nu{bc}(\nuf{xa}(\encc{b}{\mcl{R}'[\mbb{M}]} \| \encc{a}{\mbb{N}}) \| \nu{de}(c[d,z] \| e(f,\_) \sdot \encc{f}{\mbb{M}'}))
                &&\text{(\encpropref{i:transFn})}
                \\
                {}={} & \encc{z}{((\mcl{R}'[\mbb{M}])\xsub{ \mbb{N}/x })~\mbb{M}'} \equiv \encc{z}{(\mcl{R}'[\mbb{M}\xsub{ \mbb{N}/x }])~\mbb{M}'}
                &&\text{(\ih{2})}
            \end{align*}

        \item
            Case $\term{\mcl{R}} = \term{V~\mcl{R}'}$: $\term{(V~(\mcl{R}'[\mbb{M}]))\xsub{ \mbb{N}/x }} \equivM \term{V~(\mcl{R}'[\mbb{M}\xsub{ \mbb{N}/x }])}$.
            By cases on $\term{V}$.
            We only show the representative case where $\term{V} = \term{\sff{spawn}}$:
            \begin{align*}
                & \encc{z}{(\sff{spawn}~(\mcl{R}'[\mbb{M}]))\xsub{ \mbb{N}/x }}
                \\
                {}={} & \nuf{xa}(\nu{bc}(\encc{b}{\mcl{R}'[\mbb{M}]} \| c(d,e) \sdot (\nunil d[\_,\_] \| \nunil e[z,\_])) \| \encc{a}{\mbb{N}})
                \\
                {}\equiv{} & \nu{bc}(\nuf{xa}(\encc{b}{\mcl{R}'[\mbb{M}]} \| \encc{a}{\mbb{N}}) \| c(d,e) \sdot (\nunil d[\_,\_] \| \nunil e[z,\_]))
                &&\text{(\encpropref{i:transFn})}
                \\
                {}={} & \encc{z}{\sff{spawn}~((\mcl{R}'[\mbb{M}])\xsub{ \mbb{N}/x })} \equiv \encc{z}{\sff{spawn}~(\mcl{R}'[\mbb{M}\xsub{ \mbb{N}/x }])}
                &&\text{(\ih{2})}
                \tag*{\qedhere}
            \end{align*}
    \end{itemize}
\end{proof}

\begin{restatable}[Preservation of Structural Congruence for Configurations]
{theorem}{thmTransConfSC}\label{t:transConfSC}
    Given $\type{\Gamma} \vdashC{\phi} \term{C}: \type{T}$, if $\term{C} \equivC \term{D}$, then $\encc{z}{C} \equiv \encc{z}{D}$.
\end{restatable}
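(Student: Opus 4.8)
The plan is to proceed by induction on the derivation of $\term{C} \equivC \term{D}$. The rules closing $\equivC$ under configuration contexts, together with reflexivity, symmetry, and transitivity, follow immediately from the induction hypothesis and the fact that APCP's $\equiv$ is itself a congruence; so the real work lies in the base cases, one per axiom in \Cref{f:gvConfs}. I will group these by difficulty.

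First, the routine cases. Rule~\scc{SC-TermSC} follows directly from \Cref{t:transTermSC} (preservation of structural congruence for terms), since the thread translation is just the term translation. Rule~\scc{SC-ConfSubst} is even simpler: both $\term{\phi\,(\mbb{M}\xsub{ \mbb{N}/x })}$ and $\term{(\phi\,\mbb{M})\xsub{ \mbb{N}/x }}$ translate literally to $\nuf{xa}(\encc{z}{\mbb{M}} \| \encc{a}{\mbb{N}})$, so the two sides are equal. Rules~\scc{SC-ParComm}, \scc{SC-ParAssoc}, and \scc{SC-ParNil} reduce to the corresponding APCP congruences $P \| Q \equiv Q \| P$, $P \| (Q \| R) \equiv (P \| Q) \| R$, $\nu{xy}\0 \equiv \0$, and $P \| \0 \equiv P$ from \Cref{f:procdef}; for \scc{SC-ParComm} one must additionally observe that the two admissible typings (via \scc{T-ParL} and \scc{T-ParR}) translate to processes differing only by the placement of the restriction closing the child thread's $\bullet$-endpoint, which is absorbed by parallel commutativity and scope extrusion.

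Next, the cases driven by the auxiliary properties of \Cref{l:transProps}. For \scc{SC-ResSwap} and \scc{SC-ResNil} the buffer is empty, so $\bencb{x}{\encc{z}{C}}{\bfr{\epsilon}} = \encc{z}{C}$; \scc{SC-ResSwap} then follows from $\nu{xy}P \equiv \nu{yx}P$, and \scc{SC-ResNil} from $\encpropref{i:transConfFn}$ (so that $x,y \notin \fn(\encc{z}{C})$) together with the derivable APCP identity $\nu{xy}P \equiv P$ when $x,y \notin \fn(P)$, obtained via $P \equiv P \| \nu{xy}\0 \equiv \nu{xy}(P \| \0) \equiv \nu{xy}P$. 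Rule~\scc{SC-ResComm} is handled by $\encpropref{i:transBufSwap}$ and $\encpropref{i:transBufCtx}$, which let the two buffered restrictions commute inside the translation. Rule~\scc{SC-ResExt} uses APCP scope extrusion $P \| \nu{xy}Q \equiv \nu{xy}(P \| Q)$, justified by $\encpropref{i:transConfFn}$ (the side condition $\term{x},\term{y} \notin \fn(\term{C})$ transfers to the translation) and $\encpropref{i:transBufCtx}$ (which pushes the buffer output past the parallel component not mentioning $x$). Rule~\scc{SC-ConfSubstExt} follows by a straightforward induction on the structure of $\term{\mcl{G}}$, again using scope extrusion of the forwarder-enabled restriction $\nuf{xa}$ and $\encpropref{i:transConfFn}$.

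The main obstacle is \scc{SC-Send'} (and, analogously, \scc{SC-Select}), which equate an asynchronous output sitting in a thread with a message placed at the front of the buffer. Here I will induct on the structure of the restricted thread context $\term{\hat{\mcl{F}}}$, whose defining property---that its hole does not lie under an explicit substitution---is exactly what permits floating the output out to the buffer. In the base case $\term{\hat{\mcl{F}}} = \term{\phi\,[]}$, I must show that the translation of $\term{\sff{send}'(M,x)}$ placed in parallel with the buffer translation $\bencb{x}{P}{\bfr{\vec{m}}}$ is APCP-congruent to the translation of the buffer extended by $M$, namely $\bencb{x}{P}{\bfr{M,\vec{m}}}$ (via \scc{T-BufSend}); the match works because both the send primitive and the \scc{T-BufSend} clause emit a forwarded output on $x$ carrying the message endpoint and a fresh continuation, so the two processes coincide up to rearrangement of parallel components, scope extrusion, and elimination of a forwarder under restriction. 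The inductive cases thread this identity through the prefixes introduced by each context former, using parallel associativity/commutativity and scope extrusion, with the needed freshness side conditions again supplied by \Cref{l:transProps}. I expect the bookkeeping of continuation endpoints across the nested restrictions to be the most delicate part, but no genuinely new ingredient beyond APCP's structural congruence and \Cref{l:transProps} is required.
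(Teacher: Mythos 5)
Your proposal matches the paper's own proof essentially step for step: the same outer induction on the derivation of $\equivC$, the same dispatch of the routine cases to \Cref{t:transTermSC} and APCP's structural congruence (including the $P \equiv P \| \nu{xy}\0 \equiv \nu{xy}(P\|\0)$ trick for \scc{SC-ResNil}), the same use of \Cref{l:transProps} for the restriction/buffer cases, and the same inner induction on $\term{\hat{\mcl{F}}}$ for \scc{SC-Send'}/\scc{SC-Select} hinging on the observation that the translation of $\term{\sff{send}'(M,x)}$ rearranges into the \scc{T-BufSend} clause applied to the translation of the residual endpoint. One cosmetic correction: in the base case no forwarder is eliminated---the leftover forwarder is recognized (up to renaming) as $\encc{z}{x}$, the translation of the variable left in the hole, and the singleton buffer is merged with $\term{\bfr{\vec{m}}}$ via \encpropref{i:transBufCombine}.
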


\begin{proof}
    By induction on the derivation of the structural congruence (\ih{1}).
    The inductive cases follow from \ih{1} straightforwardly.
    We consider each base case rule:
    \begin{itemize}
        \item
            Rule~\scc{SC-TermSC}: $\term{\mbb{M}} \equivM \term{\mbb{M}'} \implies \term{\phi\, \mbb{M}} \equivC \term{\phi\, \mbb{M}'}$.
            We assume $\term{\mbb{M}} \equivM \term{\mbb{M}'}$.
            \begin{align*}
                & \encc{z}{\phi\,\mbb{M}}
                \\
                {}={} & \encc{z}{\mbb{M}}
                \\
                {}\equiv{} & \encc{z}{\mbb{N}}
                &&\text{(\Cref{t:transTermSC})}
                \\
                {}={} & \encc{z}{\phi\,\mbb{N}}
            \end{align*}

        \item
            Rule~\scc{SC-ResSwap}: $\term{\nu{x\bfr{\epsilon}y}C} \equivC \term{\nu{y\bfr{\epsilon}x}C}$.
            \begin{align*}
                & \encc{z}{\nu{x\bfr{\epsilon}y}C}
                \\
                {}={} & \nu{xy}\encc{z}{C}
                \\
                {}\equiv{} & \nu{yx}\encc{z}{C}
                \\
                {}={} & \encc{z}{\nu{y\bfr{\epsilon}x}C}
            \end{align*}

        \item
            Rule~\scc{SC-ResComm}: $\term{\nu{x\bfr{\vec{m}}y}\nu{v\bfr{\vec{n}}w}C} \equivC \term{\nu{v\bfr{\vec{n}}w}\nu{x\bfr{\vec{m}}y}C}$.
            \begin{align*}
                & \encc{z}{\nu{x\bfr{\vec{m}}y}\nu{v\bfr{\vec{n}}w}C}
                \\
                {}={} & \nu{xy}\bencb{x}{\nu{vw}\bencb{v}{\encc{z}{C}}{\bfr{\vec{n}}}}{\term{\bfr{\vec{m}}}}
                \\
                {}\equiv{} & \nu{xy}\nu{vw}\bencb{x}{\bencb{v}{\encc{z}{C}}{\bfr{\vec{n}}}}{\term{\bfr{\vec{m}}}}
                &&\text{(\encpropref{i:transBufCtx})}
                \\
                {}\equiv{} & \nu{vw}\nu{xy}\bencb{x}{\bencb{v}{\encc{z}{C}}{\bfr{\vec{n}}}}{\term{\bfr{\vec{m}}}}
                \\
                {}\equiv{} & \nu{vw}\nu{xy}\bencb{v}{\bencb{x}{\encc{z}{C}}{\bfr{\vec{m}}}}{\term{\bfr{\vec{n}}}}
                &&\text{(\encpropref{i:transBufSwap})}
                \\
                {}\equiv{} & \nu{vw}\bencb{v}{\nu{xy}\bencb{x}{\encc{z}{C}}{\bfr{\vec{m}}}}{\term{\bfr{\vec{n}}}}
                &&\text{(\encpropref{i:transBufCtx})}
                \\
                {}={} & \encc{z}{\nu{v\bfr{\vec{n}}w}\nu{x\bfr{\vec{m}}y}C}
            \end{align*}

        \item
            Rule~\scc{SC-ResExt}: $\term{x},\term{y} \notin \fn(\term{C}) \implies \term{\nu{x\bfr{\vec{m}}y}(C \prl D)} \equivC \term{C \prl \nu{x\bfr{\vec{m}}y}D}$.
            W.l.o.g.\ assume $\term{C}$ is the main thread.
            We assume $\term{x},\term{y} \notin \fn(\term{C})$.
            Then, by \encpropref{i:transConfFn}, $x,y \notin \fn(\encc{z}{C})$.
            \begin{align*}
                & \encc{z}{\nu{x\bfr{\vec{m}}y}(C \prl D)}
                \\
                {}={} & \nu{xy}\bencb{x}{\encc{z}{C} \| \nunil \encc{\_}{D}}{\bfr{\vec{m}}}
                \\
                {}\equiv{} & \nu{xy}(\encc{z}{C} \| \nunil \bencb{x}{\encc{\_}{D}}{\bfr{\vec{m}}})
                &&\text{(\encpropref{i:transBufCtx})}
                \\
                {}\equiv{} & \encc{z}{C} \| \nunil \nu{xy}\bencb{x}{\encc{\_}{D}}{\bfr{\vec{m}}}
                \\
                {}={} & \encc{z}{C \prl \nu{x\bfr{\vec{m}}y}D}
            \end{align*}

        \item
            Rule~\scc{SC-ResNil}: $\term{x},\term{y} \notin \fn(\term{C}) \implies \term{\nu{x\bfr{\epsilon}y}C} \equivC \term{C}$.
            We assume $\term{x},\term{y} \notin \fn(C)$.
            \begin{align*}
                & \encc{z}{\nu{x\bfr{\epsilon}y}C}
                \\
                {}={} & \nu{xy}\encc{z}{C}
                \\
                {}\equiv{} & \nu{xy}(\encc{z}{C} \| \0)
                \\
                {}\equiv{} & \encc{z}{C} \| \nu{xy}\0
                &&\text{(\encpropref{i:transConfFn})}
                \\
                {}\equiv{} & \encc{z}{C}
            \end{align*}

        \item
            Rule~\scc{SC-Send'}: $\term{\nu{x\bfr{\vec{m}}y}(\hat{\mcl{F}}[\sff{send}'(M,x)] \prl C)} \equivC \term{\nu{x\bfr{M,\vec{m}}y}(\hat{\mcl{F}}[x] \prl C)}$.
            By induction on the structure of $\term{\hat{\mcl{F}}}$ (\ih{2}).
            Since the hole in $\term{\hat{\mcl{F}}}$ does not occur under an explicit substitution, $\term{\hat{\mcl{F}}}$ does not capture any free names of $\term{\sff{send}'(M,x)}$.
            Therefore, by \encpropref{i:transConfFn}, the encoding of $\term{\hat{\mcl{F}}}$ does not capture any free names of the encoding of $\term{\sff{send}'(M,x)}$.
            The inductive cases follow from \ih{2} straightforwardly, since the encoding of any case for $\term{\hat{\mcl{F}}}$ has the encoding of $\term{\sff{send}'(M,x)}$ only under restriction and parallel composition.
            We detail the base case ($\term{\hat{\mcl{F}}} = \term{\phi\,[]}$), w.l.o.g.\ assuming $\term{\phi} = \term{\main}$:
            \begin{align*}
                & \encc{z}{\nu{x\bfr{\vec{m}}y}(\main\,(\sff{send}'(M,x)) \prl C)}
                \\
                {}={} & \nu{xy}\bencb{x}{\encc{z}{\main\,(\sff{send}'(M,x))} \| \nunil \encc{\_}{C}}{\bfr{\vec{m}}}
                \\
                {}\equiv{} & \nu{xy}(\bencb{x}{\encc{z}{\main\,(\sff{send}'(M,x))}}{\bfr{\vec{m}}} \| \nunil \encc{\_}{C})
                &&\text{(\encpropref{i:transBufCtx})}
                \\
                {}={} & \nu{xy}(\bencb{x}{\nu{ab}(a(c,\_) \sdot \encc{c}{M} \| \nu{de}(x \fwd d \| \nu{fg}(e[b,f] \| g \fwd z)))}{\bfr{\vec{m}}} \| \nunil \encc{\_}{C})
                \\
                {}\equiv{} & \nu{xy}(\bencb{x}{\nu{ab}\nu{fg}(\nu{de}(x \fwd d \| e[b,f]) \| a(c,\_) \sdot \encc{c}{M} \| g \fwd z)}{\bfr{\vec{m}}} \| \nunil \encc{\_}{C})
                &&\text{(\encpropref{i:transBufSC})}
                \\
                {}={} & \nu{xy}(\bencb{x}{\nu{ab}\nu{fg}(\nu{de}(x \fwd d \| e[b,f]) \| a(c,\_) \sdot \encc{c}{M} \| (x \fwd z)\{g/x\})}{\bfr{\vec{m}}} \| \nunil \encc{\_}{C})
                \\
                {}={} & \nu{xy}(\bencb{x}{\nu{ab}\nu{fg}(\nu{de}(x \fwd d \| e[b,f]) \| a(c,\_) \sdot \encc{c}{M} \| \encc{z}{x}\{g/x\})}{\bfr{\vec{m}}} \| \nunil \encc{\_}{C})
                \\
                {}={} & \nu{xy}(\bencb{x}{\bencb{x}{\encc{z}{x}}{\bfr{M}}}{\term{\bfr{\vec{m}}}} \| \nunil \encc{\_}{C})
                \\
                {}={} & \nu{xy}(\bencb{x}{\encc{z}{x}}{\bfr{M,\vec{m}}} \| \nunil \encc{\_}{C})
                &&\text{(\encpropref{i:transBufCombine})}
                \\
                {}\equiv{} & \nu{xy}\bencb{x}{\encc{z}{x} \| \nunil \encc{\_}{C}}{\bfr{M,\vec{m}}}
                &&\text{(\encpropref{i:transBufCtx})}
                \\
                {}={} & \nu{xy}\bencb{x}{\encc{z}{\main\,x} \| \nunil \encc{\_}{C}}{\bfr{M,\vec{m}}}
                \\
                {}={} & \nu{xy}\bencb{x}{\encc{z}{\main\,x \prl C}}{\bfr{M,\vec{m}}}
                \\
                {}={} & \encc{z}{\nu{x\bfr{M,\vec{m}}y}(\main\,x \prl C)}
            \end{align*}

        \item
            Rule~\scc{SC-Select}: $\term{\nu{x\bfr{\vec{m}}y}(\mcl{F}[\sff{select}\, \ell\, x] \prl C)} \equivC \term{\nu{x\bfr{\ell,\vec{m}}y}(\mcl{F}[x] \prl C)}$.
            By induction on the structure of $\term{\mcl{F}}$.
            Similar to the case above, we only detail the base case ($\term{\mcl{F}} = \term{\phi\,[]}$), w.l.o.g.\ assuming $\term{\phi} = \term{\main}$:
            \begin{align*}
                & \encc{z}{\nu{x\bfr{\vec{m}}y}(\main\,(\sff{select}\,\ell\,x) \prl C)}
                \\
                {}={} & \nu{xy}\bencb{x}{\encc{z}{\main\,(\sff{select}\,\ell\,x)} \| \nunil \encc{\_}{C}}{\bfr{\vec{m}}}
                \\
                {}\equiv{} & \nu{xy}(\bencb{x}{\encc{z}{\main\,(\sff{select}\,\ell\,x)}}{\bfr{\vec{m}}} \| \nunil \encc{\_}{C})
                &&\text{(\encpropref{i:transBufCtx})}
                \\
                {}={} & \nu{xy}(\bencb{x}{\nu{ab}(x \fwd a \| \nu{cd}(b[c] \puts \ell \| d \fwd z))}{\bfr{\vec{m}}} \| \nunil \encc{\_}{C})
                \\
                {}\equiv{} & \nu{xy}(\bencb{x}{\nu{cd}(\nu{ab}(x \fwd a \| b[c] \puts \ell) \| d \fwd z)}{\bfr{\vec{m}}} \| \nunil \encc{\_}{C})
                &&\text{(\encpropref{i:transBufSC})}
                \\
                {}={} & \nu{xy}(\bencb{x}{\nu{cd}(\nu{ab}(x \fwd a \| b[c] \puts \ell) \| (x \fwd z)\{d/x\})}{\bfr{\vec{m}}} \| \nunil \encc{\_}{C})
                \\
                {}={} & \nu{xy}(\bencb{x}{\nu{cd}(\nu{ab}(x \fwd a \| b[c] \puts \ell) \| \encc{z}{x}\{d/x\})}{\bfr{\vec{m}}} \| \nunil \encc{\_}{C})
                \\
                {}={} & \nu{xy}(\bencb{x}{\bencb{x}{\encc{z}{x}}{\bfr{\ell}}}{\term{\bfr{\vec{m}}}} \| \nunil \encc{\_}{C})
                \\
                {}={} & \nu{xy}(\bencb{x}{\encc{z}{x}}{\bfr{\ell,\vec{m}}} \| \nunil \encc{\_}{C})
                &&\text{(\encpropref{i:transBufCombine})}
                \\
                {}\equiv{} & \nu{xy}\bencb{x}{\encc{z}{x} \| \nunil \encc{\_}{C}}{\bfr{\ell,\vec{m}}}
                &&\text{(\encpropref{i:transBufCtx})}
                \\
                {}={} & \nu{xy}\bencb{x}{\encc{z}{\main\,x} \| \nunil \encc{\_}{C}}{\bfr{\ell,\vec{m}}}
                \\
                {}={} & \nu{xy}\bencb{x}{\encc{z}{\main\,x \prl C}}{\bfr{\ell,\vec{m}}}
                \\
                {}={} & \encc{z}{\nu{x\bfr{\ell,\vec{m}}y}(\main\,x \prl C)}
            \end{align*}

        \item
            Rule~\scc{SC-ParNil} ($\term{C \prl \child\, ()} \equivC \term{C}$) is straightforward.

        \item
            Rule~\scc{SC-ParComm} ($\term{C \prl D} \equivC \term{D \prl C}$) is straightforward.

        \item
            Rule~\scc{SC-ParAssoc} ($\term{C \prl (D \prl E)} \equivC \term{(C \prl D) \prl E}$) is straightforward.

        \item
            Rule~\scc{SC-ConfSubst} ($\term{\phi\,(\mbb{M}\xsub{ \mbb{N}/x })} \equivC \term{(\phi\,\mbb{M})\xsub{ \mbb{N}/x }}$) is straightforward.

        \item
            Rule~\scc{SC-ConfSubstExt}: $\term{x} \notin \fn(\term{\mcl{G}}) \implies \term{(\mcl{G}[C])\xsub{ \mbb{M}/x }} \equivC \term{\mcl{G}[C\xsub{ \mbb{M}/x }]}$.
            We assume $\term{x} \notin \fn(\term{\mcl{G}})$ and apply induction on the structure of $\term{\mcl{G}}$ (\ih{2}).
            \begin{itemize}
                \item
                    Case $\term{\mcl{G}} = \term{[]}$: $\term{C\xsub{ \mbb{M}/x }} \equivC \term{C\xsub{ \mbb{M}/x }}$.
                    The thesis follows immediately, since the terms are equal.

                \item
                    Case $\term{\mcl{G}} = \term{\mcl{G}' \prl C'}$: $\term{(\mcl{G}'[C] \prl C')\xsub{ \mbb{M}/x }} \equivC \term{\mcl{G}'\big[C\xsub{ \mbb{M}/x }\big] \prl C'}$.
                    W.l.o.g.\ we assume $\term{\mcl{G}'[C]}$ is the main thread.
                    \begin{align*}
                        & \encc{z}{(\mcl{G}'[C] \prl C')\xsub{ \mbb{M}/x }}
                        \\
                        {}={} & \nuf{xa}(\encc{z}{\mcl{G}'[C]} \| \nunil \encc{\_}{C'} \| \encc{a}{\mbb{M}})
                        \\
                        {}\equiv{} & \nuf{xa}(\encc{z}{\mcl{G}'[C]} \| \encc{a}{\mbb{M}}) \| \nunil \encc{\_}{C'}
                        &&\text{(\encpropref{i:transConfFn})}
                        \\
                        {}={} & \encc{z}{(\mcl{G}'[C])\xsub{ \mbb{M}/x }} \| \nunil \encc{\_}{C'}
                        \\
                        {}\equiv{} & \encc{z}{\mcl{G}'\big[C\xsub{ \mbb{M}/x }\big]} \| \nunil \encc{\_}{C'}
                        &&\text{(\ih{2})}
                        \\
                        {}\equiv{} & \encc{z}{\mcl{G}'\big[C\xsub{ \mbb{M}/x }\big] \prl C'}
                    \end{align*}

                \item
                    Case $\term{\mcl{G}} = \term{\nu{v\bfr{\vec{m}}w}\mcl{G}'}$: $\term{(\nu{v\bfr{\vec{m}}w}(\mcl{G}'[C]))\xsub{ \mbb{M}/x }} \equivC \term{\nu{v\bfr{\vec{m}}w}(\mcl{G}'\big[C\xsub{ \mbb{M}/x }\big])}$.
                    \begin{align*}
                        & \encc{z}{(\nu{v\bfr{\vec{m}}w}(\mcl{G}'[C]))\xsub{ \mbb{M}/x }}
                        \\
                        {}={} & \nuf{xa}(\nu{vw}\bencb{z}{\encc{z}{\mcl{G'}[C]}}{\bfr{\vec{m}}} \| \encc{a}{\mbb{M}})
                        \\
                        {}\equiv{} & \nu{vw}\nuf{xa}(\bencb{z}{\encc{z}{\mcl{G'}[C]}}{\bfr{\vec{m}}} \| \encc{a}{\mbb{M}})
                        &&\text{(\encpropref{i:transConfFn})}
                        \\
                        {}\equiv{} & \nu{vw}\bencb{z}{\nuf{xa}(\encc{z}{\mcl{G'}[C]} \| \encc{a}{\mbb{M}})}{\bfr{\vec{m}}}
                        &&\text{(\encpropref{i:transBufCtx})}
                        \\
                        {}={} & \nu{vw}\bencb{z}{\encc{z}{(\mcl{G'}[C])\xsub{ \mbb{M}/x }}}{\bfr{\vec{m}}}
                        \\
                        {}\equiv{} & \nu{vw}\bencb{z}{\encc{z}{\mcl{G'}\big[C\xsub{ \mbb{M}/x }\big]}}{\bfr{\vec{m}}}
                        &&\text{(\ih{2})}
                        \\
                        {}={} & \encc{z}{\nu{v\bfr{\vec{m}}w}(\mcl{G}'\big[C\xsub{ \mbb{M}/x }\big])}
                    \end{align*}

                \item
                    Case $\term{\mcl{G}} = \term{\mcl{G}'\xsub{ \mbb{M}'/y }}$: $\term{((\mcl{G}'[C])\xsub{ \mbb{M}'/y })\xsub{ \mbb{M}/x }} \equivC \term{(\mcl{G}'\big[C\xsub{ \mbb{M}/x }\big])\xsub{ \mbb{M}'/y }}$.
                    \begin{align*}
                        & \encc{z}{((\mcl{G}'[C])\xsub{ \mbb{M}'/y })\xsub{ \mbb{M}/x }}
                        \\
                        {}={} & \nuf{xa}(\nuf{yb}(\encc{z}{\mcl{G}'[C]} \| \encc{b}{\mbb{M}'}) \| \encc{a}{\mbb{M}})
                        \\
                        {}\equiv{} & \nuf{yb}(\nuf{xa}(\encc{z}{\mcl{G}'[C]} \| \encc{a}{\mbb{M}}) \| \encc{b}{\mbb{M}'})
                        &&\text{(\encpropref{i:transConfFn})}
                        \\
                        {}={} & \nuf{yb}(\encc{z}{(\mcl{G}'[C])\xsub{ \mbb{M}/x }} \| \encc{b}{\mbb{M}'})
                        \\
                        {}\equiv{} & \nuf{yb}(\encc{z}{\mcl{G}'\big[C\xsub{ \mbb{M}/x }\big]} \| \encc{b}{\mbb{M}'})
                        &&\text{(\ih{2})}
                        \\
                        {}={} & \encc{z}{(\mcl{G}'\big[C\xsub{ \mbb{M}/x }\big])\xsub{ \mbb{M}'/y }}
                        \tag*{\qedhere}
                    \end{align*}
            \end{itemize}
    \end{itemize}
\end{proof}

\begin{restatable}[Completeness of Reduction for Terms]
{theorem}{thmTransTermRed}\label{t:transTermRed}
    Given $\type{\Gamma} \vdashM \term{\mbb{M}}: \type{T}$, if $\term{\mbb{M}} \reddM \term{\mbb{N}}$, then $\encc{z}{\mbb{M}} \redd^\ast \encc{z}{\mbb{N}}$.
\end{restatable}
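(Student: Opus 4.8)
The plan is to argue by induction on the derivation of $\term{\mbb{M}} \reddM \term{\mbb{N}}$, keeping the well-typedness hypothesis available throughout; subject reduction for terms (\Cref{t:termSubjRed}) guarantees that every subterm encountered stays typed, so the translation and the induction hypothesis always apply. I first dispatch the two closure rules. For \scc{E-LiftSC}, the induction hypothesis gives $\encc{z}{\mbb{M}'} \redd^\ast \encc{z}{\mbb{N}'}$, while \Cref{t:transTermSC} (Preservation of Structural Congruence for Terms) yields $\encc{z}{\mbb{M}} \equiv \encc{z}{\mbb{M}'}$ and $\encc{z}{\mbb{N}'} \equiv \encc{z}{\mbb{N}}$; chaining these through APCP's Rule~$\rred{\equiv}$ closes the case. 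For \scc{E-Lift}, I perform a secondary induction on the structure of the reduction context $\term{\mcl{R}}$. The key observation, verified by inspecting each clause of $\term{\mcl{R}}$ against \Cref{f:transTermShort}, is that the translation always places the encoding of the hole only beneath restrictions and parallel compositions---never guarded by an input or branching prefix. Consequently the reduction handed up by the induction hypothesis can be replayed verbatim by closing under Rules~$\rred{\onu}$ and $\rred{\|}$.

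The substance of the proof lies in the base cases, each of which amounts to computing a short, fixed APCP reduction sequence and checking that its residual coincides, up to structural congruence and $\alpha$-renaming, with the encoding of the contractum. The two substitution rules are quick: for \scc{E-NameSubst}, $\encc{z}{x\xsub{\mbb{M}/x}} = \nuf{xa}(x \fwd z \| \encc{a}{\mbb{M}})$ reduces in one step by Rule~$\rred{\scc{Id}}$ to $\encc{a}{\mbb{M}}\{z/a\}$, which is $\encc{z}{\mbb{M}}$ since the translation is parametric in the endpoint it provides on; and for \scc{E-SubstName}, $\encc{z}{\mbb{M}\xsub{y/x}} = \nuf{xa}(\encc{z}{\mbb{M}} \| y \fwd a)$ reduces by $\rred{\scc{Id}}$ to $\encc{z}{\mbb{M}}\{y/x\}$, equal to $\encc{z}{\mbb{M}\{y/x\}}$ by \encpropref{i:transSubst}. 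The rules \scc{E-Lam}, \scc{E-Pair}, and \scc{E-Send} each require a handful of $\rred{\tensor\parr}$ synchronizations. In \scc{E-Lam}, the output $b[c,z]$ of the \scc{T-App} translation fires against the abstraction's leading input, and the freed output then fires against the input guarding $\encc{e}{\mbb{N}}$; the residual is exactly $\nuf{xa}(\encc{z}{\mbb{M}} \| \encc{a}{\mbb{N}})$, i.e.\ the \scc{T-Sub} translation of $\term{\mbb{M}\xsub{\mbb{N}/x}}$. \scc{E-Pair} is analogous, with the two outputs of the pair translation consuming the two blocking inputs introduced by \scc{T-Split}; \scc{E-Send} unfolds the pair translation of $\term{(M,\mbb{N})}$ against the inputs of the \scc{T-Send} body to arrive at the \scc{T-Send'} translation.

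The main obstacle I anticipate is not conceptual but bookkeeping: in the \scc{E-Lam}, \scc{E-Pair}, and \scc{E-Send} cases I must choose bound names consistently and apply structural congruence (permuting parallel components and extruding restrictions, cf.\ \Cref{f:procdef}) so that the residual process lines up on the nose with the encoding of the reduced term. Some vigilance is also needed to confirm that the extra blocking inputs inserted by the translation are consumed in precisely the order that keeps the process from getting stuck and leaves no stray prefixes; once the correct reduction order is fixed, each equality follows from the definition of the translation together with the auxiliary properties in \Cref{l:transProps} (notably \encpropref{i:transFn} for discharging free-name side conditions during scope extrusion). No priority reasoning is involved here, since the translation is typed only under $\vdashAst$ and reduction in APCP does not depend on priorities.
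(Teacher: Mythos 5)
Your proposal is correct and follows essentially the same route as the paper's own proof: induction on the derivation of $\term{\mbb{M}} \reddM \term{\mbb{N}}$, the identical treatment of \scc{E-Lift} (inner induction on $\term{\mcl{R}}$, using that the hole's encoding occurs only under restriction and parallel composition) and of \scc{E-LiftSC} (via \Cref{t:transTermSC}), and the same explicit reduction computations for the axiom cases, discharged with the auxiliary properties of \Cref{l:transProps}. One small slip in your sketch of \scc{E-Pair}: the two blocking inputs are introduced by the translation of Rule~\scc{T-Pair} and the two outputs that consume them come from the translation of Rule~\scc{T-Split}, not the other way around---but this does not affect the argument.
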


\begin{proof}
    By induction on the derivation of the reduction (\ih{1}).
    We consider each rule:
    \begin{itemize}
        \item
            Rule~\scc{E-Lam}: $\term{(\lambda x \sdot M)\, \mbb{N}} \reddM \term{M\xsub{ \mbb{N}/x }}$.
            \begin{align*}
                & \encc{z}{(\lambda x \sdot M)\, \mbb{N}}
                \\
                {}={} & \nu{ab}(a(f,g) \sdot \nuf{hx}(\nunil f[h,\_] \| \encc{g}{M}) \| \nu{cd}(b[c,z] \| d(e,\_) \sdot \encc{e}{\mbb{N}}))
                \\
                {}\redd{} & \nu{cd}(\nuf{hx}(\nunil c[h,\_] \| \encc{z}{M}) \| d(e,\_) \sdot \encc{e}{\mbb{N}})
                \\
                {}\redd{} & \nuf{xh}(\encc{z}{M} \| \encc{h}{\mbb{N}})
                \\
                {}={} & \encc{z}{M\xsub{ \mbb{N}/x }}
            \end{align*}

        \item
            Rule~\scc{E-Pair}: $\term{\sff{let}\, (x,y) = (M_1,M_2)\, \sff{in}\, N} \reddM \term{N\xsub{ M_1/x,M_2/y }}$.
            \begin{align*}
                & \encc{z}{\sff{let}\, (x,y) = (M_1,M_2)\, \sff{in}\, N}
                \\
                &= \nu{ab}\begin{array}[t]{@{}l@{}}
                    (\nu{gh}\nu{kl}(a[g,k] \| h(m,\_) \sdot \encc{m}{M_1} \| l(n,\_) \sdot \encc{n}{M_2})
                    \\
                    {}\| b(c,d) \sdot \nuf{ex}\nuf{fy}(\nunil c[e,\_] \| \nunil d[f,\_] \| \encc{z}{N}))
                \end{array}
                \\
                &\redd \nu{hg}\nu{lk}(h(m,\_) \sdot \encc{m}{M_1} \| l(n,\_) \sdot \encc{n}{M_2} \| \nuf{ex}\nuf{fy}(\nunil g[e,\_] \| \nunil k[f,\_] \| \encc{z}{N}))
                \\
                &\redd \nuf{ex}\nu{lk}(\encc{e}{M_1} \| l(n,\_) \sdot \encc{n}{M_2} \| \nuf{fy}(\nunil k[f,\_] \| \encc{z}{N}))
                \\
                &\redd \nuf{ex}\nuf{fy}(\encc{e}{M_1} \| \encc{f}{M_2} \| \encc{z}{N})
                \\
                &= \encc{z}{N\xsub{ M_1/x,M_2/y }}
            \end{align*}

        \item
            Rule~\scc{E-SubstName}: $\term{\mbb{M}\xsub{ y/x }} \reddM \term{\mbb{M}\{y/x\}}$.
            \begin{align*}
                & \encc{z}{\mbb{M}\xsub{ y/x }}
                \\
                {}={} & \nuf{xa}(\encc{z}{\mbb{M}} \| y \fwd a)
                \\
                {}\redd{} & \encc{z}{\mbb{M}}\{y/x\}
                \\
                {}={} & \encc{z}{\mbb{M}\{y/x\}}
                &&\text{(\encpropref{i:transSubst})}
            \end{align*}

        \item
            Rule~\scc{E-NameSubst}: $\term{x\xsub{ \mbb{M}/x }} \reddM \term{\mbb{M}}$.
            \begin{align*}
                & \encc{z}{x\xsub{ \mbb{M}/x }}
                \\
                {}={} & \nuf{xa}(x \fwd z \| \encc{a}{\mbb{M}})
                \\
                {}\redd{} & \encc{z}{\mbb{M}}
            \end{align*}

        \item
            Rule~\scc{E-Send}: $\term{\sff{send}~(M,N)} \reddM \term{\sff{send}'(M,N)}$.
            \begin{align*}
                & \encc{z}{\sff{send}~(M,N)}
                \\
                {}={} & \nu{ab}\begin{array}[t]{@{}l@{}}
                    (\nu{kl}\nu{mn}(a[k,m] \| l(o,\_) \sdot \encc{o}{M} \| n(p,\_) \sdot \encc{p}{N})
                    \\
                    {}\| b(c,d) \sdot \nu{ef}(\nunil d[e,\_] \| \nu{gh}(f[c,g] \| h \fwd z)))
                \end{array}
                \\
                {}\redd{} & \nu{lk}\nu{nm}(l(o,\_) \sdot \encc{o}{M} \| n(p,\_) \sdot \encc{p}{N} \| \nu{ef}(\nunil m[e,\_] \| \nu{gh}(f[k,g] \| h \fwd z)))
                \\
                {}\redd{} & \nu{lk}\nu{ef}(l(o,\_) \sdot \encc{o}{M} \| \encc{e}{N} \| \nu{gh}(f[k,g] \| h \fwd z))
                \\
                {}\equiv{} & \nu{lk}(l(o,\_) \sdot \encc{o}{M} \| \nu{ef}(\encc{e}{N} \| \nu{gh}(f[k,g] \| h \fwd z)))
                \\
                {}={} & \encc{z}{\sff{send}'(M,N)}
            \end{align*}

        \item
            Rule~\scc{E-Lift}: $\term{\mbb{M}} \reddM \term{\mbb{N}} \implies \term{\mcl{R}[\mbb{M}]} \reddM \term{\mcl{R}[\mbb{N}]}$.
            By induction on the structure of $\term{\mcl{R}}$ (\ih{2}), assuming $\term{\mbb{M}} \reddM \term{\mbb{N}}$.
            The inductive cases follow from \ih{2} straightforwardly, since the encoding of any case for $\term{\mcl{R}}$ has the encoding of $\term{\mbb{M}}$ only under restriction and parallel composition.
            The base case ($\term{\mcl{R}} = []$) follows from \ih{1}: $\encc{z}{\mbb{M}} \redd^\ast \encc{z}{\mbb{N}}$.

        \item
            Rule~\scc{E-LiftSC}: $\term{\mbb{M}} \equivM \term{\mbb{M}'} \wedge \term{\mbb{M}'} \reddM \term{\mbb{N}'} \wedge \term{\mbb{N}'} \equivM \term{\mbb{N}} \implies \term{\mbb{M}} \reddM \term{\mbb{N}}$.
            We assume $\term{\mbb{M}} \equivM \term{\mbb{M}'}$, $\term{\mbb{M}'} \reddM \term{\mbb{N}'}$, and $\term{\mbb{N}'} \equivM \term{\mbb{N}}$.
            By \ih{1}, $\encc{z}{\mbb{M}'} \redd^\ast \encc{z}{\mbb{N}'}$.
            By \Cref{t:transTermSC}, $\encc{z}{\mbb{M}} \equiv \encc{z}{\mbb{M}'}$ and $\encc{z}{\mbb{N}} \equiv \encc{z}{\mbb{N}'}$.
            Hence, $\encc{z}{\mbb{M}} \redd^\ast \encc{z}{\mbb{N}}$.
            \qedhere
    \end{itemize}
\end{proof}

\thmTransConfRed*

\begin{proof}
    By induction on the derivation of the reduction (\ih{1}):
    \begin{itemize}
        \item
            Rule~\scc{E-New}: $\term{\mcl{F}[\sff{new}]} \reddC \term{\nu{x\bfr{\epsilon}y}(\mcl{F}[(x,y)])}$.
            By induction on the structure of $\term{\mcl{F}}$ (\ih{2}).
            The inductive cases follow from \ih{2} straightforwardly, since the encoding of any case for $\term{\mcl{F}}$ has the encoding of $\term{\sff{new}}$ only under restriction and parallel composition.
            We detail the base case ($\term{\mcl{F}} = \term{\phi\,[]}$), w.l.o.g.\ assuming $\term{\phi} = \term{\main}$:
            \begin{align*}
                & \encc{z}{\main\,\sff{new}}
                \\
                {}={} & \nu{ab}(\nunil a[\_,\_] \| b(\_,\_) \sdot \nu{xy}\encc{z}{(x,y)})
                \\
                {}\redd{} & \nu{xy}\encc{z}{(x,y)}
                \\
                {}={} & \nu{xy}\encc{z}{\main\,(x,y)}
                \\
                {}={} & \nu{xy}\bencb{x}{\encc{z}{\main\,(x,y)}}{\bfr{\epsilon}}
                \\
                {}={} & \encc{z}{\nu{x\bfr{\epsilon}y}(\main\,(x,y))}
            \end{align*}

        \item
            Rule~\scc{E-Spawn}: $\term{\hat{\mcl{F}}[\sff{spawn}~(M,N)]} \reddC \term{\hat{\mcl{F}}[N] \prl \child\,M}$.
            By induction on the structure of $\term{\hat{\mcl{F}}}$.
            Similar to the previous case, we only detail the base case ($\term{\hat{\mcl{F}}} = \term{\phi\,[]}$), w.l.o.g.\ assuming $\term{\phi} = \term{\main}$:
            \begin{align*}
                & \encc{z}{\main\,(\sff{spawn}~(M,N))}
                \\
                {}={} & \nu{ab}(\nu{ef}\nu{gh}(a[e,g] \| f(\_,\_) \sdot \encc{\_}{M} \| h(k,\_) \sdot \encc{k}{N}) \| b(c,d) \sdot (\nunil c[\_,\_] \| \nunil d[z,\_]))
                \\
                {}\redd{} & \nu{hg}(h(k,\_) \sdot \encc{k}{N} \| \nunil g[z,\_]) \| \nu{fe}(f(\_,\_) \sdot \encc{\_}{M} \| \nunil e[\_,\_])
                \\
                {}\redd{} & \encc{z}{N} \| \nu{fe}(f(\_,\_) \sdot \encc{\_}{M} \| \nunil e[\_,\_])
                \\
                {}\redd{} & \encc{z}{N} \| \nunil \encc{\_}{M}
                \\
                {}={} & \encc{z}{\main\,N} \| \nunil \encc{\_}{\child\,M}
                \\
                {}={} & \encc{z}{\main\,N \prl \child\,M}
            \end{align*}

        \item
            Rule~\scc{E-Recv}: $\term{\nu{x\bfr{\vec{m},M}y}(\hat{\mcl{F}}[\sff{recv}~y] \prl C)} \reddC \term{\nu{x\bfr{\vec{m}}y}(\hat{\mcl{F}}[(M,y)] \prl C)}$.
            By induction on the structure of $\term{\hat{\mcl{F}}}$.
            Similar to the previous cases, we only detail the base case ($\term{\hat{\mcl{F}}} = \term{\phi\,[]}$), w.l.o.g.\ assuming $\term{\phi} = \term{\main}$:
            \begin{align*}
                & \encc{z}{\nu{x\bfr{\vec{m},M}y}(\main\,(\sff{recv}~y) \prl C)}
                \\
                {}={} & \nu{xy}\bencb{x}{\encc{z}{\main\,(\sff{recv}~y)} \| \nunil \encc{\_}{C}}{\bfr{\vec{m},M}}
                \\
                {}\equiv{} & \nu{xy}(\bencb{x}{\nunil \encc{\_}{C}}{\bfr{\vec{m},M}} \| \encc{z}{\main\,(\sff{recv}~y)})
                &&\text{(\encpropref{i:transBufCtx})}
                \\
                {}\equiv{} & \nu{xy}(\bencb{x}{\nunil \encc{\_}{C}}{\bfr{\vec{m},M}} \| \nu{ab}(y \fwd a \| b(c,d) \sdot \nu{ef}(z[c,e] \| f(g,\_) \sdot d \fwd g)))
                \\
                {}={} & \nu{xy}(\nu{hk}\nu{lm}(\nu{op}(x \fwd o \| p[h,l]) \| k(n,\_) \sdot \encc{n}{M} \| \bencb{m}{\nunil \encc{\_}{C}\{m/x\}}{\bfr{\vec{m}}})\\
                &{}\| \nu{ab}(y \fwd a \| b(c,d) \sdot \nu{ef}(z[c,e] \| f(g,\_) \sdot d \fwd g)))
                \\
                {}\redd^2{} & \nu{xy}(\nu{hk}\nu{lm}(x[h,l] \| k(n,\_) \sdot \encc{n}{M} \| \bencb{m}{\nunil \encc{\_}{C}\{m/x\}}{\bfr{\vec{m}}})\\
                &{}\| y(c,d) \sdot \nu{ef}(z[c,e] \| f(g,\_) \sdot d \fwd g))
                \\
                {}\redd{} & \nu{ml}(\bencb{m}{\nunil \encc{\_}{C}\{m/x\}}{\bfr{\vec{m}}} \| \nu{hk}\nu{ef}(z[h,e] \| k(n,\_) \sdot \encc{n}{M} \| f(g,\_) \sdot l \fwd g))
                \\
                {}\equiv{} & \nu{xy}(\bencb{x}{\nunil \encc{\_}{C}}{\bfr{\vec{m}}} \| \nu{hk}\nu{ef}(z[h,e] \| k(n,\_) \sdot \encc{n}{M} \| f(g,\_) \sdot y \fwd g))
                \\
                {}={} & \nu{xy}(\bencb{x}{\nunil \encc{\_}{C}}{\bfr{\vec{m}}} \| \encc{z}{(M,y)})
                \\
                {}\equiv{} & \nu{xy}\bencb{x}{\encc{z}{(M,y)} \| \nunil \encc{\_}{C}}{\bfr{\vec{m}}}
                &&\text{(\encpropref{i:transBufCtx})}
                \\
                {}={} & \nu{xy}\bencb{x}{\encc{z}{\main\,(M,y)} \| \nunil \encc{\_}{C}}{\bfr{\vec{m}}}
                \\
                {}={} & \nu{xy}\bencb{x}{\encc{z}{\main\,(M,y) \prl C}}{\bfr{\vec{m}}}
                \\
                {}={} & \encc{z}{\nu{x\bfr{\vec{m}}y}(\main\,(M,y) \prl C)}
            \end{align*}

        \item
            Rule~\scc{E-Case}: $j \in I \implies \term{\nu{x\bfr{\vec{m},j}y}(\mcl{F}[\sff{case}\, y\, \sff{of}\, \{i:M_i\}_{i \in I}] \prl C)} \reddC \term{\nu{x\bfr{\vec{m}}y}(\mcl{F}[M_j~y] \prl C)}$.
            Assuming $j \in I$, we apply induction on the structure of $\term{\mcl{F}}$.
            Similar to the previous cases, we only detail the base case ($\term{\mcl{F}} = \term{\phi\,[]}$), w.l.o.g.\ assuming $\term{\phi} = \term{\main}$:
            \begin{align*}
                & \encc{z}{\nu{x\bfr{\vec{m},j}y}(\main\,(\sff{case}\,y\,\sff{of}\,\{i:M_i\}_{i \in I} \prl C)}
                \\
                {}={} & \nu{xy}\bencb{x}{\encc{z}{\main\,(\sff{case}\,y\,\sff{of}\,\{i:M_i\}_{i \in I}} \| \nunil \encc{\_}{C}}{\bfr{\vec{m},j}}
                \\
                {}\equiv{} & \nu{xy}(\bencb{x}{\nunil \encc{\_}{C}}{\bfr{\vec{m},j}} \| \encc{z}{\main\,(\sff{case}\,y\,\sff{of}\,\{i:M_i\}_{i \in I}})
                &&\text{(\encpropref{i:transBufCtx})}
                \\
                {}={} & \nu{xy}(\bencb{x}{\nunil \encc{\_}{C}}{\bfr{\vec{m},j}} \| \nu{ab}(y \fwd a \| b(c) \gets \{i: \encc{z}{M_i~c}\}_{i \in I}))
                \\
                {}={} & \nu{xy}(\nu{de}(\nu{fg}(x \fwd f \| g[d] \puts j) \| \bencb{e}{\nunil \encc{\_}{C}\{e/x\}}{\bfr{\vec{m}}})\\
                &{}\| \nu{ab}(y \fwd a \| b(c) \gets \{i: \encc{z}{M_i~c}\}_{i \in I}))
                \\
                {}\redd^2{} & \nu{xy}(\nu{de}(x[d] \puts j \| \bencb{e}{\nunil \encc{\_}{C}\{e/x\}}{\bfr{\vec{m}}})\\
                &{}\| y(c) \gets \{i: \encc{z}{M_i~c}\}_{i \in I})
                \\
                {}\redd{} & \nu{ed}(\bencb{e}{\nunil \encc{\_}{C}\{e/x\}}{\bfr{\vec{m}}} \| \encc{z}{M_j~c}\{d/c\})
                \\
                {}\equiv{} & \nu{xy}(\bencb{x}{\nunil \encc{\_}{C}}{\bfr{\vec{m}}} \| \encc{z}{M_j~c}\{y/c\})
                \\
                {}={} & \nu{xy}(\bencb{x}{\nunil \encc{\_}{C}}{\bfr{\vec{m}}} \| \encc{z}{M_j~y})
                &&\text{(\encpropref{i:transSubst})}
                \\
                {}\equiv{} & \nu{xy}\bencb{x}{\encc{z}{M_j~y} \| \nunil \encc{\_}{C}}{\bfr{\vec{m}}}
                &&\text{(\encpropref{i:transBufCtx})}
                \\
                {}={} & \nu{xy}\bencb{x}{\encc{z}{\main\,(M_j~y)} \| \nunil \encc{\_}{C}}{\bfr{\vec{m}}}
                \\
                {}={} & \nu{xy}\bencb{x}{\encc{z}{\main\,(M_j~y) \prl C}}{\bfr{\vec{m}}}
                \\
                {}={} & \encc{z}{\nu{x\bfr{\vec{m}}y}(\main\,(M_j~y) \prl C)}
            \end{align*}

        \item
            Rule~\scc{E-LiftC}: $\term{C} \reddC \term{C'} \implies \term{\mcl{G}[C]} \reddC \term{\mcl{G}[C']}$.
            By induction on the structure of $\term{\mcl{G}}$ (\ih{2}), assuming $\term{C} \reddC \term{C'}$.
            The base case ($\term{\mcl{G}} = \term{[]}$) follows from \ih{1}: $\encc{z}{C} \redd^\ast \encc{z}{C'}$.
            The case of buffered restriction ($\term{\mcl{G}} = \term{\nu{x\bfr{\vec{m}}y}\mcl{G}'}$) follows from \encpropref{i:transBufRed} and \ih{2}.
            The rest of the inductive cases follow from \ih{2} straightforwardly, since their encoding only places the encoding of $\term{C}$ under restriction and parallel composition.

        \item
            Rule~\scc{E-LiftM}: $\term{\mcl{M}} \reddM \term{\mcl{M}'} \implies \term{\mcl{F}[\mcl{M}]} \reddC \term{\mcl{F}[\mcl{M'}]}$.
            By induction on the structure of $\term{\mcl{F}}$ (\ih{2}), assuming $\term{\mcl{M}} \reddM \term{\mcl{M}'}$.
            The base case ($\term{\mcl{F}} = \term{\phi\,[]}$) follows from \Cref{t:transTermRed}: $\encc{z}{\phi\,\mcl{M}} = \encc{z}{\mcl{M}} \redd^\ast \encc{z}{\mcl{M}'} = \encc{z}{\phi\,\mcl{M}'}$.
            The inductive case ($\term{\mcl{F}} = \term{C\xsub{ (\mcl{F}'[\mcl{M}])/x }}$) follows from \ih{2} straightforwardly, since the encoding of $\term{\mcl{F}}$ places the encoding of $\term{\mcl{F}'[\mcl{M}]}$ only under restriction and parallel composition.

        \item
            Rule~\mbox{\scc{E-ConfLiftSC}}: $\term{C} \equivC \term{C'} \wedge \term{C'} \reddC \term{D'} \wedge \term{D'} \equivC \term{D} \implies \term{C} \reddC \term{D}$.
            We assume $\term{C} \equivC \term{C'}$, $\term{C'} \reddC \term{D'}$, and $\term{D'} \equivC \term{D}$.
            By \ih{1}, $\encc{z}{C'} \redd^\ast \encc{z}{D'}$.
            By \Cref{t:transConfSC}, $\encc{z}{C} \equiv \encc{z}{C'}$ and $\encc{z}{D} \equiv \encc{z}{D'}$.
            Hence, $\encc{z}{C} \redd^\ast \encc{z}{D}$.
            \qedhere
    \end{itemize}
\end{proof}

\subsection{Soundness}
\label{as:soundness}

\begin{definition}[Evaluation Context]
\label{d:redCtx}
    Evaluation contexts ($\mcl{E}$) are defined by the following grammar:
    \begin{align*}
        \mcl{E} &::= [] \sepr \mcl{E} \| P \sepr \nu{xy}\mcl{E}
    \end{align*}
    We write $\mcl{E}[P]$ to denote the process obtained by replacing the hole $[]$ in $\mcl{E}$ by $P$.
\end{definition}

\begin{definition}[Bound Forwarded Continuation]
\label{d:bcont}
    \sloppy
    The predicate $\bcont{x,y}(P)$ holds if and only if $P \equiv {\mcl{E}[\nu{xa}(x \fwd y \| \nu{cd}(c \fwd e \| \alpha))]}$ for some evaluation context $\mcl{E}$ where $\alpha \in \{d[f,a],d[a] \puts \ell\}$ implies $P \equiv \nu{eg}Q$ for some $Q$.
\end{definition}

\begin{definition}[Lazy Forwarded Semantics for APCP]
\label{d:apcpLazy}
    The \emph{lazy forwarder semantics} for APCP ($\reddL$) is defined by the rules in \Cref{f:apcpLazy}.

    \begin{figure}[t]
        \begin{mdframed}
            \begin{align*}
                & \rred{\overset{\leftrightarrow}{\scc{Id}}}
                &
                & &
                \nuf{yz}(x \fwd y \| P)
                &\reddL^{(x,y)} P \{x/z\}
                \\
                & \rred{\tensor \parr}
                &
                & &
                \nu{xy}(x[a,b] \| y(c,d) \sdot P)
                &\reddL^\cdot P\subst{a/c,b/d}
                \\
                & \rred{\overset{\leftrightarrow}{\tensor \parr}}
                &
                & &
                \nu{xy}(\nu{uv}(x \fwd u \| v[a,b]) \| \nu{wz}(y \fwd w \| z(c,d) \sdot P))
                &\reddL^\cdot P\subst{a/c,b/d}
                \\
                & \rred{\oplus \&}
                & j \in I \implies &
                &
                \nu{xy}(x[b] \puts j \| y(d) \gets \{i:P_i\}_{i \in I}
                &\reddL^\cdot P_j\subst{b/d}
                \\
                & \rred{\overset{\leftrightarrow}{\oplus \&}}
                &
                j \in I \implies
                & &
                \nu{xy}(\nu{uv}(x \fwd u \| v[b] \puts j) \| \nu{wz}(y \fwd w \| z(d) \gets \{i:P_i\}_{i \in I}))
                &\reddL^\cdot P_j\subst{b/d}
            \end{align*}
            For $S = \cdot$ or $S = (x,y)$ for some $x,y$:
            \begin{mathpar}
                \inferrule*[right=$\equiv$]
                { P \equiv P' \\ P' \reddL^S Q' \\ Q' \equiv Q }
                { P \reddL^S Q }
                \and
                \inferrule*[right=$\onu$]
                { P \reddL^S Q }
                { \nu{xy}P \reddL^S \nu{xy}Q }
                \and
                \inferrule*[right=$\|$]
                { P \reddL^S Q }
                { P \| R \reddL^S Q \| R }
                \and
                \inferrule*[right=$\cdot$]
                { P \reddL^\cdot Q }
                { P \reddL Q }
                \and
                \inferrule*[right={$(x,y)$}]
                { P \reddL^{(x,y)} Q \\ \bcont{x,y}(P) }
                { P \reddL Q }
            \end{mathpar}
        \end{mdframed}
        \caption{Lazy forwarder semantics for APCP (cf.\ \Cref{d:apcpLazy})}
        \label{f:apcpLazy}
    \end{figure}
\end{definition}

\noindent
This semantics is designed to have a very controlled reduction of forwarders.
Rule~$\rred{\overset{\leftrightarrow}{\scc{Id}}}$ implements the forwarder as a substitution, as usual.
However, it only does so if it is bound by a forwarder-enabled restriction.
Moreover, it annotates the reduction's arrow with the forwarded endpoints.
For example, $\nuf{yz}(x \fwd y \| P) \reddL^{(x,y)} P \subst{x/z}$.
The Closure rules~$\equiv$, $\onu$, and $\|$ preserve the arrow's annotation.
To derive the final unannotated reduction, Rule~$(x,y)$ can be applied.
This rule checks that, when one of the endpoints $x$ or $y$ is bound to the continuation endpoint of a forwarded output or selection, that output or selection is forwarded on a bound endpoint (cf.\ Def.\ \labelcref{d:bcont}).

Rules~$\rred{\tensor \parr}$ and $\rred{\oplus \&}$ are standard, and annotate the reduction's arrow with `$\cdot$'.
The closure rules preserve this annotation as well.
The final unannotated reduction can be derived using Rule~$\cdot$, which has no additional conditions.
Finally, Rules~$\rred{\overset{\leftrightarrow}{\tensor \parr}}$ and $\rred{\overset{\leftrightarrow}{\oplus \&}}$ define a short-circuit semantics for forwarded output (resp.\ selection) connected to forwarded input (resp.\ branching): instead of first reducing the two involved forwarders (which may not be possible under $\reddL$), communication occurs \emph{through} the forwarders whereafter the forwarders will have been consumed.
These rules also annotate the arrow with~`$\cdot$', such that there are no additional conditions when deriving the final unannotated reduction.

Before we prove soundness, let us compare APCP's standard semantics and the newly introduced lazy forwarder semantics, and their impact on soundness:

\begin{example}
    Consider the following configuration:
    \begin{align*}
        \term{\main\,((\lambda u \sdot \lambda w \sdot u)~(\sff{send}~(M,x)))}
        \reddC \term{\main\,((\lambda w \sdot u)\xsub{ \sff{send}~(M,x)/u })}
        \reddC \term{\main\,((\lambda w \sdot u)\xsub{ \sff{send}'(M,x)/u })}
    \end{align*}
    Now consider the encoding of the reduced configuration:
    \begin{align*}
        & \encc{z}{\main\,((\lambda w \sdot u)\xsub{ \sff{send}'(M,x)/u })}
        \\
        {}={} & \nuf{ua}\begin{array}[t]{@{}l@{}}
            (z(b,c) \sdot \nuf{dw}(\nunil b[d,\_] \| u \fwd c)
            \\
            {} \| \nu{ef}\begin{array}[t]{@{}l@{}}
                (e(g,\_) \sdot \encc{g}{M}
                \\
                {} \| \nu{hk}\begin{array}[t]{@{}l@{}}
                    (x \fwd h
                    \\
                    {} \| \nu{lm}(k[f,l] \| m \fwd a))))
                \end{array}
            \end{array}
        \end{array}
    \end{align*}
    Due to the forwarders, this process can reduce in two ways:
    \begin{align*}
        &&
        & \encc{z}{\main\,((\lambda w \sdot u)\xsub{ \sff{send}'(M,x)/u })}
        \\[10pt]
        & \text{(i)}
        &
        {}\redd{} & \nuf{ua}\begin{array}[t]{@{}l@{}}
            (z(b,c) \sdot \nuf{dw}(\nunil b[d,\_] \| u \fwd c)
            \\
            {} \| \nu{ef}\begin{array}[t]{@{}l@{}}
                (e(g,\_) \sdot \encc{g}{M}
                \\
                {} \| \nu{lm}(x[f,l] \| m \fwd a))))
            \end{array}
        \end{array}
        \\[10pt]
        &\text{(ii)}
        &
        {}\redd{} & \nu{ml}\begin{array}[t]{@{}l@{}}
            (z(b,c) \sdot \nuf{dw}(\nunil b[d,\_] \| m \fwd c)
            \\
            {} \| \nu{ef}\begin{array}[t]{@{}l@{}}
                (e(g,\_) \sdot \encc{g}{M}
                \\
                {} \| \nu{hk}(x \fwd h \| k[f,l])))
            \end{array}
        \end{array}
    \end{align*}
    These processes are not encodings of any configuration, thus disproving the encoding's soundness under APCP's standard semantics.

    On the other hand, the encoding of the reduced configuration does not reduce under the lazy forwarder semantics: the forwarder $x \fwd h$ is not bound by forwarder-enabled restrictions, and the forwarder $m \fwd a$ forwards the continuation endpoint of an output on an endpoint that is forwarded to a free endpoint.
    \begin{align*}
        \encc{z}{\main\,((\lambda w \sdot u)\xsub{ \sff{send}'(M,x)/u })}~ \nreddL
    \end{align*}
    Now consider the same configuration, but within a buffered restriction:
    \begin{align*}
        & \encc{z}{\nu{x\bfr{\epsilon}y}(\main\,((\lambda w \sdot u)\xsub{ \sff{send}'(M,x)/u }) \prl C)}
        \\
        {}={} & \nu{xy}\begin{array}[t]{@{}l@{}}
            (\nuf{ua}\begin{array}[t]{@{}l@{}}
                (z(b,c) \sdot \nuf{dw}(\nunil b[d,\_] \| u \fwd c)
                \\
                {} \| \nu{ef}\begin{array}[t]{@{}l@{}}
                    (e(g,\_) \sdot \encc{g}{M}
                    \\
                    {} \| \nu{hk}\begin{array}[t]{@{}l@{}}
                        (x \fwd h
                        \\
                        {} \| \nu{lm}(k[f,l] \| m \fwd a))))
                    \end{array}
                \end{array}
            \end{array}
            \\
            {} \| \nunil \encc{\_}{C})
        \end{array}
    \end{align*}
    Now the forwarder $m \fwd a$ may reduce, because $x$ is bound:
    \begin{align*}
        & \encc{z}{\nu{x\bfr{\epsilon}y}(\main\,((\lambda w \sdot u)\xsub{ \sff{send}'(M,x)/u }) \prl C)}
        \\
        {}\reddL{} & \nu{xy}\begin{array}[t]{@{}l@{}}
            (\nu{ef}\nu{lm}\begin{array}[t]{@{}l@{}}
                (\nu{hk}(x \fwd h \| k[f,l])
                \\
                {} \| e(g,\_) \sdot \encc{g}{M}
                \\
                {} \|
                z(b,c) \sdot \nuf{dw}(\nunil b[d,\_] \| m \fwd c))
            \end{array}
            \\
            {} \| \nunil \encc{\_}{C})
        \end{array}
        \\
        {}={}
        & \encc{z}{\nu{x\bfr{M}y}(\main\,(\lambda w \sdot x) \prl C)}
    \end{align*}
    In this case, soundness is not disproven, because
    \begin{align*}
        \term{\nu{x\bfr{\epsilon}y}(\main\,((\lambda w \sdot u)\xsub{ \sff{send}'(M,x)/u }) \prl C)} \reddC \term{\nu{x\bfr{M}y}(\main\,(\lambda w \sdot x) \prl C)}.
    \end{align*}
\end{example}

\thmTransSndConf*

\begin{proof}
    By induction on the structure of $\term{C}$ (\ih{1}).
    In each case for $\term{C}$, we use induction on the number $k$ of steps $\encc{z}{C} \reddL^k Q$ (\ih{2}).
    We observe the initial reduction $\encc{z}{C} \reddL Q_0$ and discuss all possible following reductions.
    Then, we isolate $k'$ steps  such that $\encc{z}{C} \reddL^{k'} \encc{z}{D'}$ for some $\term{D'}$ such that $\term{C} \reddC \term{D'}$ ($k'$ may be different in each case).
    Since then $\encc{z}{D'} \reddL^{k-k'} Q$, it follows from \ih{2} that there exists $\term{D}$ such that $\term{D'} \reddC^+ \term{D}$ and $\encc{z}{D'} \reddL^\ast \encc{z}{D}$.
    \begin{itemize}
        \item
            Case $\term{C} = \term{\phi\,\mbb{M}}$.
            By construction, we can identify a maximal context $\term{\mcl{R}}$ and a term $\term{\mbb{M}'}$ such that the observed reduction $\encc{z}{\phi\,(\mcl{R}[\mbb{M}'])} \reddL Q_0$ originates from the encoding of $\term{\mbb{M}'}$ directly, i.e.\ not solely from the encoding of a subterm of $\term{\mbb{M}'}$.
            We discuss all syntactical cases for $\term{\mbb{M}'}$, though not all cases may show a reduction.
            \begin{itemize}
                \item
                    Case $\term{\mbb{M}'} = \term{x}$.
                    By induction on the structure of $\term{\mcl{R}}$.
                    In the base case ($\term{\mcl{R}} = \term{[]}$), there are two encodings of this term, depending on its typing: $\encc{z}{x} = x \fwd z$ and $\encc{z}{x} = \0$.
                    Either way, the encoding does not reduce, so this case does not cause the reduction $\encc{z}{\phi\,(\mcl{R}[x])}$.
                    The inductive cases are straightforward.

                \item
                    Case $\term{\mbb{M}'} = \term{\sff{new}}$.
                    \begin{align*}
                        & \encc{z}{\sff{new}}
                        \\
                        {}={} & \nu{ab}(\nunil a[\_,\_] \| b(\_,\_) \sdot \nu{xy}\encc{z}{(x,y)})
                    \end{align*}
                    The reduction $\encc{z}{\phi\,(\mcl{R}[\sff{new}])} \reddL Q_0$ is due to a synchronization between the output on $a$ and the input on $b$.
                    Let $\term{D'} := \term{\nu{x\bfr{\epsilon}y}(\phi\,(\mcl{R}[(x,y)]))}$.
                    By induction on the structure of $\term{\mcl{R}}$ (\ih{3}), we show that $\encc{z}{C} \reddL \encc{z}{D'}$.
                    We discuss the base case ($\term{\mcl{R}} = \term{[]}$) and one representative, inductive case ($\term{\mcl{R}} = \term{\mcl{R}'~N}$).
                    \begin{itemize}
                        \item
                            Base case: $\term{\mcl{R}} = \term{[]}$.
                            There is only one reduction possible:
                            \begin{align*}
                                & \encc{z}{\phi\,\sff{new}}
                                \\
                                {}\reddL{} & \nu{xy}\encc{z}{(x,y)}
                                \\
                                {}={} & \encc{z}{\nu{x\bfr{\epsilon}y}(\phi\,(x,y))}
                                \\
                                {}={} & \encc{z}{D'}
                            \end{align*}

                        \item
                            Representative, inductive case: $\term{\mcl{R}} = \term{\mcl{R}'~N}$.
                            \begin{align*}
                                & \encc{z}{\phi\,(\mcl{R}'[\sff{new}]~N)}
                                \\
                                {}={} & \nu{ab}(\encc{a}{\mcl{R}'[\sff{new}]} \| \nu{cd}(b[c,z] \| d(e,\_) \sdot \encc{e}{N}))
                                \\
                                {}={} & \nu{ab}(\encc{a}{\phi\,(\mcl{R}'[\sff{new}])} \| \nu{cd}(b[c,z] \| d(e,\_) \sdot \encc{e}{N}))
                                \\
                                {}\reddL{} & \nu{ab}(\encc{a}{\nu{x\bfr{\epsilon}y}(\phi\,(\mcl{R'}[(x,y)]))} \| \nu{cd}(b[c,z] \| d(e,\_) \sdot \encc{e}{N}))
                                & \text{(\ih{3})}
                                \\
                                {}={} & \nu{ab}(\nu{xy}\encc{a}{\mcl{R'}[(x,y)]} \| \nu{cd}(b[c,z] \| d(e,\_) \sdot \encc{e}{N}))
                                \\
                                {}\equiv{} & \nu{xy}\nu{ab}(\encc{a}{\mcl{R'}[(x,y)]} \| \nu{cd}(b[c,z] \| d(e,\_) \sdot \encc{e}{N}))
                                \\
                                {}={} & \encc{z}{\nu{x\bfr{\epsilon}y}(\phi\,(\mcl{R'}[(x,y)]~N))}
                                \\
                                {}={} & \encc{z}{D'}
                            \end{align*}
                    \end{itemize}
                    Indeed, $\term{C} \reddC \term{D'}$.
                    Since $\encc{z}{D'} \reddL^{k-1} Q$, the thesis follows by \ih{2}.

                \item
                    Case $\term{\mbb{M}'} = \term{\sff{spawn}~\mbb{M}''}$.
                    \begin{align*}
                        & \encc{z}{\sff{spawn}~\mbb{M}''}
                        \\
                        {}={} & \nu{ab}(\encc{a}{\mbb{M}''} \| b(c,d) \sdot (\nunil c[\_,\_] \| \nunil d[z,\_]))
                    \end{align*}
                    The reduction $\encc{z}{\phi\,(\mcl{R}[\sff{spawn}~\mbb{M}''])} \reddL Q_0$ is due to a synchronization between the input on $b$ and an output on $a$ in $\encc{a}{\mbb{M}''}$.
                    By typability, this only occurs if $\term{\mbb{M}''} = \term{(M_1,M_2)\xsub{ \mbb{L}_1/x_1,\ldots,\mbb{L}_n/x_n }}$.
                    Since we can always lift explicit substitutions all the way to the top of a configuration, we omit them here.
                    Let $\term{D'} := \term{\phi\,(\mcl{R}[M_2]) \prl \child\,M_1}$.
                    By induction on the structure of $\term{\mcl{R}}$ (\ih{3}), we show that $\encc{z}{C} \reddL^3 \encc{z}{D'}$.
                    We discuss the base case ($\term{\mcl{R}} = \term{[]}$) and one representative, inductive case ($\term{\mcl{R}} = \term{\mcl{R}'~N}$).
                    \begin{itemize}
                        \item
                            Base case: $\term{\mcl{R}} = \term{[]}$.
                            \begin{align*}
                                & \encc{z}{\phi\,(\sff{spawn}~(M_1,M_2))}
                                \\
                                {}={} & \nu{ab}(\nu{fg}\nu{hl}(a[f,h] \| g(m,\_) \sdot \encc{m}{M_1} \| l(o,\_) \sdot \encc{o}{M_2})\\
                                & {} \| b(c,d) \sdot (\nunil c[\_,\_] \| \nunil d[z,\_]))
                                \\
                                {}\reddL{} & \nu{fg}\nu{hl}(g(m,\_) \sdot \encc{m}{M_1} \| l(o,\_) \sdot \encc{o}{M_2} \| \nunil f[\_,\_] \| \nunil h[z,\_])
                                \\
                                {}\reddL{} & \nu{hl}(\nunil \encc{\_}{M_1} \| l(o,\_) \sdot \encc{o}{M_2} \| \nunil h[z,\_])
                                \\
                                {}\reddL{} & \nunil \encc{\_}{M_1} \| \encc{z}{M_2}
                                \\
                                {}\equiv{} & \encc{z}{M_2} \| \nunil \encc{\_}{M_1}
                                \\
                                {}={} & \encc{z}{\phi\,M_2 \prl \child\,M_1}
                                \\
                                {}={} & \encc{z}{D'}
                            \end{align*}

                        \item
                            Representative inductive case: $\term{\mcl{R}} = \term{\mcl{E'}~N}$.
                            \begin{align*}
                                & \encc{z}{\phi\,(\mcl{R}'[\sff{spawn}~(M_1,M_2)]~N)}
                                \\
                                {}={} & \nu{ab}(\encc{a}{\mcl{R}'[\sff{spawn}~(M_1,M_2)]} \| \nu{cd}(b[c,z] \| d(e,\_) \sdot \encc{e}{N}))
                                \\
                                {}={} & \nu{ab}(\encc{a}{\phi\,(\mcl{R}'[\sff{spawn}~(M_1,M_2)])} \| \nu{cd}(b[c,z] \| d(e,\_) \sdot \encc{e}{N}))
                                \\
                                {}\reddL^3{} & \nu{ab}(\encc{a}{\phi\,(\mcl{R}'[M_2]) \prl \child\,M_1} \| \nu{cd}(b[c,z] \| d(e,\_) \sdot \encc{e}{N}))
                                & \text{(\ih{3})}
                                \\
                                {}={} & \nu{ab}(\encc{a}{\mcl{R}'[M_2]} \| \nunil \encc{\_}{\child\,M_1} \| \nu{cd}(b[c,z] \| d(e,\_) \sdot \encc{e}{N}))
                                \\
                                {}\equiv{} & \nu{ab}(\encc{a}{\mcl{R}'[M_2]} \| \nu{cd}(b[c,z] \| d(e,\_) \sdot \encc{e}{N})) \| \nunil \encc{\_}{\child\,M_1}
                                \\
                                {}={} & \encc{z}{\phi\,(\mcl{R}'[M_2]~N) \prl \child\,M_1}
                                \\
                                {}={} & \encc{z}{D'}
                            \end{align*}
                    \end{itemize}
                    Indeed, $\term{C} \reddC \term{D'}$.
                    Since $\encc{z}{D'} \reddL^{k-3} Q$, the thesis follows by \ih{2}.

                \item
                    Case $\term{\mbb{M}'} = \term{\sff{send}~\mbb{M}''}$.
                    \begin{align*}
                        & \encc{z}{\sff{send}~\mbb{M}''}
                        \\
                        {}={} & \nu{ab}(\encc{a}{\mbb{M}''} \| b(c,d) \sdot \nu{ef}(\nunil d[e,\_] \| \nu{gh}(f[c,g] \| h \fwd z)))
                    \end{align*}
                    The reduction $\encc{z}{\phi\,(\mcl{R}[\sff{send}~\mbb{M}''])} \reddL Q_0$ is due to a synchronization between the input on $b$ and an output on $a$ in $\encc{a}{\mbb{M}''}$.
                    By typability, this only occurs if $\term{\mbb{M}''} = \term{(M_1,M_2)\xsub{ \mbb{L}_1/x_1,\ldots,\mbb{L}_n/x_n }}$.
                    As in the case above, we omit the explicit substitutions.
                    Let $\term{D'} = \term{\phi\,(\mcl{R}[\sff{send'}(M_1,M_2)])}$.
                    By induction on the structure of $\term{\mcl{R}}$, we show that $\encc{z}{C} \reddL^2 \encc{z}{D'}$.
                    In the base case ($\term{\mcl{R}} = \term{[]}$), we have the following:
                    \begin{align*}
                        & \encc{z}{\phi\,(\sff{send}~(M_1,M_2))}
                        \\
                        {}={} & \nu{ab}(\nu{mo}\nu{pq}(a[m,p] \| o(r,\_) \sdot \encc{r}{M_1} \| q(s,\_) \sdot \encc{s}{M_2})\\
                        & {} \| b(c,d) \sdot \nu{ef}(\nunil d[e,\_] \| \nu{gh}(f[c,g] \| h \fwd z)))
                        \\
                        {}\reddL{} & \nu{mo}\nu{pq}(o(r,\_) \sdot \encc{r}{M_1} \| q(s,\_) \sdot \encc{s}{M_2})\\
                        & {} \| \nu{ef}(\nunil p[e,\_] \| \nu{gh}(f[m,g] \| h \fwd z)))
                        \\
                        {}\reddL{} & \nu{ef}\nu{mo}(o(r,\_) \sdot \encc{r}{M_1} \| \encc{e}{M_2}) \| \nu{gh}(f[m,g] \| h \fwd z)))
                        \\
                        {}\equiv{} & \nu{om}(o(r,\_) \sdot \encc{r}{M_1} \| \nu{ef}(\encc{e}{M_2}) \| \nu{gh}(f[m,g] \| h \fwd z))
                        \\
                        {}={} & \encc{z}{\phi\,(\sff{send}'(M_1,M_2))}
                        \\
                        {}={} & \encc{z}{D'}
                    \end{align*}
                    The inductive cases are straightforward.
                    Indeed, $\term{C} \reddC \term{D'}$.
                    Since $\encc{z}{D'} \reddL^{k-2} Q$, the thesis follows by \ih{2}.

                \item
                    Case $\term{\mbb{M}'} = \term{\sff{recv}~\mbb{M}''}$.
                    \begin{align*}
                        & \encc{z}{\sff{recv}~\mbb{M}''}
                        \\
                        {}={} & \nu{ab}(\encc{a}{\mbb{M}''} \| b(c,d) \sdot \nu{ef}(z[c,e] \| f(g,\_) \sdot d \fwd g))
                    \end{align*}
                    The reduction $\encc{z}{\phi\,(\mcl{R}[\sff{recv}~\mbb{M}''])} \reddL Q_0$ would be due to a synchronization between the input on $b$ and an output on $a$ in $\encc{a}{\mbb{M}''}$.
                    However, by typability, there is no $\term{\mbb{M}''}$ for which the encoding has an unguarded output on $a$, so this case does not cause the reduction.

                \item
                    Case $\term{\mbb{M}'} = \term{\lambda x \sdot M''}$.
                    The encoding $\encc{z}{\lambda x \sdot M''} = z(a,b) \sdot \nuf{cx}(\nunil a[c,\_] \| \encc{b}{M''})$ does not reduce, so this case does not cause the reduction $\encc{z}{\phi\,(\mcl{R}[\lambda x \sdot M''])} \reddL Q_0$.

                \item
                    Case $\term{\mbb{M}'} = \term{\mbb{M}_1~M_2}$.
                    \begin{align*}
                        & \encc{z}{\mbb{M}_1~M_2}
                        \\
                        {}={} & \nu{ab}(\encc{a}{\mbb{M}_1} \| \nu{cd}(b[c,z] \| d(e,\_) \sdot \encc{e}{M_2}))
                    \end{align*}
                    The reduction $\encc{z}{\phi\,(\mcl{R}[\mbb{M}_1~M_2])} \reddL Q_0$ is due to a synchronization between the input on $b$ and an output on $a$ in $\encc{a}{\mbb{M}_1}$.
                    By typability, this only occurs if $\term{\mbb{M}_1} = \term{(\lambda x \sdot M'_1)\xsub{ \mbb{L}_1/x_1,\ldots,\mbb{L}_n/x_n }}$.
                    Again, we omit the explicit substitutions.
                    Let $\term{D'} = \term{M'_1\xsub{ \mbb{M}_2/x }}$.
                    By induction on the structure of $\term{\mcl{R}}$, we show that $\encc{z}{C} \reddL^2 \encc{z}{D'}$.
                    In the base case ($\term{\mcl{R}} = \term{[]}$), we have the following:
                    \begin{align*}
                        & \encc{z}{\phi\,((\lambda x \sdot M'_1)~M_2)}
                        \\
                        {}={} & \nu{ab}(a(g,h) \sdot \nuf{lx}(\nunil g[l,\_] \| \encc{h}{M'_1}) \| \nu{cd}(b[c,z] \| d(e,\_) \sdot \encc{e}{M_2}))
                        \\
                        {}\reddL{} & \nu{cd}(\nuf{lx}(\nunil c[l,\_] \| \encc{z}{M'_1}) \| d(e,\_) \sdot \encc{e}{M_2})
                    \end{align*}
                    At this point, there are three ways for this process to reduce: $\encc{z}{M'_1}$ can reduce internally, a forwarder on $x$ in $\encc{z}{M'_1}$ can reduce, or there is a synchronization between the output on $c$ and the input on $d$.
                    These reductions are independent from each other, we so we postpone the former two cases and focus on the latter.
                    \begin{align*}
                        {}\reddL{} & \nuf{lx}(\encc{z}{M'_1} \| \encc{l}{M_2})
                        \\
                        {}\equiv{} & \nuf{lx}(\encc{l}{M_2} \| \encc{z}{M'_1})
                        \\
                        {}={} & \encc{z}{\phi\,(M'_1\xsub{ M_2/x })}
                        \\
                        {}={} & \encc{z}{D'}
                    \end{align*}
                    The inductive cases are straightforward.
                    Indeed, $\term{C} \reddC \term{D'}$.
                    Since $\encc{z}{D'} \reddL^{k-2} Q$, the thesis follows by \ih{2}.

                \item
                    Case $\term{\mbb{M}'} = \term{()}$.
                    The encoding $\encc{z}{()} = \0$ does not reduce, so this case does not cause the reduction $\encc{z}{\phi\,(\mcl{R}[()])} \reddL Q_0$.

                \item
                    Case $\term{\mbb{M}'} = \term{(M_1,M_2)}$.
                    The encoding $\encc{z}{(M_1,M_2)} = \nu{ab}\nu{cd}(z[a,c] \| b(e,\_) \sdot \encc{e}{M_1} \| d(f,\_) \sdot \encc{f}{M_2})$ does not reduce, so this case does not cause the reduction $\encc{z}{\phi\,(\mcl{R}[(M_1,M_2)])} \reddL Q_0$.

                \item
                    Case $\term{\mbb{M}'} = \term{\sff{let}\,(x,y)=\mbb{M}_1\,\sff{in}\,M_2}$.
                    \begin{align*}
                        & \encc{z}{\sff{let}\,(x,y)=\mbb{M}_1\,\sff{in}\,M_2}
                        \\
                        {}={} & \nu{ab}(\encc{a}{\mbb{M}_1} \| b(c,d) \sdot \nuf{ex}\nuf{fy}(\nunil c[e,\_] \| \nunil d[f,\_] \| \encc{z}{M_2}))
                    \end{align*}
                    The reduction $\encc{z}{\phi\,(\mcl{R}[\sff{let}\,(x,y)=\mbb{M}_1\,\sff{in}\,M_2])} \reddL Q_0$ is due to a synchronization between the input on $b$ and an output on $a$ in $\encc{a}{\mbb{M}_1}$.
                    By typability, this only occurs if $\term{\mbb{M}_1} = \term{(M'_1,M'_2)\xsub{ \mbb{L}_1/x_1,\ldots,\mbb{L}_n/x_n }}$.
                    Again, we omit the explicit substitutions.
                    Let $\term{D'} = \term{M_2\xsub{ M'_1/x,M'_2/y }}$.
                    By induction on the structure of $\term{\mcl{R}}$, we show that $\encc{z}{C} \reddL^3 \encc{z}{D'}$.
                    In the base case ($\term{\mcl{R}} = \term{[]}$), we have the following:
                    \begin{align*}
                        & \encc{z}{\phi\,(\sff{let}\,(x,y)=(M'_1,M'_2)\,\sff{in}\,M_2)}
                        \\
                        {}={} & \nu{ab}(\nu{hl}\nu{mo}(a[h,m] \| l(p,\_) \sdot \encc{p}{M'_1} \| o(r,\_) \sdot \encc{r}{M'_2})\\
                        &{}\| b(c,d) \sdot \nuf{ex}\nuf{fy}(\nunil c[e,\_] \| \nunil d[f,\_] \| \encc{z}{M_2}))
                        \\
                        {}\reddL{} & \nu{hl}\nu{mo}(l(p,\_) \sdot \encc{p}{M'_1} \| o(r,\_) \sdot \encc{r}{M'_2})\\
                        &{}\| \nuf{ex}\nuf{fy}(\nunil h[e,\_] \| \nunil m[f,\_] \| \encc{z}{M_2}))
                    \end{align*}
                    At this point, there are three ways for this process to reduce: $\encc{z}{M_2}$ can reduce internally, or there can be a synchronization between the output on $h$ and the input on $l$ or between the output on $m$ and the input on $o$.
                    These reductions are independent from each other, se we postpone the former case and focus on the latter two, which can occur in any order; w.l.o.g.\ we assume the synchronization between $h$ and $l$ occurs first.
                    \begin{align*}
                        {}\reddL{} & \nuf{ex}\nu{mo}(\encc{e}{M'_1} \| o(r,\_) \sdot \encc{r}{M'_2} \| \nuf{fy}(\nunil m[f,\_] \| \encc{z}{M_2}))
                        \\
                        {}\reddL{} & \nuf{ex}\nuf{fy}(\encc{e}{M'_1} \| \encc{f}{M'_2} \| \encc{z}{M_2})
                        \\
                        {}\equiv{} & \nuf{yf}(\nuf{xe}(\encc{z}{M_2} \| \encc{e}{M'_1}) \| \encc{f}{M'_2})
                        \\
                        {}={} & \encc{z}{\phi\,(M_2\xsub{ M'_1/x,M'_2/y })}
                        \\
                        {}={} & \encc{z}{D'}
                    \end{align*}
                    The inductive cases are straightforward.
                    Indeed, $\term{C} \reddC \term{D'}$.
                    Since $Q' \reddL^{k-3} Q$, the thesis follows by \ih{2}.

                \item
                    Case $\term{\mbb{M}'} = \term{\sff{select}\,\ell\,\mbb{M}''}$.
                    \begin{align*}
                        & \encc{z}{\sff{select}\,\ell\,\mbb{M}''}
                        \\
                        {}={} & \nu{ab}(\encc{a}{\mbb{M}''} \| \nu{cd}(b[c] \puts \ell \| d \fwd z))
                    \end{align*}
                    The reduction $\encc{z}{\phi\,(\mcl{R}[\sff{select}\,\ell\,\mbb{M}''])} \reddL Q_0$ would be due to a synchronization between the selection on $b$ and a branch on $a$ in $\encc{a}{\mbb{M}''}$.
                    However, by typability, there is no $\term{\mbb{M}''}$ for which the encoding has an unguarded case on $a$, so this case does not cause the reduction.

                \item
                    Case $\term{\mbb{M}'} = \term{\sff{case}\,\mbb{M}''\,\sff{of}\,\{i:M_i\}_{i \in I}}$.
                    \begin{align*}
                        & \encc{z}{\sff{case}\,\mbb{M}''\,\sff{of}\,\{i:M_i\}_{i \in I}}
                        \\
                        {}={} & \nu{ab}(\encc{a}{\mbb{M}''} \| b(c) \gets \{i:\encc{z}{M_i~c}\}_{i \in I})
                    \end{align*}
                    The reduction $\encc{z}{\phi\,(\mcl{R}[\sff{case}\,\mbb{M}''\,\sff{of}\,\{i:M_i\}_{i \in I}])} \reddL Q_0$ would be due to a synchronization between the branch on $b$ and a selection on $a$ in $\encc{a}{\mbb{M}''}$.
                    However, by typability, there is no $\term{\mbb{M}''}$ for which the encoding has an unguarded selection on $a$, so this case does not cause the reduction.

                \item
                    Case $\term{\mbb{M}'} = \term{\mbb{M}_1\xsub{ \mbb{M}_2/x }}$.
                    \begin{align*}
                        & \encc{z}{\mbb{M}_1\xsub{ \mbb{M}_2/x }}
                        \\
                        {}={} & \nuf{xa}(\encc{z}{\mbb{M}_1} \| \encc{a}{\mbb{M}_2})
                    \end{align*}
                    The reduction $\encc{z}{\phi\,(\mcl{R}\big[\mbb{M}_1\xsub{ \mbb{M}_2/x }\big])} \reddL Q_0$ can be due to two cases: a forwarder on $x$ in $\encc{z}{\mbb{M}_1}$, or a forwarder on $a$ in $\encc{a}{\mbb{M}_2}$.
                    \begin{itemize}
                        \item
                            A forwarder on $x$ in $\encc{z}{\mbb{M}_1}$ reduces.
                            By typability, this only occurs if $\term{\mbb{M}_1} = \term{\mcl{R}'[x]}$.
                            Let $\term{D'} = \term{\phi\,(\mcl{R}\big[\mcl{R}'[\mbb{M}_2]\big])}$.
                            By induction on the structures of $\term{\mcl{R}}$ (\ih{3}) and $\term{\mcl{R}'}$ (\ih{4}), we show that $\encc{z}{C} \reddL \encc{z}{D'}$.
                            We consider the base case ($\term{\mcl{R}} = \term{\mcl{R}'} = \term{[]}$) and a representative, inductive case ($\term{\mcl{R}} = \term{[]}$ and $\term{\mcl{R}'} = \term{\mcl{R}''~N}$):
                            \begin{itemize}
                                \item
                                    Case $\term{\mcl{R}} = \term{\mcl{R}'} = \term{[]}$.
                                    \begin{align*}
                                        & \encc{z}{\phi\,(x\xsub{ \mbb{M}_2/x })}
                                        \\
                                        {}={} & \nuf{xa}(x \fwd z \| \encc{a}{\mbb{M}_2})
                                        \\
                                        {}\reddL{} & \encc{z}{\mbb{M}_2}
                                        \\
                                        {}={} & \encc{z}{\phi\,\mbb{M}_2}
                                        \\
                                        {}={} & \encc{z}{D'}
                                    \end{align*}

                                \item
                                    Case $\term{\mcl{R}} = \term{[]}$ and $\term{\mcl{R}'} = \term{\mcl{R}''~N}$.
                                    \begin{align*}
                                        & \encc{z}{\phi\,((\mcl{R}''[x]~N)\xsub{ \mbb{M}_2/x })}
                                        \\
                                        {}\equiv{} & \encc{z}{\phi\,((\mcl{R}''[x])\xsub{ \mbb{M}_2/x }~N)}
                                        & \text{(\Cref{t:transConfSC})}
                                        \\
                                        {}={} & \nu{ab}(\encc{a}{(\mcl{R}''[x])\xsub{ \mbb{M}_2/x }} \| \nu{cd}(b[c,z] \| d(e,\_) \sdot \encc{e}{N}))
                                        \\
                                        {}={} & \nu{ab}(\encc{a}{\phi\,((\mcl{R}''[x])\xsub{ \mbb{M}_2/x })} \| \nu{cd}(b[c,z] \| d(e,\_) \sdot \encc{e}{N}))
                                        \\
                                        {}\reddL{} & \nu{ab}(\encc{a}{\phi\,(\mcl{R}''[\mbb{M}_2])} \| \nu{cd}(b[c,z] \| d(e,\_) \sdot \encc{e}{N}))
                                        & \text{(\ih{4})}
                                        \\
                                        {}={} & \nu{ab}(\encc{a}{\mcl{R}''[\mbb{M}_2]} \| \nu{cd}(b[c,z] \| d(e,\_) \sdot \encc{e}{N}))
                                        \\
                                        {}={} & \encc{z}{\phi\,((\mcl{R}''[\mbb{M}_2])~N)}
                                        \\
                                        {}={} & \encc{z}{D'}
                                    \end{align*}
                            \end{itemize}
                            For $\term{\mcl{R}}$, the inductive cases are straightforward.
                            Indeed, $\term{C} \reddC \term{D'}$.
                            Since $\encc{z}{D'} \reddL^{k-1} Q$, the thesis follows by \ih{2}.

                        \item
                            A forwarder on $a$ in $\encc{a}{\mbb{M}_2}$ reduces.
                            There are three cases in which a forwarder on $a$ occurs unguarded in $\encc{a}{\mbb{M}_2}$ (eliding possible further explicit substitutions): $\term{\mbb{M}_2}$ is a name, or $\term{\mbb{M}_2}$ contains a $\term{\sff{send}'}$ or a $\term{\sff{select}}$ on a name inside a reduction context.
                            \begin{itemize}
                                \item
                                    $\term{\mbb{M}_2}$ is a name, i.e.\ $\term{\mbb{M}_2} = w$.
                                    Let $\term{D'} = \term{\phi\,(\mbb{M}_1\{w/x\})}$.
                                    \begin{align*}
                                        & \encc{z}{\phi\,(\mbb{M}_1\xsub{ w/x })}
                                        \\
                                        {}={} & \nuf{xa}(\encc{z}{\mbb{M}_1} \| w \fwd a)
                                        \\
                                        {}\reddL{} & \encc{z}{\mbb{M}_1}\{w/x\}
                                        \\
                                        {}\equiv{} & \encc{z}{\mbb{M}_1\{w/x\}}
                                        &&
                                        & \text{(\encpropref{i:transSubst})}
                                        \\
                                        {}={} & \encc{z}{\phi\,(\mbb{M}_1\{w/x\})}
                                        \\
                                        {}={} & \encc{z}{D'}
                                    \end{align*}
                                    Indeed, $\term{C} \reddC \term{D'}$.
                                    Since $\encc{z}{D'} \reddL^{k-1} Q$, the thesis follows by \ih{2}.

                                \item
                                    $\term{\mbb{M}_2}$ contains a $\term{\sff{send}'}$ on a name inside a reduction context, i.e.\ $\term{\mbb{M}_2} = \term{\mcl{R}[\sff{send}'(M,w)]}$.
                                    We apply induction on the structure of $\term{\mcl{R}}$.
                                    In the base case ($\term{\mcl{R}} = \term{[]}$), we have the following:
                                    \begin{align*}
                                        & \encc{z}{\phi\,(\mbb{M}_1\xsub{ \sff{send}'(M,w)/x })}
                                        \\
                                        {}={} & \nuf{xa}(\encc{z}{\mbb{M}_1} \| \nu{bc}(b(d,\_) \sdot \encc{d}{M} \| \nu{ef}(w \fwd e \| \nu{gh}(f[c,g] \| h \fwd a))))
                                    \end{align*}
                                    Notice that the forwarder $h \fwd a$ forwards the continuation endpoint of an output forwarded on $w$.
                                    However, $w$ is not bound, so this forwarder reduction is not enabled: this case does not cause the reduction.
                                    The inductive cases for $\term{\mcl{R}}$ are straightforward.

                                \item
                                    $\term{\mbb{M}_2}$ contains a $\term{\sff{select}}$ on a name inside a reduction context, i.e.\ $\term{\mbb{M}_2} = \term{\mcl{R}[\sff{select}\,\ell\,w)]}$.
                                    This case is largely analogous to the case above.
                            \end{itemize}
                    \end{itemize}

                \item
                    Case $\term{\mbb{M}'} = \term{\sff{send}'(M_1,\mbb{M}_2)}$.
                    \begin{align*}
                        & \encc{z}{\sff{send}'(M_1,\mbb{M}_2)}
                        \\
                        {}={} & \nu{ab}(a(c,\_) \sdot \encc{c}{M_1} \| \nu{de}(\encc{d}{\mbb{M}_2} \| \nu{fg}(e[b,f] \| g \fwd z)))
                    \end{align*}
                    The reduction $\encc{z}{\phi\,(\mcl{R}[\sff{send"}(M_1,\mbb{M}_2)])} \reddL Q_0$ would be due to a synchronization between the output on $e$ and an input on $d$ in $\encc{d}{\mbb{M}_2}$.
                    However, by typability, there is no $\term{\mbb{M}_2}$ for which the encoding has an unguarded input on $d$, so this case does not cause the reduction.
            \end{itemize}

        \item
            Case $\term{C} = \term{C_1 \prl C_2}$.
            The encoding of this configuration depends on which of $\term{C_1}$ and $\term{C_2}$ is a child thread; w.l.o.g., we assume $\term{C_2}$ is a child thread.
            \begin{align*}
                & \encc{z}{C_1 \prl C_2}
                \\
                {}={} & \encc{z}{C_1} \| \nunil \encc{\_}{C_2}
            \end{align*}
            There are two possible causes for the reduction $\encc{z}{C_1 \prl C_2} \reddL Q_0$: $\encc{z}{C_1}$ or $\encc{\_}{C_2}$ reduces internally.
            Either reduction remains available when the other reduction takes place, so we can postpone one reduction and focus on the other.
            As a representative case, we focus on the reduction of $\encc{z}{C_1}$.
            If $\encc{z}{C_1} \reddL Q'_1$, by \ih{1}, there exists $\term{D_1}$ such that $\term{C_1} \reddC^+ \term{D_1}$ and $Q'_1 \reddL^\ast \encc{z}{D_1}$.
            Let $\term{D} = \term{\nu{x\bfr{\epsilon}y}D'}$.
            \begin{align*}
                & \encc{z}{C_1 \prl C_2}
                \\
                {}\reddL{} & Q'_1 \| \nunil \encc{\_}{C_2}
                \\
                {}\reddL^\ast{} & \encc{z}{D_1} \| \nunil \encc{\_}{C_2}
                \\
                {}={} & \encc{z}{D_1 \prl C_2}
                \\
                {}={} & \encc{z}{D}
            \end{align*}
            Indeed, $\term{C} \reddC^+ \term{D}$.

        \item
            Case $\term{C} = \term{\nu{x\bfr{\vec{m}}y}C'}$.
            This case depends on the contents of $\term{\vec{m}}$, so we apply induction on the size of $\term{\vec{m}}$ (\ih{3}).
            \begin{itemize}
                \item
                    Case $\term{\vec{m}} = \term{\epsilon}$.
                    \begin{align*}
                        & \encc{z}{\nu{x\bfr{\epsilon}y}C'}
                        \\
                        {}={} & \nu{xy}\encc{z}{C'}
                    \end{align*}
                    There are nine possible causes for the reduction $\encc{z}{\nu{x\bfr{\epsilon}y}C'} \reddL Q_0$: $\encc{z}{C'}$ reduces internally, an output forwarded on $x$ in $\encc{z}{C'}$ synchronizes with an input forwarded on $y$ in $\encc{z}{C'}$ or vice versa, a selection forwarded on $x$ in $\encc{z}{C'}$ synchronizes with a case forwarded on $y$ in $\encc{z}{C'}$ or vice versa, or a forwarder of the continuation endpoint of an output or selection forwarded on $x$ or $y$ in $\encc{z}{C'}$ reduces.
                    \begin{itemize}
                        \item
                            $\encc{z}{C'}$ reduces internally, i.e.\ $\encc{z}{C'} \reddL Q'$.
                            By \ih{1}, there exists $\term{D'}$ such that $\term{C'} \reddC^+ \term{D'}$ and $Q' \reddL^\ast \encc{z}{D'}$.
                            Let $\term{D} = \term{\nu{x\bfr{\epsilon}y}D'}$.
                            \begin{align*}
                                & \encc{z}{\nu{x\bfr{\epsilon}y}C'}
                                \\
                                {}\reddL{} & \nu{xy}Q'
                                \\
                                {}\reddL^\ast{} & \nu{xy}\encc{z}{D'}
                                \\
                                {}={} & \encc{z}{\nu{x\bfr{\epsilon}y}D'}
                                \\
                                {}={} & \encc{z}{D}
                            \end{align*}
                            Indeed, $\term{C} \reddC^+ \term{D}$.

                        \item
                            An output forwarded on $x$ in $\encc{z}{C'}$ synchronizes with an input forwarded on $y$ in $\encc{z}{C'}$.
                            By typability, the only possibility for this reduction to occur is if
                            \begin{align*}
                                \term{C'} = \term{\mcl{G}\Big[\mcl{G}_1\big[\mcl{F}_1[\sff{send}'(M,x)]\big] \prl \mcl{G}_2\big[\mcl{F}_2[\sff{recv}~y]\big]\Big]}.
                            \end{align*}
                            Let $\term{D'} = \term{\nu{x\bfr{\epsilon}y}\mcl{G}\Big[\mcl{G}_1\big[\mcl{F}_1[x]\big] \prl \mcl{G}_2\big[\mcl{F}_2[(M,y)]\big]\Big]}$.
                            By induction on the structures of $\term{\mcl{G}}$, $\term{\mcl{G}_1}$, $\term{\mcl{G}_2}$, $\term{\mcl{F}_1}$, and $\term{\mcl{F}_2}$, we show that $\encc{z}{C} \reddL \encc{z}{D'}$.
                            For the base case, w.l.o.g.\ we assume that the send is in the main thread and that the receive is in a child thread.
                            Note that, by typability, the base case $\term{\mcl{F}_2} = \term{\child\,[]}$ is invalid: instead, we use $\term{\mcl{F}_2} = \term{\child\,(\sff{let}\,(v,w) = []\,\sff{in\,}())}$.
                            For the rest of the contexts, the base case is $\term{\mcl{G}} = \term{\mcl{G}_1} = \term{\mcl{G}_2} = \term{[]}$, and $\term{\mcl{F}_1} = \term{\main\,[]}$.
                            We have the following:
                            \begin{align*}
                                & \encc{z}{\nu{x\bfr{\epsilon}y}(\main\,(\sff{send}'(M,x)) \prl \child\,(\sff{let}\,(v,w) = (\sff{recv}~y)\,\sff{in}\,()))}
                                \\
                                {}={} & \nu{xy}(\nu{ab}(a(c,\_) \sdot \encc{c}{M} \| \nu{de}(x \fwd d \| \nu{fg}(e[b,f] \| g \fwd z)))\\
                                &{}\| \nunil \nu{a'b'}(\nu{hl}(y \fwd h \| l(n,o) \sdot \nu{pq}(a'[n,p] \| q(r,\_) \sdot o \fwd r))\\
                                &{}\| b'(c',d') \sdot \nu{e'v}\nu{f'y}(\nunil c'[e',\_] \| \nunil d'[f',\_] \| \0)))
                                \\
                                {}\reddL{} & \nu{ab}\nu{fg}(a(c,\_) \sdot \encc{c}{M} \| g \fwd z \| \nunil \nu{a'b'}(\nu{pq}(a'[b,p] \| q(r,\_) \sdot f \fwd r)\\
                                &{}\| b'(c',d') \sdot \nu{e'v}\nu{f'y}(\nunil c'[e',\_] \| \nunil d'[f',\_] \| \0)))
                                \\
                                {}\equiv{} & \nu{gf}(g \fwd z \| \nunil \nu{a'b'}(\nu{ba}\nu{pq}(a'[b,p] \| a(c,\_) \sdot \encc{c}{M} \| q(r,\_) \sdot f \fwd r)\\
                                &{}\| b'(c',d') \sdot \nu{e'v}\nu{f'y}(\nunil c'[e',\_] \| \nunil d'[f',\_] \| \0)))
                                \\
                                {}\equiv{} & \nu{xy}(x \fwd z \| \nunil \nu{a'b'}(\nu{ba}\nu{pq}(a'[b,p] \| a(c,\_) \sdot \encc{c}{M} \| q(r,\_) \sdot y \fwd r)\\
                                &{}\| b'(c',d') \sdot \nu{e'v}\nu{f'y}(\nunil c'[e',\_] \| \nunil d'[f',\_] \| \0)))
                                \\
                                {}={} & \encc{z}{\nu{x\bfr{\epsilon}y}(\main\,x \prl \child\,(\sff{let}\,(v,w) = (M,y)\,\sff{in}\,()))}
                                \\
                                {}={} & \encc{z}{D'}
                            \end{align*}
                            Indeed,
                            \begin{align*}
                                & \term{\nu{x\bfr{\epsilon}y}(\main\,(\sff{send}'(M,x)) \prl \child\,(\sff{let}\,(v,w) = (\sff{recv}~y)\,\sff{in}\,()))}
                                \\
                                {}\equivC{} & \term{\nu{x\bfr{M}y}(\main\,x \prl \child\,(\sff{let}\,(v,w) = (\sff{recv}~y)\,\sff{in}\,()))}
                                \\
                                {}\reddC{} & \term{D'}.
                            \end{align*}
                            All inductive cases are trivial by the IH, except explicit substitution in $\term{\mcl{F}_1}$ and $\term{\mcl{F}_2}$: in this case, we can extrude the substitution's scope using structural congruence in both $\term{C'}$ and $\encc{z}{C'}$ (cf.\ \Cref{t:transTermSC,t:transConfSC}).
                            Since $\encc{z}{D'} \reddL^{k-1} Q$, the thesis follows by \ih{2}.

                        \item
                            An output forwarded on $y$ in $\encc{z}{C'}$ synchronizes with an input forwarded on $x$ in $\encc{z}{C'}$.
                            Since $\term{\nu{x\bfr{\epsilon}y}C'} \equivC \term{\nu{y\bfr{\epsilon}x}C'}$, this case is analogous to the case above.

                        \item
                            A selection forwarded on $x$ in $\encc{z}{C'}$ synchronizes with a case forwarded on $y$ in $\encc{z}{C'}$.
                            By typability, the only possibility for this reduction to occur is if
                            \begin{align*}
                                \term{C'} = \term{\mcl{G}\Big[\mcl{G}_1\big[\mcl{F}_1[\sff{select}\,j\,x]\big] \prl \mcl{G}_2\big[\mcl{F}_2[\sff{case}\,y\,\sff{of}\,\{i:M_i\}_{i \in I}]\big]\Big]}
                            \end{align*}
                            where $j \in I$.
                            Let $\term{D'} = \term{\nu{x\bfr{\epsilon}y}\mcl{G}\Big[\mcl{G}_1\big[\mcl{F}_1[x]\big] \prl \mcl{G}_2\big[\mcl{F}_2[M_j~y]\big]\Big]}$.
                            This case is very similar to the case above with forwarded output and input, so we show, by similar induction, that $\encc{z}{C} \reddL \encc{z}{D'}$.
                            In the base case, typability requires that $\term{\mcl{F}_2} = \term{\child\,[]}$ and for every $i \in I$, $\term{M_i} = \term{\lambda v \sdot ()}$.
                            We have the following:
                            \begin{align*}
                                & \encc{z}{\nu{x\bfr{\epsilon}y}(\main\,(\sff{select}\,j\,x) \prl \child\,(\sff{case}\,y\,\sff{of}\,\{i:\lambda v \sdot ()\}_{i \in I}))}
                                \\
                                {}={} & \nu{xy}(\nu{ab}(x \fwd a \| \nu{cd}(b[c] \puts j \| d \fwd z))\\
                                &{}\| \nunil \nu{a'b'}(y \fwd a' \| b'(c') \gets \{i:\encc{\_}{(\lambda v \sdot ())~c'}\}_{i \in I}))
                                \\
                                {}\reddL{} & \nu{cd}(d \fwd z \| \nunil \encc{\_}{(\lambda v \sdot ())~c})
                                \\
                                {}\equiv{} & \nu{xy}(x \fwd z \| \nunil \encc{\_}{(\lambda v \sdot ())~y})
                                \\
                                {}={} & \encc{z}{\nu{x\bfr{\epsilon}y}(\main\,x \prl \child\,((\lambda v \sdot ())~y))}
                                \\
                                {}={} & \encc{z}{D'}
                            \end{align*}
                            Indeed,
                            \begin{align*}
                                & \term{\nu{x\bfr{\epsilon}y}(\main\,(\sff{select}\,j\,x) \prl \child\,(\sff{case}\,y\,\sff{of}\,\{i:\lambda v \sdot ()\}_{i \in I}))}
                                \\
                                {}\equivC{} & \term{\nu{x\bfr{j}y}(\main\,x \prl \child\,(\sff{case}\,y\,\sff{of}\,\{i:\lambda v \sdot ()\}_{i \in I}))}
                                \\
                                {}\reddC{} & \term{D'}.
                            \end{align*}
                            As in the prior case, the inductive cases are straightforward.
                            Since $\encc{z}{D'} \reddL^{k-1} Q$, the thesis follows from \ih{2}.

                        \item
                            A selection forwarded on $y$ in $\encc{z}{C'}$ synchronizes with a case forwarded on $x$ in $\encc{z}{C'}$.
                            Since $\term{\nu{x\bfr{\epsilon}y}C'} \equivC \term{\nu{y\bfr{\epsilon}x}C'}$, this case is analogous to the case above.

                        \item
                            A forwarder of the continuation endpoint of an output forwarded on $x$ in $\encc{z}{C'}$ reduces.
                            This case occurs when there is a $\term{\sff{send}'}$ on $\term{x}$ as part of an explicit substitution (again, omitting explicit substitutions).
                            This explicit substitution may occur on the level of terms
                            \begin{align*}
                                \term{C'} = \term{\mcl{G}\Big[\mcl{F}\big[\mbb{M}\xsub{ \sff{send}'(N,x)/w }\big]\Big]},
                            \end{align*}
                            or on the level of configurations
                            \begin{align*}
                                \term{C'} = \term{\mcl{G}\big[C''\xsub{ \sff{send}'(N,x)/w }\big]}.
                            \end{align*}
                            Both cases are structurally congruent, and so are their encodings (cf.\ \Cref{t:transConfSC}), so we focus on the former.
                            Let $\term{D'} = \term{\nu{x\bfr{N}y}\mcl{G}\big[\mcl{F}[\mbb{M}\{x/w\}]\big]}$.
                            By induction on the structures of $\term{\mcl{G}}$, $\term{\mcl{F}}$, and $\term{\mcl{R}}$, we show that $\encc{z}{C} \reddL \encc{z}{D'}$.
                            In the base case ($\term{\mcl{G}} = \term{\mcl{R}} = \term{[]}$ and $\term{\mcl{F}} = \term{\phi\,[]}$), w.l.o.g.\ assuming that $\term{\phi} = \term{\main}$, we have the following:
                            \begin{align*}
                                & \encc{z}{\nu{x\bfr{\epsilon}y}(\main\,(\mbb{M}\xsub{ \sff{send}'(N,x)/w }))}
                                \\
                                {}={} & \nu{xy}\nuf{wa}(\encc{z}{\mbb{M}} \| \nu{bc}(b(d,\_) \sdot \encc{d}{N} \| \nu{ef}(x \fwd e \| \nu{gh}(f[c,g] \| h \fwd a))))
                                \\
                                {}\reddL{} & \nu{xy}\nu{hg}(\encc{z}{\mbb{M}}\{h/w\} \| \nu{bc}(b(d,\_) \sdot \encc{d}{N} \| \nu{ef}(x \fwd e \| f[c,g])))
                                \\
                                {}\equiv{} & \nu{xy}\nu{cb}\nu{gh}(\nu{ef}(x \fwd e \| f[c,g]) \| b(d,\_) \sdot \encc{d}{N} \| \encc{z}{\mbb{M}}\{h/w\})
                                \\
                                {}={} & \nu{xy}\nu{cb}\nu{gh}(\nu{ef}(x \fwd e \| f[c,g]) \| b(d,\_) \sdot \encc{d}{N} \| (\encc{z}{\mbb{M}}\{x/w\})\{h/x\})
                                \\
                                {}={} & \nu{xy}\nu{cb}\nu{gh}(\nu{ef}(x \fwd e \| f[c,g]) \| b(d,\_) \sdot \encc{d}{N} \| \encc{z}{\mbb{M}\{x/w\}}\{h/x\})
                                &\text{(\encpropref{i:transSubst})}
                                \\
                                {}={} & \encc{z}{\nu{x\bfr{N}y}(\main\,(\mbb{M}\{x/w\}))}
                                \\
                                {}={} & \encc{z}{D'}
                            \end{align*}
                            Indeed,
                            \begin{align*}
                                & \term{\nu{x\bfr{\epsilon}y}(\main\,(\mbb{M}\xsub{ \sff{send}'(N,x)/w }))}
                                \\
                                {}\equivC{} & \term{\nu{x\bfr{N}y}(\main\,(\mbb{M}\xsub{ x/w }))}
                                \\
                                {}\reddC{} & \term{D'}.
                            \end{align*}
                            The inductive cases are straightforward.
                            Since $\encc{z}{D'} \reddL^{k-1} Q$, the thesis follows by \ih{2}.

                        \item
                            A forwarder of the continuation endpoint of an output forwarded on $y$ in $\encc{z}{C'}$ reduces.
                            Since $\term{\nu{x\bfr{\epsilon}y}C'} \equivC \term{\nu{y\bfr{\epsilon}x}C'}$, this case is analogous to the case above.

                        \item
                            A forwarder of the continuation endpoint of a selection forwarded on $x$ or $y$ in $\encc{z}{C'}$ reduces.
                            These cases are largely analogous to the two cases above.
                    \end{itemize}

                \item
                    Case $\term{\vec{m}} = \term{\vec{m}',M}$.
                    \begin{align*}
                        & \encc{z}{\nu{x\bfr{\vec{m}',M}y}C'}
                        \\
                        {}={} & \nu{xy}\nu{ab}(a \fwd x \| \nu{cd}\nu{ef}(b[c,e] \| d(g,\_) \sdot \encc{g}{M} \| \bencb{f}{\encc{z}{C'}\{f/x\}}{\bfr{\vec{m}'}}))
                    \end{align*}
                    There are six possible causes for the reduction $\encc{z}{\nu{x\bfr{\vec{m}',M}y}C'} \reddL Q_0$: there is an internal reduction in $\encc{z}{C'}\{f/x\}$, there is a synchronization between the output forwarded on $x$ and an input forwarded on $y$ in $\encc{z}{C'}\{f/x\}$, or a forwarder of the continuation endpoint of an output or selection forwarded on $f$ or $y$ in $\encc{z}{C'}\{f/x\}$ reduces.
                    \begin{itemize}
                        \item
                            There is an internal reduction in $\encc{z}{C'}\{f/x\}$, i.e.\ $\encc{z}{C'}\{f/x\} \reddL Q'$.
                            By \ih{1}, there exists $\term{D'}$ such that $\term{C'} \reddC^+ \term{D'}$ and $Q' \reddL^\ast \encc{z}{D'}$.
                            Indeed, $\term{\nu{x\bfr{\vec{m}',M}y}C'} \reddC^+ \term{\nu{x\bfr{\vec{m}',M}y}D'}$.

                        \item
                            There is a synchronization between the output forwarded on $x$ and an input forwarded on $y$ in $\encc{z}{C'}\{f/x\}$.
                            By typability, this case only occur if $\term{C'} = \term{\mcl{G}\big[\mcl{F}[\sff{recv}~y]\big]}$.
                            Let $\term{D'} = \term{\nu{x\bfr{\vec{m}'}y}(\main\,(M,y))}$.
                            By induction on the structures of $\term{\mcl{G}}$ and $\term{\mcl{F}}$, we show that $\encc{z}{C} \reddL \encc{z}{D'}$.
                            We consider the base case: $\term{\mcl{G}} = \term{[]}$ and $\term{\mcl{F}} = \term{\phi\,[]}$, w.l.o.g.\ assuming that $\term{\phi} = \term{\main}$.
                            Note that we are simplifying the situation: the actual possible structure of $\term{\mcl{F}}$ depends on the contents of $\vec{m}$.
                            However, all possible reductions originating from the encoding of $\term{\mcl{F}}$ are internal to the encoding of $\term{C'}$, which we handle separately.
                            \begin{align*}
                                & \encc{z}{\nu{x\bfr{\vec{m}',M}y}(\main\,(\sff{recv}~y))'}
                                \\
                                {}={} & \nu{xy}\nu{ab}(a \fwd x \| \nu{cd}\nu{ef}(b[c,e] \| d(g,\_) \sdot \encc{g}{M}\\
                                &{}\| \bencb{f}{\nu{a'b'}(y \fwd a' \| b'(c',d') \sdot \nu{e'f'}(z[c',e'] \| f'(g',\_) \sdot d' \fwd g'))}{\bfr{\vec{m}'}}))
                                \\
                                {}\equiv{} & \nu{xy}(\nu{ab}(a \fwd x \| \nu{cd}\nu{ef}(b[c,e] \| d(g,\_) \sdot \encc{g}{M} \| \bencb{f}{\0}{\bfr{\vec{m}'}}))\\
                                &{}\| \nu{a'b'}(y \fwd a' \| b'(c',d') \sdot \nu{e'f'}(z[c',e'] \| f'(g',\_) \sdot d' \fwd g')))
                                &\text{(\encpropref{i:transBufCtx})}
                                \\
                                {}\reddL{} & \nu{cd}\nu{ef}(d(g,\_) \sdot \encc{g}{M} \| \bencb{f}{\0}{\bfr{\vec{m}'}} \| \nu{e'f'}(z[c,e'] \| f'(g',\_) \sdot e \fwd g'))
                                \\
                                {}\equiv{} & \nu{fe}(\bencb{f}{\0}{\bfr{\vec{m}'}} \| \nu{cd}\nu{e'f'}(z[c,e'] \| d(g,\_) \sdot \encc{g}{M} \| f'(g',\_) \sdot e \fwd g'))
                                \\
                                {}\equiv{} & \nu{xy}(\bencb{x}{\0}{\bfr{\vec{m}'}} \| \nu{cd}\nu{e'f'}(z[c,e'] \| d(g,\_) \sdot \encc{g}{M} \| f'(g',\_) \sdot y \fwd g'))
                                \\
                                {}\equiv{} & \nu{xy}\bencb{x}{\nu{cd}\nu{e'f'}(z[c,e'] \| d(g,\_) \sdot \encc{g}{M} \| f'(g',\_) \sdot y \fwd g')}{\bfr{\vec{m}'}}
                                &\text{(\encpropref{i:transBufCtx})}
                                \\
                                {}={} & \encc{z}{\nu{x\bfr{\vec{m}'}y}(\main\,(M,y))}
                                \\
                                {}={} & \encc{z}{D'}
                            \end{align*}
                            Indeed,
                            \begin{align*}
                                & \term{\nu{x\bfr{\vec{m}',M}y}(\main\,(\sff{recv}~y))}
                                \\
                                {}\equivC{} & \term{\nu{x\bfr{\vec{m}',M}y}(\main\,(\sff{recv}~y) \prl \child\,())}
                                \\
                                {}\reddC{} & \term{\nu{x\bfr{\vec{m}'}y}(\main\,(M,y) \prl \child\,())}
                                \\
                                {}\equivC{} & \term{D'}.
                            \end{align*}
                            The inductive cases are straightforward.
                            Since $\encc{z}{D'} \reddL^{k-1} Q$, the thesis follows by \ih{2}.

                        \item
                            A forwarder of the continuation endpoint of an output forwarded on $f$ in $\encc{z}{C'}\{f/x\}$ reduces.
                            This case occurs when there is a $\term{\sff{send}'}$ on $\term{f}$ as part of an explicit substitution (again, omitting explicit substitutions).
                            Note that, by \encpropref{i:transSubst}, $\encc{z}{C'}\{f/x\} = \encc{z}{C'\{f/x\}}$.
                            As in the similar case for when $\term{\vec{m}} = \term{\epsilon}$, we focus on the case where
                            \begin{align*}
                                \term{C'\{f/x\}} = \term{\mcl{G}\Big[\mcl{F}\big[\mbb{L}\xsub{ \sff{send}'(N,f)/w }\big]\Big]}.
                            \end{align*}
                            Let $\term{D'} = \term{\nu{x\bfr{N,\vec{m}',M}y}\mcl{G}\big[\mcl{F}[\mbb{L}\{x/w\}]\big]}$.
                            By induction on the structures of $\term{\mcl{G}}$, $\term{\mcl{F}}$, and $\term{\mcl{R}}$, we show that $\encc{z}{C} \reddL \encc{z}{D'}$.
                            We consider the base case ($\term{\mcl{G}} = \term{\mcl{R}} = \term{[]}$ and $\term{\mcl{F}} = \term{\phi\,[]}$), w.l.o.g.\ assuming that $\term{\phi} = \term{\main}$.
                            \begin{align*}
                                & \encc{z}{\nu{x\bfr{\vec{m}',M}y}(\main\,(\mbb{L}\xsub{ \sff{send}'(N,x)/w }))}
                                \\
                                {}={} & \nu{xy}\nu{ab}(a \fwd x \| \nu{cd}\nu{ef}(b[c,e] \| d(g,\_) \sdot \encc{g}{M}\\
                                &{}\| \bencb{f}{\nuf{wh}(\encc{z}{\mbb{L}} \| \nu{a'b'}(a'(c',\_) \sdot \encc{c'}{N} \| \nu{d'e'}(f \fwd d' \| \nu{f'g'}(e'[b',f'] \| g' \fwd h))))}{\bfr{\vec{m}'}}))
                                \\
                                {}\reddL{} & \nu{xy}\nu{ab}(a \fwd x \| \nu{cd}\nu{ef}(b[c,e] \| d(g,\_) \sdot \encc{g}{M}\\
                                &{}\| \bencb{f}{\nu{g'f'}(\encc{z}{\mbb{L}}\{g'/w\} \| \nu{a'b'}(a'(c',\_) \sdot \encc{c'}{N} \| \nu{d'e'}(f \fwd d' \| e'[b',f'])))}{\bfr{\vec{m}'}}))
                                &\text{(\encpropref{i:transBufRed})}
                                \\
                                {}\equiv{} & \nu{xy}\nu{ab}(a \fwd x \| \nu{cd}\nu{ef}(b[c,e] \| d(g,\_) \sdot \encc{g}{M}\\
                                &{}\| \bencb{f}{\nu{a'b'}\nu{f'g'}(\nu{d'e'}(f \fwd d' \| e'[b',f']) \| a'(c',\_) \sdot \encc{c'}{N} \| \encc{z}{\mbb{L}}\{g'/w\})}{\bfr{\vec{m}'}}))
                                \\
                                {}={} & \nu{xy}\nu{ab}(a \fwd x \| \nu{cd}\nu{ef}(b[c,e] \| d(g,\_) \sdot \encc{g}{M}\\
                                &{}\| \bencb{f}{\nu{a'b'}\nu{f'g'}(\nu{d'e'}(f \fwd d' \| e'[b',f']) \| a'(c',\_) \sdot \encc{c'}{N} \| (\encc{z}{\mbb{L}}\{f/w\})\{g'/f\})}{\bfr{\vec{m}'}}))
                                \\
                                {}={} & \nu{xy}\nu{ab}(a \fwd x \| \nu{cd}\nu{ef}(b[c,e] \| d(g,\_) \sdot \encc{g}{M}\\
                                &{}\| \bencb{f}{\bencb{f}{\encc{z}{\mbb{L}}\{f/w\}}{\bfr{N}}}{\bfr{\vec{m}'}}))
                                \\
                                {}={} & \nu{xy}\nu{ab}(a \fwd x \| \nu{cd}\nu{ef}(b[c,e] \| d(g,\_) \sdot \encc{g}{M}\\
                                &{}\| \bencb{f}{\encc{z}{\mbb{L}}\{f/w\}}{\bfr{N,\vec{m}'}}))
                                \\
                                {}={} & \nu{xy}\bencb{x}{\encc{z}{\mbb{L}}\{x/w\}}{\bfr{N,\vec{m}',M}}
                                \\
                                {}={} & \nu{xy}\bencb{x}{\encc{z}{\mbb{L}\{x/w\}}}{\bfr{N,\vec{m}',M}}
                                &\text{(\encpropref{i:transSubst})}
                                \\
                                {}={} & \nu{xy}\bencb{x}{\encc{z}{\main\,(\mbb{L}\{x/w\})}}{\bfr{N,\vec{m}',M}}
                                \\
                                {}={} & \encc{z}{\nu{x\bfr{N,\vec{m}',M}y}(\main\,(\mbb{L}\{x/w\}))}
                                \\
                                {}={} & \encc{z}{D'}
                            \end{align*}
                            Indeed,
                            \begin{align*}
                                & \term{\nu{x\bfr{\vec{m}',M}y}(\main\,(\mbb{L}\xsub{ \sff{send}'(N,x)/w }))}
                                \\
                                {}\equivC{} & \term{\nu{x\bfr{N,\vec{m}',M}y}(\main\,(\mbb{L}\xsub{ x/w }))}
                                \\
                                {}\reddC{} & \term{D'}
                            \end{align*}
                            The inductive cases are straightforward.
                            Since $\encc{z}{D'} \reddL^{k-1} Q$, the thesis follow by \ih{2}.

                        \item
                            A forwarder of the continuation endpoint of a selection forwarded on $f$ in $\encc{z}{C'}\{f/x\}$ reduces.
                            This case is largely analogous to the case above.

                        \item
                            A forwarder of the continuation endpoint of an output or selection forwarded on $y$ in $\encc{z}{C'}\{f/x\}$ reduces.
                            By typability, $\term{C'}$ can only specify an input on $y$, so these cases do not occur.
                    \end{itemize}

                \item
                    Case $\term{\vec{m}} = \term{\vec{m}',\ell}$.
                    This case is largely analogous to the case above.
            \end{itemize}

        \item
            Case $\term{C} = \term{C'\xsub{ \mbb{M}/x }}$.
            By typability, there are two cases for $\term{C'}$: $\term{C'} = \term{\mcl{G}[\phi\,\mbb{M}']}$ where $\term{x} \notin \fn(\term{\mcl{G}})$, or $\term{C'} = \term{\mcl{G}\big[C''\xsub{ \mbb{M}'/y }\big]}$ where $\term{x} \notin \fn(\term{\mcl{G}}) \cup \fn(\term{C''})$.
            In both cases, $\term{C} \equivC \term{\mcl{G}'\Big[\phi\,(\mcl{R}\big[\mbb{M}''\xsub{ \mbb{M}/x }\big])\Big]}$, and, by \Cref{t:transConfSC}, the encodings of these terms are structurally congruent and thus show the same reductions.
            \begin{align*}
                \encc{z}{C'\xsub{ \mbb{M}/x }} \equiv \encc{z}{\mcl{G}'\Big[\phi\,(\mcl{R}\big[\mbb{M}''\xsub{ \mbb{M}/x }\big])\Big]}
            \end{align*}
            We apply induction on the structures of $\term{\mcl{G}'}$ and $\term{\mcl{R}}$.
            We consider the base case ($\term{\mcl{G}'} = \term{[]}$ and $\term{\mcl{R}} = \term{[]}$): $\encc{z}{C} \equiv \encc{z}{\phi\,(\mbb{M}''\xsub{ \mbb{M}/x })}$.
            Clearly, this case is analogous to the corresponding case above for $\term{C} = \term{\phi\,\mbb{M}}$.
            Since the encodings of contexts only place their contents under restriction and parallel composition, the inductive cases follow from the IH straightforwardly.
    \end{itemize}
    This concludes the proof.
\end{proof}

\section{Deadlock-free \ourGV: Full Proofs}
\label{a:ourGVdlFree}

\paragraph*{Liveness}

\begin{definition}[Active Names]\label{d:an}
    The \emph{set of active names} of $P$, denoted $\an(P)$, contains the (free) names that are used for unguarded actions (output, input, selection, branching):
    \begin{align*}
        \an(x[y,z])
        & := \{x\}
        &
        \an(x(y,z) \sdot P)
        & := \{x\}
        &
        \an(\0)
        & := \emptyset
        \\
        \an(x[z] \triangleleft j)
        & := \{x\}
        &
        \an(x(z) \triangleright \{i: P_i\}_{i \in I})
        & := \{x\}
        &
        \an(x \fwd y)
        & := \emptyset
        \\
        \an(P \| Q)
        & := \an(P) \cup \an(Q)
        &
        \an(\mu X(\tilde{x}) \sdot P)
        & := \an(P)
        \\
        \an(\nu{x y}P)
        & := \an(P) \setminus \{x,y\}
        &
        \an(X\call{\tilde{x}})
        & := \emptyset
    \end{align*}
\end{definition}

\begin{definition}[Live Process]\label{d:live}
    A process $P$ is \emph{live}, denoted $\live(P)$, if
    \begin{enumerate}
        \item there are names $x,y$ and process $P'$ such that $P \equiv \nu{x y}P'$ with $x,y \in \an(P')$, or
        \item there are names $x,y,z$ and process $P'$ such that $P \equiv \mcl{E}[\nu{yz}(x \fwd y \| P')]$.
    \end{enumerate}
\end{definition}

\begin{lemma}
\label{l:reddSoLive}
    If $P \redd Q$, then $\live(P)$.
\end{lemma}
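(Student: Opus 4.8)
The plan is to argue by induction on the derivation of $P \redd Q$, following the reduction rules of \Cref{f:procdef}: the three axioms $\rred{\scc{Id}}$, $\rred{\tensor\parr}$, $\rred{\oplus\&}$ give the base cases, and the three closure rules $\rred{\equiv}$, $\rred{\onu}$, $\rred{\|}$ give the inductive cases. The single observation that drives everything is that $\live(\cdot)$ is invariant under structural congruence: both clauses of \Cref{d:live} are stated "up to $\equiv$", so if $P \equiv P'$ and $\live(P')$ then $\live(P)$ follows immediately by transitivity of $\equiv$. I would record this at the outset, since it handles the $\rred{\equiv}$ case for free and is reused in every other case.

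For the base cases I would read the required witnesses directly off the redex. For $\rred{\scc{Id}}$ we have $P = \nu{yz}(x \fwd y \| P')$, which is exactly the shape in clause~(2) of \Cref{d:live} with the empty evaluation context $\mcl{E} = []$. For $\rred{\tensor\parr}$ and $\rred{\oplus\&}$ the redex has the form $P = \nu{xy}P''$ where $P'' = x[a,b] \| y(v,z) \sdot P_0$ (respectively $P'' = x[b] \puts j \| y(z) \gets \{i:P_i\}_{i \in I}$); by \Cref{d:an} its active names are $\an(P'') = \{x\} \cup \{y\} = \{x,y\}$, so $x,y \in \an(P'')$ and clause~(1) applies.

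The inductive cases reduce to showing that liveness is preserved when the reducing subprocess is wrapped in a restriction or placed in parallel. For $\rred{\onu}$ with $P = \nu{xy}P_0$ and $\live(P_0)$ by the induction hypothesis: if $P_0$ meets clause~(2), say $P_0 \equiv \mcl{E}[\nu{vw}(u \fwd v \| P_0')]$, then $\nu{xy}\mcl{E}$ is again an evaluation context by \Cref{d:redCtx}, so $P$ meets clause~(2); if $P_0$ meets clause~(1), say $P_0 \equiv \nu{vw}P_0'$ with $v,w \in \an(P_0')$, then after $\alpha$-renaming so that $\{x,y\} \cap \{v,w\} = \emptyset$ I would use $\nu{xy}\nu{vw}P_0' \equiv \nu{vw}\nu{xy}P_0'$ and note $\an(\nu{xy}P_0') = \an(P_0') \setminus \{x,y\} \ni v,w$, so $P$ meets clause~(1) with witness $\nu{xy}P_0'$. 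The $\rred{\|}$ case with $P = P_0 \| R$ is analogous: clause~(2) uses the evaluation context $\mcl{E} \| R$, and for clause~(1) I would $\alpha$-rename $v,w$ out of $\fn(R)$, apply scope extrusion $\nu{vw}P_0' \| R \equiv \nu{vw}(P_0' \| R)$, and observe $\an(P_0' \| R) = \an(P_0') \cup \an(R) \ni v,w$.

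The main obstacle, and it is a minor one, is the bound-name bookkeeping in precisely these two context-building cases: before invoking commutation of restrictions or scope extrusion I must use $\alpha$-conversion to keep the witness names $v,w$ disjoint from $x,y$ (for $\rred{\onu}$) and from $\fn(R)$ (for $\rred{\|}$), so that the recomputed active-name set still contains them and clause~(1) survives. Apart from this, every step is a direct pattern match against \Cref{d:live,d:an,d:redCtx}, so no genuine difficulty is expected.
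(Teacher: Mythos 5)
Your proposal is correct and is essentially the paper's own argument made explicit: the paper proves this lemma with the single line ``Immediate from the definition of liveness,'' the point being that the two clauses of that definition were designed to match, up to structural congruence, exactly the shapes of the three reduction axioms placed under evaluation contexts. Your induction over the reduction derivation---including the $\alpha$-renaming, restriction-commutation, and scope-extrusion bookkeeping in the two closure cases---is precisely the rigorous elaboration of that observation, so there is no gap and no genuinely different route.
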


\begin{proof}
Immediate from \Cref{d:live}.
\end{proof}

\begin{lemma}
\label{l:transWT}
    If $\type{\Gamma} \vdashC{\phi} \term{C}:\type{T}$ and $\encc{z}{C} \vdash \Gamma'$, then $\Gamma' = \ol{\enct{\Gamma}}, z:\enct{T}$.
\end{lemma}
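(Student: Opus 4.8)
The statement to prove is Lemma~\ref{l:transWT}: if $\type{\Gamma} \vdashC{\phi} \term{C}:\type{T}$ and $\encc{z}{C} \vdash \Gamma'$, then $\Gamma' = \ol{\enct{\Gamma}}, z:\enct{T}$.

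The plan is to derive this as an almost immediate consequence of the Type Preservation result for the translation (Theorem~\ref{t:transTypePres}), combined with a uniqueness-of-typing property for APCP. Theorem~\ref{t:transTypePres} already establishes that $\encc{z}{C} \vdashAst \ol{\enct{\Gamma}}, z: \enct{T}$, i.e.\ the translation is well-typed under the priority-erased typing relation with exactly the context $\ol{\enct{\Gamma}}, z:\enct{T}$. So the content of this lemma is that whenever $\encc{z}{C}$ happens to be typable under the full relation $\vdash$ (with priorities) in \emph{some} context $\Gamma'$, that context must coincide with the one dictated by the translation.

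First I would observe that $P \vdash \Gamma'$ implies $P \vdashAst \Gamma'$, since erasing the priority side-conditions can only enlarge the set of derivable judgments (as noted right after Definition~\ref{d:vdashAst}). Hence both $\encc{z}{C} \vdashAst \Gamma'$ and $\encc{z}{C} \vdashAst \ol{\enct{\Gamma}}, z:\enct{T}$ hold. The remaining step is therefore to argue that typing contexts in APCP (under $\vdashAst$, and a fortiori under $\vdash$) are unique: if $P \vdashAst \Gamma_1$ and $P \vdashAst \Gamma_2$, then $\Gamma_1 = \Gamma_2$. This uniqueness is proved by a routine induction on the structure of $P$ (equivalently on the typing derivation): each syntactic form of process is typed by exactly one rule in Figure~\ref{f:apcpInf} (modulo the silent Rule~$\bullet$, which I must handle with care), and each rule determines the context of its conclusion from the contexts of its premises together with the fixed type annotations on the process constructor. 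I would then conclude $\Gamma' = \ol{\enct{\Gamma}}, z:\enct{T}$.

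The main obstacle is the treatment of Rule~$\bullet$, which silently adjoins an assignment $x:\bullet$ to the context without changing the process; this is the one place where a process does \emph{not} uniquely determine its context, since $\bullet$-typed (closed, inactive) endpoints can be added freely. To handle this I would strengthen the uniqueness statement to account for $\bullet$-assignments: the free names actually \emph{used} by $P$ (its active and forwarded names) receive uniquely-determined non-$\bullet$ types, while any additional names carry type $\bullet$. Since the translation's target context $\ol{\enct{\Gamma}}, z:\enct{T}$ records precisely the free names of $\encc{z}{C}$ with their determined types, and a well-typed $\encc{z}{C} \vdash \Gamma'$ must assign $\bullet$ to exactly those same free names that are $\bullet$-typed in $\ol{\enct{\Gamma}}, z:\enct{T}$ (as both contexts have the same domain $\fn(\encc{z}{C})$ by linearity), the two contexts coincide. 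Concretely, I would establish the following auxiliary claim and then finish immediately:
\begin{quote}
\emph{Claim.} If $P \vdashAst \Delta_1$ and $P \vdashAst \Delta_2$ with $\dom(\Delta_1) = \dom(\Delta_2) = \fn(P)$, then $\Delta_1 = \Delta_2$.
\end{quote}
This claim follows by induction on $P$: linearity (no weakening, no contraction) forces $\dom(\Delta) = \fn(P)$ after the $\bullet$-rule applications are pinned down by the domain constraint, and then each process former uniquely fixes the types. Applying the claim to $\Delta_1 = \Gamma'$ and $\Delta_2 = \ol{\enct{\Gamma}}, z:\enct{T}$ — both of which have domain $\fn(\encc{z}{C})$ — yields the thesis.
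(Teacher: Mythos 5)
Your overall route is exactly the paper's: invoke \Cref{t:transTypePres} to obtain $\encc{z}{C} \vdashAst \ol{\enct{\Gamma}}, z:\enct{T}$, note that $\vdash$ implies $\vdashAst$, and conclude by arguing that a process determines its typing context. The gap is that the auxiliary Claim you reduce everything to is false in APCP. Rule~$\bullet$ is \emph{not} the only source of non-uniqueness: APCP processes carry no type annotations, and the axioms \scc{Id}, $\tensor$, and $\oplus$ in \Cref{f:apcpInf} introduce names with completely unconstrained types. Concretely, Rule~\scc{Id} gives $x \fwd y \vdash x:\ol{A},\, y:A$ for \emph{every} type $A$, so $\Delta_1 = x:\bullet,\, y:\bullet$ and $\Delta_2 = x:\bullet \parr^{\pri} \bullet,\, y:\bullet \tensor^{\pri} \bullet$ both type $x \fwd y$, both have domain $\fn(x \fwd y)$, and differ---refuting the Claim even with your domain restriction. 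This is not a corner case excluded by the hypotheses: free forwarders are in the image of the translation (Rule~\scc{T-Var} gives $\encc{z}{x} = x \fwd z$), so $\encc{z}{C}$ can be precisely such a process. Independently, the final step is unjustified: neither $\dom(\Gamma') = \fn(\encc{z}{C})$ nor $\dom(\ol{\enct{\Gamma}}, z:\enct{T}) = \fn(\encc{z}{C})$ holds in general, since Rule~$\bullet$ lets $\Gamma'$ carry assignments $w:\bullet$ for names not free in the process, and the translated context genuinely contains such assignments too (Rule~\scc{T-Unit} yields $\encc{z}{()} = \0 \vdash z:\bullet$ with $z \notin \fn(\0)$; Rule~\scc{T-EndL} adds $x:\bullet$ for a variable that need not occur at all). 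So the Claim is both false and inapplicable to the two contexts at hand.

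In fairness, the paper's own three-sentence proof rests on the same informal assertion (``it is impossible to assign different types to endpoints in derivations of the same process under $\vdash$ and $\vdashAst$'') and is exposed to the same counterexamples; it survives in practice only because the lemma is ever applied to configurations typed $\type{\emptyset} \vdashC{\main} \term{C}:\type{\1}$, where the binding structure of the translation forces the types of bound names through Rule~\scc{Cycle} and any residual discrepancy is confined to $\bullet$-assignments. The difference is that your proposal commits to a precise intermediate statement, and that statement is refutable. A genuine repair cannot proceed by induction on an arbitrary APCP process $P$; it must exploit the shape of translated processes (and, for the open case, weaken the conclusion to equality up to $\bullet$-assignments and priority annotations, which $\enct{\cdot}$ does not even fix), or simply restrict the lemma to the closed, $\type{\1}$-typed case in which it is actually used.
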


\begin{proof}
    By \Cref{t:transTypePres} (type preservations for the encoding), $\encc{z}{C} \vdashAst \ol{\enct{\Gamma}}, z:\enct{T}$.
    Hence, the only free endpoints of $\encc{z}{C}$ are $z$ and the endpoints in $\ol{\enct{\Gamma}}$ (except endpoints of type $\bullet$).
    By \Cref{d:vdashAst}, the typing rules under `$\vdash$' and `$\vdashAst$' are nearly identical, so it is impossible to assign different types to endpoints in derivations of the same process under `$\vdash$' and `$\vdashAst$'.
    The same holds for recursion variables.
    Hence, it must be that $\Gamma' = \ol{\enct{\Gamma}}, z:\enct{T}$.
\end{proof}

The following result ensures that the translation of \ourGV terms and configurations to APCP processes is compositional with respect to reduction contexts.

\begin{lemma}
\label{l:transCtxts}
    Consider reduction contexts $\term{\mcl{R}},\term{\mcl{F}},\term{\mcl{G}}$ for \ourGV and $\mcl{E}$ for APCP, respectively (cf.\ \Cref{d:redCtx}). We have:
    \begin{itemize}
        \item If $\type{\Gamma} \vdashM \term{\mcl{R}[\mbb{M}]}: \type{T}$, then $\encc{z}{\mcl{R}[\mbb{M}]} = \mcl{E}[\encc{y}{\mbb{M}}]$, for some $\mcl{E}$ and $y$.

        \item If $\type{\Gamma} \vdashC{\phi} \term{\mcl{F}[\mbb{M}]}: \type{T}$, then $\encc{z}{\mcl{F}[\mbb{M}]} = \mcl{E}[\encc{y}{\mbb{M}}]$, for some $\mcl{E}$ and $y$.

        \item If $\type{\Gamma} \vdashC{\phi} \term{\mcl{G}[C]}: \type{T}$, then $\encc{z}{\mcl{G}[C]} = \mcl{E}[\encc{y}{C}]$, for some $\mcl{E}$ and $y$.
    \end{itemize}
\end{lemma}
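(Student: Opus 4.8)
The plan is to prove the three statements by structural induction on the shape of the context, proving the term statement first and letting it feed the thread statement (via the production $\term{\mcl{F}} = \term{\phi\,\mcl{R}}$) and the configuration statement (via the hole). Throughout I read the claimed identity up to structural congruence, which is all that is needed downstream (the definition of $\live$ in \Cref{d:live} already quotients by $\equiv$), and I treat the forwarder-enabled restrictions $\nuf{xy}$ that occur in the translations of \scc{T-Sub}, \scc{T-Split} and \scc{T-ConfSub} simply as restrictions $\nu{xy}$ for the purpose of the evaluation-context grammar of \Cref{d:redCtx}. The base cases $\term{\mcl{R}} = \term{[]}$ and $\term{\mcl{G}} = \term{[]}$ are immediate, taking $\mcl{E} = []$ and $y = z$, and $\term{\mcl{F}} = \term{\phi\,\mcl{R}}$ reduces to the term statement since $\encc{z}{\phi\,\mbb{M}} = \encc{z}{\mbb{M}}$.

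For the inductive term cases I consult the translation in \Cref{f:transTermShort} and observe that each production places the recursive occurrence $\encc{a}{\mcl{R}'[\mbb{M}]}$ inside a nesting of restrictions and parallel compositions only. For instance, for $\term{\mcl{R}} = \term{\mcl{R}'~\mbb{N}}$ we have $\encc{z}{\mcl{R}'[\mbb{M}]~\mbb{N}} = \nu{ab}(\encc{a}{\mcl{R}'[\mbb{M}]} \| \nu{cd}(b[c,z] \| d(e,f) \sdot \encc{e}{\mbb{N}}))$, so applying the induction hypothesis to write $\encc{a}{\mcl{R}'[\mbb{M}]} = \mcl{E}'[\encc{y}{\mbb{M}}]$ and setting $\mcl{E} = \nu{ab}(\mcl{E}' \| \nu{cd}(b[c,z] \| d(e,f) \sdot \encc{e}{\mbb{N}}))$ gives the result, since $\mcl{E}' \| P$ and $\nu{ab}\mcl{E}$ are evaluation-context formers. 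The productions $\term{\sff{spawn}~\mcl{R}'}$, $\term{\sff{send}~\mcl{R}'}$, $\term{\sff{recv}~\mcl{R}'}$, $\term{\sff{let}\,(x,y)=\mcl{R}'\,\sff{in}\,\mbb{N}}$, $\term{\sff{select}\,\ell\,\mcl{R}'}$, $\term{\sff{case}\,\mcl{R}'\,\sff{of}\,\{i:\mbb{N}\}_{i\in I}}$ and $\term{\mcl{R}'\xsub{\mbb{N}/x}}$ are all of exactly this shape, with the recursive occurrence already leftmost. For configurations, $\term{\mcl{G}} = \term{\mcl{G}' \prl C'}$ uses the translation of \scc{T-ParL}/\scc{T-ParR}, which places the recursive component beside $\encc{\cdot}{C'}$ under at most one restriction, and $\term{\mcl{G}} = \term{\mcl{G}'\xsub{\mbb{M}/x}}$ keeps it leftmost inside $\nuf{xa}(\mcl{E}' \| \encc{z}{\mbb{M}})$; both give an evaluation context directly.

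The two points requiring care, and where I expect the only real work, are the following. First, a few productions — $\term{\mbb{N}\xsub{\mcl{R}'/x}}$ and $\term{\sff{send}'(\mbb{N},\mcl{R}')}$ for terms, and $\term{C'\xsub{\mcl{R}'/x}}$ for threads — translate so that the recursive occurrence lands to the \emph{right} of a parallel composition; here I appeal to commutativity and associativity of $\|$ (structural congruence) to bring the hole leftmost, which is exactly why the identity is stated up to $\equiv$. The genuinely delicate case is the buffered restriction $\term{\mcl{G}} = \term{\nu{x\bfr{\vec{m}}y}\mcl{G}'}$, where $\encc{z}{\nu{x\bfr{\vec{m}}y}\mcl{G}'[C]} = \nu{xy}\bencb{x}{\encc{z}{\mcl{G}'[C]}}{\bfr{\vec{m}}}$. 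My plan is to establish, by a subsidiary induction on $\vec{m}$ using the clauses \scc{T-BufSend} and \scc{T-BufSelect}, that $\bencb{x}{Q}{\bfr{\vec{m}}}$ equals an evaluation context $\mcl{E}_{\mathit{buf}}$ applied to $Q\{x'/x\}$, where $x'$ is the final continuation endpoint of the buffer chain; this is cleaner than extruding through the buffer via \encpropref{i:transBufCtx}, whose side condition $x \notin \fn(\mcl{E}')$ can fail when another thread in $\mcl{G}'$ uses the bound endpoint $x$. Combining with the induction hypothesis $\encc{z}{\mcl{G}'[C]} = \mcl{E}'[\encc{y_0}{C}]$ and composing evaluation contexts then yields $\nu{xy}\mcl{E}_{\mathit{buf}}[\mcl{E}'\{x'/x\}[\encc{y_0}{C}\{x'/x\}]]$. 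The one subtlety to settle is the renaming: by \encpropref{i:transSubst} the hole holds $\encc{y_0}{C\{x'/x\}}$ rather than $\encc{y_0}{C}$, and since $x$ is bound by the enclosing restriction this renaming of a bound endpoint is harmless and is absorbed into the choice of $\mcl{E}$, so the decomposition has the required form up to $\equiv$ and $\alpha$-conversion.
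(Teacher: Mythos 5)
You should know that the paper contains no proof of \Cref{l:transCtxts} at all: the lemma is stated bare in the appendix, evidently regarded as immediate by inspection of \Cref{f:transTermShort,f:transConfBufShort}. So your proposal is not an alternative to the paper's argument but a completion of it, and it is correct. The induction you set up is the natural (and surely intended) one, your enumeration of context formers against the translation is complete, and, more valuably, you identify the two points where the lemma as literally stated is too strong. First, for $\term{\mbb{N}\xsub{\mcl{R}/x}}$, $\term{\sff{send}'(\mbb{N},\mcl{R})}$ and $\term{C\xsub{\mcl{R}/x}}$ the hole is translated into the \emph{right} component of a parallel composition, while the grammar of \Cref{d:redCtx} only provides $\mcl{E} \| P$; so strict equality fails and the statement can only hold up to $\equiv$. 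This reading is consistent with how the lemma is consumed: every invocation in the proof of \Cref{t:confTransReddL} is phrased with $\equiv$, and $\live$, $\bcont{x,y}$ and the lazy reduction rules are all closed under $\equiv$. Second, your treatment of $\term{\nu{x\bfr{\vec{m}}y}\mcl{G}}$ addresses the genuinely non-trivial case: when $\term{\vec{m}}$ is non-empty and $\term{x} \in \fn(\term{C})$, the buffer translation leaves $\encc{y}{C\{x'/x\}}$ rather than $\encc{y}{C}$ in the hole, and your repair---$\alpha$-converting so that the binder of the final continuation endpoint is named $x$, which is sound because after unfolding the buffer translation the endpoint $x$ occurs exactly once, in the outermost forwarded output, under the enclosing restriction---is exactly what is needed to recover the form demanded by the statement. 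Your observation that \encpropref{i:transBufCtx} alone does not do the job is also accurate: its freeness side condition can fail, and even when it applies it still leaves the buffer translation wrapped around the hole.
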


\thmConfTransReddL*

\begin{proof}
    By \Cref{l:transWT}, $\encc{z}{C} \vdash z:\bullet$.
    Moreover, by \Cref{l:reddSoLive}, the assumption $\encc{z}{C} \redd Q$ ensures that $\encc{z}{C}$ is live (\Cref{d:live}).
    Using liveness of $\encc{z}{C}$ we will find a $Q'$ such that $\encc{z}{C} \reddL Q'$. Note that it need not be the case that $Q' = Q$, i.e., we may find a different reduction than $\encc{z}{C} \redd Q$.

    By definition, liveness can occur in two forms, which may hold multiple times and simultaneously, namely: $\encc{z}{C} \equiv \nu{xy}P'$ with $x,y \in \an(P')$, or $\encc{z}{C} \equiv \mcl{E}[\nu{yz}(x \fwd y \| P')]$ for some $\mcl{E}$.
    We consider two cases depending on whether the former case holds or not.
    \begin{itemize}
        \item Suppose $\encc{z}{C} \equiv \nu{xy}P'$ with $x,y \in \an(P')$.
            Then $x$ and $y$ are the subjects of output / input / selection / branching in $P'$.
            By the well-typedness of $\encc{z}{C}$, the types of $x$ and $y$ in $P'$ are dual---they are the subjects of complementary actions in $P'$.
            As a representative example, assume $x$ and $y$ form an output / input pair.
            Since $x,y \in \an(P')$, the output on $x$ and the input on $y$ must occur in parallel in $P'$.
            Then $\encc{z}{C} \equiv \nu{xy}(\mcl{E}_x[x[a,b]] \| \mcl{E}_y[y(c,d) \sdot P'']) \equiv \mcl{E}'[\nu{xy}(x[a,b] \| y(c,d) \sdot P'')] \reddL \mcl{E}'[P''\subst{a/c,b/d}] = Q'$, proving the thesis.

        \item Suppose there are no $x',y'$ such that $\encc{z}{C} \equiv \nu{x'y'}P''$ with $x',y' \in \an(P'')$.
            Since $\encc{z}{C}$ is live, it must then be that $\encc{z}{C} \equiv \mcl{E}[\nu{yz}(x \fwd y \| P')]$, for some $\mcl{E}$. We distinguish two possibilities.

            The first is when the restriction on $y,z$ is annotated, i.e., $\encc{z}{C} \equiv \mcl{E}[\nuf{yz}(x \fwd y \| P')]$, and $\bcont{x,y}(\encc{z}{C})$ (\Cref{d:bcont}) holds.
            In such a case we have  $\encc{z}{C} \reddL \mcl{E}[P'\subst{x/z}] = Q'$ directly, proving the thesis.

            Otherwise, we cannot directly mimick the reduction from $\encc{z}{C}$ under $\reddL$.
            The rest of the analysis depends on how the forwarder $x \fwd y$ can originate from the translation from $\term{C}$ into $\encc{z}{C}$.
            In each case, this involves the occurrence of a variable $\term{x}$ in $\term{C}$, for which the well-typedness of $\term{C}$ permits three cases:
            \begin{itemize}
            	\item $\term{x}$ occurs under a reduction context subject to an explicit substitution;
                \item $\term{x}$ occurs as the replacement term in an explicit substitution;
                \item $\term{x}$ occurs as the channel of a $\term{\sff{send'}}$ / $\term{\sff{select}}$ / $\term{\sff{recv}}$ / $\term{\sff{case}}$ / buffered message (further referred to as \emph{communication primitive}) under a thread context.
            \end{itemize}

            Note that there may be multiple live forwarders in $\encc{z}{C}$, each of which falls under one these three cases.
            Based on this observation, we consider three possibilities for determining the desired lazy forwarder reduction to $Q'$: the first case holds, the second case holds, or neither of the first two cases holds.

            \begin{itemize}
                \item The variable $\term{x}$ occurs in $\term{C}$ under a reduction context subject to an explicit substitution, i.e.,
                    \[ \term{C} \equivC \term{\mcl{G}\Big[\mcl{F}\big[(\mcl{R}[x])\xsub{ N / x }\big]\Big]}. \]
                    Given the translation of explicit substitutions in \Cref{f:transTermShort}, by \Cref{l:transCtxts}, we then have the following:
                    \[ \encc{z}{C} \equiv \mcl{E}'\big[\nuf{xb}(\mcl{E}''[x \fwd y] \| \encc{b}{N})\big] \equiv {\mcl{E}'''[\nuf{xb}(x \fwd y \| \encc{b}{N})]} \]

                    We argue that $\bcont{x,y}(\encc{z}{C})$ holds, enabling a forwarder reduction under $\reddL$.
                    Towards a contradiction, suppose it does not hold.
                    Then $y$ would be bound to the continuation endpoint of a forwarded output or selection.
                    However, the forwarder $x \fwd y$ is translated from the occurrence of the variable $\term{x}$ under a reduction context $\term{\mcl{R}}$, where the name $y$ is the observable endpoint of the translated term.
                    None of the constructors of reduction contexts translate in such a way that this would bind the subterm's observable endpoint to the continuation endpoint of a forwarded output or selection (cf.\ \Cref{f:transTermShort}).
                    This leads to a contradiction, and so $\bcont{x,y}(\encc{z}{C})$ must hold.

                    Let $Q' = \mcl{E}'''[\encc{y}{N}]$.
                    By Rule~$\rred{\overset{\leftrightarrow}{\scc{Id}}}$ in \Cref{f:apcpLazy}, $\nuf{xb}(x \fwd y \| \encc{b}{N}) \reddL^{(x,y)} \encc{y}{N}$.
                    Then, since $\encc{z}{C} \equiv \mcl{E}'''[\nuf{xb}(x \fwd y \| \encc{b}{N})]$, by Rules~$\rred{\equiv},\rred{\onu},\rred{\|}$, $\encc{z}{C} \reddL^{(x,y)} Q'$.
                    Since $\bcont{x,y}(\encc{z}{C})$ holds, by Rule~$\rred{(x,y)}$, $\encc{z}{C} \reddL Q'$, proving the thesis.

                \item The variable $\term{x}$ occurs in $\term{C}$ as the replacement term in an explicit substitution, i.e.,
                    \[ \term{C} \equivC \term{\mcl{G}\Big[\mcl{F}\big[\mbb{N}\xsub{ x/z }\big]\Big]}. \]
                    Given the translation of explicit substitutions in \Cref{f:transTermShort}, by \Cref{l:transCtxts}, we then have the following:
                    \[ \encc{z}{C} \equiv \mcl{E}'[\nuf{zy}(\encc{w}{\mbb{N}} \| x \fwd y)] \]

                    We argue that $\bcont{x,y}(\encc{z}{C})$ holds, enabling a lazy forwarder reduction.
                    Towards a contradiction, suppose it does not hold.
                    Then $x$ would be bound to the continuation endpoint of a forwarded output or selection.
                    The forwarder $x \fwd y$ is translated from the occurrence of the variable $\term{x}$ as the replacement term in an explicit substitution.
                    In this case, since $\type{\emptyset} \vdashC{\main} \term{C} : \type{\1}$, the variable $\term{x}$ is bound somehow in $\term{C}$.
                    However, none of the well-typed ways of binding $\term{x}$ translates in such a way that $x$ would be bound to the continuation endpoint of a forwarded output or selection.
                    Here again we reach a contradiction, and conclude that $\bcont{x,y}(\encc{z}{C})$ must hold.

                    Let $Q' = \mcl{E}'[\encc{w}{\mbb{N}}\subst{x/z}]$.
                    By Rule~$\rred{\overset{\leftrightarrow}{\scc{Id}}}$ in \Cref{f:apcpLazy}, $\nuf{zy}(\encc{w}{\mbb{N}} \| x \fwd y) \reddL^{(x,y)} \encc{w}{\mbb{N}}\subst{x/z}$.
                    Then, since $\encc{z}{C} \equiv \mcl{E}'[\nuf{zy}(\encc{w}{\mbb{N}} \| x \fwd y)]$, by Rules~$\rred{\equiv},\rred{\onu},\rred{\|}$, $\encc{z}{C} \reddL^{(x,y)} Q'$.
                    Since $\bcont{x,y}(\encc{z}{C})$ holds, by Rule~$\rred{(x,y)}$, $\encc{z}{C} \reddL Q'$, proving the thesis.

                \item No variable $\term{x'}$ occurs in $\term{C}$ under a reduction context subject to an explicit substitution, or as the replacement term in an explicit substitution.
                    Since there is a live forwarder in $\encc{z}{C}$, it must then be that the variable $\term{x}$ (which translates to the live forwarder) occurs as the channel of a communication primitive under a thread context.

                    Given at least one communication primitive with a variable as channel in a thread context, take the communication primitive whose translated type has the least priority.
                    As a representative example, we consider the case where the communication primitive is a $\term{\sff{send'}}$ on variable $\term{x}$; the analysis is analogous for the other communication primitives.
                    By typability, there is a corresponding $\term{\sff{recv}}$ on some variable $\term{y}$ connected by a buffer $\term{\nu{x\bfr{\epsilon}y}}$.
                    We argue that the translation of this $\term{\sff{recv}}$ on $\term{y}$ is not blocked, i.e.\ occurs under a thread context.
                    That is, we show the following:
                    \[ \term{C} \equivC \term{\mcl{G}\big[\nu{x\bfr{\epsilon}y}(\mcl{F}[\sff{send'}(M,x)] \prl \mcl{F'}[\sff{recv}~y])\big]} \]
                    In this step, we crucially rely on the assumption that $\encc{z}{C}$ is well-typed with satisfiable priority requirements.

                    Towards a contradiction, assume that the translation of the $\term{\sff{recv}}$ on $\term{y}$ is blocked by a (forwarded) input or branch.
                    The analysis is analogous for inputs and branches, so w.l.o.g.\ assume that we are only dealing with (forwarded) inputs.
                    Since $z$ is the only possible free endpoint of $\encc{z}{C}$ and has type $\bullet$, this blocking (forwarded) input must be connected to a corresponding (forwarded) output.
                    The (forwarded) output may in turn be blocked as well.
                    Following the priorities of the involved inputs and outputs, we follow a chain of such blocking (forwarded) inputs, until we eventually find a non-blocked pair of (forwarded) input and (forwarded) output.
                    Since we have assumed that there are is no pair of active names in $\encc{z}{C}$, the pair must consist of a forwarded input and a forwarded output.
                    Then, since we have established that any live forwarders can only be translations of communication primitives, it must be that the pair is the translation of a $\term{\sff{send'}}$ and $\term{\sff{recv}}$ pair.
                    The priority of the types associated to this pair must be lower than the priority associated to the translation of the $\term{\sff{recv}}$ on $\term{y}$, for otherwise it would not be possible for the just discoverd pair to block it.
                    However, we initially took the $\term{\sff{send'}}$ with the least priority, and by duality the $\term{\sff{recv}}$ on $\term{y}$ has the same priority.
                    This contradicts that the just discovered pair has lower priority.
                    Hence, the translation of the $\term{\sff{send'}}$ on $\term{x}$ and the $\term{\sff{recv}}$ on $\term{y}$ are not blocked.

                    Thus, we have established the following:
                    \[ \term{C} \equivC \term{\mcl{G}\big[\nu{x\bfr{\epsilon}y}(\mcl{F}[\sff{send'}(M,x)] \prl \mcl{F'}[\sff{recv}~y])\big]} \]
                    Given the translations in \Cref{f:transTermShort,f:transConfBufShort}, by \Cref{l:transCtxts}, we then have the following:
                    \begin{align*}
                        \encc{z}{C} &\equiv \mcl{E}\big[\nu{xy} \begin{array}[t]{@{}l@{}}
                            (\mcl{E'}[\nu{ab}(a(c,d) \sdot \encc{c}{M} \| \nu{ef}(x \fwd e \| \nu{gh}(f[b,g] \| h \fwd q)))]
                            \\
                            {} \| \mcl{E''}[\nu{a'b'}(y \fwd a' \| b'(c',d') \sdot \nu{e'f'}(q'[c',e'] \| f'(g',h') \sdot d' \fwd g'))]) \big]
                        \end{array}
                        \\
                        &\equiv \mcl{E'''}[\nu{xy}(\nu{ef}(x \fwd e \| f[b,g]) \| \nu{a'b'}(y \fwd a' \| b'(c',d') \sdot P))]
                    \end{align*}
                    Let $Q' = \mcl{E'''}[P\subst{b/c',g/d'}]$.
                    By Rule~$\rred{\overset{\leftrightarrow}{\tensor \parr}}$ in \Cref{f:apcpLazy}, we have the following:
                    \[ \nu{xy}(\nu{ef}(x \fwd e \| f[b,g]) \| \nu{a'b'}(y \fwd a' \| b'(c',d') \sdot P)) \reddL^\cdot P\subst{b/c',g/d'} \]
                    Then, by Rules~$\rred{\equiv},\rred{\onu},\rred{\|}$, $\encc{z}{C} \reddL^\cdot Q'$.
                    Finally, by Rule~$\rred{\cdot}$, $\encc{z}{C} \reddL Q'$, proving the thesis.
                    \qedhere
            \end{itemize}
    \end{itemize}
\end{proof}

\thmOurGVdfFree*

\begin{proof}
    By \Cref{l:transWT}, $\encc{z}{C} \vdash z:\bullet$.
    Then $\nu{z\_}\encc{z}{C} \vdash \emptyset$.
    Hence, by \Cref{t:APCPdlFree} (deadlock-freedom), either $\nu{z\_}\encc{z}{C} \equiv \0$ or $\nu{z\_}\encc{z}{C} \redd Q$ for some $Q$.
    The rest of the analysis depends on which possibility holds:
    \begin{itemize}
        \item We have $\nu{z\_}\encc{z}{C} \equiv \0$.
            We straightforwardly deduce from the well-typedness and translation of $\term{C}$ that $\term{C} \equivC \term{\main\,()}$, proving the thesis.

        \item We have $\nu{z\_}\encc{z}{C} \redd Q$ for some $Q$.
            We argue that this implies that $\encc{z}{C} \redd Q_0$, for some $Q_0$.
            The analysis depends on whether this reduction involves the endpoint $z$ or not:
            \begin{itemize}
                \item The reduction involves $z$.
                    Since $z$ is of type $\bullet$ in $\encc{z}{C}$, then $z$ must occur in a forwarder, and the reduction involves a forwarder $z \fwd x$ for some $x$.
                    Since $x$ is not free in $\encc{z}{C}$, there must be a restriction on $x$.
                    Hence, the forwarder $z \fwd x$ can also reduce with this restriction, instead of with the restriction $\nu{z\_}$.
                    This means that $\encc{z}{C} \redd Q_0$ for some $Q_0$.

                \item The reduction does not involve  $z$, in which case the reduction must be internal to $\encc{z}{C}$.
                    That is, $\encc{z}{C} \redd Q_0$ for some $Q_0$ and $Q \equiv \nu{z\_}Q_0$.
            \end{itemize}

            We have just established that $\encc{z}{C} \redd Q_0$ for some $Q_0$.
            By \Cref{t:confTransReddL}, this implies that $\encc{z}{C} \reddL Q'$ for some $Q'$.
            Then, by \Cref{t:soundness} (soundness of reduction for configurations), there exists $\term{D}$ such that $\term{C} \reddC^+ \term{D}$.
            Hence, there exists $\term{D_0}$ such that $\term{C} \reddC \term{D_0}$, proving the thesis.
            \qedhere
    \end{itemize}
\end{proof}

\section{Example: Transference of Deadlock-freedom from APCP to \texorpdfstring{\Cref{x:gvRed}}{Example~\ref{x:gvRed}}}
\label{a:example}

Consider again the configuration $\term{C_1}$ from \Cref{x:gvRed}:
\[
    \term{C_1} = \term{\main\, (\sff{let}\, (f,g) = \sff{new}\, \sff{in}\,
    \sff{let}\, (h,k) = \sff{new}\, \sff{in}\,
    \sff{spawn}~\left(\begin{tarr}[c]{l}
            \sff{let}\, f' = (\sff{send}~(u,f))\, \sff{in} \\
            \quad \sff{let}\, (v',h') = (\sff{recv}~h)\, \sff{in}\, (),
            \\[4pt]
            \sff{let}\, k' = (\sff{send}~(v,k))\, \sff{in} \\
            \quad \sff{let}\, (u',g') = (\sff{recv}~g)\, \sff{in}\, ()
    \end{tarr}\right))}
\]
We verify the deadlock-freedom of $\term{C_1}$ by translating its typing derivation into APCP, and checking whether priority requirements can be satisfied.
First, the typing derivation of $\term{C_1}$ (only showing types and rules applied), where $\type{S} = \type{{!}\sff{end} \sdot \sff{end}}$:
\[
    \lambda_1 =
    \inferrule*[Right=$\sccsm{T-App}$,vcenter]
    {
        \inferrule*[Right=$\sccsm{T-Abs}$]
        {
            \inferrule*[Right=$\sccsm{T-EndL}$]
            {
                \inferrule*[Right=$\sccsm{T-Split}$]
                {
                    \inferrule*[Right=$\sccsm{T-Recv}$]
                    {
                        \inferrule*[Right=$\sccsm{T-Var}$]
                        { }
                        { \term{h}:\type{\ol{S}} \vdashM \type{\ol{S}} }
                    }
                    { \term{h}:\type{\ol{S}} \vdashM \type{\sff{end} \times \sff{end}} }
                    \\
                    \inferrule*[Right=$\sccsm{T-EndL}$]
                    {
                        \inferrule*[Right=$\sccsm{T-EndL}$]
                        {
                            \inferrule*[Right=$\sccsm{T-Unit}$]
                            { }
                            { \type{\emptyset} \vdashM \type{\1} }
                        }
                        { \term{v'}:\type{\sff{end}} \vdashM \type{\1} }
                    }
                    { \term{v'}:\type{\sff{end}}, \term{h'}:\type{\sff{end}} \vdashM \type{\1} }
                }
                { \term{h}:\type{\ol{S}} \vdashM \type{\1} }
            }
            { \term{h}:\type{\ol{S}}, \term{f'}:\type{\sff{end}} \vdashM \type{\1} }
        }
        { \term{h}:\type{\ol{S}} \vdashM \type{\sff{end} \lolli \1} }
        \\
        \inferrule*[Right=$\sccsm{T-Send}$]
        {
            \inferrule*[Right=$\sccsm{T-Pair}$]
            {
                \inferrule*[Right=$\sccsm{T-EndR}$]
                { }
                { \type{\emptyset} \vdashM \type{\sff{end}} }
                \\
                \inferrule*[Right=$\sccsm{T-Var}$]
                { }
                { \term{f}:\type{S} \vdashM \type{S} }
            }
            { \term{f}:\type{S} \vdashM \type{\sff{end} \times S} }
        }
        { \term{f}:\type{S} \vdashM \type{\sff{end}} }
    }
    { \term{f}:\type{S}, \term{h}:\type{\ol{S}} \vdashM \type{\1} }
\]
\[
    \lambda_2 =
    \inferrule*[Right=$\sccsm{T-App}$,vcenter]
    {
        \inferrule*[Right=$\sccsm{T-Abs}$]
        {
            \inferrule*[Right=$\sccsm{T-EndL}$]
            {
                \inferrule*[Right=$\sccsm{T-Split}$]
                {
                    \inferrule*[Right=$\sccsm{T-Recv}$]
                    {
                        \inferrule*[Right=$\sccsm{T-Var}$]
                        { }
                        { \term{g}:\type{\ol{S}} \vdashM \type{\ol{S}} }
                    }
                    { \term{g}:\type{\ol{S}} \vdashM \type{\sff{end} \times \sff{end}} }
                    \\
                    \inferrule*[Right=$\sccsm{T-EndL}$]
                    {
                        \inferrule*[Right=$\sccsm{T-EndL}$]
                        {
                            \inferrule*[Right=$\sccsm{T-Unit}$]
                            { }
                            { \type{\emptyset} \vdashM \type{\1} }
                        }
                        { \term{g'}:\type{\sff{end}} \vdashM \type{\1} }
                    }
                    { \term{u'}:\type{\sff{end}}, \term{g'}:\type{\sff{end}} \vdashM \type{\1} }
                }
                { \term{g}:\type{\ol{S}} \vdashM \type{\1} }
            }
            { \term{g}:\type{\ol{S}}, \term{k'}:\type{\sff{end}} \vdashM \type{\1} }
        }
        { \term{g}:\type{\ol{S}} \vdashM \type{\sff{end} \lolli \1} }
        \\
        \inferrule*[Right=$\sccsm{T-Send}$]
        {
            \inferrule*[Right=$\sccsm{T-Pair}$]
            {
                \inferrule*[Right=$\sccsm{T-EndR}$]
                { }
                { \type{\emptyset} \vdashM \type{\sff{end}} }
                \\
                \inferrule*[Right=$\sccsm{T-Var}$]
                { }
                { \term{k}:\type{S} \vdashM \type{S} }
            }
            { \term{k}:\type{S} \vdashM \type{\sff{end} \times S} }
        }
        { \term{k}:\type{S} \vdashM \type{\sff{end}} }
    }
    { \term{g}:\type{\ol{S}}, \term{k}:\type{S} \vdashM \type{\1} }
\]
\[
    \inferrule*[right=$\sccsm{T-Main}$]
    {
        \inferrule*[Right=$\sccsm{T-Split}$]
        {
            \inferrule*[Right=$\sccsm{T-New}$]
            { }
            { \type{\emptyset} \vdashM \type{S \times \ol{S}} }
            \\
            \inferrule*[Right=$\sccsm{T-Split}$]
            {
                \inferrule*[Right=$\sccsm{T-New}$]
                { }
                { \type{\emptyset} \vdashM \type{\ol{S} \times S} }
                \\
                \inferrule*[Right=$\sccsm{T-Spawn}$]
                {
                    \inferrule*[Right=$\sccsm{T-Pair}$]
                    {
                        \lambda_1
                        \\
                        \lambda_2
                    }
                    { \term{f}:\type{S}, \term{g}:\type{\ol{S}}, \term{h}:\type{\ol{S}}, \term{k}:\type{S} \vdashM \type{\1 \times \1} }
                }
                { \term{f}:\type{S}, \term{g}:\type{\ol{S}}, \term{h}:\type{\ol{S}}, \term{k}:\type{S} \vdashM \type{\1} }
            }
            { \term{f}:\type{S}, \term{g}:\type{\ol{S}} \vdashM \type{\1} }
        }
        { \type{\emptyset} \vdashM \type{\1} }
    }
    { \type{\emptyset} \vdashC{\main} \term{C_1} : \type{\1} }
\]

To verify the deadlock-freedom of $\term{C_1}$, we have to translate this typing derivation into APCP, and check that its priority requirements are satisfiable.
We avoid writing out whole typing derivations as per the translation.
Instead, we first inspect the derivations in \Cref{a:transFull}, annotate all types with priorities, and write out the priority requirements that would need to be satisfied.
For example, annotating the derivation of the translation of Rule~\scc{T-Abs}:
\begin{mathpar}
    \enc{z}{
        \inferrule[T-Abs]{
            \type{\Gamma}, \term{x}: \type{T} \vdashM \term{M}: \type{U}
        }{
            \type{\Gamma} \vdashM \term{\lambda x \sdot M}: \type{T \lolli U}
        }
    }
    = \inferrule{
        \inferrule*{
            \inferrule{
                \inferrule*{
                    \inferrule{
                        \inferrule{ }{
                            a[c,e] \vdash a: \ol{\enct{\type{T}}} \tensor^{\pri_1} \bullet, c: \enct{\type{T}}, e: \bullet
                        }
                    }{
                        a[c,e] \vdash a: \ol{\enct{\type{T}}} \tensor^{\pri_1} \bullet, c: \enct{\type{T}}, e: \bullet, f: \bullet
                    }
                }{
                    \nu{ef}a[c,e] \vdash a: \ol{\enct{\type{T}}} \tensor^{\pri_1} \bullet, c: \enct{\type{T}}
                }
                \\
                \encc{b}{M} \vdash \ol{\enct{\type{\Gamma}}}, x: \ol{\enct{\type{T}}}, b: \enct{\type{U}}
            }{
                \nu{ef}a[c,e] \| \encc{b}{M} \vdash \ol{\enct{\type{\Gamma}}}, a: \ol{\enct{\type{T}}} \tensor^{\pri_1} \bullet, b: \enct{\type{U}}, c: \enct{\type{T}}, x: \ol{\enct{\type{T}}}
            }
        }{
            \nuf{cx}(\nu{ef}a[c,e] \| \encc{b}{M}) \vdash \ol{\enct{\type{\Gamma}}}, a: \ol{\enct{\type{T}}} \tensor^{\pri_1} \bullet, b: \enct{\type{U}}
        }
        \pri_2 < \pr(\ol{\enct{\Gamma}})
    }{
        z(a,b) \sdot \nuf{cx}(\nu{ef}a[c,e] \| \encc{b}{M}) \vdash \ol{\enct{\type{\Gamma}}}, z: (\ol{\enct{\type{T}}} \tensor^{\pri_1} \bullet) \parr^{\pri_2} \enct{\type{U}}
    }
\end{mathpar}
Now we can annotate the short version of the translation of Rule~\scc{T-Abs} from \Cref{f:transTermShort} with the priority requirement:
\[
    \enc{z}{
        \inferrule[T-Abs]{
            \type{\Gamma}, \term{x}: \type{T} \vdashM \term{M}: \type{U}
        }{
            \type{\Gamma} \vdashM \term{\lambda x \sdot M}: \type{T \lolli U}
        }
    }
    =
    \inferrule*[fraction={===},vcenter]
    {
        \encc{b}{M} \vdash \ol{\enct{\type{\Gamma}}}, x: \ol{\enct{\type{T}}}, b: \enct{\type{U}}
        \\
        \pri_2 < \pr(\ol{\enct{\Gamma}})
    }
    { z(a,b) \sdot \nuf{cx}(\nu{ef}a[c,e] \| \encc{b}{M}) \vdash \ol{\enct{\type{\Gamma}}}, z: (\ol{\enct{\type{T}}} \tensor^{\pri_1} \bullet) \parr^{\pri_2} \enct{\type{U}} }
\]
We can do this for all typing rules.
It then suffices to only translate the types of a \ourGV configuration, assigning new priorities whenever new connectives are introduced, and annotating the priority requirements.
The configuration is then deadlock-free if all priority requirements are satisfiable.

Let us apply this to $\term{C_1}$.
We repeat its typing derivation, but this time translate the types to APCP, and add priorities and priority requirements.
Here, we need two versions of $\enct{S}$, as they need different priorities: $A_1 = \enct{S} = (\bullet \tensor^{\pri_1} \bullet) \parr^{\pri_2} \bullet$ and $A_2 = \enct{S} = (\bullet \tensor^{\pri_3} \bullet) \parr^{\pri_4} \bullet$.
\[
    \pi_1 =
    \inferrule*[right=$\sccsm{T-New}$]
    { \scriptstyle \pri_5,\pri_7 < \pri_2 }
    { a_1:(A_1 \parr^{\pri_5} \bullet) \tensor^{\pri_6} (\ol{A_1} \parr^{\pri_7} \bullet) }
\]
\[
    \pi_2 =
    \inferrule*[right=$\sccsm{T-New}$]
    { \scriptstyle \pri_8,\pri_{10} < \pri_4 }
    { a_2:(\ol{A_2} \parr^{\pri_8} \bullet) \tensor^{\pri_9} (A_2 \parr^{\pri_{10}} \bullet) }
\]
\[
    \pi_3 =
    \inferrule*[right=$\sccsm{T-Abs}$]
    {
        \inferrule*[Right=$\sccsm{T-EndL}$]
        {
            \inferrule*[Right=$\sccsm{T-Split}$]
            {
                \inferrule*[Right=$\sccsm{T-Recv}$]
                {
                    \inferrule*[Right=$\sccsm{T-Var}$]
                    { }
                    { h:A_2, a_6:\ol{A_2} }
                    \\
                    {\scriptstyle \pri_4 < \pri_{17} }
                }
                { h:A_2, a_5:(\bullet \parr^{\pri_{16}} \bullet) \tensor^{\pri_{17}} (\bullet \parr^{\pri_{18}} \bullet) }
                \\
                \inferrule*[Right=$\sccsm{T-EndL}$]
                {
                    \inferrule*[Right=$\sccsm{T-EndL}$]
                    {
                        \inferrule*[Right=$\sccsm{T-Unit}$]
                        { }
                        { b_1:\bullet }
                    }
                    { v':\bullet, b_1:\bullet }
                }
                { v':\bullet, h':\bullet, b_1:\bullet }
            }
            { h:A_2, b_1:\bullet }
        }
        { h:A_2, f':\bullet, b_1:\bullet }
        \\
        {\scriptstyle \pri_{15} < \pri_4 }
    }
    { h:A_2, a_4:(\bullet \tensor^{\pri_{14}} \bullet) \parr^{\pri_{15}} \bullet }
\]
\[
    \pi_4 =
    \inferrule*[right=$\sccsm{T-App}$]
    {
        \pi_3
        \\
        \inferrule*[Right=$\sccsm{T-Send}$]
        {
            \inferrule*[Right=$\sccsm{T-Pair}$]
            {
                \inferrule*[Right=$\sccsm{T-EndR}$]
                { }
                { e_3:\bullet }
                \\
                \inferrule*[Right=$\sccsm{T-Var}$]
                { }
                { f:\ol{A_1}, g_2:A_1 }
                \\
                {\scriptstyle \pri_{21} < \pri_2 }
            }
            { f:\ol{A_1}, a_7:(\bullet \parr^{\pri_{19}} \bullet) \tensor^{\pri_{20}} (A_1 \parr^{\pri_{21}} \bullet) }
        }
        { f:\ol{A_1}, e_2:\bullet }
        \\
        {\scriptstyle \pri_{14} < \pri_2 }
    }
    { f:\ol{A_1}, h:A_2, e_1:\bullet }
\]
\[
    \pi_5 =
    \inferrule*[Right=$\sccsm{T-Abs}$]
    {
        \inferrule*[Right=$\sccsm{T-EndL}$]
        {
            \inferrule*[Right=$\sccsm{T-Split}$]
            {
                \inferrule*[Right=$\sccsm{T-Recv}$]
                {
                    \inferrule*[Right=$\sccsm{T-Var}$]
                    { }
                    { g:A_1, a_{10}:\ol{A_1} }
                    \\
                    {\scriptstyle \pri_2 < \pri_{25}}
                }
                { g:A_1, a_9:(\bullet \parr^{\pri_{24}} \bullet) \tensor^{\pri_{25}} (\bullet \parr^{\pri_{26}} \bullet) }
                \\
                \inferrule*[Right=$\sccsm{T-EndL}$]
                {
                    \inferrule*[Right=$\sccsm{T-EndL}$]
                    {
                        \inferrule*[Right=$\sccsm{T-Unit}$]
                        { }
                        { b_2:\bullet }
                    }
                    { g':\bullet, b_2:\bullet }
                }
                { u':\bullet, g':\bullet, b_2:\bullet }
            }
            { g:A_1, b_2:\bullet }
        }
        { g:A_1, k':\bullet, b_2:\bullet }
        \\
        {\scriptstyle \pri_{23} < \pri_2 }
    }
    { g:A_1, a_8:(\bullet \tensor^{\pri_{22}} \bullet) \parr^{\pri_{23}} \bullet }
\]
\[
    \pi_6 =
    \inferrule*[right=$\sccsm{T-App}$]
    {
        \pi_5
        \\
        \inferrule*[Right=$\sccsm{T-Send}$]
        {
            \inferrule*[Right=$\sccsm{T-Pair}$]
            {
                \inferrule*[Right=$\sccsm{T-EndR}$]
                { }
                { e_5:\bullet }
                \\
                \inferrule*[Right=$\sccsm{T-Var}$]
                { }
                { k:\ol{A_2}, g_3:A_2 }
                \\
                {\scriptstyle \pri_{29} < \pri_4 }
            }
            { k:\ol{A_2}, a_{11}:(\bullet \parr^{\pri_{27}} \bullet) \tensor^{\pri_{28}} (\bullet \parr^{\pri_{29}} \bullet) }
        }
        { k:\ol{A_2}, e_4:\bullet }
        \\
        {\scriptstyle \pri_{22} < \pri_4 }
    }
    { g:A_1, k:\ol{A_2}, g_1:\bullet }
\]
\[
    \inferrule*[right=$\sccsm{T-Main}$]
    {
        \inferrule*[Right=$\sccsm{T-Split}$]
        {
            \pi_1
            \\
            \inferrule*[Right=$\sccsm{T-Split}$]
            {
                \pi_2
                \\
                \inferrule*[Right=$\sccsm{T-Spawn}$]
                {
                    \inferrule*[Right=$\sccsm{T-Pair}$]
                    {
                        \pi_4
                        \\
                        \pi_6
                        \\
                        { \scriptstyle
                            \pri_{11},\pri_{13} < \pri_2,\pri_4
                        }
                    }
                    { f:\ol{A_1}, g:A_1, h:A_2, k:\ol{A_2}, a_3:(\bullet \parr^{\pri_{11}} \bullet) \tensor^{\pri_{12}} (\bullet \parr^{\pri_{13}} \bullet) }
                }
                { f:\ol{A_1}, g:A_1, h:A_2, k:\ol{A_2}, z:\bullet }
                \\
                { \scriptstyle
                    \pri_9 < \pri_2
                }
            }
            { f:\ol{A_1}, g:A_1, z:\bullet }
        }
        { z:\bullet }
    }
    { z:\bullet }
\]
In the end, we have the following priority requirements:
\begin{align*}
    \pri_5,\pri_7,\pri_9,\pri_{11},\pri_{13},\pri_{14},\pri_{21},\pri_{23} < \pri_2 < \pri_{25}
    \\
    \pri_8,\pri_{10},\pri_{11},\pri_{13},\pri_{15},\pri_{22},\pri_{29} < \pri_4 < \pri_{17}
\end{align*}
It is easy to see that these requirements are satisfiable, so $\term{C_1}$ is indeed deadlock-free.

\fi

\end{document}